\newtheorem{theorem}{Theorem}
\newtheorem{definition}[theorem]{Definition}
\newtheorem{observation}[theorem]{Observation}
\newtheorem{lemma}[theorem]{Lemma}
\newtheorem{invariant}[theorem]{Invariant}
\newtheorem{corollary}[theorem]{Corollary}
\newcommand{\tab}{\hspace*{0.5em}}
\newcommand{\twotab}{\hspace*{1.5em}}
\newcommand{\threetab}{\hspace*{2.5em}}
\newcommand{\fourtab}{\hspace*{3.5em}}
\newcommand{\fivetab}{\hspace*{4.5em}}
\title{\LARGE  Non-blocking Patricia Tries with Replace Operations}
\author{\IEEEauthorblockN{ Niloufar Shafiei}
\IEEEauthorblockA{
 York University, Department of Computer Science and Engineering}}
\begin{document}
\maketitle

\begin{abstract}
This paper presents a non-blocking Patricia trie implementation for an asynchronous shared-memory system using Compare\&Swap.
The trie implements a linearizable set and supports three update operations: 
insert adds an element, delete removes an element and replace replaces one element by another.  
The replace operation is interesting because it changes two different locations of tree atomically. 
If all update operations modify different parts of the trie, they run completely concurrently. 
The implementation also supports a wait-free find operation, which only reads shared memory and never changes the data structure.
Empirically, we compare our algorithms to some existing set implementations.
\end{abstract}


\section{Introduction} \label{intro-sec}
A Patricia trie \cite{PT} is a tree that stores a set of keys, which are represented as strings.
The trie is structured so that the path from the root to a key is determined by the sequence of characters in the key.
So, the length of this path is at most the length of the key (and will often be shorter).
Thus, if key strings are short, the height of the trie remains small without requiring any complicated balancing.
The simplicity of the data structure makes it a good candidate for concurrent implementations.
Patricia tries are widely used in practice.  
They have applications in routing systems, data mining, machine learning, bioinformatics, etc. \cite{bio-app, mining-app2, routing-app, routing-app2, mining-app}. 
Allowing concurrent access is essential in some applications and 
can boost efficiency in multicore systems.

We present a new concurrent implementation of Patricia tries for binary strings using single-word {\it Compare\&Swap} (CAS).
The operations on the trie are linearizable, meaning they appear to take place atomically \cite{lin}.
They are also non-blocking (lock-free): 
{\it some} process completes its operation in a finite number of steps even if other processes fail.
{\it Wait-free} algorithms satisfy the stronger guarantee that {\it every} process completes its operation in a finite number of steps.

Our implementation supports wait-free find operations and provides non-blocking insertions and deletions.
We also provide a non-blocking replace operation that makes two changes to the trie atomically:
it deletes one key and inserts another.
If all update operations are occurring at disjoint parts of the trie, they do not interfere with one another.

A Patricia trie can be used to store a set of points in $\mathbb{R}^d$.
For example, a point in $\mathbb{R}^2$ whose coordinates are $(x, y)$ can be represented as key formed by interleaving the bits of $x$ and $y$.
(This yields a data structure very similar to a quadtree.)
Then, the replace operation can be used to move a point from one location to another atomically.
This operation has applications in Geographic Information System \cite{GIS}.
The replace operation would also be useful if the Patricia trie were adapted to implement a priority queue, so that one can change the priority of an element in the queue.

Search trees are another class of data structures that are commonly used to represent sets.
When keys are not uniformly distributed, balanced search trees generally outperform unbalanced ones.
The reverse is often true when keys are uniformly distributed due to the simplicity of unbalanced search trees.
Our empirical results show that the performance of our trie is consistently good in both scenarios. 
This is because our trie implementation is as simple as an unbalanced search tree but also keeps trees short. 
For simplicity, we rely on a garbage collector (such as the one provided in Java implementations) that deallocates objects when they are no longer accessible.

For our Patricia trie algorithms, we extend the scheme used in \cite{BST} for binary search trees to coordinate processes.  Thus, we show that the scheme is more widely applicable.
In particular, we extend the scheme so that it can handle update operations that make more than one change to the tree structure.
Updates to the same part of the tree help one another to guarantee the non-blocking property.
An update first creates a descriptor object that contains enough information about the update, so that other processes can complete the update by reading the descriptor object.  
As in \cite{BST}, before an update changes the tree, it flags a small number of nodes to avoid interference with other concurrent updates.
(A node is flagged if it has a pointer to a descriptor object, otherwise it is unflagged.) 
When the update is complete, the flags are removed from nodes that are still in the tree.
Searches do not need to check for flags and can therefore traverse the trie very efficiently simply by reading child pointers.
Searches in our Patricia trie are wait-free, unlike the searches in \cite{BST} because the length of a search path in a Patricia trie is bounded by the length of the key.

There are several novel features of this work.
In our implementation, we design one fairly simple routine that is called to perform the real work of 
all update operations. 
In contrast, insert and delete operations in \cite{BST} are handled by totally separate routines.
This makes our proof of correctness more modular than the proof of \cite{BST}.
Our techniques and correctness proof can be generalized to other tree-based data structures.

In \cite{BST}, modifications were only made at the bottom of the search tree.
Our new Patricia trie implementation also copes with modifications that can occur anywhere in the trie.
This requires proving that changes in the middle of the trie do not cause concurrent search operations passing through the modified nodes to go down the wrong branch.
Howley and Jones \cite{IntST} introduced changes in the middle of a search tree but only to keys stored in internal nodes, not the structure of the tree itself.

In \cite{BST}, atomic changes had to be done by changing a single pointer.
Our replace operation makes two changes to the trie atomically.
Both changes become visible at the first CAS operation on a child pointer.
This new scheme can be generalized to make several changes to the trie atomically by making all changes visible at a single linearization point.
Cederman and Tsigas \cite{DCAS-Move} proposed a non-blocking replace operation for a tree-based data structure, but they require double-CAS (that modifies two non-adjacent locations conditionally).

To summarize:
\begin{itemize}
\item We present a non-blocking linearizable Patricia trie.
\item We employ one routine to implement the real work of any update operation.
\item We present a non-blocking update operation that requires changes to two child pointers using single-word CAS. 
\item We provide a modular correctness proof that can be adapted for other data structures.
We give a sketch of the correctness proof here.
A more detailed proof is provided in Appendix.
\item We compare our implementation empirically to other existing concurrent data structures.
\end{itemize}

The remainder of the paper is organized as follows.
In Section \ref{relatedWorkSec}, we present related work.
We describe the structure of the algorithms and how the operations are implemented in detail in Section \ref{AlgSec}.
Section \ref{proofSec} provides a sketch of the correctness proof. 
In Section \ref{evaluationSec}, we compare our trie to other algorithms empirically.
Some concluding remarks are given in Section \ref{conclusionSec}.

\section{Related Work} \label{relatedWorkSec}

Most concurrent data structures are lock-based.
However, lock-based implementations have  drawbacks such as priority inversion, deadlock and convoying.
Two  state of the art examples of lock-based implementations of set data structures are 
the AVL tree by Bronson et al.~\cite{AVL}, which maintains an approximately balanced tree,
and the self-adjusting binary search tree by Afek et~al.\ \cite{CBTree}, which moves frequently accessed nodes closer to the root.
Aref and Ilyas \cite{SP-GiST} described how lock-based implementations could be designed for a class of space-partitioning trees that includes Patricia tries.
Lock-coupling can also be applied to implement a concurrent Patricia trie \cite{lock-PT}. 

In this paper, we focus on non-blocking algorithms, which do not use locks.
There are two general techniques for obtaining non-blocking data structures:
universal constructions (see the related work section of \cite{WFConstruction} for a recent survey of work on this) and transactional memory \cite{STM} (see \cite{STM-survey} for a survey). 
Such general techniques are usually not as efficient as algorithms that are designed for specific data structures.

Tsay and Li \cite{WF-tree} gave a general wait-free construction for tree-based data structures.
To access a node, a process makes a local copy of the path from the root to the node, performs computations on the local copy, and then atomically replaces the entire path by its local copy.
Since this approach copies many nodes and causes high contention at the root, their approach is not very efficient.
Barnes \cite{lock-free} presented another general technique to obtain non-blocking implementations of data structures in which processes cooperate to complete operations.

Ellen et al.~\cite{BST} presented the first non-blocking binary search tree data structure from CAS operations. 
Their approach has some similarity to the cooperative technique of \cite{lock-free}.
As discussed in Section \ref{intro-sec}, our Patricia trie implementation extends the approach used in \cite{BST}.
Brown and Helga \cite{k-ary} generalized the binary search trees of \cite{BST} to non-blocking $k$-ary search trees and 
compared the non-blocking search trees with the lock-based search tree of Bronson et al.~\cite{AVL} empirically on a multicore system. 

Howley and Jones \cite{IntST} presented a non-blocking search tree from CAS operations using a cooperative technique similar to \cite{BST}.
Their tree store keys in both leaf and internal nodes.
However, search operations sometimes help update operations by performing CASs. 

Braginsky and Petrank proposed a non-blocking balanced B+tree from CAS operations \cite{Btree}.
The implementation uses the marking technique of \cite{Harris-LL} and the flagging technique of \cite{BST}.

Earlier this year, Prokopec et al.~\cite{Ctrie} described a non-blocking hash trie that uses CAS operations. 
Their approach is very different from our implementation.
Unlike Patricia tries, in their trie implementation, an internal node might have single child.
In their implementation, nodes have up to $2^k$ children (where $k$ is a parameter) and 
extra intermediate nodes are inserted between the actual nodes of the trie. 
With $k=5$, the height of their trie is very small, making their implementation very fast 
when contention is low.  However, our experiments suggest that it is not
very scalable under high contention.
Unlike our implementation, their search operation may perform CAS steps. 

Non-blocking implementations of set data structures have also been proposed based on skip lists using CAS operations \cite{Fomitchev-SL, Fraser-SL, Sundell-SL}.
A non-blocking skip list (ConcurrentSkipListMap) was then implemented in the Java class library by Doug Lea. 

\section{Algorithm Description} \label{AlgSec}
We assume an asynchronous shared-memory system with single-word CAS operations.
We first give the sequential specification of the operations. 
The trie stores a set $D$ of keys from a finite universe $U$.
If $v \notin D$, insert($v$) changes $D$ to  $D \cup \{v\}$ and returns true; otherwise,  it returns false.
If $v \in D$, delete($v$) changes $D$ to $D - \{v\}$ and returns true; otherwise, it returns false.
If $v \in D$ and $v' \notin D$, replace($v, v'$) changes $D$ to $D - \{v\} \cup \{v'\}$ and returns true; otherwise, it returns false.
If $v \in D$, find($v$) returns true; otherwise, it returns false.
In either case, find($v$) does not change $D$.
We assume elements of $D$ can be encoded as $\ell$-bit binary strings. 
(In Section \ref{conclusionSec}, we describe how to handle unbounded length keys.)

\subsection{Data Structures} 
First, we describe the structure of a binary Patricia trie.
(See Figure \ref{trie-fig}.)
Each internal node has exactly two children.
The elements of $D$ are stored in the leaves of the trie.
Each internal node  stores a binary string that is the longest common prefix of its children.
If a node's label has length $k-1$, then
the $k$th bit of the node's left and right child is 0 and 1, respectively.
The root stores the empty string.
The height of the trie  is at most $\ell$. 

\begin{figure}[t]
\centering
\leavevmode 
\includegraphics[width=3.2cm]{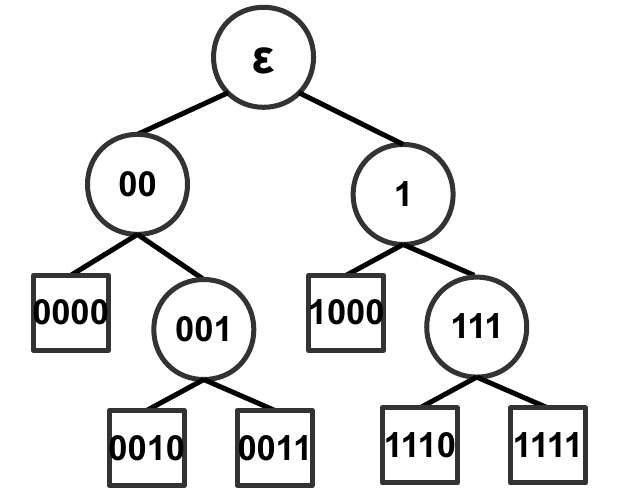}
\vspace*{-4mm}
\caption{An example of a Patricia trie. (Leaves are represented by squares and internal nodes are represented by circles.)}
\label{trie-fig}
\vspace*{-7mm}
\end{figure}

Next, we describe the objects that are used in the implementation (Figure \ref{data-fig}).
The Patricia trie is represented using Leaf and Internal objects which are subtypes of Node objects.
A Node object has a $label$ field representing its binary string, which is never changed after initialization.
An Internal object has an array of Node objects of size two, denoted $child$, that stores pointers to the children of the node. 

Each Node object also has an {\it info} field that stores a pointer to an Info object that represents an update operation that is in progress at the node.  The Info object contains enough information to allow other processes to help the update to complete.
The Info object has two subtypes: Flag and Unflag.
An Unflag object is used to indicate that no update is in progress at a node.
Unflag objects are used instead of null pointers to avoid the ABA problem in the $info$ field of a node.
Initially, the {\it info} field of each Node object is an Unflag object.
We say that a node is {\it flagged} or {\it unflagged}, depending on whether its {\it info} field stores a Flag or Unflag object.
The {\it info} and $child$ field of an internal node are changed using CAS steps.
However, a leaf node gets flagged by writing a Flag object into its {\it info} field.

To perform an update operation, first some internal nodes get flagged, then some $child$ fields are changed and then nodes that are still in the trie get unflagged.
The nodes that must be flagged to perform an update operation are the internal nodes whose $child$ field will be changed by the update or that will be removed from the trie by the update.
Flagging nodes is similar to locking nodes: it avoids having other operations change the part of the trie that would be changed by the update. 

A Flag object has a number of fields.
The $flag$ field stores nodes to be flagged and the $unflag$ field stores nodes to be unflagged.
Before creating a Flag object, an update reads the $info$ field of each node that will be affected by the update before reading that node's $child$ field.  
This value of the $info$ field is stored in the Flag's $oldInfo$ field, and is used for the CAS that flags the node.
This ensures that if the node is successfully flagged, it has not changed since its children were read.
Moreover, once it is flagged, its children will not be changed by any other update operation.
The boolean $flagDone$ field indicates whether the flagging for the update has been completed.
In the case of a replace operation, the  $rmvLeaf$ field points to the leaf to be removed by the update after flagging is complete.
The actual changes to the trie to be made are described in three more array fields of the Flag object:
$pNode$, $oldChild$ and $newChild$.
For each $i$, the update should CAS the appropriate $child$ pointer of $pNode[i]$ from $oldChild[i]$ to $newChild[i]$.
If all nodes are successfully flagged, then the CAS on each $child$ pointer will be guaranteed to succeed because that pointer cannot have changed since the old value was read from it.
Thus, like locks, the $info$ field of a node is used to give an operation exclusive permission to change the $child$ field of that node.

For simplicity, the root node of the trie is initially set to an Internal object whose children are two leaf nodes whose $label$s are the strings $0^\ell$ and $1^\ell$. 
We assume the keys $0^\ell$ and $1^\ell$ cannot be elements of $D$. 
This ensures that the trie always has at least two leaf nodes and the root node never needs to be replaced. 
(This avoids some special cases that would occur when the root is a leaf.)

\begin{figure}[t]
\begin{enumerate}
\newcounter{c}
{\footnotesize
\item \tab Leaf: (subtype of Node)
\item  \twotab $label \in U$
\item \twotab $info:$ Info

\item \tab Internal: (subtype of Node)
\item \twotab $label \in U$
\item \twotab $child:$ Node[2]\hfill $\rhd$ left and right child
\item \twotab $info:$ Info 

\item \tab Flag: (subtype of Info)
\item \twotab $flag:$ Internal[4]
\item \twotab $oldInfo:$ Info[4] 
\item \twotab $unflag:$ Internal[2]
\item \twotab $pNode:$ Internal[2] 
\item \twotab $oldChild:$ Node[2]
\item \twotab $newChild:$ Node[2]
\item \twotab $rmvLeaf:$ Leaf
\item \twotab $flagDone:$ Boolean 

\item \tab Unflag: (subtype of Info) \hfill $\rhd$ has no field \\ \label{obj-end}

\item \tab{\bf Initialization:}
\item  \twotab $root \leftarrow$ new Internal($\varepsilon$, [new Leaf(000....0, new unflag), \\ \threetab new Leaf(111...1, new unflag)], new Unflag) \label{initial-root}

\vspace*{-1mm}

\vspace*{-1mm}
\setcounter{c}{\value{enumi}}
}
\end{enumerate}
\vspace*{-2mm}

\caption{\label{data-fig}
Data types used in the implementation}
\vspace*{-7mm}

\end{figure}

\subsection{Update Operations} 
The implementation has three update operations: insert, delete and replace.
All three have the same overall structure.
The pseudo-code for our implementation is given on page 5.
An update $op$ uses the search routine to find the location(s) in the trie to be changed.
It then creates a new Flag object $I$ containing all the information required to complete the update by calling newFlag.
If newFlag sees that some node that must be flagged is already flagged with a different Flag $I'$, it calls help($I'$) at line \ref{newFlag-call-help} to try completing the update described by $I'$, 
and then $op$ retries its update from scratch.
Otherwise, $op$ calls help($I$) to try to complete its own update.


As mentioned earlier, flagging nodes ensures exclusive access for changing $child$ pointers.
Thus, an update flags the nodes whose $child$ pointers it wishes to change and permanently flags any node that is removed from the trie to avoid applying updates to  a deleted portion of the trie

Unlike locks, the Info objects store enough information, so that if an operation dies while nodes are flagged for it, other processes can complete the operation and remove the flags.
This ensures that a failed operation cannot prevent others from progressing.
To avoid deadlock, if an update must flag more than one internal node, we order the internal nodes by their $label$s.

The help($I$) routine carries out the real work of an update using the information stored in the Flag object $I$.  It first uses {\it flag CAS} steps to flag some nodes (line \ref{help-flag-node}) by setting their $info$ fields to $I$.
If all nodes are flagged successfully, help($I$) uses {\it child CAS} steps to change the $child$ fields of some internal nodes to perform the update (line \ref{help-change-child}).
Then, it uses {\it unflag CAS} steps to unflag nodes that were flagged earlier, except the ones that have been removed from the trie (line \ref{help-unflag}) by setting their $info$ fields to a new Unflag object.
In this case, any nodes deleted by the update remain flagged forever. 
If any node is not flagged successfully, the attempt to perform the update has failed and {\it backtrack CAS} steps are used to unflag any nodes that were flagged earlier (line \ref{help-unflag-bt}).

\begin{figure}
\label{op-fig}
\begin{enumerate}
\setcounter{enumi}{\value{c}}
{\footnotesize

\item {\bf insert}($v \in U$)
\item \tab while(true)
\item \twotab $I \leftarrow$ null
\item \twotab $\langle -, p, node, -, pInfo, rmvd\rangle \leftarrow$ {\bf search}($v$) \label{ins-call-search}
\item \twotab if {\bf keyInTrie}($node$, $v$, $rmvd$) then return false \label{ins-return-false}
\item \twotab $nodeInfo = node.info$ \label{ins-read-node-info}
\item \twotab $nodeCopy \leftarrow$ new copy of $node$ \label{create-copy}
\item \twotab $newNode \leftarrow $ {\bf createNode}($copy$, new Leaf containing $v$, \\ \threetab $nodeInfo$)\label{ins-call-create-node}
\item \twotab if $newNode \ne$ null then
\item \threetab if $node$ is Internal then
\item \fourtab $I \leftarrow$ {\bf newFlag}$([p, node]$, $[pInfo, nodeInfo]$, $[p]$, \\ \fivetab $[p]$, $[node]$,  $[newNode]$, null)\label{ins-set-info1}
\item \threetab else $I \leftarrow$ {\bf newFlag}($[p], [pInfo]$, $[p]$, $[p], \\ \fourtab [node], [newNode]$,  null)\label{ins-set-info2}
\item \threetab if $I \ne$ null and {\bf help}($I$) then return true \label{ins-call-help} \\
 \vspace*{-1mm}

\item {\bf delete}($v \in U$)
\item \tab while(true)
\item \twotab $I \leftarrow$ null
\item \twotab $\langle gp,p,node,gpInfo,pInfo, rmvd\rangle \leftarrow$ {\bf search}($v$) \label{del-call-search}
\item \twotab if $\neg$ {\bf keyInTrie}($node$, $v$, $rmvd$) then return false \label{del-return-false}
\item \twotab $nodeSibling \leftarrow p.child[1 - (|p.label|+1)$th bit of $v]$\label{del-read-sibling}
\item \twotab if $gp \ne$ null then
\item \threetab $I \leftarrow$ {\bf newFlag}($[gp, p]$, $[gpInfo, pInfo]$, $[gp]$, $[gp]$, $[p]$, \\ \fourtab $[nodeSibling]$, null)\label{del-set-info}
\item \threetab if $I \ne$ null and {\bf help}($I$) then return true \\ \label{del-call-help}
\vspace*{-1mm}

\item {\bf replace}($v_d \in U$, $v_i \in U$)
\item \tab while(true)
\item \twotab $I \leftarrow$ null
\item \twotab \makebox{\shortstack[l]{$\langle gp_d,p_d,node_d,gpInfo_d,pInfo_d, rmvd_d \rangle \leftarrow$ {\bf search}($v_d$)}}\label{mov-call-search1}
\item \twotab if $\neg$ {\bf keyInTrie}($node_d$, $v_d$, $rmvd_d$) then return false \label{mov-return-false1}
\item \twotab $\langle -, p_i, node_i, -, pInfo_i, -, rmvd_i \rangle \leftarrow$ {\bf search}($v_i$) \label{mov-call-search2}
\item \twotab if {\bf keyInTrie}($node_i$, $v_i$, $rmvd_i$) then return false \label{mov-return-false2}
\item \twotab $nodeInfo_i = node_i.info$ \label{mov-read-node-info}
\item \twotab \makebox{\shortstack[l]{$nodeSibling_d \leftarrow p_d.child[1 - (|p_d.label|+1)$th bit of $v_d]$}}\label{mov-read-sibling}

\item \twotab if $gp_d \ne$ null and $node_i \notin$ \{ $node_d, p_d, gp_d$ \} \\ \threetab and $p_i \ne p_d$ then\label{mov-check-gc-condition}
\item \threetab $copy_i \rightarrow$ new copy of $node_i$ \label{create-copy2} \label{mov-create-copy}
\item \threetab $newNode_i \leftarrow$ {\bf createNode}($copy_i$, new Leaf containing \\ \fourtab $v_i$,  $nodeInfo_i$)\label{mov-call-create-node}

\item \threetab if $newNode_i \ne$ null and $node_i$ is Internal then
\item \fourtab $I \leftarrow$ {\bf newFlag}($[gp_d, p_d, p_i, node_i]$, $[gpInfo_d,$ \\ \fivetab $pInfo_d, pInfo_i$,  $nodeInfo_i]$, $[gp_d, p_i]$, $[p_i$, \\ \fivetab $gp_d]$, $[node_i, p_d]$, 
 $[newNode_i$,  $nodeSibling_d]$, \\ \fivetab$node_d$)\label{mov-set-info1} 

\item \threetab else if $newNode_i \ne$ null and $node_i$ is Leaf then
\item \fourtab $I \leftarrow$ {\bf newFlag}($[gp_d, p_d, p_i]$, $[gpInfo_d, pInfo_d,$ \\ \fivetab $pInfo_i]$, $[gp_d$, $p_i]$, $[p_i$, $gp_d]$, $[node_i, p_d]$, 
\\ \fivetab $[newNode_i,$  $nodeSibling_d]$, $node_d$)\label{mov-set-info2} 

\item \twotab else if $node_i = node_d$ then \label{mov-sc3}
\item \threetab $I \leftarrow$ {\bf newFlag}($[p_d], [pInfo_d]$, $[p_d], [p_d]$, $[node_i]$, \\ \fourtab [new Leaf  containing $v_i$], null)\label{mov-set-info-sc1}  

\item \twotab else if ($node_i = p_d$ and $p_i = gp_d$) or \label{mov-condition-sc2} \label{mov-sc4}
\item \threetab ($gp_d \ne$ null and $p_i = p_d$) then  \label{mov-condition-sc3}
\item \threetab $newNode_i \leftarrow$ {\bf createNode}($nodeSibling_d$, new Leaf \\ \fourtab containing $v_i$,  $nodeSibling_d.info$) \label{mov-create-node1}
\item \threetab if $newNode_i \ne$ null then
\item \fourtab $I \leftarrow$ {\bf newFlag}($[gp_d, p_d]$, $[gpInfo_d, pInfo_d]$, $[gp_d]$,  \\ \fivetab $[gp_d]$,  $[p_d]$,  $[newNode_i]$, null) \label{mov-set-info-sc23}

\item \twotab else if $node_i = gp_d$ then \label{mov-sc5}
\item \threetab \makebox{\shortstack[l]{$pSibling_d \leftarrow gp_d.child[1 - (|gp_d.label|+1)$th bit of $v_d]$}}\label{mov-read-pSibling}
\item \threetab \makebox{\shortstack[l]{$newChild_i \leftarrow$ {\bf createNode}($nodeSibling_d$, $pSibling_d$, -)}}\label{mov-create-node2}
\item \threetab if $newChild_i \ne$ null then 
\item \fourtab $newNode_i \leftarrow$ {\bf createNode}($newChild_i$, new Leaf \\ \fivetab containing $v_i$, -)\label{mov-create-node3}
\item \fourtab if $newNode_i \ne$ null then $I \leftarrow$ {\bf newFlag}($[p_i, gp_d,$ \\ \fivetab $p_d]$, $[pInfo_i, gpInfo_d, pInfo_d]$, $[p_i]$, $[p_i]$, \\ \fivetab $[node_i]$, $[newNode_i]$, null) \label{mov-set-info-sc4}
\item \twotab if $I \ne$ null and {\bf help}($I$) then return true\label{mov-call-help}

\setcounter{c}{\value{enumi}}
}
\end{enumerate}
\vspace*{-2mm} 

\caption{Update operations}

\vspace*{-5mm} 

\end{figure}

\begin{figure}
\label{routine-fig}

\begin{enumerate}
\setcounter{enumi}{\value{c}}
{\footnotesize
\item {\bf find}($v \in U$)
\item  \tab $\langle$ -, -, $node$, -, -, $rmvd \rangle \leftarrow $ {\bf search}($v$) \label{find-call-search}
\item \tab \makebox{\shortstack[l]{if {\bf keyInTrie}($node$, $v$, $rmvd$)  then return true}}
\item \tab else return false\\
\vspace*{-1mm}

\item {\bf search}($v \in U$)
\item \tab $\langle p, pInfo \rangle$ $\leftarrow \langle$ null, null$\rangle$ \label{search-initial-p}
\item \tab $node \leftarrow root$ \label{search-set-initial-node}
\item \tab while ($node$ is Internal and $node.label$ is prefix of $v$)\label{search-check-condition}
\item \twotab $\langle gp, gpInfo \rangle$ $\leftarrow$ $\langle p, pInfo \rangle$ \label{search-set-gp}
\item \twotab $\langle p, pInfo \rangle$ $\leftarrow$ $\langle node, node.info \rangle$ \label{search-set-p}
\item \twotab $node \leftarrow p.child[(|p.label|+1)$th bit of $v]$\label{search-set-node}
\item \tab if $node$ is Leaf then \hfill $\rhd$ if Leaf is replaced \label{search-check-moved}
\item \twotab $rmvd \leftarrow$ {\bf logicallyRemoved}($node.info$) \label{search-set-moved}  
\item \tab return $\langle gp, p, node, gpInfo, pInfo, rmvd\rangle$ \\
 \vspace*{-1mm}

\item {\bf  help}($I$: Flag)
\item \tab $i \leftarrow 0$
\item \tab $doChildCAS \leftarrow$ true
\item \tab while ($i < |I.flag|$ and $doChildCAS$)
\item  \twotab CAS($I.flag[i].info$, $I.oldInfo[i]$, $I$) \hfill $\rhd$ flag CAS \label{help-flag-node}
\item  \twotab $doChildCAS \leftarrow (I.flag[i].info = I)$ \label{help-set-doChildCAS}
\item  \twotab $i \leftarrow i + 1$ 
\item \tab if $doChildCAS$ then \label{help-check-doChildCAS}
\item \twotab $I.flagDone \leftarrow$ true \label{help-set-done}
\item \twotab if $I.rmvLeaf \ne$ null then $I.rmvLeaf.info \leftarrow I$ \label{help-flag-leaf} 
\item \twotab for $i = 0$ to $(|I.pNode|-1)$ \label{help-start-change-child}
\item \threetab $k \leftarrow (|I.pNode[i].label|+1)$th bit of \\ \fourtab $I.newChild[i].label$ \label{help-read-index}
\item \threetab CAS($I.pNode[i].child[k]$, $I.oldChild[i]$, \\ \fourtab $I.newChild[i]$)\hfill$\rhd$ child CAS\label{help-change-child}
\item \tab if $I.flagDone$ then \label{help-check-done}
\item \twotab for $i = (|I.unflag| - 1)$ down to $0$
\item \threetab CAS($I.unflag[i].info$, $I$, new Unflag) \hfill $\rhd$ unflag CAS\label{help-unflag}
\item \twotab return true \label{help-return-true}
\item \tab else
\item \twotab for $i = (|I.flag| - 1)$ down to 0 \label{bt-start}
\item \threetab CAS($I.flag[i].info$, $I$, new Unflag)\hfill $\rhd$ backtrack CAS\label{help-unflag-bt}
\item \twotab return false \label{bt-end} \\ \label{help-return-false}
 \vspace*{-1mm}
 
\item {\bf newFlag}($flag$, $oldInfo$, $unflag$, $pNode$, $oldChild$, \\ \tab $newChild$,  $rmvLeaf$)
\item \tab for $i = 0$ to $(|oldInfo| - 1)$,
\item \twotab if $oldInfo[i]$ is Flag then \label{newFlag-old-info}
\item \threetab {\bf help}($oldInfo[i]$) \label{newFlag-call-help}
\item \threetab return null \label{newFlag-return-null}\label{newFlag-return-null1}
\item \tab \makebox{\shortstack[l]{if $flag$ has duplicates with different values in $oldInfo$ then}}\label{newFlag-dup-val}
\item \twotab return null\label{newFlag-return-null2}
\item \tab else remove duplicates in $flag$ and $unflag$ (and \\ \twotab corresponding  entries of $oldInfo$)\label{newFlag-keep-one-copy}
\item \tab sort elements of $flag$ and permute elements of $oldInfo$ \label{newFlag-sort-nodes} 
\item \tab return new Info($flag$, $oldInfo$, $unflag$, $pNode$, $oldChild$, \\ \twotab $newChild$,  $rmvLeaf$, false) \\ \label{newFlag-return}
 \vspace*{-1mm}

\item {\bf createNode}($node_1$: Node, $node_2$: Node, $info$: Info) 
\item \tab if $node_1.label$ is prefix of $node_2.label$ or \\ \twotab $node_2.label$ is prefix of $node_1.label$ then
\item \twotab if $info$ is Flag then {\bf help}($info$) \label{createNode-call-help1}
\item \twotab return null
\item \tab else return new Internal whose children are $node_1$ and $node_2$ \\ \label{create-internal}
 \vspace*{-1mm}

\item {\bf logicallyRemoved}($I$: Info) 
\item \tab if $I$ is Unflag then return false \label{logRemoved-return-false}
\item \tab return ($I.oldChild[0]$ not in $I.pNode[0].child$)\\ \label{logRemoved-check-old} \label{logRemoved-check-oldChild}
 \vspace*{-1mm}

\item {\bf keyInTrie}($node$: Node, $v \in U$, $rmvd$: Boolean)
\item \tab \makebox{\shortstack[l]{return ($node$ is Leaf and $node.label = v$ and $rmvd =$ false)}}\\ \label{keyInTrie-return}

}
\end{enumerate}
\vspace*{-4mm} 

\caption{The find operation and additional subroutines}
\vspace*{-5mm} 

\end{figure}




If any child CAS step is executed inside help($I$), the update is successful and it is linearized at the first such child CAS.
If a replace operation performs two different child CAS steps, it first executes a child CAS to insert the new key, and then a child CAS to delete the old key.
In this case, the replace also flags the leaf node of the old key before the first child CAS step.
We say the leaf is {\it logically removed} from the trie at the first child CAS step and any operation that reaches the leaf node after this determines that the key is already removed.
We say a node is {\it reachable} at time $T$ if there is path from the root to the node at $T$.
We say a leaf node is {\it logically in the trie} at time $T$ if the node is reachable and not logically removed at $T$.
We shall prove that the following invariant holds:
The leaf nodes that are logically in the trie at time $T$ contain exactly those keys in the set $D$, according to the sequence of updates that are linearized before $T$. 

Whenever a child pointer is changed, the old child is permanently flagged and it is removed from the trie to avoid the ABA problem.
(In some cases, this requires the update to add a new copy of the old child to the trie.)
When a call to help($I$) performs a child CAS on $I.pNode[i]$ (for some $i$), it uses $I.oldChild[i]$ as the old value.
Since there is no ABA problem, only the first such CAS on $I.pNode[i]$ can succeed.
Moreover, we prove that the flagging mechanism ensures that this first CAS does succeed.
Since processes might call help($I$) to help each other to complete their operations, there might be a group of child CASs on each node.
However, the $child$ pointer is changed exactly once for the operation.

\subsection{Detailed Description of Algorithms} \label{algSec}
A search($v$) is used by updates and find to locate key $v$ within the trie.
The search($v$) starts from the root node and traverses down the trie. 
At each step of the traversal, search($v$) chooses the child according to the appropriate bit of $v$ (line \ref{search-set-node}). 
The search($v$) stops if it reaches an internal node whose $label$ is not a prefix of $v$.
We show that any node visited by the search was reachable at some time during the search.
If the search($v$) does not return a leaf containing $v$, there was a time during the search when no leaf containing $v$ was reachable.
Moreover, the node that is returned is the location where an insert would have to put $v$.
If search($v$) reaches a leaf node and the leaf node is logically removed by a replace operation, search($v$) sets $rmvd$ to true (line \ref{search-set-moved}).

As we shall see, update operations must change the $child$ pointers of the parent or grandparent of the node returned by search.
The search operation returns $gp$, $p$ and $node$, the last three nodes reached (where $p$ stands for parent and $gp$ stands for grandparent).
A search also returns the values $gpInfo$ and $pInfo$ that it read from the $info$ fields of $gp$ and $p$ before reading their $child$ pointers.
More formally, if search($v$) returns $\langle gp$, $p$, $node$, $gpInfo$, $pInfo$, $rmvd \rangle$, it satisfies the following post-conditions.
(1) At some time during search($v$), $gp.info$ was $gpInfo$ (if $gp$ is not null).
(2) Then, at some later time during search($v$), $p$ was a child of $gp$ (if $gp$ is not null).
(3) Then, at some later time during search($v$), $p.info$ was $pInfo$.
(4) Then, at some later time during search($v$), $p.child[i]$ was $node$ for some $i$.
(5) $(p.label) \cdot i$ is a prefix of $v$.
(6) If $node$ is an internal node, $node.label$ is not a prefix of $v$.
(7) If $rmvd$ is true, $node$ is logically removed at some time during search($v$). 
(8) If $rmvd$ is false, $node$ is logically in the trie at some time during search($v$). 

After calling search, an update uses the newFlag routine to create a Flag object.
For each node that the update must flag, a value read from the {\it info} field during search of the node is passed to newFlag as the old value to be used in the flag CAS step. 
The old value for a flag CAS was read before the old value for the corresponding child CAS, so if the flag CAS succeeds, then the node's $child$ field has not been changed since the last time its old value was read.
The newFlag routine checks if all old values for {\it info} fields are Unflag objects (line \ref{newFlag-old-info}).
If some {\it info} field is not an Unflag object, then there is some other incomplete update operating on that node.
The newFlag routine tries to complete the incomplete update (line \ref{newFlag-call-help}), and then returns null, which causes the update to restart.
In some cases of the replace operation that change the trie in two steps, $gp$ returned by the first search might be equal to $p$ returned by the second search.
So, the operation might send duplicate elements to get flagged to the newFlag routine. 
If the duplicate elements do not have the same old values, their $child$ fields might have changed since the operation read them, so newFlag returns null and the operation starts over (line \ref{newFlag-dup-val}-\ref{newFlag-return-null2}).
Otherwise, only one copy of each duplicate element is kept (line \ref{newFlag-keep-one-copy}).
The newFlag routine sorts the nodes to be flagged (to avoid deadlocks) and returns the new Flag object (line \ref{newFlag-sort-nodes}-\ref{newFlag-return}).

After an update $u$ creates a Flag object $I$, it calls help($I$).
This routine attempts to complete the update.
First, it uses CAS steps to put the Flag object $I$ in the {\it info} field of the nodes to be flagged (line \ref{help-flag-node}).
If all nodes are flagged successfully, the $flagDone$ field of the Flag object is set to true (line \ref{help-set-done}).
The value of the $flagDone$ field is used to coordinate processes that help the update. 
Suppose a process $p$ is executing help($I$). 
After $p$ performs a flag CAS on a node $x$, if it sees a value different from $I$ in the $x$'s {\it info} field, there are two possible cases.
The first case is when all nodes were already successfully flagged for $I$ by other processes running help($I$), and then $x$ was unflagged before $p$ tries to flag $x$.
(Prior to this unflagging, some process performed the child CAS steps of $I$ successfully.)
The second case is when no process flags $x$ successfully for $I$. 
Since the $flagDone$ field of $I$ is only set to true after all nodes are flagged successfully, $p$ checks the value of the $flagDone$ field to determine which case happened.
If $flagDone$ is true, the modifications to the trie for update $u$ have been made. 
If $flagDone$ is false, the update operation cannot be successfully completed, so all internal nodes that got flagged earlier are unflagged by the back-tracking CAS steps at line \ref{bt-start}-\ref{bt-end} and 
the update $u$ will have to start over.

After flagging all nodes successfully and setting $I.flagDone$, if $I.rmvLeaf$ is a leaf, its {\it info} field is set to $I$ (line \ref{help-flag-leaf}).
Only the two-step replace operations flag a leaf.
Then, help($I$) changes the $child$ fields of nodes in $I.pNode$ using child CASs (line \ref{help-start-change-child}-\ref{help-change-child}).
Finally, help($I$) uses unflag CASs to unflag the nodes in $I.unflag$ and returns true (line \ref{help-check-done}-\ref{help-return-true}).

\begin{figure}[t]
\centering
\leavevmode
\includegraphics[width=8.0cm]{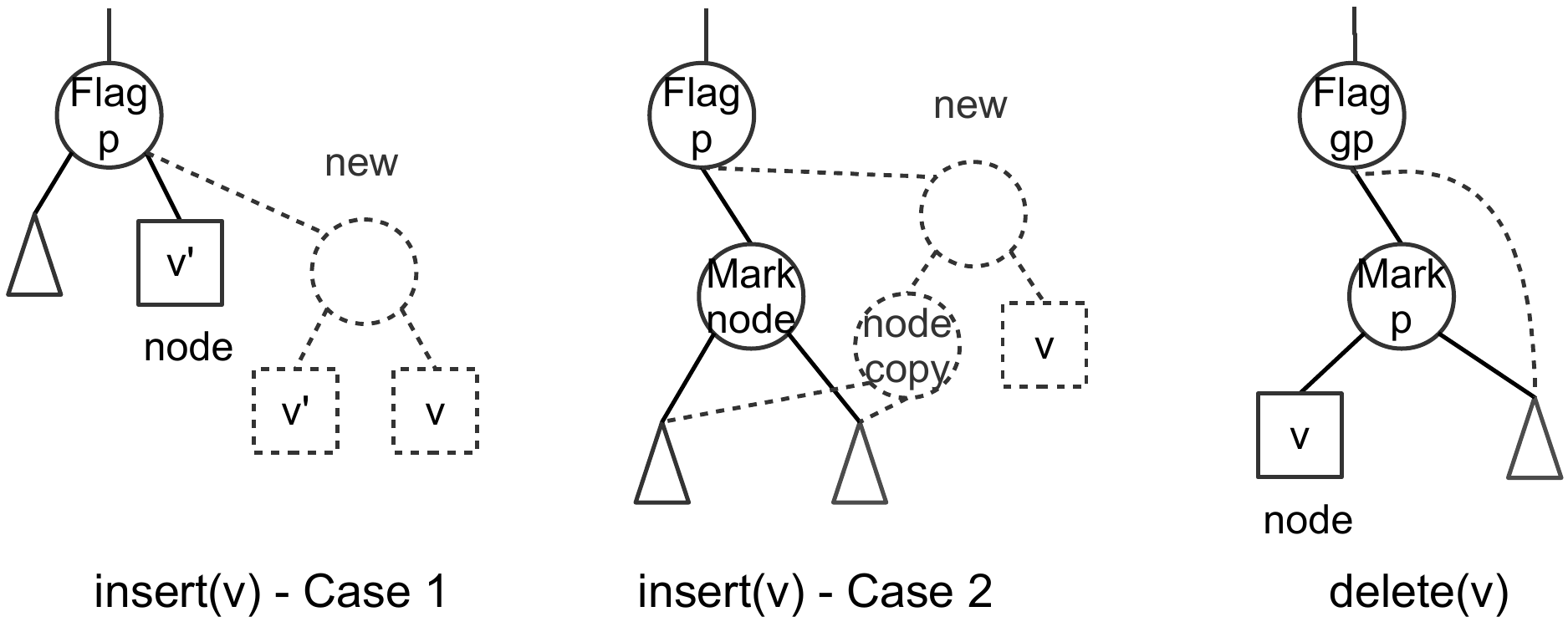}
 \vspace*{-8mm}
\caption{Different cases of insert($v$) and delete($v$).  
(Triangles are either a leaf node or a subtree.
The dotted lines are the new child pointers that replace the old child pointers (solid lines) and the dotted circles are newly created nodes.)}
\label{ins-del-fig}
\vspace*{-7mm}
\end{figure}

The insert($v$) routine first calls search($v$).
Let $\langle$ -, $p$, $node$, -, -, $rmvd\rangle$ be the result returned by search($v$). 
If $node$ is a leaf containing $v$ and $rmvd$ is false, insert($v$) returns false since the trie already contains $v$ (line \ref{ins-return-false}).
Otherwise, the insertion attempts to replace $node$ with a node created line \ref{create-internal}, whose children are a new leaf node containing $v$ and a new copy of $node$.
(See Figure \ref{ins-del-fig}.)
Thus, the parent $p$ of $node$ must be flagged.
A new copy of $node$ is used to avoid the ABA problem.
If $node$ is an internal node, since $node$ is replaced by a new copy, insert($v$) must flag $node$ permanently (line \ref{ins-set-info1}).

The delete ($v$) routine first calls search($v$).
Let $\langle gp$, $p$, $node$, -, -, $rmvd \rangle$ be the result returned by the search($v$). 
If $node$ is not a leaf node containing $v$ or $rmvd$ is true, delete($v$) returns false since the trie does not contain $v$ (line \ref{del-return-false}).
Then, delete($v$) replaces $p$ by the sibling of $node$.
(See Figure \ref{ins-del-fig}.)
So, delete($v$) must flag the grandparent $gp$ of $node$ and mark $p$ (line \ref{del-set-info}). 

A replace works as follows.
First, replace($v_d$, $v_i$) calls search($v_d$) and search($v_i$), which return $\langle gp_d$, $p_d$, $node_d$, -, -, $rmvd_d\rangle$ and $\langle$-, $p_i$, $node_i$,  -, -, $rmvd_i\rangle$.
The replace checks that $v_d$ is in the trie and $v_i$ is not, as in the insert and delete operations (line \ref{mov-call-search1}-\ref{mov-return-false2}).
If either test fails, the replace returns false.

If insert($v_i$) and delete($v_d$), as described in Figure \ref{ins-del-fig}, would not overlap, replace($v_d$, $v_i$) is done by two child CAS steps and is linearized at the first of these two changes.
This is called the general case of replace. 
Situations when the insertion and deletion would occur in overlapping portions of the trie are handled as special cases as shown in Figure \ref{move-fig}.
In the special cases, the replace changes the trie with one child CAS.

In the general case of the replace operation  (line \ref{mov-check-gc-condition}-\ref{mov-set-info2}), we create a Flag object which instructs the help routine to perform the following actions.
The replace flags the same nodes that an insert($v_i$) and a delete($v_d$) would flag.
After flagging these nodes, the leaf $node_d$ also gets flagged.
Then, $v_i$ is added to the trie, as in insert($v_i$).
When the new leaf node is added, the leaf $node_d$, which contains $v_d$, becomes logically removed, but not physically removed yet.
Then, $node_d$ is physically deleted as in delete($v_d$).
After $node_d$ is flagged, any search that reaches $node_d$ checks if $p_i$ is a parent of the old child of $p_i$ using $node_d.info$.
If it is not, it means the new leaf containing $v_i$ is already inserted and the operation behaves as if $v_d$ is already removed. 

There are four special cases of replace($v_d$, $v_i$) where the changes required by the insertion and deletion are on the overlapping portions of the trie and the replace operation is done using one child CAS step.
Although the code for these cases looks somewhat complicated, it simply implement the actions described in Figure \ref{move-fig} by creating a Flag object and calling help.
The insertion of $v_i$ replaces $node_i$ by a new node.
The cases when the deletion must remove $node_i$ or change $node_i.child$ are handled as special cases.
So, the case that $node_d = node_i$ is one special case (line \ref{mov-sc3}-\ref{mov-set-info-sc1}).
In the deletion, $p_d$ is removed, so the case that $p_d = node_i$ or $p_d = p_i$ are also handled as a special case (line \ref{mov-sc4}-\ref{mov-set-info-sc23}).
In the deletion, $gp.child$ is changed.
So, the last special case is when $gp_d = node_i$ (line \ref{mov-sc5}-\ref{mov-set-info-sc4}).
In all special cases, $node_i$ is replaced by a new node.
Here, we explain one special case in detail.
The others are handled in a similar way.
In case 2, $p_d = node_i$ and $gp_d = p_i$ (line \ref{mov-sc4}).
Since $node_i$ must be replaced, a new copy of $node_i$ is made.
So, replace($v_d$, $v_i$) flags $gp_d$ and marks $node_i$, and replaces $node_i$ with a new internal node whose non-empty children are a new leaf node containing $v_i$ and the sibling of $node_d$
(line \ref{mov-sc4}-\ref{mov-set-info-sc23}).

\begin{figure}[t]
\centering
\leavevmode
\includegraphics[width=8.30cm]{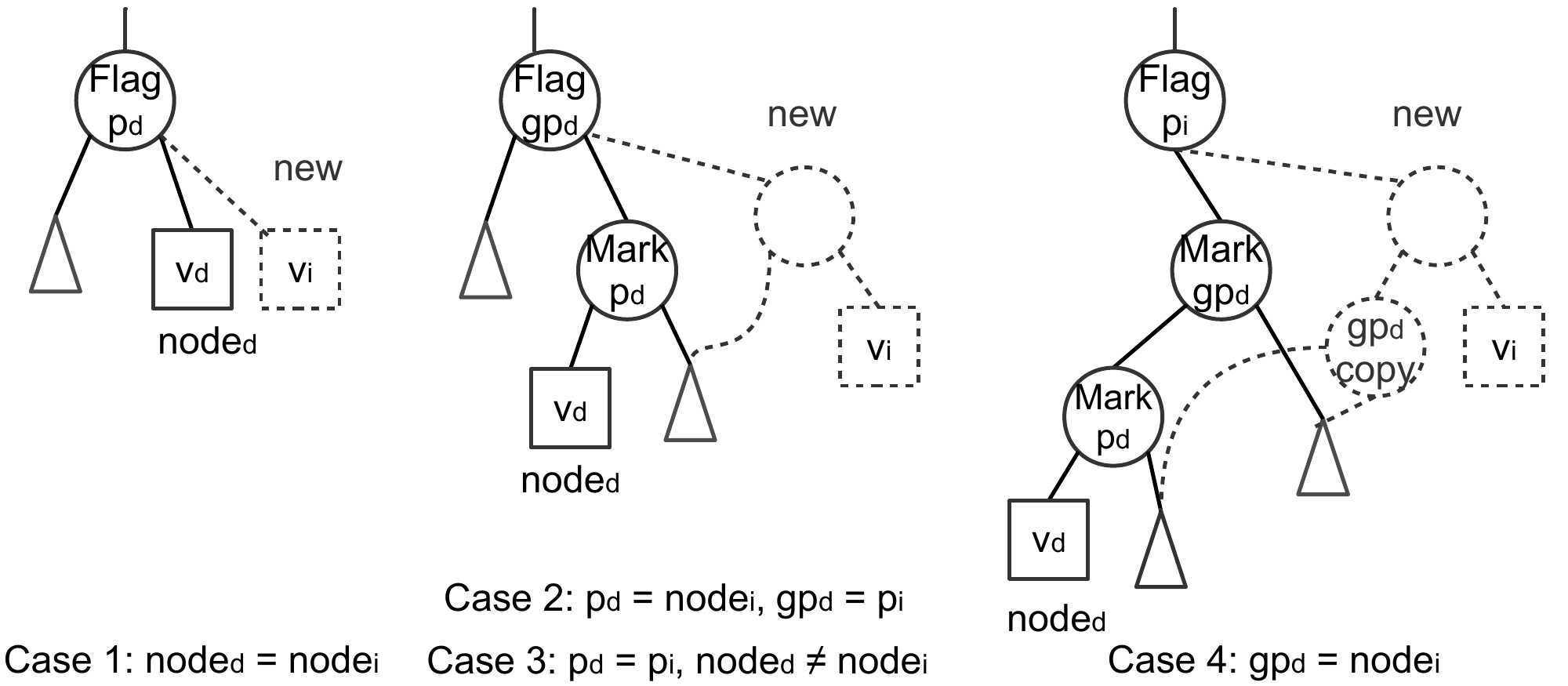}
\vspace*{-8mm}
\caption{Special cases of replace($v_d, v_i$).} 
\label{move-fig}
\vspace*{-7mm}
\end{figure}



\section{Algorithm Correctness} \label{proofSec}
A detailed proof of correctness is provided in Appendix.
It is quite lengthy, so we can only provide a brief sketch here.
First, we define the linearization point of each operation.
Let $\langle$ -, -, $node$, -, -, $rmvd \rangle$ be the result returned by a search. 
If $node$ is a leaf containing $v$ and $rmvd$ is false, we prove there is a time during the search when $node$ is logically in the trie and the search is linearized at that time.
Otherwise, we show there is a time during the search when no leaf containing $v$ is logically in the trie and the search is linearized at that time.
If an update returns false, it is linearized at the linearization point of the search that caused the update to fail.
Let $I$ be a Flag object created by an update.
If a child CAS performed by any call to help($I$) is executed, the update is linearized at the first such child CAS.
Next, we sketch the correctness proof in four parts.

Part 1 is the heart of the proof.
The goal of Part 1 is to prove that, for any Flag object $I$, the successful CAS steps performed by all calls to help($I$) proceed in the expected order.
(See Figure \ref{step-fig}.)
First, the flag CAS steps are performed on nodes in order, according to the nodes' $label$s.
We prove that only the first flag CAS (by any of the helpers) on each node can succeed.
If one if these fails, then the nodes that have been flagged are unflagged by backtrack CAS steps and all calls to help($I$) return false, indicating that the attempt at performing the update has failed.
Otherwise, the child CAS steps are performed, and then the unflag CAS steps remove flags from nodes that are still reachable.
If several helpers perform one of these CAS steps, we prove that the first helper succeeds and no others do.
In this case, all calls to help($I$) return true.

In Part 1, we first prove that the post-conditions of the search described in Section \ref{algSec} are satisfied.
Then, we prove that each type of update preserves the main invariant of the Patricia trie data structure: 
if $x.child[i] = y$, then $(x.label) \cdot i$ is a prefix of $y.label$. 
Thus, the structure is a correct trie.
(This also implies that $label$s of reachable nodes are distinct, so flagging can be done in order by $label$s of nodes to avoid deadlock.)
We show that the ABA problem on the {\it info} fields is avoided because whenever an {\it info} field is changed, it is set to a newly created Flag or Unflag object.
Then, we show that the CAS steps succeed in the correct order even if they are performed by helpers.

We say that a node is {\it marked} if its {\it info} field is a Flag object $I$ and the node does not appear in $I.unflag$ and some call to help($I$) has performed a child CAS.
We show that if an internal node is removed from the trie, it is marked at all times after that.
After a node is removed, it is never inserted into the trie again.
Next, we show that the ABA problem on the $child$ fields is avoided because whenever a child pointer is changed, the old child is permanently removed from the trie.
The proofs of the lemmas in Part 1 are mostly focused on the structure of the help routine.
So, any new update that preserves the main invariants of the trie can be added with minor changes to the correctness~proof.

\begin{figure*}[t]
\centering
\leavevmode
\includegraphics[width=13.70cm]{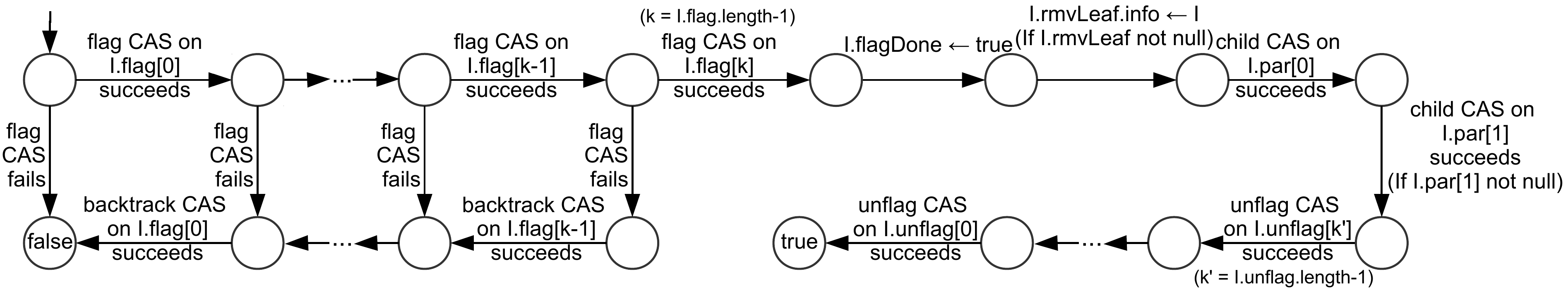}
\vspace*{-4mm}
\caption{The correct order of steps inside help($I$) for each Flag object $I$. (Steps can be performed by different calls to help($I$).) }
\label{step-fig}
\vspace*{-5mm}
\end{figure*}

Part 2 proves that search operations are linearized correctly.
First, we show that each node a search visits was reachable at some time during the operation.
Let $\langle$ -, -, $node$,  -, -, $rmvd\rangle$ be the result returned by a search($v$). 
If $node$ is a leaf containing $v$ and $rmvd$ is false, we show that $node$ was reachable and not logically removed at some time during the search.
The search is linearized at that time.
If $node$ is not a leaf containing $v$ or $node$ is a leaf containing $v$ but $rmvd$ is true, 
we show that there is a time during the search that no leaf containing $v$ is logically in the trie.
The search is linearized at that time. 

Part 3 proves that update operations are linearized correctly.
Let $T$ be the linearization point of a successful update operation.
Since all nodes are flagged successfully, no other concurrent update can change a flagged node's child 
between the time when the {\it info} field of node is read  for the last time during the search and the time the node is unflagged.
Thus, only the child CAS steps of the update would change that part of the trie during that period of the time.
Flagging ensures that the first child CAS of the update has the effect of implementing precisely the change shown in Figure \ref{ins-del-fig} or \ref{move-fig} atomically.

Part 4 proves that the implementation is non-blocking.
To derive a contradiction, assume after time $T$, no operation terminates or fails.
Let $I$ be a Flag object created by an update that is running after $T$.
If a call to help($I$) returns true, the update terminates, so after $T$, all calls to help($I$) return false.
Thus, all calls to help($I$) set $doChildCAS$ to false because they failed to flag an internal node successfully after $T$.
Consider the group of all calls to help($I$).
We say the group blames an internal node which is the first node that no call to help($I$) could flag successfully.
Let $g_0$, ..., $g_m$ be all these groups ordered by the $label$s of the nodes that they blame.
Since $g_m$ blames an internal node $x$, $x$ is flagged by some other group $g_i$ where $0 \le i < m$.
Thus, $g_i$ blames some other node $y$ whose $label$ is less than $x$.
So, $g_i$ flags $x$ before attempting to flag $y$, contradicting the fact that $g_i$ flags internal nodes in order.

\section{Empirical Evaluation} \label{evaluationSec}
We experimentally compared the performance of our implementation (PAT) with non-blocking binary search trees (BST) \cite{BST}, non-blocking k-ary search trees (4-ST) \cite{k-ary}, 
ConcurrentSkipListMap (SL) of the Java library, lock-based AVL trees (AVL) \cite{AVL} and non-blocking hash tries (Ctrie) \cite{Ctrie}.
For the k-ary search trees, we use the value $k=4$, which was found to be optimal in \cite{k-ary}.
Nodes in Ctrie have up to 32 children.

The experiments were executed on a Sun SPARC Enterprise T5240 with 32GB RAM.
The machine had two UltraSPARC T2+ processors, each having eight 1.2GHz cores, for a total of 128 hardware threads. 
The experiments were run  in Java. 
The sun JVM version 1.7.0\_3 was run in server mode.
The heap size was set to 2G.
This ensures the garbage collector would not be invoked too often, so that the measurements reflect the running time of the algorithms themselves. 
Using a smaller heap size affects the performance of BST, 4-ST and PAT more than AVL and SL since they create more objects.

We evaluated the algorithms in different scenarios.
We ran most experiments using uniformly distributed random keys.
We ran the algorithms using uniformly distributed keys in two different ranges:
$(0, 10^2)$ to measure performance under high contention
and $(0, 10^6)$ for low contention.
(We also ran the experiments for the key range of $(0, 10^3)$ for medium contention, but since the results were very similar to the low contention case, we do not present them here.)
We ran experiments with two different operation ratios: 
 5\% inserts, 5\% deletes and 90\% finds (i5-d5-f90),
and 50\% inserts, 50\% deletes and 0\% finds (i50-d50-f0).
(We also ran the experiments with ratio of 15\% inserts, 15\% deletes and 70\% finds.
Since the results were similar to the experiments with the ratio of (i5-d5-f90), we do not present them here.)

Since the replace operation is not used in these sets of experiments, we made some minor optimization to the pseudo-code.
For example, we eliminated the $rmvd$ variable in search operations.

Since the Java compiler optimizes its running code, before each experiment, we perform (i50-d50-f0) for ten seconds for each implementation.
We start each experiment with a tree initialized to be half-full, created by running updates in the ration i50-d50-f0.   
Each data point in our graphs is the average of eight 4-second trials.
(The error bars in the charts shows the standard deviation.)

For uniformly distributed keys, algorithms scale well under low contention (key range of $(0, 10^6)$).
(See Figure \ref{chart-random1}.)
Under very high contention (key range of $(0, 10^2)$), most scale reasonably well when the fraction of updates is low, but experience problems when all operation are updates.
(See Figure \ref{chart-random2}.)
When the range is $(0, 10^6)$, Ctrie outperforms all others since the height of the Ctrie is small compared to the others because node can have up to 32 children.
However when the range is $(0, 10^2)$ and the contention is very high, Ctrie does not scale.
Excluding Ctrie, when the range is  $(0, 10^6)$, PAT, 4-ST and BST outperform AVL and SL.
Since updates are more expensive than finds, the throughput is greater for i5-d5-f90 than for i50-d50-f0.


\begin{figure}[t]
\centering
\leavevmode
\includegraphics[width=6.7cm]{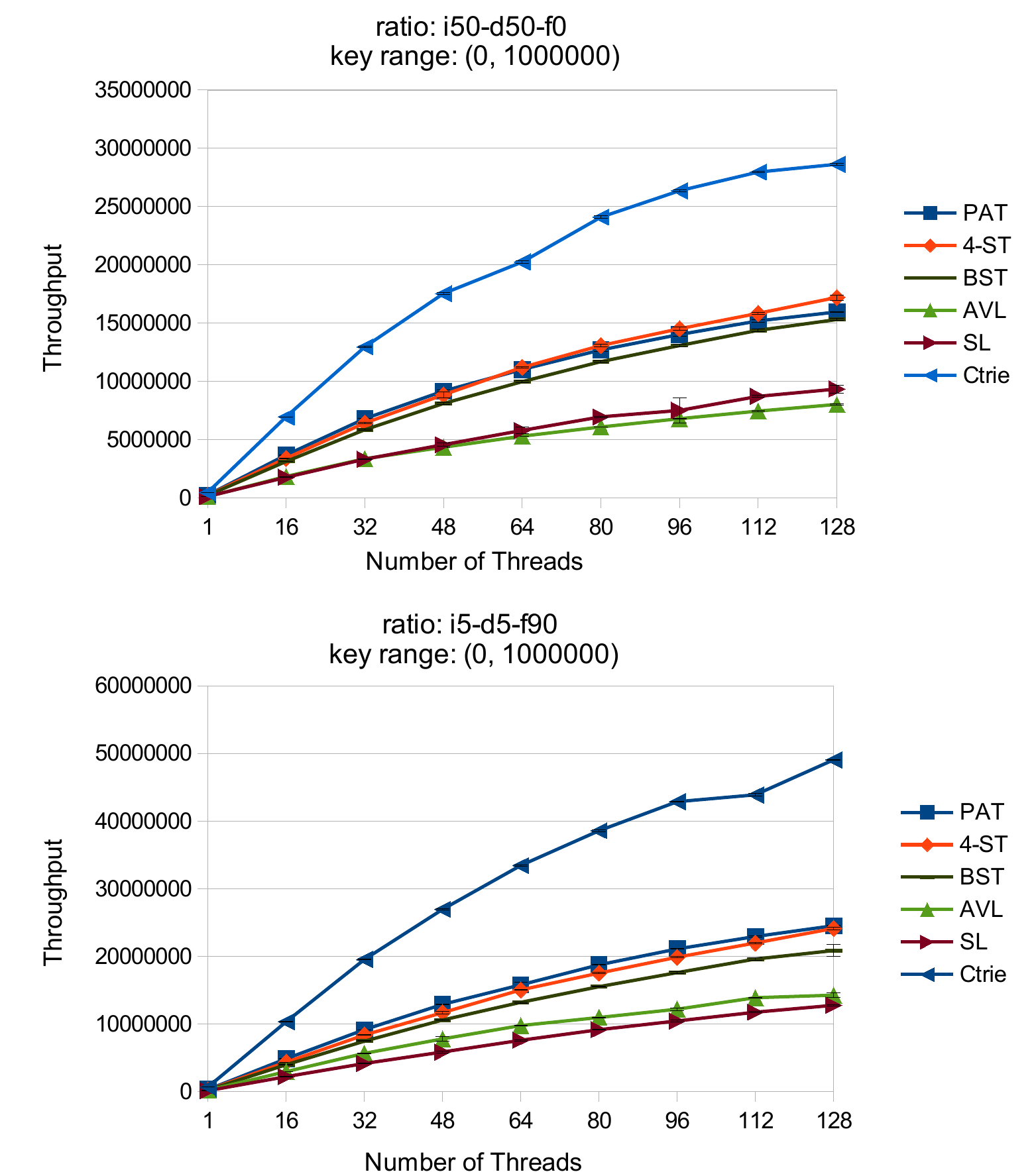}
\vspace*{-3mm}
\caption{Uniformly distributed keys}
\label{chart-random1}
\vspace*{-2mm}
\end{figure}

\begin{figure}[t]
\centering
\leavevmode
\includegraphics[width=6.5cm]{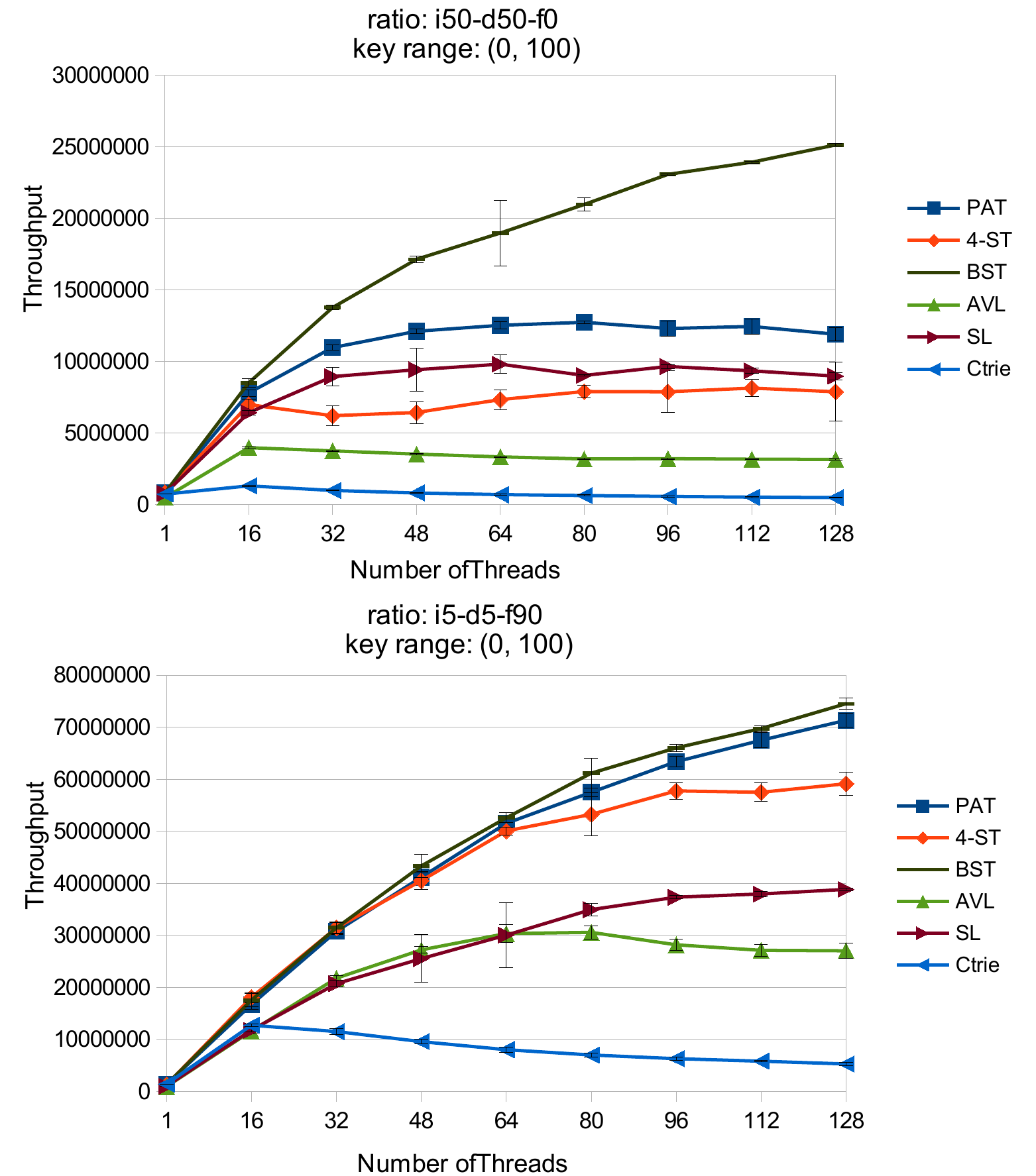}
\vspace*{-3mm}
\caption{Uniformly distributed keys}
\label{chart-random2}
\vspace*{-2mm}
\end{figure}

To evaluate the replace operations, we ran an experiment with 10\% inserts, 10\% deletes and 80\% replace operations (i10-d10-r80) and a key range of $(0, 10^6)$ on uniformly random keys.
(See Figure \ref{chart-move}.)
We could not compare these results with other data structure since none provide atomic replace operations.
As the chart shows, the replace operation scales well as the number of threads increases.
 
 \begin{figure}[t]
\centering
\leavevmode
\includegraphics[width=6.00cm]{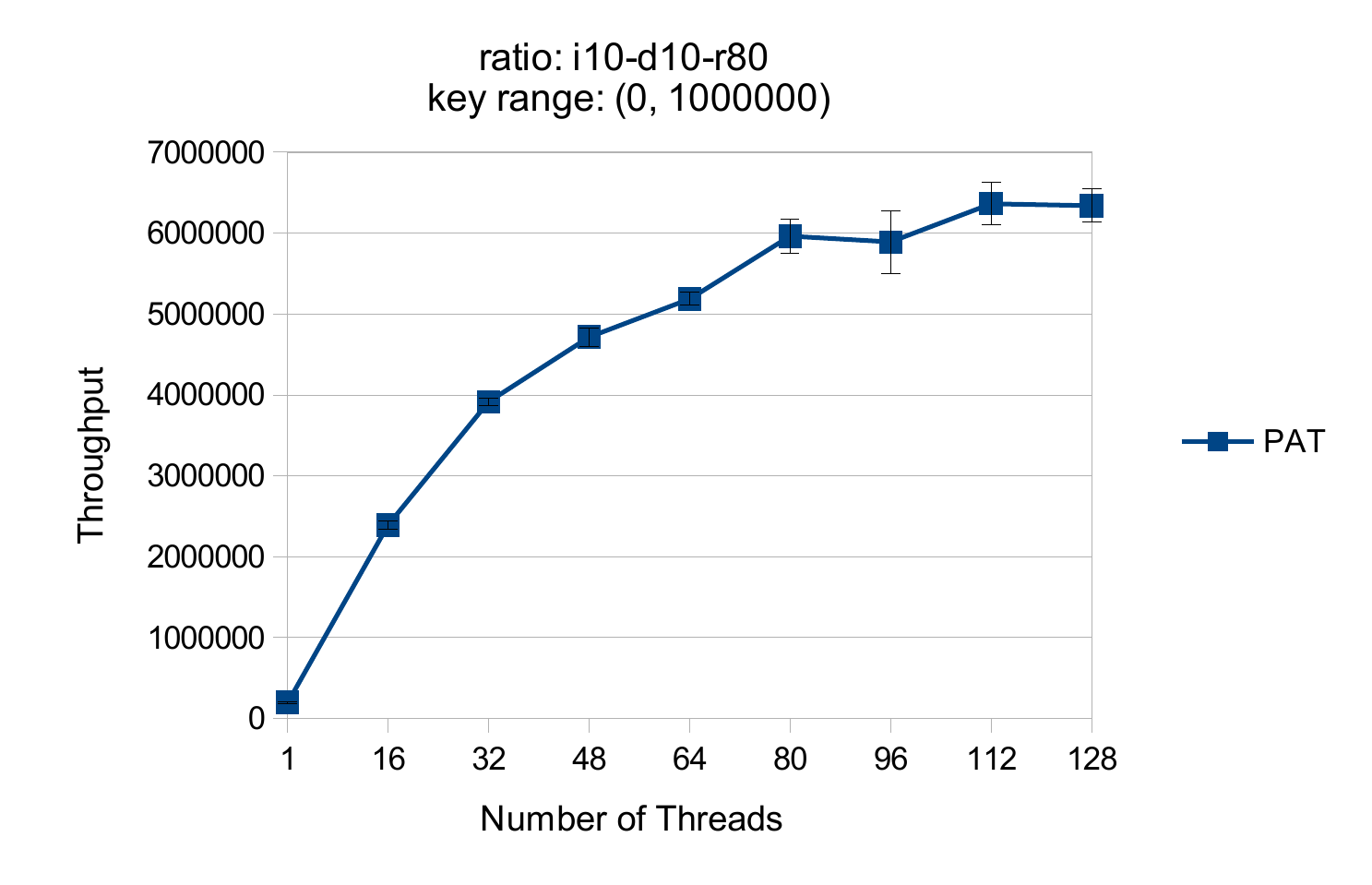}
\vspace*{-4mm}
\caption{Replace operations of PAT}
\label{chart-move}
\vspace*{-3mm}
\end{figure}

We also performed some experiments on non-uniformly distributed random keys.
To generate non-uniform keys, processes performed operations on sequence of 50 consecutive keys, starting from a randomly chosen key.
When keys are not uniformly distributed and key range is $(0, 10^6)$, Ctrie and then PAT outperform others greatly since they maintain a fixed height without doing expensive balancing operations.
(See Figure \ref{chart-chain}.)
Since the results of these experiments for other operations ratios were similar, only the chart for the ratio i15-d15-f70 is presented here.
Longer sequences of keys degrade the performance of BST and 4-ST even further.

\begin{figure}[t]
\centering
\leavevmode
\includegraphics[width=6.00cm]{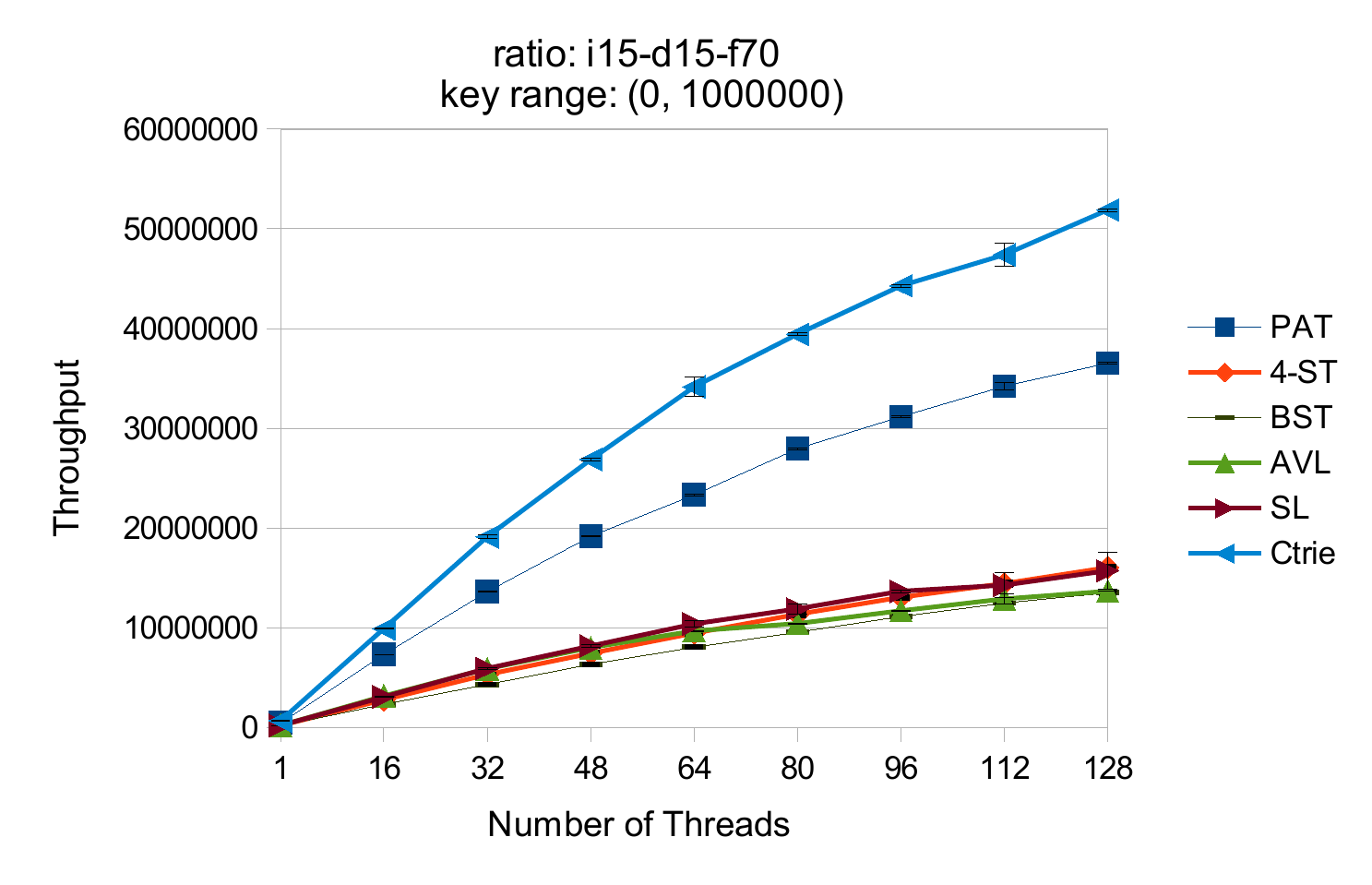}
\vspace*{-4mm}
\caption{Non-uniformly distributed keys (The lines for BST, 4-ST, AVL and SL overlap.)}
\label{chart-chain}
\vspace*{-7mm}
\end{figure}


\section{Conclusion} \label{conclusionSec}
Our algorithms can also be used to store unbounded length strings.
One approach would be to append \$ to the end of each string. 
To encode a binary string, $0$, $1$ and \$ can be represented by $01$, $10$ and $11$.
Then, every encoded key is greater than $00$ and smaller than $111$, so $00$ and $111$ can be used as keys of the two dummy nodes.
With this modifications, searches would be non-blocking but not wait-free.
Moreover, since $label$s of nodes never change, they need not fit in a dingle word.
The approach used in the replace operation can be used for operations on other data structures that must change several pointers atomically.
Future work includes providing the general framework for doing this on any tree-based structure.
Such a framework would have to guarantee that all changes become visible to query operations at the same time.
Brown et al.~\cite{gen-NB-tree} proposed a general technique for non-blocking trees that support one change to the tree atomically. 

Since our algorithms create many Flag objects to avoid using locks, finding more efficient memory management techniques is an important area for future work.

\noindent {\bf Acknowledgments.}
I thank my supervisor, Eric Ruppert for his guidance and support, Trevor Brown for providing lots of help and code for the experiments and Michael L. Scott for giving us access to his multicore machines.

\bibliographystyle{abbrv}
\bibliography{ref-2}

\onecolumn

\appendix

\section{Proof of Correctness}

\subsection{Preconditions and Basic Invariants}
In this section, we show that the algorithms satisfy some basic invariants.
First, we have the following observations from the pseudo-code.

\begin{observation} \label{basic-obs}
The $label$ field of a Node is never changed.
No field of an Info object is changed except the $state$ field.
\end{observation}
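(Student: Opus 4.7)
The plan is to prove both claims by direct inspection of the pseudo-code, since each is a syntactic property of the algorithm. The argument is short but requires enumerating every point in the pseudo-code where a Node or Info object could potentially be mutated.

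For the first claim, I would first note that a Node's $label$ field is set only when the Node is constructed. There are exactly a few construction sites: the initial root (line \ref{initial-root}), the new Leaf objects created during insert (in the argument to createNode on line \ref{ins-call-create-node}) and replace (on lines \ref{mov-call-create-node}, \ref{mov-create-node1}, \ref{mov-create-node3}, and inside line \ref{mov-set-info-sc1}), the new Internal objects returned by createNode (line \ref{create-internal}), and the copies made at lines \ref{create-copy} and \ref{create-copy2}. In each case the $label$ is set by the constructor and never subsequently written: no line of pseudo-code assigns to any expression of the form $x.label$. Thus $label$ fields are immutable after construction.

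For the second claim, I would enumerate every assignment in the pseudo-code whose left-hand side could refer to a field of an existing Info object. The only candidates are (i) the CAS on line \ref{help-set-done}, which writes the $flagDone$ field of $I$ (this is the one allowed exception, presumably what the statement calls the $state$ field), (ii) assignments to $x.info$ on various lines such as \ref{help-flag-node}, \ref{help-flag-leaf}, \ref{help-unflag}, and \ref{help-unflag-bt}, and (iii) the newFlag constructor on line \ref{newFlag-return}. The assignments in (ii) write to the $info$ field of a Node, not to a field of an Info object, so they are irrelevant. The call in (iii) only writes fields while constructing a fresh Info object, so it does not modify an existing one. No other line writes to any field of an Info object. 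In particular, the $flag$, $oldInfo$, $unflag$, $pNode$, $oldChild$, $newChild$, and $rmvLeaf$ fields of a Flag object are set at construction in newFlag and never reassigned.

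I do not expect any real obstacle here; the only subtlety is the potential confusion between writing a Node's $info$ field (which is permitted and happens frequently) and writing a field of the Info object itself (which happens only for $flagDone$). I would make this distinction explicit in the proof to avoid any ambiguity. Since the claim is purely a syntactic invariant of the code, no inductive argument on executions is needed; a careful case analysis over the pseudo-code lines suffices.
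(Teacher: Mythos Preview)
Your proposal is correct and essentially matches the paper's treatment: the paper states this as an Observation with no proof at all, treating it as immediate from inspection of the pseudo-code, which is exactly the syntactic case analysis you outline. Your write-up is in fact more thorough than the paper's, and you correctly identify that the ``$state$'' field in the statement is a slip for the $flagDone$ field actually used in the code.
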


\begin{observation}\label{root-obs}
The $root$ pointer is never changed and $root.label$ is $\varepsilon$.
\end{observation}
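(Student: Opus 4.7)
The plan is to prove both halves of the observation by straightforward inspection of the pseudocode, using Observation \ref{basic-obs} to reduce the work. The base case is immediate from line \ref{initial-root}: initialization assigns to $root$ a freshly created Internal object whose $label$ field is $\varepsilon$, so both claims hold at the moment just after initialization.

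For the persistence step, I would observe that the two claims collapse into one. By Observation \ref{basic-obs}, the $label$ field of any Node is never modified after the Node is constructed. Hence, as long as $root$ continues to point to the original Internal object created at line \ref{initial-root}, we automatically have $root.label = \varepsilon$. So it suffices to verify that the $root$ pointer itself is never reassigned after initialization. I would then scan every procedure in the pseudocode (insert, delete, replace, find, search, help, newFlag, createNode, logicallyRemoved, keyInTrie) looking for any statement that writes to $root$, including any CAS whose target is $root$. A direct inspection confirms that $root$ occurs on the left-hand side of an assignment only at line \ref{initial-root}, and that everywhere else it appears (for instance, "$node \leftarrow root$" in the search routine) it is only read. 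No CAS step targets $root$ either.

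There is essentially no technical obstacle here; the only thing to be careful about is not to miss a hidden write. The conceptual reassurance is the design choice already noted in the paper: the two dummy leaves labelled $0^\ell$ and $1^\ell$ are excluded from the universe of valid keys, so the trie always contains at least two leaves and the root never needs to be replaced by an update. This is precisely why no procedure ever needs to write to $root$, and confirms that the absence of such writes is intentional rather than an oversight.
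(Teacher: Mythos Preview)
Your proposal is correct and matches the paper's approach: the paper states this as an observation without proof, treating it as immediate from inspection of the pseudocode, and your argument simply spells out that inspection explicitly. Your use of Observation~\ref{basic-obs} to collapse the two claims into one is a clean touch, but otherwise there is no substantive difference.
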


Now, we prove that the precondition of the help routine is satisfied.
\begin{lemma} \label{help-pre-con-lem}
Each call to the help routine satisfies its precondition.
\end{lemma}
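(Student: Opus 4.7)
The plan is to do a case analysis over every call site of \textbf{help} in the pseudo-code and check, for each one, that the argument it is invoked with is a non-null Flag object whose fields satisfy whatever invariants \textbf{help} needs (in particular, that the parallel arrays $flag$, $oldInfo$, $pNode$, $oldChild$, $newChild$, $unflag$ have the lengths the loops in \textbf{help} assume, that $flag$ has been sorted so that the flag CASes go in increasing order of $label$, and that each $oldInfo[i]$ was read from $flag[i].info$ at some time before the corresponding child was read). From the pseudo-code, \textbf{help} is called in exactly five places: inside the three update operations at lines \ref{ins-call-help}, \ref{del-call-help}, and \ref{mov-call-help}; inside \textbf{newFlag} at line \ref{newFlag-call-help}; and inside \textbf{createNode} at line \ref{createNode-call-help1}.

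First I would dispatch the three calls from update operations. In each of \textbf{insert}, \textbf{delete}, and \textbf{replace}, the call to \textbf{help}($I$) is guarded by the test $I \ne \text{null}$, so $I$ is not null. Moreover, in every branch that sets $I$ to a non-null value, $I$ is the return value of a call to \textbf{newFlag}, and by inspecting \textbf{newFlag} we see that whenever it returns a non-null object that object is a freshly-created Flag produced at line \ref{newFlag-return} whose array fields were built from the arguments passed in. By reading the call sites in \textbf{insert}, \textbf{delete}, and \textbf{replace}, one checks that the parallel arrays passed to \textbf{newFlag} all have matching lengths for the loops in \textbf{help}, that \textbf{newFlag} has sorted $flag$ (line \ref{newFlag-sort-nodes}) and removed duplicates (line \ref{newFlag-keep-one-copy}), and that each entry of $oldInfo[i]$ was read from $flag[i].info$ earlier during the operation (either by \textbf{search} itself, or at lines \ref{ins-read-node-info}/\ref{mov-read-node-info} for the leaf case). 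Thus the precondition holds at these call sites.

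Next I would handle the two internal call sites. At line \ref{newFlag-call-help} in \textbf{newFlag}, \textbf{help} is called on $oldInfo[i]$, which has just been tested to be a Flag object at line \ref{newFlag-old-info}; this Flag object was itself produced by an earlier call to \textbf{newFlag} (by Observation \ref{basic-obs} and the fact that the only way a node's $info$ field becomes a Flag is through a flag CAS that uses an object returned from \textbf{newFlag}), so by an earlier application of this same lemma its fields satisfy the precondition. The call at line \ref{createNode-call-help1} inside \textbf{createNode} is analogous: the $info$ argument has just been tested to be a Flag object, and again the only way a process could obtain such an object is by having read it from a node's $info$ field, where it had been placed by an earlier successful flag CAS whose source was a prior \textbf{newFlag} call. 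An induction on time (or equivalently on the sequence of steps in the execution) then closes the loop: every Flag object ever passed to \textbf{help} originates from \textbf{newFlag}, whose post-condition supplies precisely the structural properties \textbf{help} needs.

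The main obstacle I anticipate is simply that the precondition of \textbf{help} is informal and multi-part, so the bookkeeping is tedious rather than deep: one has to verify not only that $I$ is a well-formed Flag object, but also that the $oldInfo$ values were read \emph{before} the corresponding children were read, which is what later lemmas (on ordering of CAS steps and on avoidance of the ABA problem) will depend on. This second part is really a statement about the calling context rather than about \textbf{help} itself, so it must be read off the pseudo-code of \textbf{insert}, \textbf{delete}, and \textbf{replace} one branch at a time, being careful about the replace branches where two searches interact and where duplicates in $flag$ may legitimately arise (handled by \textbf{newFlag} at lines \ref{newFlag-dup-val}--\ref{newFlag-keep-one-copy}). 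Once each branch is checked, the lemma follows.
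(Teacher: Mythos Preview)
Your case analysis over the five call sites is exactly what the paper does, and the core of your argument (that at each call site the argument is a Flag object) is correct. However, you have significantly over-read the precondition of \textbf{help}. In the paper, the only precondition of \textbf{help}($I$) is that $I$ is a Flag object (this is the type annotation on the parameter). The paper's proof is accordingly three sentences: at lines \ref{newFlag-call-help} and \ref{createNode-call-help1} there is an explicit test that the argument is a Flag just before the call, and at lines \ref{ins-call-help}, \ref{del-call-help}, \ref{mov-call-help} the argument is a Flag object freshly returned by \textbf{newFlag}.

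All the additional structural properties you set out to verify (that $oldInfo[i]$ was read from $flag[i].info$ before the corresponding child, that $flag$ is sorted, that duplicates are removed, that array lengths match) are not part of this lemma's obligation. They are established separately: for instance, the fact that $I.flag[i].info$ was $I.oldInfo[i]$ at some earlier time is the content of Lemma~\ref{info-old-lem}, and the sorting/deduplication is read off directly from \textbf{newFlag} where needed (e.g., in Lemma~\ref{order-flag-lem}). Your inductive-on-time argument for the internal call sites is therefore unnecessary here; a Flag object read from a node's $info$ field is still a Flag object, full stop. So your proposal is sound but does several lemmas' worth of work where one sentence per call site suffices.
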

\begin{proof}
We show that, for each call to help($I$), $I$ is a Flag object.
Just before calling help($I$) at line \ref{newFlag-call-help} or \ref{createNode-call-help1}, the operation checks that $I$ is a Flag object.
Update operations call help($I$) at line \ref{ins-call-help}, \ref{del-call-help} or \ref{mov-call-help} just after creating a new Flag object $I$. 
\end{proof}

A CAS step that tries to change the $child$ field of an internal node at line \ref{help-change-child} inside help($I$) is called a {\it child CAS of $I$}.

Next, we show that each internal node has two non-null children, non-null values are passed to createNode and $node$ is set to non-null value during the search operation. 

\begin{lemma} \label{child-num-lem} \label{p-gp-internal-lem}
\begin{enumerate}
\item Every internal node has two non-null children.
\item Every call to createNode satisfies its pre-condition.
\item Any process executing the loop of search has a non-null value in its $node$ variable.
\item If a search returns $\langle gp$, $p$, $node$, -, -, -$\rangle$, then $p$ is an internal node and if $gp$ is not null, $gp$ is an internal node.
\end{enumerate}
\end{lemma}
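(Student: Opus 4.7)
The plan is to prove all four statements together by a single simultaneous induction on the prefix length of the execution. This is the natural approach because the claims support one another: Part 1 guarantees that reads from $child$ arrays return non-null nodes, which Parts 3 and 4 rely on; Parts 3 and 4 in turn guarantee that the $node$ variable returned by a search is non-null and that $p$ is Internal, which Part 2 uses; Part 2 then feeds back into Part 1 by ensuring that new internal nodes built at line \ref{create-internal} have two non-null children.

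The base case is the initial state: by line \ref{initial-root}, $root$ is an Internal object with two non-null leaf children, so Part 1 holds, and no search or createNode call is in flight so Parts 2, 3 and 4 hold vacuously.

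For the inductive step I will consider each atomic step that could threaten one of the invariants. For Part 1, the only critical steps are the creation of a new Internal at line \ref{create-internal} (whose children are the arguments to createNode, non-null by the inductive form of Part 2) and child CAS steps at line \ref{help-change-child}, which write $I.newChild[i]$; by inspecting every Flag construction site (lines \ref{ins-set-info1}, \ref{ins-set-info2}, \ref{del-set-info}, \ref{mov-set-info1}, \ref{mov-set-info2}, \ref{mov-set-info-sc1}, \ref{mov-set-info-sc23}, \ref{mov-set-info-sc4}), each entry of $newChild$ is either a freshly allocated node or a sibling obtained from some $child$ array at an earlier time (non-null by Part 1 inductively). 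For Part 3, the loop body assigns $node$ to a child of the previous $node$, which was Internal by the loop guard, so the new value is non-null by Part 1; the initial value $root$ is non-null by Observation \ref{root-obs}. For Part 4, I will use the fact that the search loop always executes at least once, since $root$ is Internal and $root.label=\varepsilon$ is a prefix of every key; hence the returned $p$ was held as $node$ the last time the guard held and is therefore Internal, and if $gp\neq$ null the loop ran at least twice, so the analogous argument applies to $gp$. Part 2 then follows by checking each call site of createNode (lines \ref{ins-call-create-node}, \ref{mov-call-create-node}, \ref{mov-create-node1}, \ref{mov-create-node2}, \ref{mov-create-node3}) and verifying that each argument is either a freshly created Leaf, a copy of a search-returned $node$ (non-null by Part 3), a sibling pointer read from an internal node's $child$ array (non-null by Part 1), or a value just explicitly checked to be non-null.

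The main obstacle is organizing the mutual induction cleanly so that the cycle among the four claims does not become an actual circularity; the key is that every dependency is discharged using an earlier time in the execution, not the current one. The subtlest point is in Part 4, where one must rule out the degenerate scenario that the search loop exits without ever executing its body (which would leave $p$ equal to its initial value of null); this is handled by the observation that $\varepsilon$ is a prefix of every key, but it requires Observation \ref{root-obs} and Part 1 to know that $root$ is Internal at the moment the guard is first evaluated.
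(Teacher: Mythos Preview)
Your proposal is correct and follows the same simultaneous-induction approach as the paper, with essentially the same case analysis for each part. The only minor omission is that your Part~1 analysis should also cover the copy-creation steps at lines~\ref{create-copy} and~\ref{mov-create-copy}, where a new Internal node is created as a copy of an existing one; this case is handled trivially by applying the induction hypothesis to the original node being copied.
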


\begin{proof}
Assume the lemma is true before the step at time $T$.
We shall prove the lemma is true after that step.

(1) We show that Claim 1 is true if a new internal node is created at $T$ or if a $child$ field is changed at $T$.  
First, suppose a new internal node is created at time $T$. 
We show that the internal node that is created at $T$ has two non-null children.
A new internal node is created at line \ref{initial-root}, \ref{create-internal}, \ref{create-copy} or \ref{mov-create-copy}. 
If a new internal node is created at line \ref{initial-root} at time $T$, the children of the new node are set to two new nodes that are created at that line.
If a new copy of an internal node is made at line \ref{create-copy} or \ref{mov-create-copy} at time $T$, 
since the lemma is true at all times before $T$, a new copy of some non-null node is made at $T$ and the children of the new copy of the node are set to two non-null nodes at $T$. 
If a new node is created at line \ref{create-internal} inside createNode($node_1$, $node_2$),
since the invariant is true before $T$, $node_1$ and $node_2$ are non-null nodes and the children of the new node are set to $node_1$ and $node_2$ at $T$.

Now, suppose the $child$ field of an internal node is changed at $T$.
The $child$ field of an internal node is changed only at line \ref{help-change-child}.
Let $I$ be a Flag object such that a child CAS of $I$ is executed at $T$.
We consider each way that $I$ could have been created.
For each case, we show that, for all $i$, $I.newChild[i]$ is a non-null node. 

If $I$ is created at line \ref{ins-set-info1}, \ref{ins-set-info2}, \ref{mov-set-info1}, \ref{mov-set-info2}, \ref{mov-set-info-sc23} or \ref{mov-set-info-sc4}, 
$I.newNode[0]$ is set to a non-null node that is created at line \ref{create-internal} inside 
createNode, which is called at line \ref{ins-call-create-node}, \ref{mov-call-create-node}, \ref{mov-create-node1} or \ref{mov-create-node3}.

If $I$ is created at line \ref{mov-set-info-sc1}, $I.newNode[0]$ is set to a non-null node that is created at that line.

If $i=0$ and $I$ is created at line \ref{del-set-info} or $i=1$ and $I$ is created at line \ref{mov-set-info1} or \ref{mov-set-info2}, 
let $\langle$ -, $p$, -, -, -, -$\rangle$ be the result returned by the preceding call to search($val$) on line \ref{del-call-search} or \ref{mov-call-search2}.
Then, $I.newNode[0]$ is set to a child of $p$ that is read at line \ref{del-read-sibling} or \ref{mov-read-sibling} before $T$.
Since the lemma is true before $T$, $I.newNode[0]$ is set to a non-null node.

(2) Suppose createNode($node_1$, $node_2$) is called at $T$ from line  \ref{ins-call-create-node}, \ref{mov-call-create-node}, \ref{mov-create-node1}, \ref{mov-create-node2} or \ref{mov-create-node3}.
We show that $node_1$ and $node_2$ are non-null.

If createNode($node_1$, $node_2$) is called at line \ref{ins-call-create-node} or \ref{mov-call-create-node}, 
$node_1$ is a new copy of $node$ that is returned by the search operation at line \ref{ins-call-search} or \ref{mov-call-search2}.
Since the lemma is true prior to $T$, $node_1$ is non-null.
Then, $node_2$ is a new leaf node that is created at line \ref{ins-call-create-node} or \ref{mov-call-create-node}.

If createNode($node_1$, $node_2$) is called at line \ref{mov-create-node1}, let $\langle$-, $p$, -, -, -, -$\rangle$ be the result returned by the call to search($val$) on line \ref{mov-call-search2}.
Then, $node_2$ is a new leaf node that is created at  line \ref{mov-create-node1} and $node_1$ is a child of $p$ that is read at line \ref{mov-read-sibling} before $T$.
Since the lemma is true before $T$, $node_1$ is a non-null node.

If createNode($node_1$, $node_2$) is called at line \ref{mov-create-node2}, let $\langle gp$, $p$, -, -, -, -$\rangle$ be the result returned by the call to search($val$) on line \ref{mov-call-search2}.
Then, $node_1$ is a child of $p$ that is read at line \ref{mov-read-sibling} and
$node_2$ is a child of $gp$ that is read at line \ref{mov-read-pSibling}.
Since the lemma is true before $T$, $node_1$ and $node_2$ are non-null nodes.

If createNode($node_1$, $node_2$) is called at line \ref{mov-create-node3}, $node_1$ is a new node that is created at line \ref{mov-create-node2} and
$node_2$ is a new leaf node that is created at line \ref{mov-create-node3}.

(3) Before entering the loop of the search routines, $node$ is set to $root$, which is not null.
We must show that if $node$ is set on line \ref{search-set-node} at $T$, it is non-null.
Since $node$ is set to a child of a node at $T$ and the lemma is true prior to $T$, $node$ is set to a non-null node at $T$.

(4) During the search operation, $node$ is initialized to $root$ at line \ref{search-set-initial-node}, so at least one iteration of the loop is performed 
(since $root$ is an internal node and $root.label = \varepsilon$ by Observation \ref{root-obs}).
By the test at line \ref{search-check-condition}, $p$ is an internal node after every execution of line \ref{search-set-p}.

In the first loop iteration, $gp$ is set to null.
In each subsequent iteration, $gp$ is set to an internal node at line \ref{search-set-gp} (since $p$ is always an internal node after each execution of line \ref{search-set-p}).
\end{proof}

By Lemma \ref{p-gp-internal-lem}, we have the following observation.

\begin{observation} \label{node-internal-obs}
Let $\langle gp_d$, -, -, -, -, -$\rangle$ be the result returned by a call to the search operation on line \ref{mov-call-search1} and
$\langle$ -, -, $node_i$, -, -, -$\rangle$ be the result returned by another call to the search operation on line \ref{mov-call-search2}.
If $gp_d = node_i$, then $node_i$ is an internal node.
\end{observation}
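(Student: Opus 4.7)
The plan is to derive this observation almost entirely from the already-established Lemma \ref{p-gp-internal-lem}. The key insight is that the hypothesis $gp_d = node_i$ forces $gp_d$ to be non-null, and once we know $gp_d$ is non-null we can invoke part (4) of Lemma \ref{p-gp-internal-lem} to conclude it is an internal node.

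Concretely, I would proceed as follows. First I would observe that the search on line \ref{mov-call-search2} returns $node_i$ as its third component, which by part (3) of Lemma \ref{p-gp-internal-lem} together with the analysis of how $node$ is set in the search loop is guaranteed to be non-null (the $node$ variable is initialized to $root$ and then only ever set to a child returned by a node, which part (1) of the lemma guarantees is non-null). Thus $node_i \ne \text{null}$. Then, from the equation $gp_d = node_i$, I would conclude $gp_d \ne \text{null}$. At that point, part (4) of Lemma \ref{p-gp-internal-lem} applied to the search on line \ref{mov-call-search1} tells us that a non-null $gp_d$ is necessarily an internal node. Substituting $node_i$ for $gp_d$ gives the conclusion.

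Since each of these implications is a direct invocation of an already-proved statement, there is no real obstacle here: the observation is a one-line corollary of Lemma \ref{p-gp-internal-lem}. The only care needed is to spell out explicitly that $node_i$ is non-null (so that the equality is not vacuous), which is why I would phrase the argument in that order rather than immediately jumping to part (4). Because the result is essentially a trivial consequence, stating it as an Observation rather than a Lemma is appropriate, and the proof can be written in just two or three sentences in the final manuscript.
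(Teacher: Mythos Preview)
Your proposal is correct and matches the paper's approach exactly: the paper simply states ``By Lemma \ref{p-gp-internal-lem}, we have the following observation,'' and your argument spells out precisely the intended reasoning---use part (3) (or equivalently part (1)) to get $node_i \ne \text{null}$, transfer this to $gp_d$ via the hypothesis, and then apply part (4). Nothing is missing.
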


We use $a \cdot b$ to denote the concatenation of $a$ and $b$.
Now, we show that the search operation satisfies some of its post-conditions if it returns.
\begin{lemma} \label{search-lem}
Assume search($val$) returns $\langle gp$, $p$, $node$, $gpInfo$, $pInfo$, $rmvd \rangle$.
The following statements are true.
\begin{enumerate}
\item  If $gp$ is not null, then, at some time during the search operation, $gp.info$ was $gpInfo$, and 
at some later time during the search operation, $p$ was a child of $gp$.
\item Then, at some later time during the search operation, $p.info$ was $pInfo$, and at some later time during the search operation, $p.child[i] = node$ for some $i$.
\item $(p.label) \cdot i$ is a prefix of $val$.
\item If $node$ is an internal node, $node.label$ is not a prefix of $val$.
\end{enumerate}
\end{lemma}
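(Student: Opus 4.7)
My plan is to prove all four claims by a direct trace through the search routine, keeping careful track of when each returned variable obtains its final value. Let $k$ be the total number of iterations of the loop at line~\ref{search-check-condition} executed by this call (so $k \ge 1$, since on entry $node = root$ is an Internal node whose label is $\varepsilon$, a prefix of every $val$, by Observation~\ref{root-obs}). For each iteration $j \in \{1,\dots,k\}$, name the values of $node$, $p$, $pInfo$, $gp$, $gpInfo$ at the end of that iteration using subscript $j$. The returned tuple is then $\langle gp_k, p_k, node_k, gpInfo_k, pInfo_k, rmvd\rangle$.

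For claim~(2), I would observe that during iteration $k$, line~\ref{search-set-p} reads $pInfo_k$ from $node_{k-1}.info$ and simultaneously sets $p_k \gets node_{k-1}$; so at this instant $p.info = pInfo$ for the returned values. Immediately afterwards, line~\ref{search-set-node} reads $node_k$ from $p_k.child[i]$ where $i = (|p_k.label|+1)$th bit of $val$, establishing the second half of~(2). For claim~(1), I would only need to consider the case $k \ge 2$ (otherwise $gp_k$ is null from the initialization on line~\ref{search-initial-p}). Line~\ref{search-set-gp} at iteration $k$ gives $gp_k = p_{k-1}$ and $gpInfo_k = pInfo_{k-1}$; so at line~\ref{search-set-p} of iteration $k-1$, $gpInfo_k$ was read from what is now $gp_k$'s $info$ field. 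Then at line~\ref{search-set-node} of iteration $k-1$, $node_{k-1}$ (which equals $p_k$) was read from $gp_k.child$, so at that point $p$ was a child of $gp$. The events occur in the required temporal order: reading $gp.info$ (iteration $k-1$, line~\ref{search-set-p}) precedes $p$ being read as a child of $gp$ (iteration $k-1$, line~\ref{search-set-node}), which precedes reading $p.info$ (iteration $k$, line~\ref{search-set-p}), which precedes reading $node$ as a child of $p$ (iteration $k$, line~\ref{search-set-node}).

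For claim~(3), the loop-continuation test at line~\ref{search-check-condition} succeeded at the start of iteration $k$, so $node_{k-1}.label$ (which equals $p_k.label$ after line~\ref{search-set-p}) is a prefix of $val$. Since line~\ref{search-set-node} appends exactly the $(|p_k.label|+1)$th bit of $val$ as the index $i$, $(p.label)\cdot i$ is also a prefix of $val$. For claim~(4), the loop terminated, so the test at line~\ref{search-check-condition} failed when $node = node_k$; if $node_k$ is Internal, the only way for the test to fail is that $node_k.label$ is not a prefix of $val$.

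I expect no serious obstacle: the argument is essentially a mechanical unrolling of the loop. The one subtlety worth stating explicitly is that the variables $p$, $pInfo$, $gp$, $gpInfo$ are each assigned only inside the loop body (from line~\ref{search-set-gp} and line~\ref{search-set-p}) and paired atomically in code order, so the value of $pInfo_k$ really is the value of $node_{k-1}.info$ as read at a single moment when $node_{k-1}$ was the future $p_k$, and similarly for $gpInfo_k$. The fact that $p_k$ and (when non-null) $gp_k$ are internal nodes, which is needed to make the field accesses in~(1)--(3) meaningful, is supplied by Lemma~\ref{p-gp-internal-lem}.
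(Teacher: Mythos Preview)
Your proposal is correct and follows essentially the same approach as the paper: a direct trace through the last two iterations of the search loop, identifying when each returned variable receives its final value and checking the loop test at entry to and exit from the final iteration. Your subscript notation makes the bookkeeping slightly more explicit than the paper's prose, but the argument is the same.
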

\begin{proof}
During the search operation, $node$ is initialized to $root$ at line \ref{search-set-initial-node}, so at least one iteration of the loop is performed 
(since $root$ is an internal node and $root.label = \varepsilon$ by Observation \ref{root-obs}).
\begin{enumerate}
\item 
Assume $gp$ is not null. 
By Lemma \ref{p-gp-internal-lem}, $gp$ is an internal node.
Since $gp$ is set to null during the first loop iteration, the search operation does not exit the loop after the first loop iteration. 
During the second last loop iteration, $pInfo$ and $p$ are set at line \ref{search-set-p} and $node$ is set to a child of $p$ at line \ref{search-set-node}.
Then, during the last loop iteration, $gpInfo$ and $gp$ are set to $pInfo$ and $p$ at line \ref{search-set-gp} and $p$ is set to $node$ at line \ref{search-set-p}.
So, when line \ref{search-set-p} was executed for the second last time, $gp.info = gpInfo$ and then, when line \ref{search-set-node} was executed for the second last time, $p$ was a child of $gp$.
 
\item
By Lemma \ref{p-gp-internal-lem}, $p$ is an internal node.
During the last loop iteration, $pInfo$ and $p$ are set at line \ref{search-set-p} and $node$ is set to a child of $p$ at line \ref{search-set-node}.
So, when line \ref{search-set-p} was executed for the last time, $p.info = pInfo$ and then, when line \ref{search-set-node} was executed for the last time, $p.child[i] = node$ for some $i$.

\item 
During the last loop iteration, $p$ is set to $node$  at line \ref{search-set-p}.
Since the condition at line \ref{search-check-condition} is true at the beginning of that iteration, 
$p$ is set to an internal node whose $label$ is a prefix of $val$. 
At the last execution of line \ref{search-set-node}, $node$ is set to $p.child[i]$ such that $i = (|p.label|+1)$th bit of $val$.
So, $(p.label) \cdot i$ is a prefix of $val$.

\item 
Since the search operation exits the loop, the condition is false at the last execution of line \ref{search-check-condition}. 
If $node$ is an internal node, $node.label$ is not a prefix of~$val$.
\end{enumerate}
\end{proof}

Next, we show that, just after a $child$ field is set, the following Patricia trie property is preserved: 
if $x.child[i] = y$, then $(x.label) \cdot i$ is a prefix of $y.label$.
\begin{invariant} \label{prefix-inv}
Let $x$ be an internal node and $y = x.child[i]$.
Then, $(x.label) \cdot i$ is a prefix of $y.label$.
\end{invariant}  
\begin{proof}
By Observation \ref{basic-obs}, no $label$ field of a node is ever changed.
So, we need only show that the lines \ref{initial-root}, \ref{create-copy}, \ref{create-internal} and \ref{mov-create-copy}, 
which create new internal nodes, and line \ref{help-change-child}, which changes the $child$ field of an internal node, preserve the invariant.
We assume the invariant holds at all times prior to $T$ and we prove that the invariant is true just after $T$.
First, we show if a new internal node is created at $T$, the invariant is preserved.

Line \ref{initial-root} creates a new internal node $root$ that has two children. 
Initially, $root.label = \varepsilon$, $root.child[0].label = 000...0$ and $root.child[1].label = 111...1$.

If a new copy of an internal node is created on line \ref{create-copy} or \ref{mov-create-copy} at $T$, 
since the invariant is true at all times before $T$, the new copy of the node satisfies the invariant.

If a new node is created at line \ref{create-internal} inside createNode, 
its $label$ is the longest common prefix of the $label$ fields of its children. 
Since the two children's $label$s are not prefixes of each other, the invariant is true.

In the remainder of the proof, we show if a child CAS of $I$ succeeds at $T$, the invariant is preserved.
The child CAS changes $I.pNode[j].child[i]$ from $I.oldChild[j]$ to $I.newChild[j]$ (for some $j$) where $i$ is the $(|I.pNode[j].label+1|)$th bit of $I.newChild[j].label$.
Thus, it suffices to show that $I.pNode[j].label$ is a proper prefix of $I.newChild[j].label$. 
We consider each way that $I$ could have been created.
By Observation \ref{basic-obs}, after the initialization of $I$, no field of $I$ is changed except the $flagDone$ field. 

Case 1: $I$ is created at line \ref{ins-set-info1} or \ref{ins-set-info2}.
Let $\langle$ -, $p$, $node$, -, -, -$\rangle$ be the result returned by the call to search($val$) on line \ref{ins-call-search} that precedes the creation of $I$.
Then, $newNode$ is the new node that is created at line \ref{ins-call-create-node} and whose children are a new copy of $node$ and a new leaf node whose $label$ is $val$.
In this case, $I.pNode[0] = p$ and $I.newChild[0] = newNode$. 
By Lemma \ref{search-lem}, $p.child[k] = node$ for some $k$ at some time before $T$.
Since the invariant is true before $T$, $(p.label) \cdot k$ is a prefix of $node.label$.
By Lemma \ref{search-lem},  $(p.label) \cdot k$ is a prefix of $val$.
Since $newNode.label$ is the longest common prefix of $val$ and $node.label$, $(p.label) \cdot k$ is a prefix of $newNode.label$. 

Case 2: $I$ is created at line \ref{del-set-info}.
Let $\langle gp$, $p$, $node$, -, -, -$\rangle$ be the result returned by the call to search($val$) on line \ref{del-call-search} that precedes the creation of $I$.
Let $nodeSibling$ be the child of $p$ that is read at line \ref{del-read-sibling}.
Since the invariant is true before $T$, $p.label$ is a prefix of $nodeSibling.label$.
In this case, $I.pNode[0] = gp$, $I.oldChild[0] = p$, and $I.newChild[0] = nodeSibling$. 
Since the child CAS succeeds at $T$, $gp.child[i]$ was $p$ just before $T$.
Since the invariant is true before $T$, $(gp.label) \cdot i$ is a prefix of $p.label$.
Since $p.label$ is a prefix of $nodeSibling.label$, $(gp.label) \cdot i$ is a prefix of $nodeSibling.label$.

Case 3: $I$ is created at line \ref{mov-set-info1} or \ref{mov-set-info2}.
Let $\langle gp_d$, $p_d$, $node_d$, -, -, -$\rangle$ and $\langle$ -, $p_i$, $node_i$, -, -, -$\rangle$ be the results returned by 
the calls to the search routine on line \ref{mov-call-search1} and \ref{mov-call-search2} that precede the creation of $I$.
Then, $I.pNode[0] = p_i$ and $I.newChild[0]$ is the new node that is created at line \ref{mov-call-create-node}.
By the same argument as in Case 1, the invariant is true if a child CAS of $I$ on $I.pNode[0]$ succeeds at $T$.
Let $nodeSibling$ be the child of $p_d$ that is read on line \ref{mov-read-sibling}.
Then, $I.pNode[1]=gp_d$ and $I.newChild[1]=nodeSibling$.
By the same argument as in Case 2, the invariant is true if a child CAS of $I$ on $I.pNode[1]$ succeeds at $T$.

Case 4: $I$ is created at line \ref{mov-set-info-sc1}.
Let $\langle$ -, $p_i$, $node_i$, -, -, -$\rangle$ be the result returned by the call to search($val_i$) on line \ref{mov-call-search2} that precedes the creation of $I$.
In this case, $I.pNode[0] = p_i$ and $I.newChild[0]$ is a new leaf node whose $label$ is $val_i$. 
By Lemma \ref{search-lem}, $p_i.label$ is a proper prefix of $val_i$.

Case 5: $I$ is created at line \ref{mov-set-info-sc23}. 
Let $\langle gp_d$, $p_d$, $node_d$, -, -, -$\rangle$ and $\langle$ -, $p_i$, $node_i$, -, -, -$\rangle$ be the results returned by 
the calls to the search routine on line \ref{mov-call-search1} and \ref{mov-call-search2} that precede the creation of $I$.
Let $nodeSibling$ be the child of $p_d$ that is read at line \ref{mov-read-sibling} and 
$newNode$ be the new node that is created at line \ref{mov-create-node1} and whose children are $nodeSibling$ and a new leaf node whose $label$ is $val_i$.
In this case, $I.pNode[0] = p_i$, $I.oldChild[0] = node_i$ and $I.newChild[0] = newNode$.
Since the child CAS succeeds at $T$, $gp_d.child[i]$ was $p_d$ just before $T$.
Since the invariant is true before $T$, $(gp_d.label) \cdot i$ is a prefix of $p_d.label$.
Since the invariant is true before $T$ and $nodeSibling$ is a child of $p_d$ before $T$, $p_d.label$ is a prefix of $nodeSibling.label$.
So, $(gp_d.label) \cdot i$ is a prefix of $nodeSibling.label$.

If the condition at line \ref{mov-condition-sc2} is true, by Lemma \ref{search-lem}, $(gp_d.label) \cdot i$ is a prefix of $val_i$ (since $p_i = gp_d$ and $node_i = p_d$).
If the condition at line \ref{mov-condition-sc3} is true, by Lemma \ref{search-lem}, $p_i.label$ is a prefix of $val_i$, so  $(gp_d.label) \cdot i$ is a prefix of $val_i$ (since $p_i = p_d$).
Since the children of $newNode$ are $nodeSibling$ and a new leaf node whose $label$ is $val_i$, $(gp_d.label) \cdot i$ is a prefix of $newNode.label$.

Case 6: $I$ is created at line \ref{mov-set-info-sc4}.
Let $\langle gp_d$, $p_d$, $node_d$, -, -, -$\rangle$ and $\langle$ -, $p_i$, $node_i$, -, -, -$\rangle$ be the results returned by 
the calls to the search routine on line \ref{mov-call-search1} and \ref{mov-call-search2} that precede the creation of $I$.
Let $nodeSibling$ be the child of $p_d$ that is read at line \ref{mov-read-sibling} and $pSibling$ be the child of $gp_d$ that is read at line \ref{mov-read-pSibling}. 
Then, $newChild$ is the new node that is created at line \ref{mov-create-node2} and whose children are $nodeSibling$ and $pSibling$.
Then, $newNode$ is the new node that is created at line \ref{mov-create-node3} and whose children are $newChild$ and a new leaf node whose $label$ is $val_i$.
In this case, $node_i = gp_d$, $I.pNode[0] = p_i$, $I.oldChild[0] = node_i$  and $I.newChild[0] = newNode$. 

By Lemma \ref{search-lem}, $p_d$ was a child of $gp_d = node_i$ at some time before $T$.
Since the invariant is true before $T$, $node_i.label$ is a prefix of $p_d.label$.
Since the invariant is true before $T$ and $nodeSibling$ is a child of $p_d$ before $T$, $p_d.label$ is a prefix of $nodeSibling.label$.
So, $node_i.label$ is a prefix of $nodeSibling.label$.
Since the child CAS succeeds at $T$, $p_i.child[i] = node_i$ just before $T$.
Since the invariant is true before $T$, $(p_i.label) \cdot i$ is a prefix of $node_i.label$.
Since $node_i.label$ is a prefix of $nodeSibling.label$, $(p_i.label) \cdot i$ is a prefix of $nodeSibling.label$.

Since the invariant is true before $T$ and $pSibling$ is a child of $gp_d = node_i$ before $T$, $node_i.label$ is a prefix of $pSibling.label$.
Since $(p_i.label) \cdot i$ is a prefix of $node_i.label$, $(p_i.label) \cdot i$ is a prefix of $pSibling.label$.
Since the children of $newChild$ are $nodeSibling$ and $pSibling$, $(p_i.label) \cdot i$ is a prefix of $newChild.label$.

By Lemma \ref{search-lem}, $(p_i.label) \cdot i$ is a prefix of $val_i$.
Since the children of $newNode$ are $newChild$ and a new leaf node whose $label$ is $val_i$, $(p_i.label) \cdot i$ is a prefix of $newNode.label$.
\end{proof}

We say that an internal node $pNode$ is a {\it parent} of a node $cNode$ at time $T$, if $cNode$ is an element of $pNode.child$ at time $T$.
Note that a node $cNode$ may have more than one parent at time $T$.
(For example, this occurs if $cNode$'s parent has been removed from the tree but still has a $child$ pointer to $cNode$.)

We say that a node $anc$ is an {\it ancestor} of a node $x$ at time $T$ if there is a path consisting of child pointers from $anc$ to $x$ at time $T$. 
We say that node $x$ is a {\it descendant} of $anc$ at time $T$.
If $anc \ne x$, we say $anc$ is a {\it proper ancestor} of a node $x$ at time $T$ and $x$ is a {\it proper descendant} of a node $anc$ at time $T$.
We say a node $x$ is {\it reachable at time $T$} if $root$ is an ancestor of $x$ at time $T$. 

The $root$ cannot have a parent $x$ because $x.label$ would have to be a proper prefix of $root.label = \varepsilon$ by Invariant \ref{prefix-inv}.
\begin{corollary} \label{root-no-parent-col}
The $root$ does not have any parent at any time.
\end{corollary}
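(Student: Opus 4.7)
The plan is to derive a contradiction from the assumption that $root$ has a parent. Suppose, for contradiction, that at some time $T$ there is an internal node $x$ and an index $i \in \{0,1\}$ with $x.child[i] = root$ at time $T$. I would then apply Invariant \ref{prefix-inv} to the pair $(x, root)$: this yields that $(x.label) \cdot i$ is a prefix of $root.label$.

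Next, I would invoke Observation \ref{root-obs}, which guarantees that $root.label = \varepsilon$. Thus $(x.label) \cdot i$ must be a prefix of the empty string, and so must itself be the empty string. However, by construction $(x.label) \cdot i$ has length at least $1$ (since it contains the bit $i$), giving the desired contradiction.

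The proof is essentially a one-line consequence of the two results cited in the paragraph preceding the corollary, so there is no substantive obstacle — the only point requiring any care is to be sure that Invariant \ref{prefix-inv} applies to every time $T$ (not just immediately after a $child$-changing step); this is fine because the invariant is stated as an invariant of the data structure and is therefore maintained at all times.
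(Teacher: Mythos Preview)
Your proposal is correct and is essentially the same argument the paper gives (the one-sentence justification appearing immediately before the corollary): assuming a parent $x$ of $root$, Invariant~\ref{prefix-inv} forces $(x.label)\cdot i$ to be a prefix of $root.label=\varepsilon$, which is impossible.
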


\begin{lemma} \label{no-same-node-lem} \label{reachable-parent-lem}
At all times, if $node_1$ and $node_2$ are reachable and have the same $label$, then $node_1 = node_2$ and there is exactly one reachable parent of $node_1$ (unless $node_1$ is the $root$).
\end{lemma}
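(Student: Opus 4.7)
The plan is to establish both assertions at a fixed time $T$ by showing that any two paths from $root$ through child pointers that end at nodes with the same label must be identical. The key preliminary observation is that Invariant \ref{prefix-inv} implies strict label growth along child pointers: if $y = x.child[i]$ then $(x.label) \cdot i$ is a prefix of $y.label$, so $|y.label| \ge |x.label| + 1 > |x.label|$.

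First I would fix two paths $P_1 = (a_0, a_1, \ldots, a_n)$ and $P_2 = (b_0, b_1, \ldots, b_m)$ with $a_0 = b_0 = root$, each successive node a child of its predecessor, and $a_n.label = b_m.label$. Let $k$ be the smallest index at which the two paths disagree or at which one of them runs out. In the divergence case, $a_{k-1} = b_{k-1}$ while $a_k = a_{k-1}.child[i]$ and $b_k = a_{k-1}.child[j]$ with $i \ne j$. Iterating Invariant \ref{prefix-inv} forward along each path from position $k$, the string $(a_{k-1}.label) \cdot i$ is a prefix of $a_n.label$ and $(a_{k-1}.label) \cdot j$ is a prefix of $b_m.label = a_n.label$; but $i \ne j$ makes this impossible, since both would be the $(|a_{k-1}.label|+1)$-st bit of $a_n.label$. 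In the unequal-length case one path is a strict extension of the other, so iterated strict label growth forces the longer path's terminal label to be strictly longer than the shorter path's terminal label, contradicting $a_n.label = b_m.label$. Hence $P_1 = P_2$, and in particular $a_n = b_m$.

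The first assertion is now immediate: picking reachable paths witnessing that $node_1$ and $node_2$ are reachable, the coincidence claim yields $node_1 = node_2$. For the second assertion, if $node_1 \ne root$ is reachable, any reachability path has length at least $1$, and its penultimate node is a reachable parent of $node_1$, giving existence. For uniqueness, suppose $p$ and $q$ are two reachable parents of $node_1$ with $p.child[i] = node_1$ and $q.child[j] = node_1$. Extend a reachable path to $p$ by the edge into $node_1$, and likewise for $q$; both resulting paths end at nodes with the common label $node_1.label$, so the coincidence claim forces them to be identical, and in particular $p = q$.

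The main obstacle I anticipate is handling the unequal-length case of the path comparison cleanly and making the strict-growth consequence of Invariant \ref{prefix-inv} explicit, since it is invoked both in the divergence case (to push the mismatched bit all the way down to the terminal label) and in the prefix case (to compare terminal label lengths). No induction on time is needed, since the argument uses Invariant \ref{prefix-inv} only at the time of interest.
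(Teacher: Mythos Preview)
Your proposal is correct and follows essentially the same approach as the paper: both arguments take two root-to-node paths and use Invariant~\ref{prefix-inv} to show they coincide, with the paper phrasing this as an explicit induction on the index and you phrasing it via a minimal divergence point. The paper extracts parent uniqueness directly from the equality $v_{n-1}=v'_{m-1}$ of the penultimate nodes, whereas you rederive it by extending root-paths through two putative reachable parents and reapplying the path-coincidence claim, but this is just a minor repackaging of the same idea.
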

\begin{proof}
Let $v_0, v_1, ..., v_{n-1}, v_n$ be a path consisting of child pointers at time $T$ such that $v_0 = root$ and $v_n = node_1$ and
$v'_0, v'_1, ..., v'_{m-1}, v'_m$ be a path consisting of child pointers at time $T$ such that $v'_0 = root$ and $v'_n = node_2$.
Without loss of generality, assume $n \le m$.

By induction, we show these two paths are the same.
Since $v_0 = root$ and $v'_0 = root$, $v_0 = v'_0$.
Assuming $v_{i-1} = v'_{i-1}$, we show that $v_i = v'_i$.
By Invariant \ref{prefix-inv}, $(v_{i-1}.label) \cdot j$ is a prefix of $node_1.label = node_2.label$ for some $j$.
Since, by Invariant \ref{prefix-inv}, $v_i.label$ and $v'_i.label$ are prefixes of $node_1.label = node_2.label$ and $v_{i-1}.label$ is a proper prefix of $v_i.label$ and $v'_i.label$,
$(v_{i-1}.label) \cdot j$ is a prefix of $v_i.label$ and $v'_i.label$.
By Invariant \ref{prefix-inv}, $v_{i-1}.child[j] = v_i$ at time $T$ and $v_{i-1}.child[j] = v'_i$ at time $T$, so $v_i = v'_i$.

So, $node_1 = v'_n$. 
Then, $n = m$ since $v'_n.label$ is not a proper prefix of $node_2.label$. 
Therefore, $v_n = v'_m$ and $node_1 = node_2$.

Since  $v_{n-1} = v'_{m-1}$, if $node_1 \ne root$, there is exactly one reachable parent of $node_1$ at time~$T$.
\end{proof}

\subsection{Behaviour of CAS Steps on $info$ Fields}
In this section, we show how CAS steps change the $info$ field of nodes.
First, we define flagging and unflagging of nodes formally.

\begin{definition}
Let $x$ be a node.
If $x.info$ is a Flag object at time $T$, we say $x$ is {\it flagged at time $T$}.
The CAS step at line \ref{help-flag-node} is called a {\it flag CAS}.
\end{definition}

\begin{definition}
Let $x$ be an internal node.
If $x.info$ is an Unflag object at time $T$, we say $x$ is {\it unflagged at time $T$}.
The CAS step at line \ref{help-unflag} is called an {\it unflag CAS}.
The CAS step at line \ref{help-unflag-bt} is called a {\it backtrack CAS}.
\end{definition}

All nodes are unflagged when they are created.
A leaf node can only become flagged at line \ref{help-flag-leaf}.
Once a leaf node is flagged, it can never become unflagged.
The following lemma describes how the $info$ field of an internal node is initialized and changed when the node becomes flagged or unflagged.

\begin{lemma} \label{info-lem}
Let $x$ be an internal node.
When node $x$ is created, $x.info$ is initially set to a new Unflag object.
The only changes to $x.info$ that can occur are 
(1) a flag CAS at line \ref{help-flag-node} that changes $x.info$ from an Unflag object to a Flag object, or
(2) an unflag CAS at line \ref{help-unflag} or a backtrack CAS at line \ref{help-unflag-bt} that changes $x.info$ from a Flag object to a newly created Unflag object.
\end{lemma}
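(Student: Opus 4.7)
The plan is to prove the lemma by induction over the sequence of steps that can possibly alter the $info$ field of an internal node $x$, identifying every such step from the pseudocode and checking each one against the claim.

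First I would handle the initialization claim. Internal nodes are created only at lines \ref{initial-root}, \ref{create-copy}, \ref{create-internal}, and \ref{mov-create-copy}. Inspecting each of these shows that the $info$ field of the newly created Internal object is set to a freshly allocated Unflag object (at line \ref{initial-root} this is explicit; at lines \ref{create-copy} and \ref{mov-create-copy} the new copy uses a fresh Unflag; at line \ref{create-internal} inside \textbf{createNode}, the new Internal is constructed with its $info$ field set to a new Unflag). This establishes the base case.

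For the inductive step, I would observe that the only lines in the pseudocode that write to an internal node's $info$ field are the CAS steps at lines \ref{help-flag-node}, \ref{help-unflag}, and \ref{help-unflag-bt} (line \ref{help-flag-leaf} writes only to a leaf's $info$). For the unflag CAS at line \ref{help-unflag} and the backtrack CAS at line \ref{help-unflag-bt}, both immediately show by code inspection that the expected old value is a Flag object $I$ (hence, when the CAS succeeds, $x.info$ was $I$ beforehand) and the new value is a freshly allocated Unflag object. So case (2) is immediate from the code.

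The main obstacle is case (1): showing that when a flag CAS at line \ref{help-flag-node} succeeds, the old value of $x.info$ was actually an Unflag. This reduces to proving that the value stored in $I.oldInfo[i]$ (which is the ``expected'' value for the CAS) is always an Unflag object. To see this, I would trace how Flag objects are constructed: every Flag $I$ is produced by \textbf{newFlag}, which at line \ref{newFlag-old-info} iterates over each entry of the $oldInfo$ array and returns null (line \ref{newFlag-return-null1}) if any entry is a Flag. Hence the only way \textbf{newFlag} returns a non-null Flag is when every $oldInfo$ entry is an Unflag object; since by Observation \ref{basic-obs} the $oldInfo$ field is never modified after initialization, $I.oldInfo[i]$ remains an Unflag for all time. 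Therefore, a successful flag CAS at line \ref{help-flag-node} must change $x.info$ from an Unflag object to the Flag object $I$, which completes the inductive step and the lemma.
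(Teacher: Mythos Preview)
Your proposal is correct and follows essentially the same approach as the paper's own proof: both identify the same creation lines for internal nodes, the same three CAS lines that can modify $x.info$, and both rely on the check at line~\ref{newFlag-old-info} inside \textbf{newFlag} to guarantee that every $I.oldInfo[i]$ is an Unflag object. Your write-up is slightly more explicit (noting that line~\ref{help-flag-leaf} targets only leaves and invoking Observation~\ref{basic-obs} for immutability of $oldInfo$), but the argument is the same.
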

\begin{proof} 
A new internal node is only created at line \ref{initial-root}, \ref{create-copy}, \ref{create-internal} or \ref{mov-create-copy} and 
the $info$ field of the new internal node is initially set to a new Unflag object. 
The only lines that can change $x.info$ are line \ref{help-flag-node}, \ref{help-unflag} and \ref{help-unflag-bt}.

Suppose a flag CAS at line \ref{help-flag-node} inside help($I$) succeeds to change $x.info$ from an element of $I.oldInfo$ to $I$.
By Lemma \ref{help-pre-con-lem}, $I$ is a Flag object.
Before creating $I$, the operation checks that each value stored in $I.oldInfo$ is an Unflag object at line \ref{newFlag-old-info}.

Suppose an unflag or backtrack CAS inside help($I$) succeeds to change $x.info$.
By Lemma \ref{help-pre-con-lem}, $I$ is a Flag object. 
So, $x.info$ was equal to $I$ just before the unflag or backtrack CAS and, just after the unflag or backtrack CAS, $x.info$ is changed to a newly created Unflag object. 
\end{proof}

Let $I$ be a Flag object.
Elements of $I.flag$ are ordered at line \ref{newFlag-sort-nodes} before $I$ is created.
Let $F_I$ be the set of all nodes in the $I.flag$ array and $U_I$ be the set of all nodes in the $I.unflag$ array.
By the pseudo-code, $U_I \subseteq F_I$.

A flag, unflag or backtrack CAS step executed inside help($I$) is called a CAS step of $I$.
A flag CAS of $I$ attempts to change the $info$ field of an internal node in $F_I$ from an Unflag object to $I$.
An unflag CAS of $I$ attempts to change the $info$ field of an internal node in $U_I$ from $I$ to a new Unflag object.
A backtrack CAS of $I$ attempts to change the $info$ field of an internal node in $F_I$ from $I$ to a new Unflag object.

By Lemma \ref{info-lem}, we have the following corollary.
\begin{corollary} \label{unflag-after-flag-col}
Let $I$ be a Flag object.
After an internal node $x$ is flagged by a flag CAS of $I$, the next change to $x.info$ can only be done by an unflag or backtrack CAS of~$I$.
\end{corollary}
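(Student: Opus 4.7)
The corollary follows essentially by combining Lemma \ref{info-lem} with a close reading of the pseudo-code for help. My plan is as follows.

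Let $T$ be the time at which the flag CAS of $I$ succeeds on $x$, so that $x.info = I$ just after $T$. The first step is to invoke Lemma \ref{info-lem}, which enumerates the only possible modifications to $x.info$: either a flag CAS on line \ref{help-flag-node} that changes $x.info$ from an Unflag object to a Flag object, or an unflag/backtrack CAS on line \ref{help-unflag} or \ref{help-unflag-bt} that changes $x.info$ from a Flag object to a newly created Unflag object. Since $x.info$ equals the Flag object $I$ right after $T$, the first type is ruled out for the next change to $x.info$ (a flag CAS can only succeed when the current value is an Unflag object). Hence the next change must be an unflag or backtrack CAS.

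It remains to argue that this unflag or backtrack CAS is one of $I$, i.e., that it is executed inside a call to help($I$) rather than help($I'$) for some other Flag $I' \ne I$. The key observation is that at line \ref{help-unflag} the CAS expects the old value $I$ (the parameter of the enclosing help call) in $x.info$, and similarly at line \ref{help-unflag-bt}. Hence an unflag or backtrack CAS executed inside help($I'$) can only succeed if $x.info = I'$ just before the CAS. Since $x.info = I$ from time $T$ up until this next change, such a CAS succeeds only when $I' = I$, and so the CAS is indeed a CAS of $I$.

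Since the proof is a direct consequence of Lemma \ref{info-lem} and a syntactic inspection of the help routine, there is no substantive obstacle. The only subtlety is being precise that the \emph{first} change to $x.info$ after $T$ is what we are describing (subsequent changes after that could of course be flag CASs of yet another Flag object once $x$ has been unflagged), which is exactly the scope of ``next change'' in the statement.
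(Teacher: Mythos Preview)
Your proposal is correct and takes essentially the same approach as the paper, which simply states that the corollary follows from Lemma~\ref{info-lem}. You have merely made explicit the two steps the paper leaves implicit: that Lemma~\ref{info-lem} forces the next change to be an unflag or backtrack CAS, and that the expected old value in lines~\ref{help-unflag} and~\ref{help-unflag-bt} is the parameter of the enclosing help call, so only a CAS of~$I$ can succeed while $x.info = I$.
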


Let $I$ be a Flag object.
A flag CAS of $I$ at line \ref{help-flag-node} inside help($I$) tries to change $I.flag[i].info$ from $I.oldInfo[i]$ to $I$ for some $i$.
The following lemma shows that $I.flag[i].info$ was equal to $I.oldInfo[i]$ at some time before $I$ was created.

\begin{lemma} \label{info-old-lem}
Let $I$ be a Flag object.
Then, each entry of $I.flag[i]$ is an internal node.
Also, if $x = I.flag[i]$ (for some $i$), then $x.info$ was $I.oldInfo[i]$ at some time before $I$ was created.
\end{lemma}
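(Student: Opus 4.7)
The plan is to proceed by case analysis on the call site of newFlag that produced $I$. The only place a Flag object is ever created is at line \ref{newFlag-return}, so $I$ must come from one of the calls to newFlag at lines \ref{ins-set-info1}, \ref{ins-set-info2}, \ref{del-set-info}, \ref{mov-set-info1}, \ref{mov-set-info2}, \ref{mov-set-info-sc1}, \ref{mov-set-info-sc23}, or \ref{mov-set-info-sc4}. For each call, I will inspect the literal arguments supplied to newFlag and trace them back to their origins in the surrounding operation. Since newFlag only permutes its arguments (line \ref{newFlag-sort-nodes}) and removes duplicates (line \ref{newFlag-keep-one-copy}), and it permutes $oldInfo$ in the same way as $flag$, the pairing of $flag[i]$ with $oldInfo[i]$ is preserved; hence it suffices to verify both claims for the arrays as originally passed in.

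For the first claim (every entry of $I.flag$ is internal), the candidates that appear in a $flag$ argument are the $p$ and $gp$ variables returned by search, the local $node$, $node_i$, $node_d$ variables, and (in the replace cases) various intermediaries tied to them by equalities. By Lemma \ref{p-gp-internal-lem}, any $p$ or $gp \ne \textrm{null}$ returned by search is internal. In the insert case at line \ref{ins-set-info1} and the replace case at line \ref{mov-set-info1}, the surrounding conditional explicitly checks that $node$ (respectively $node_i$) is Internal before it is put in the $flag$ array. For the special replace cases at lines \ref{mov-set-info-sc23} and \ref{mov-set-info-sc4}, the equalities $node_i \in \{p_d, gp_d\}$ combined with Observation \ref{node-internal-obs} (and Lemma \ref{p-gp-internal-lem}) force $node_i$ to be internal.

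For the second claim, I will show that every entry placed into the $oldInfo$ array is a value that was read from the corresponding node's $info$ field earlier in the same operation, before the enclosing call to newFlag returned. The $gpInfo$ and $pInfo$ entries come from search, where they were captured at lines \ref{search-set-gp} and \ref{search-set-p} immediately before the same iteration set $gp$ or $p$. The $nodeInfo$ entries in the insert and replace cases come from the explicit reads at lines \ref{ins-read-node-info} and \ref{mov-read-node-info}. In every case these reads occur strictly before the call to newFlag, and the Flag object $I$ is itself constructed only at line \ref{newFlag-return}, so every such read happens before $I$ is created. The permutation at line \ref{newFlag-sort-nodes} is described as permuting $oldInfo$ together with $flag$, so the index correspondence is preserved; the dedup at line \ref{newFlag-keep-one-copy} only collapses a duplicate when its two $oldInfo$ values agree (because line \ref{newFlag-dup-val} would otherwise have caused newFlag to return null rather than creating $I$), so the retained entry still satisfies the required property.

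The main obstacle will be the bookkeeping for the replace operation: the $flag$ arrays in the general case at lines \ref{mov-set-info1} and \ref{mov-set-info2} may list nodes returned by two different search calls, and the checks at line \ref{mov-check-gc-condition} rule out some but not all possible coincidences among $gp_d, p_d, p_i, node_i$. I will have to verify, case by case, that if a node appears as both $gp_d$ (or $p_d$) and $p_i$ after deduplication, the two values stored in $oldInfo$ for it are equal; the dedup-guard at lines \ref{newFlag-dup-val}--\ref{newFlag-return-null2} is exactly what makes this go through, and pointing this out explicitly is the crux of the argument. Once this is in place, the lemma follows directly from the case analysis.
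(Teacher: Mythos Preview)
Your proposal is correct and follows essentially the same approach as the paper's proof: a case analysis on the call site of \texttt{newFlag}, using Lemma~\ref{p-gp-internal-lem} for the $p$/$gp$ entries, the explicit ``is Internal'' tests for the $node$ entries at lines~\ref{ins-set-info1} and~\ref{mov-set-info1}, Observation~\ref{node-internal-obs} to ensure $gp_d$ is internal at line~\ref{mov-set-info-sc4}, and Lemma~\ref{search-lem} together with the direct reads at lines~\ref{ins-read-node-info} and~\ref{mov-read-node-info} for the $oldInfo$ claim. Your explicit discussion of how the sort at line~\ref{newFlag-sort-nodes} and the deduplication guarded by line~\ref{newFlag-dup-val} preserve the $flag[i]$/$oldInfo[i]$ pairing is more careful than the paper's proof, which leaves this implicit; one small remark is that at lines~\ref{mov-set-info-sc23} and~\ref{mov-set-info-sc4} the node $node_i$ itself is not an entry of the $flag$ array, so what you actually need there is that $gp_d$ is non-null (hence internal), which your cited facts still establish.
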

\begin{proof}
Just before creating $I$, an update operation calls search once (for insert and delete) or twice (for replace).
If one such search returns $\langle gp$, $p$, $node$, $gpInfo$, $pInfo$, -$\rangle$, 
then $I.flag[i]$ is set to $p$, $gp$ or $node$ at line \ref{ins-set-info1}, \ref{ins-set-info2}, \ref{del-set-info}, \ref{mov-set-info1}, \ref{mov-set-info2}, \ref{mov-set-info-sc1}, \ref{mov-set-info-sc23} or \ref{mov-set-info-sc4}.

By Lemma \ref{p-gp-internal-lem}, $p$ is an internal node and $gp$ is an internal node if $gp$ is not null.
If $I.flag[i]$ is set to $gp$ at line \ref{del-set-info},  \ref{mov-set-info1}, \ref{mov-set-info2}, \ref{mov-set-info-sc23}, $gp$ is not null by the pseudo-code.
If $I.flag[i]$ is set to $gp$ at line \ref{mov-set-info-sc4}, $gp$ is not null by Observation \ref{node-internal-obs}.  
When  $I.flag[i]$ is set to $p$ or $gp$ at line \ref{ins-set-info1}, \ref{ins-set-info2}, \ref{del-set-info}, \ref{mov-set-info1}, \ref{mov-set-info2}, \ref{mov-set-info-sc1}, \ref{mov-set-info-sc23} or \ref{mov-set-info-sc4}, $I.oldInfo[i]$ is set to the corresponding $pInfo$ or $gpInfo$.
By Lemma \ref{search-lem}, $I.flag[i].info = I.oldInfo[i]$ at some time during search($val$).

If $I.flag[i]$ is set to $node$ at line \ref{ins-set-info1} or \ref{mov-set-info1}, $node$ is an internal node.
Then, $I.oldInfo[i]$ is set to a value read from $node.info$ at line \ref{ins-read-node-info} or \ref{mov-read-node-info}.
\end{proof}

Suppose some process reads a value $old$ in some variable $x$ at time $T$ and then later performs a CAS that changes $x$ from $old$ to $new$ at time $T'$.
Some of the later parts of our proof of correctness will rely on the fact that $x$ has not been changed from $old$ between $T$ and $T'$.
If $x$ is changed between $T$ and $T'$ and $x$ is changed back to $old$ just before $T'$, the CAS step would incorrectly change $x$ to $new$. 
This situation is called the ABA problem.
Since CAS steps are used to change the $info$ and $child$ fields of internal nodes, we show that the ABA problem is avoided on the $info$ and $child$ fields of internal nodes.
First, we show that the $info$ field of an internal node is not set to a value that it has had previously.

\begin{lemma} \label{same-info-lem}
Let $x$ be an internal node.
Then, $x.info$ is never set to a value that it has had previously.
\end{lemma}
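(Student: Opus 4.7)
The plan is to induct on the sequence of successful CAS steps that modify $x.info$, splitting on the parity of the step using Lemma \ref{info-lem}. That lemma tells us the values taken by $x.info$ alternate between Unflag objects (the initial value, or values written by an unflag or backtrack CAS) and Flag objects (values written by a flag CAS). So it suffices to show separately that all Unflag values ever held by $x.info$ are pairwise distinct, and that all Flag values ever held by $x.info$ are pairwise distinct; the two types cannot be confused since Unflag and Flag are distinct subtypes.

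The Unflag half is immediate: by Lemma \ref{info-lem}, each Unflag object installed in $x.info$, whether at initialization or by an unflag/backtrack CAS, is a freshly created object that has never existed before, so it differs from every earlier value of $x.info$.

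For the Flag half, I would argue by contradiction. Suppose a flag CAS at time $T$ sets $x.info$ to a Flag $I$, but $x.info$ was already $I$ at some earlier time. Then a flag CAS of $I$ on $x$ already succeeded at some time $T' < T$. Since the CAS at $T$ must actually change $x.info$, Corollary \ref{unflag-after-flag-col} implies that between $T'$ and $T$ the value of $x.info$ is changed by an unflag or backtrack CAS of $I$ to a freshly created Unflag. For the CAS at $T$ to succeed, we need $x.info = I.oldInfo[i]$ just before $T$, where $i$ is the index with $I.flag[i] = x$. The newFlag routine returns a new Flag object only when every entry of $oldInfo$ is an Unflag (lines \ref{newFlag-old-info}--\ref{newFlag-return-null1}), so $I.oldInfo[i]$ is an Unflag. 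By Lemma \ref{info-old-lem}, there is a time $T_0$ preceding the creation of $I$ at which $x.info$ already equalled $I.oldInfo[i]$; in particular $T_0 < T'$.

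The crux is then to combine these facts with the Unflag half already proved. Every write to $x.info$ after $T_0$ installs either a Flag object (which cannot equal the Unflag $I.oldInfo[i]$) or a freshly created Unflag object (which, by the Unflag half, is distinct from the pre-existing Unflag $I.oldInfo[i]$). The flag CAS at $T'$ is such a write and occurs strictly after $T_0$, so from $T'$ onward $x.info$ never again equals $I.oldInfo[i]$, contradicting the success of the CAS at $T$. The only delicate point in this plan is making sure that the Unflag $I.oldInfo[i]$ recorded during the creating operation's search really was a value that $x.info$ actually held, so that the freshness argument for new Unflags actually excludes it; this is exactly the content supplied by Lemma \ref{info-old-lem}.
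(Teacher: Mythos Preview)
Your proof is correct and follows the same high-level outline as the paper: split on whether the repeated value is an Unflag or a Flag, with the Unflag case immediate from freshness. The Flag case is handled a bit differently, though. The paper argues by minimal counterexample: if $T$ is the earliest time the lemma fails and the repeated value is a Flag $I$, it uses the duplicate-removal step (line \ref{newFlag-keep-one-copy}) to conclude that the flag CAS at the earlier time $T'$ and the one at $T$ use the \emph{same} expected value $old = I.oldInfo[i]$; since $x.info$ left $old$ at $T'$ and returned to $old$ before $T$, that Unflag value $old$ was itself repeated before $T$, contradicting minimality of $T$. You instead bootstrap directly from the Unflag half: whatever index $i$ the CAS at $T$ uses, $I.oldInfo[i]$ is an Unflag that $x.info$ held at some time $T_0$ before $I$ was created (Lemma \ref{info-old-lem}), and by the Unflag case it can never reappear after $T_0$, so in particular not just before $T$. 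This is a small but genuine simplification over the paper's proof, since your version does not rely on the duplicate-removal fact at all.
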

\begin{proof}
Assume $x.info$ is set to $new$ at time $T$.
We show that $x.info \ne new$ at all times before $T$.
First, consider the case where $new$ is an Unflag object.
By Lemma \ref{info-lem}, only the CAS steps at line \ref{help-unflag} and \ref{help-unflag-bt} change $x.info$ to an Unflag object.
Since, at those lines, the CAS step changes $x.info$ to a newly created Unflag object, $x.info$ is never set at line \ref{help-unflag} or \ref{help-unflag-bt} to a value that it has had previously.

Now, consider the case where $new$ is a Flag object.
By Lemma \ref{info-lem}, only the CAS step at line \ref{help-flag-node} inside help($new$) tries to change $x.info$ to $new$.
To derive a contradiction, assume the lemma is violated for the first time at time $T$.
Then, $x.info$ is set to $new$ at time $T$ and $x.info$ had value $new$ at some time before $T$.
Since $x.info$ is initially an Unflag object when $x$ is created, there must exist a CAS step that sets $x.info$ to $new$ at some time $T'$ before time $T$.
Let $i_1$ and $i_2$ be the values of $i$ when a process executes line \ref{help-flag-node} at $T'$ and $T$.
Then, $new.flag[i_1] = x$ and $new.flag[i_2] = x$.
Since only one copy of duplicate elements of $new.flag$ is kept at line \ref{newFlag-keep-one-copy}, $i_1 = i_2$.
Let $old$ be $new.oldInfo[i_1] = new.oldInfo[i_2]$. 
At time $T'$, $x.info$ is changed from $old$ to $new$.
Then, at time $T$, $x.info$ is changed from $old$ to $new$.
Thus, $x.info$ is set to $old$ again between $T'$ and $T$, contradicting the assumption that the lemma is violated for the first time at time $T$.
\end{proof}

So, by Lemma \ref{same-info-lem}, if a CAS step succeeds to change the $info$ field of an internal node from some old value to some new value, 
the $info$ field of the internal node has not been changed since the time that the value of the $info$ field of the internal node was read as the old value.
Thus, the ABA problem on the $info$ fields of internal nodes is avoided.

In the Patricia trie implementation, update operations might help one another to flag and unflag nodes by calling the help routine at line \ref{newFlag-call-help}.
So, there might be several CAS steps that try to change the $info$ field of some internal node from a value $old$ to some value $new$.
We show only the first CAS step among the CAS steps of this group can succeed to change the $info$ field of the internal node from $old$ to $new$.

\begin{lemma} \label{first-info-cas-lem}
Let $x$ be an internal node.
Assume a group of CAS steps tries to change $x.info$ from $old$ to $new$.
Only the first CAS step in this group might succeed.
\end{lemma}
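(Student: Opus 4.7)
The plan is to argue by contradiction using the no-ABA property on $info$ fields (Lemma \ref{same-info-lem}) as the main tool. Suppose, toward contradiction, that some CAS $c_i$ in the group (with $i \geq 2$) succeeds at time $T_i$, so that $x.info = old$ just before $T_i$ and $x.info = new$ just after $T_i$. I will show that $x.info$ must have had value $old$ at some strictly earlier time, and must have changed away from $old$ before $T_i$, which by Lemma \ref{same-info-lem} would be impossible.

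First I would establish that $x.info = old$ at some time $T_0 < T_1$, where $T_1$ is the time of the first CAS $c_1$ in the group. For flag CAS steps, this follows directly from Lemma \ref{info-old-lem}: the old value was read from $x.info$ before the Flag object $I$ was created, hence before any CAS of $I$ was executed. For unflag and backtrack CAS steps, the old value is the Flag object $I$ itself, and by Lemma \ref{info-lem} the only way for $x.info$ to ever equal $I$ is through a successful earlier flag CAS of $I$, which occurs before the help routine can reach lines \ref{help-unflag} or \ref{help-unflag-bt}.

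Next I would do a case split on whether $c_1$ succeeds or fails at $T_1$. If $c_1$ succeeds, then $x.info$ changes from $old$ to $new$ at $T_1$, so $x.info = new \neq old$ just after $T_1$; by Lemma \ref{same-info-lem}, $x.info$ can never be set to $old$ again after $T_1$, contradicting that $x.info = old$ just before $T_i > T_1$. If $c_1$ fails, then $x.info \neq old$ at $T_1$; combined with $x.info = old$ at $T_0 < T_1$, this means $x.info$ left the value $old$ at some time in $(T_0, T_1]$, so once again Lemma \ref{same-info-lem} forbids $x.info$ from being $old$ at any later time, in particular at $T_i^-$. Either way we derive the desired contradiction, establishing that no $c_i$ with $i \geq 2$ can succeed.

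The main obstacle I anticipate is simply bookkeeping: pinning down the existence of the earlier time $T_0$ at which $x.info$ genuinely held the value $old$, uniformly across all three flavours of CAS step (flag, unflag, backtrack). Once that is in place, the contradiction is an almost mechanical application of Lemma \ref{same-info-lem}, so no new invariants should be needed beyond those already developed.
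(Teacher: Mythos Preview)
Your approach is correct and uses the same core ingredients as the paper: Lemma~\ref{same-info-lem} (no ABA on $info$ fields) together with the fact that $x.info$ already held $old$ before the first CAS in the group.

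The one difference is that the paper begins with a simplifying observation you missed: by Lemma~\ref{info-lem}, every unflag or backtrack CAS writes a \emph{freshly created} Unflag object, so no two CAS steps can share the same Unflag value as $new$. Hence any group with more than one member must have $new$ equal to a Flag object, and every CAS in the group is a flag CAS inside some call to help($new$). This dispatches your unflag/backtrack case entirely (the group is a singleton there, so the lemma is vacuous) and lets the paper invoke Lemma~\ref{info-old-lem} once to produce the time $T_0$ you need, without the separate bookkeeping you anticipated. After that observation, the paper's case split on whether the first CAS succeeds or fails is identical to yours.
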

\begin{proof}
Assume a group of CAS steps each tries to change $x.info$ from $old$ to $new$.
By Lemma \ref{info-lem}, no two CAS steps can try to change $x.info$ to the same Unflag object.
So, $new$ must be a Flag object.
Thus, the CAS steps that try to change $x.info$ are Flag CASs inside calls to help($new$).
So, there is some $i$ such that $new.flag[i] = x$ and $new.oldInfo[i] = old$.
By Lemma \ref{info-old-lem}, $x.info = old$ at some time before $new$ is created.
So, all CAS steps that try to change $x.info$ from $old$ to $new$ occur after a time when the value of $x.info$ is $old$.
After a CAS step changes $x.info$ from $old$ to some value, by Lemma \ref{same-info-lem}, $x.info$ is not set to $old$ again.
If the first CAS step among the CAS steps in the group changes $x.info$ from $old$ to $new$, no other CAS step in the group can change $x.info$ from $old$ to $new$.
If the first CAS step among the CAS steps in the group does not change $x.info$ from $old$ to $new$, $x.info$ has already been changed from $old$ to some other value and 
no other CAS step in the group can change $x.info$ from $old$ to $new$.
\end{proof}

Let $I$ be a Flag object.
Recall that $F_I$ is the set of all nodes in the $I.flag$ array.
To avoid live-lock in flagging internal nodes, all operations try to flag internal nodes in some defined order.
Before creating $I$, elements of $F_I$ are sorted at line \ref{newFlag-sort-nodes}.
We show that if elements of $F_I$ are successfully flagged by flag CAS steps of $I$, they are flagged in order.

\begin{lemma} \label{order-flag-lem}
Let $I$ be a Flag object and let $f$ be the number of of entries in $I.flag$.
For $0 \le i < f-1$, a flag CAS of $I$ does not successfully flag $I.flag[i+1]$ unless $I.flag[i]$ is flagged earlier by a flag CAS of $I$.
\end{lemma}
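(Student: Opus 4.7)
The plan is to trace through the control flow of the help routine and observe that each process only attempts the flag CAS for index $i+1$ after having seen that $I.flag[i]$ was already flagged by $I$. Formally, suppose a flag CAS of $I$ successfully flags $I.flag[i+1]$ at some time $T_{i+1}$. This CAS is executed at line \ref{help-flag-node} of some call to help($I$) by some process $p$, during the loop iteration with loop counter equal to $i+1$. For $p$ to have entered this iteration, it must have completed the previous iteration, in which it evaluated the guard of the while loop after reading $I.flag[i].info$ on line \ref{help-set-doChildCAS} and obtaining the value $I$ (otherwise $doChildCAS$ would have been set to false and the loop would have exited before reaching the iteration with counter $i+1$). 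Call this read time $T_i$; we have $T_i < T_{i+1}$.

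Next, I would appeal to Lemma \ref{info-lem} to identify what could have caused $I.flag[i].info$ to equal the Flag object $I$ at time $T_i$. By that lemma, the $info$ field of an internal node is initialized to a freshly created Unflag object, and the only way it can become a Flag object is via a successful flag CAS at line \ref{help-flag-node}. Furthermore, the value written by such a flag CAS is the specific Flag argument $I'$ of the enclosing help call. Hence, for $I.flag[i].info$ to hold $I$ at time $T_i$, some flag CAS of $I$ on $I.flag[i]$ must have succeeded at some earlier time $T' \le T_i < T_{i+1}$. This is exactly what the lemma asserts.

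I do not expect any real obstacle here; the statement is essentially a direct consequence of the structure of help's loop combined with the classification of writes to $info$ fields given by Lemma \ref{info-lem}. The only subtlety worth mentioning explicitly is that the argument must account for multiple concurrent helpers: even though different helpers may execute the loop body for the same $i$, each individual helper processes the indices sequentially, and the read that governs advancement to $i+1$ is on $I.flag[i].info$, which can only equal $I$ after a successful flag CAS of $I$ on $I.flag[i]$. No induction is needed, since the claim is just a two-step dependency between consecutive indices.
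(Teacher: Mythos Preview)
Your proposal is correct and follows essentially the same approach as the paper: both argue that a helper reaches the flag CAS on $I.flag[i+1]$ only if $doChildCAS$ remained true after iteration $i$, which requires having read $I.flag[i].info = I$, and then both invoke Lemma \ref{info-lem} to conclude that a flag CAS of $I$ must already have flagged $I.flag[i]$. Your version is slightly more explicit about timing and concurrent helpers, but the core argument is identical.
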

\begin{proof}
Assume a help routine tries to flag $I.flag[i+1]$ using a flag CAS of $I$.
Then, the $doChildCAS$ variable is true at the previous execution of line \ref{help-set-doChildCAS}.
After the help routine tries to flag a node, $doChildCAS$ is set to false at line \ref{help-set-doChildCAS} if the $info$ field of the node is not $I$.
So, the help routine tries to flag $I.flag[i+1]$ using a flag CAS of $I$ only if $I.flag[i].info$ was $I$.
Thus, by Lemma \ref{info-lem}, $I.flag[i]$ has already been flagged by a flag CAS of $I$.
\end{proof}

If $I.flagDone$ is true at line \ref{help-check-done}, help($I$) performs unflag CAS steps on the nodes in $U_I$ at line \ref{help-unflag}.
Otherwise, help($I$) performs backtrack CAS steps on the nodes in $F_I$ at line \ref{help-unflag-bt}.
The following lemma shows that $I.flagDone$ is set to true at line  \ref{help-set-done} only after all nodes in $F_I$ get flagged successfully by flag CAS steps of $I$.
\begin{lemma} \label{set-done-after-flag-lem}
Let $I$ be a Flag object.
Before $I.flagDone$ is set to true at line  \ref{help-set-done}, all nodes in $F_I$ are flagged by flag CAS steps of $I$.
\end{lemma}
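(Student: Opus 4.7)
The plan is to argue directly from the control-flow of the help routine. Suppose some process $p$ executes line \ref{help-set-done} for Flag object $I$ at time $T$; I want to show that every node in $F_I$ has been flagged by a flag CAS of $I$ strictly before $T$.

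First, I would observe that line \ref{help-set-done} is guarded by the check on line \ref{help-check-doChildCAS} that $doChildCAS$ is true, so $p$ must have completed the while-loop with $doChildCAS = true$. The loop's exit condition is ($i \geq |I.flag|$ or $doChildCAS = \text{false}$); since $doChildCAS$ is true on exit, $p$ must have executed the loop body for every $i$ with $0 \leq i < |I.flag|$ and, at each such iteration, the test on line \ref{help-set-doChildCAS} must have left $doChildCAS$ true. In other words, at the moment $p$ executed line \ref{help-set-doChildCAS} in iteration $i$, it observed $I.flag[i].info = I$.

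Next, I would appeal to Lemma \ref{info-lem}: the $info$ field of any internal node can only be set to the Flag object $I$ by a flag CAS at line \ref{help-flag-node}, and the flag CAS at that line inside a call to help($I'$) sets the $info$ field to $I'$. Hence the only way $p$ could have observed $I.flag[i].info = I$ is that some flag CAS of $I$ (either the one $p$ just executed on line \ref{help-flag-node} in that same iteration, or an earlier one by some other helper) successfully flagged $I.flag[i]$ prior to that observation, hence prior to $T$. This holds for every $i < |I.flag|$, so every node in $F_I$ was flagged by a flag CAS of $I$ before $T$, completing the proof.

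I do not expect any serious obstacle here; the argument is essentially a direct read of the pseudo-code combined with Lemma \ref{info-lem}. The only subtlety worth being explicit about is that several processes may be running help($I$) concurrently and that it need not be $p$ itself whose flag CAS succeeded on $I.flag[i]$, but this is handled uniformly by the observation that the only way any process ever writes $I$ into an $info$ field is via a flag CAS of $I$.
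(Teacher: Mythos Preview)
Your proof is correct and follows essentially the same control-flow argument as the paper: reaching line \ref{help-set-done} requires $doChildCAS$ to be true at line \ref{help-check-doChildCAS}, which forces the loop to have seen $I.flag[i].info = I$ for every $i$, hence each node in $F_I$ was flagged by a flag CAS of $I$. Your version is in fact slightly more explicit than the paper's, since you spell out (via Lemma \ref{info-lem}) why the observation $I.flag[i].info = I$ must come from a successful flag CAS of $I$, possibly by another helper.
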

\begin{proof} 
The help routine sets $I.flagDone$ to true at line \ref{help-set-done} only if the $doChildCAS$ variable is true at line \ref{help-check-doChildCAS}.
If the $info$ field of $I.flag[i] \ne I$ for any $i$, $doChildCAS$ gets set to false at line \ref{help-set-doChildCAS} and the loop terminates.
So, $doChildCAS$ is true at line \ref{help-check-doChildCAS} only if all nodes in $F_I$ were flagged successfully using flag CAS steps of $I$.
\end{proof}

From the pseudo-code, we have the following observations.
\begin{observation} \label{done-obs}
Let $I$ be a Flag object.
$I.flagDone$ is initially false and $I.flagDone$ is only set to true.
\end{observation}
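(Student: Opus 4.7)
The plan is to prove both claims by direct inspection of the pseudo-code, since Observation \ref{done-obs} only asserts facts about when and how the $flagDone$ field is written.

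First I would address the initialization claim. Every Flag object is constructed by a call to \textbf{newFlag}, which is the only routine that allocates a new \textbf{Info} object (line \ref{newFlag-return}). Inspecting that line, the final argument passed to the constructor, which corresponds to the $flagDone$ field, is the literal \textbf{false}. Hence any Flag object $I$ satisfies $I.flagDone = \textbf{false}$ immediately after its creation, and in particular before any other process can access $I$. I would also note that no other line in the pseudo-code allocates a Flag object, so this accounts for every Flag object that ever appears.

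Second I would address the monotonicity claim, namely that $I.flagDone$ can only transition from $\textbf{false}$ to $\textbf{true}$. The proof is a scan over the pseudo-code: the only line that writes to any $flagDone$ field is line \ref{help-set-done} inside \textbf{help}, where the assignment is $I.flagDone \leftarrow \textbf{true}$. No other line mentions $flagDone$ on the left-hand side, and (by Observation \ref{basic-obs}) no other field assignments can affect it indirectly. Therefore once set, the field remains \textbf{true} thereafter.

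I do not expect any real obstacle here; the observation is essentially a syntactic reading of the code. The only mild subtlety is confirming that the Boolean slot in the new \textbf{Info}($\ldots$, \textbf{false}) constructor call at line \ref{newFlag-return} is indeed the $flagDone$ field, which follows from matching the parameter list against the field order declared in the Flag data type in Figure \ref{data-fig}.
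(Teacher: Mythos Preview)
Your proposal is correct and matches the paper's approach: the paper simply states this as an observation ``from the pseudo-code'' without elaboration, and your argument is exactly the code inspection that justifies it. Your write-up is more explicit than the paper's (which gives no proof at all), but the content is the same.
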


\begin{observation} \label{change-child-after-set-done-obs}
Let $I$ be a Flag object.
Then, $I.flagDone$ is set to true before any child CAS step of $I$ occurs.
\end{observation}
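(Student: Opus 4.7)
The plan is to prove this observation by a direct inspection of the control flow of the help routine, since no complex invariants are needed here. I would start by fixing an arbitrary child CAS step of $I$ and identifying the single call to help($I$) that executes it.

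First I would note that, by definition, every child CAS of $I$ is executed at line \ref{help-change-child}, which lies inside the for-loop at lines \ref{help-start-change-child}--\ref{help-change-child}. This loop is only entered in the then-branch of the conditional at line \ref{help-check-doChildCAS}, and the very first statement of that branch is line \ref{help-set-done}, which assigns $I.flagDone \leftarrow$ true. Therefore, within the same call to help($I$) that performs the child CAS, the step $I.flagDone \leftarrow$ true is executed strictly earlier in program order by the same process, hence strictly earlier in time.

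This shows that before the child CAS step in question occurs, $I.flagDone$ has been written to true at least once (by that call, if not by an earlier call running concurrently). Combined with Observation \ref{done-obs}, which states that $I.flagDone$ is initially false and is only ever set to true, this write establishes that $I.flagDone$ is true at the moment the child CAS executes, and remains true thereafter. Since the child CAS step was arbitrary, the observation holds for every child CAS step of $I$.

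I do not anticipate a hard part, as this is a straightforward appeal to program order within a single call to help($I$) together with the monotonicity of $I.flagDone$ recorded in Observation \ref{done-obs}. The only subtle point to keep in mind is that different calls to help($I$) may interleave, so the argument must locate the setting of $I.flagDone$ \emph{inside the same call} as the given child CAS, rather than relying on a global ordering of calls.
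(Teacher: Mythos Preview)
Your proposal is correct and matches the paper's approach: the paper states this as an observation ``from the pseudo-code'' without further proof, and your argument simply makes explicit the control-flow inspection that justifies it---namely, that line~\ref{help-change-child} is reachable only through the then-branch of line~\ref{help-check-doChildCAS}, whose first statement is line~\ref{help-set-done}.
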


We wish to show that the $child$ field of an internal node is changed by a child CAS of $I$ at line \ref{help-change-child} only while all internal nodes in $F_I$ are flagged by $I$. 
First, we show a child CAS of $I$ is executed only after all nodes in $F_I$ get flagged using flag CAS steps of $I$.
Then, we show that, for all $j$, there is no successful unflag or backtrack CAS of $I$ before the first child CAS of $I$ on $I.pNode[j]$.

\begin{lemma} \label{change-child-after-flag-lem}
Let $I$ be a Flag object.
A child CAS step of $I$ is preceded by flagging all nodes in $F_I$ using flag CAS steps of $I$.
\end{lemma}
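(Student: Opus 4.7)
The plan is to prove this as a straightforward composition of the two preceding results, namely Observation \ref{change-child-after-set-done-obs} and Lemma \ref{set-done-after-flag-lem}. The overall structure of help($I$) in the pseudo-code already enforces the claim on a single invocation: the flag-CAS loop comes first, $I.flagDone$ is set only if every flag CAS succeeded, and only then does the routine enter the child-CAS loop. The nontrivial issue is that different processes can execute help($I$) concurrently, so a child CAS performed by one helper might in principle precede the completion of the flag phase by another. The two cited results are precisely what lets us rule this out across the entire group of helpers.

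First I would fix an arbitrary child CAS step of $I$ and let $T$ be the time it occurs. By Observation \ref{change-child-after-set-done-obs}, the field $I.flagDone$ has the value true at time $T$. Since, by Observation \ref{done-obs}, $I.flagDone$ is initialized to false and is only ever set to true (at line \ref{help-set-done}), there must exist some earliest time $T' \le T$ at which $I.flagDone$ is set to true by an execution of line \ref{help-set-done} inside some call to help($I$).

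Next I would apply Lemma \ref{set-done-after-flag-lem} to that execution of line \ref{help-set-done} at time $T'$. That lemma states that, before $I.flagDone$ is set to true at line \ref{help-set-done}, every node in $F_I$ has already been flagged by a flag CAS of $I$. Hence each node in $F_I$ is flagged by some flag CAS of $I$ at a time strictly less than $T'$, and therefore strictly less than $T$, which is exactly what the lemma claims.

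The argument is essentially bookkeeping: once the two supporting statements are in place, there is no real obstacle. The only thing to be careful about is not to conflate the $I.flagDone$ write done by a \emph{particular} helper with the logical event ``$I.flagDone$ becomes true'': the helper performing the child CAS might have read $I.flagDone$ as true because some other helper had earlier set it, so we must quantify over all calls to help($I$) and appeal to the earliest such setting, as above. No new invariant about the structure of the trie is needed for this lemma.
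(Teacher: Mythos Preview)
Your proposal is correct and follows the same approach as the paper's proof: compose Observation~\ref{change-child-after-set-done-obs} with Lemma~\ref{set-done-after-flag-lem}. Your version is slightly more explicit (introducing times $T$ and $T'$ and invoking Observation~\ref{done-obs}), whereas the paper dispatches it in two sentences, but the argument is identical in substance.
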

\begin{proof}
By Observation \ref{change-child-after-set-done-obs}, a child CAS step of $I$ can be executed only after $I.flagDone$ is set to true.
By Lemma \ref{set-done-after-flag-lem}, setting $I.flagDone$ to true is preceded by flagging all nodes in $F_I$ using flag CAS steps of $I$.
\end{proof}

\begin{lemma} \label{no-unflag-before-first-change-child-lem}
Let $I$ be a Flag object.
If there is any child CAS of $I$, then for all $j$, there is no unflag or backtrack CAS of $I$ before the first child CAS step of $I$ on $I.pNode[j]$.
\end{lemma}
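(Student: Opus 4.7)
The plan is to derive a contradiction by examining the very first unflag or backtrack CAS of $I$ performed anywhere in the system. Let $T'$ denote its time and let $q$ be the helper that executes it. My strategy has two parts: first, I would show that $q$'s local variable $doChildCAS$ was true when it exited its while loop; second, I would use this to conclude that $q$ performed a child CAS at every iteration of the for loop at line \ref{help-change-child} before $T'$, which immediately gives the desired ordering.

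For the central intermediate claim, I would argue by contradiction, assuming $doChildCAS$ became false at some iteration $i$ of $q$'s while loop, at time $\tau$, so that $I.flag[i].info \ne I$ just after $q$'s flag CAS at $\tau$. By the hypothesis and Lemma \ref{change-child-after-flag-lem}, there is a time $T_f$ at which $I.flag[i]$ is first successfully flagged by $I$. The argument then splits into two cases. If $\tau < T_f$, Lemma \ref{info-old-lem} provides a time $T_r$ (before $I$'s creation) at which $I.flag[i].info = I.oldInfo[i]$, and the success of the flag CAS at $T_f$ forces $I.flag[i].info = I.oldInfo[i]$ just before $T_f$; Lemma \ref{same-info-lem}'s prohibition on revisiting values then pins $I.flag[i].info = I.oldInfo[i]$ throughout $[T_r, T_f)$, so $q$'s CAS at $\tau$ would succeed, leaving $I.flag[i].info = I$ and $doChildCAS$ true, a contradiction. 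If $\tau \ge T_f$, the minimality of $T'$ combined with $\tau < T'$ (since the while loop precedes $q$'s own unflag CAS at $T'$) means that no unflag or backtrack CAS of $I$ has occurred in $[T_f, \tau]$; Corollary \ref{unflag-after-flag-col} then gives $I.flag[i].info = I$ at $\tau$, so $q$'s CAS fails but again leaves $doChildCAS$ true, a contradiction.

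Once $doChildCAS$ is established as true, $q$ enters the first if-block at line \ref{help-check-doChildCAS}, sets $I.flagDone$ to true, possibly flags the leaf, and completes the entire for loop at lines \ref{help-start-change-child}-\ref{help-change-child} before reaching the unflag CAS at $T'$. Hence $q$ performs a child CAS at iteration $j$ at some time $\tau_j < T'$, which gives $T_j \le \tau_j < T'$ for every $j$, where $T_j$ denotes the first child CAS of $I$ on $I.pNode[j]$. Since every unflag and backtrack CAS of $I$ occurs at time $\ge T'$, none can precede $T_j$. The hardest part is the case analysis in the intermediate claim: it requires carefully tracking the sequence of values of $I.flag[i].info$ using Lemma \ref{same-info-lem}'s non-repetition together with the minimality of $T'$, and verifying that in no legal interleaving can $doChildCAS$ become false for $q$.
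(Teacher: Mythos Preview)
Your proof is correct and follows essentially the same strategy as the paper's: pick the helper $q$ (the paper calls it $H$) executing the very first unflag or backtrack CAS of $I$, and argue that its $doChildCAS$ could not have become false, which forces $q$ to have executed every child CAS before $T'$.

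One remark on efficiency: your case split on $\tau < T_f$ versus $\tau \ge T_f$ is more work than needed. The paper dispatches Case~1 in one line using Lemma~\ref{first-info-cas-lem}: since some flag CAS of $I$ on $I.flag[i]$ succeeds (by the hypothesis and Lemma~\ref{change-child-after-flag-lem}), and only the first such CAS in the group can succeed, the successful one must occur no later than $q$'s own flag CAS at $\tau$. Hence $T_f \le \tau$ automatically, and only your Case~2 is live. Your Case~1 argument is not wrong---indeed, if $q$'s CAS succeeded at $\tau$ you would have $T_f \le \tau$ directly, contradicting $\tau < T_f$---but it is redundant once Lemma~\ref{first-info-cas-lem} is invoked. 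Also, in Case~1 your stated contradiction (``$doChildCAS$ true'') implicitly relies on the minimality of $T'$ to rule out an unflag between $q$'s CAS and $q$'s read at line~\ref{help-set-doChildCAS}; you invoke this explicitly in Case~2 but not in Case~1, so it would be cleaner to note it there as well (or simply observe that $q$'s CAS succeeding already contradicts $\tau < T_f$).
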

\begin{proof}
We prove the lemma by contradiction.
Assume that for some $j$, the first child CAS of $I$ on $I.pNode[j]$ is at time $T$ and an invocation $H$ of help($I$) executes the first unflag or backtrack CAS of $I$ at $T'$ before $T$.
Since the first child CAS of $I$ on $I.pNode[j]$ is at $T$, $H$ does not execute any child CAS of $I$ on $I.pNode[j]$ at line \ref{help-change-child} before $T'$.
So, the $doChildCAS$ variable is false when $H$ performs line \ref{help-check-doChildCAS} before $T'$.
Thus, $H$ sets the $doChildCAS$ variable to false at line \ref{help-set-doChildCAS} when $y.info \ne I$ just after $H$ tries to flag some node $y$ by a flag CAS of $I$.
By Lemma \ref{change-child-after-flag-lem}, since $y \in F_I$, $y.info$ is set to $I$ before $T$.
By Lemma \ref{first-info-cas-lem}, only the first flag CAS of $I$ on $y.info$ succeeds.
Since a flag CAS of $I$ on $y.info$ is performed just before the line \ref{help-set-doChildCAS}, $y.info$ is set to $I$ before $H$ executes line \ref{help-set-doChildCAS}.
Then, it is changed from $I$ to another value before $H$ reads $y.info$ at line \ref{help-set-doChildCAS},
contradicting the fact that the first unflag or backtrack CAS of $I$ is at time $T'$.
\end{proof}

Next, we show that there is no backtrack CAS of $I$ after the first child CAS of $I$. 
This will imply that nodes in $F_I - U_I$ remain flagged forever after a child CAS of $I$ occurs.

\begin{lemma} \label{no-backtrack-after-first-change-child-lem} 
Let $I$ be a Flag object.
If there is any child CAS of $I$, there is no backtrack CAS of $I$.
\end{lemma}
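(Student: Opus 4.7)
The plan is to argue by contradiction: assume some invocation $H$ of help($I$) executes a backtrack CAS of $I$ at line \ref{help-unflag-bt}. Then $H$ must have taken the else branch at line \ref{help-check-done}, so $H$ read $I.flagDone$ as false at some time $T_c$. By Observation \ref{done-obs}, $I.flagDone$ is false at all times up to $T_c$. On the other hand, since a child CAS of $I$ is assumed to occur, Observation \ref{change-child-after-set-done-obs} tells us that $I.flagDone$ was set to true at some time $T_{set}$ strictly before the first child CAS of $I$. Hence $T_c < T_{set}$, and in particular $T_c$ lies strictly before the first child CAS of $I$.

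The next step is to unpack why $H$ took the else branch. For $H$ to enter that branch, its flag CAS loop must have exited with $doChildCAS$ equal to false, which occurs only when $H$ reads some $y.info \neq I$ at line \ref{help-set-doChildCAS} for some $y \in F_I$ at a time $T_r < T_c$; moreover, just before $T_r$, $H$ performed a flag CAS of $I$ on $y$ at line \ref{help-flag-node} at some time $T_{CAS} < T_r$. I would then apply Lemma \ref{change-child-after-flag-lem} with the hypothesis that a child CAS of $I$ exists to obtain that $y$ is flagged by a successful flag CAS of $I$ at some time $T^*$ preceding the first child CAS of $I$. Combining this with Lemma \ref{first-info-cas-lem}, which says that among all flag CASs of $I$ on $y$ only the first can succeed, yields that $T^*$ is in fact the time of the first flag CAS of $I$ on $y$.

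Finally I would split into two cases based on how $T^*$ compares to $T_r$. In the case $T^* \leq T_r$, $y.info$ equals $I$ at $T^*$ but not at $T_r$, so by Lemma \ref{info-lem} some unflag or backtrack CAS of $I$ must have occurred in the interval $(T^*, T_r]$; this lies strictly before the first child CAS of $I$, contradicting Lemma \ref{no-unflag-before-first-change-child-lem}. In the case $T^* > T_r$, $H$'s own flag CAS of $I$ on $y$ at $T_{CAS} < T_r < T^*$ precedes $T^*$, contradicting the fact that $T^*$ is the first such CAS. I expect the main obstacle to be setting up the second case cleanly: one has to combine Lemma \ref{first-info-cas-lem} (only the first CAS in a group can succeed) with Lemma \ref{change-child-after-flag-lem} (a successful flag CAS exists) to argue that the successful flag CAS is necessarily the very first, so that $H$'s earlier failing attempt yields an immediate contradiction.
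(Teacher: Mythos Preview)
Your proposal is correct and follows essentially the same route as the paper's proof: both use Observation~\ref{change-child-after-set-done-obs} to place $H$'s read of $I.flagDone$ before the first child CAS, then invoke Lemma~\ref{change-child-after-flag-lem} and Lemma~\ref{first-info-cas-lem} to conclude that the successful flag CAS on $y$ is the first one, and finally derive an unflag/backtrack CAS of $I$ before the first child CAS, contradicting Lemma~\ref{no-unflag-before-first-change-child-lem}. Your explicit case split on $T^*\le T_r$ versus $T^*>T_r$ just unpacks what the paper leaves implicit (the paper simply notes that since $H$ itself performed a flag CAS on $y$ before $T_r$, the first---hence successful---flag CAS also occurs before $T_r$), so the second case is vacuous and could be folded away.
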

\begin{proof}
Assume the first child CAS of $I$ is at time $T$.
By Lemma \ref{no-unflag-before-first-change-child-lem}, there is no backtrack CAS of $I$ before $T$.
To derive a contradiction, assume the first backtrack CAS of $I$ after $T$ is performed by some invocation $H$ of help($I$).

Then, when $H$ checks $I.flagDone$ at line \ref{help-check-done}, $I.flagDone$ is false.
By Observation \ref{change-child-after-set-done-obs}, $I.flagDone$ is set to true before $T$, so $H$ performs line \ref{help-check-done} before $T$.
Since $H$ does not set $I.flagDone$ to true at line \ref{help-set-done}, $H$ must have set $doChildCAS$ to false at line \ref{help-set-doChildCAS} 
after seeing $y.info \ne I$ for some node $y \in F_I$ at line \ref{help-set-doChildCAS}.
By Lemma \ref{change-child-after-flag-lem}, $y.info$ is set to $I$ before $T$.
By Lemma \ref{first-info-cas-lem}, the first flag CAS of $I$ on $y.info$ succeeds.
So, $y.info$ is changed from $I$ to another value before $H$ reads $y.info$ at line \ref{help-set-doChildCAS}, which is prior to $T$ (since $H$ performs line \ref{help-check-done} before $T$).
This contradicts Lemma \ref{no-unflag-before-first-change-child-lem}.
\end{proof}

By the pseudo-code, we have the following lemma.
\begin{observation} \label{flag-parent-obs}
Let $I$ be a Flag object.
For each $i$, $I.pNode[i] \in F_I$ and $I.pNode[i] \in U_I$.
\end{observation}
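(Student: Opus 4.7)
The plan is to verify the observation by a straightforward case analysis over the call sites of newFlag, since every Flag object is produced by a call to newFlag (line \ref{newFlag-return}). For each of the eight places where newFlag is called (lines \ref{ins-set-info1}, \ref{ins-set-info2}, \ref{del-set-info}, \ref{mov-set-info1}, \ref{mov-set-info2}, \ref{mov-set-info-sc1}, \ref{mov-set-info-sc23}, \ref{mov-set-info-sc4}), I will check by inspection that every entry of the $pNode$ argument also appears as an entry of both the $flag$ argument and the $unflag$ argument.

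For example, at line \ref{ins-set-info1} the arguments are $flag=[p,node]$, $unflag=[p]$, and $pNode=[p]$; the sole entry of $pNode$ is $p$, which appears in both $flag$ and $unflag$. The same pattern holds at line \ref{ins-set-info2}. At line \ref{del-set-info}, $flag=[gp,p]$, $unflag=[gp]$, $pNode=[gp]$. At lines \ref{mov-set-info1} and \ref{mov-set-info2}, $pNode=[p_i,gp_d]$ and both $p_i$ and $gp_d$ are listed in the $flag$ and $unflag$ arguments. At line \ref{mov-set-info-sc1} the single entry $p_d$ of $pNode$ appears in both $flag$ and $unflag$; at line \ref{mov-set-info-sc23} the entry $gp_d$ of $pNode$ appears in both; and at line \ref{mov-set-info-sc4} the entry $p_i$ of $pNode$ appears in both.

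Next, I have to verify that the work done inside newFlag before constructing $I$ does not undo this property. The only modifications to the array arguments are at lines \ref{newFlag-keep-one-copy} and \ref{newFlag-sort-nodes}: duplicates are removed from $flag$ and $unflag$ (with corresponding entries of $oldInfo$ dropped), and $flag$ is sorted (with $oldInfo$ permuted accordingly). Neither step touches $pNode$, and deduplication only collapses repeated entries in $flag$ and $unflag$ — it never removes a value entirely. Hence, if some $pNode[i]$ appeared in $flag$ and $unflag$ before line \ref{newFlag-keep-one-copy}, it still appears in $flag$ and $unflag$ when $I$ is constructed at line \ref{newFlag-return}.

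There is no real obstacle here; the statement is essentially a syntactic invariant of the pseudo-code. The only thing to be careful about is the sorting/deduplication step of newFlag, which must be shown to preserve membership of $pNode$ entries in $flag$ and $unflag$; once that is noted, the remainder is pure case inspection of the finitely many call sites.
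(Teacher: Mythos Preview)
Your proposal is correct and matches the paper's approach: the paper simply states ``By the pseudo-code, we have the following lemma'' and gives no further detail, so your explicit case analysis over the eight call sites of \texttt{newFlag} together with the observation that deduplication and sorting preserve membership is exactly the intended (but unwritten) verification.
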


Let $I$ be a Flag object and $x \in F_I$.
The following lemma shows what the value of $x.info$ is between the time when $x.info$ is read during the search preceding the creation of $I$ and the successful child CAS of $I$.
\begin{lemma}\label{update-x-info-lem}
Let $I$ be a Flag object and $x = I.flag[i]$ and $xInfo = I.oldInfo[i]$ for some $i$.
Suppose $x.info = xInfo$ at time $T_1$ and for some $j$ the first child CAS of $I$ on $I.pNode[j]$ occurs at time $T_2 > T_1$.
Then, at all times between $T_1$ and $T_2$, $x.info$ is either $xInfo$ or $I$.
\end{lemma}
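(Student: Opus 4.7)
The plan is to split the interval $[T_1, T_2]$ at the time $T_3$ of the first successful flag CAS of $I$ on $x$, and show that $x.info = xInfo$ on $[T_1, T_3)$ and $x.info = I$ on $[T_3, T_2]$.

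First I would establish that such a time $T_3$ exists and that $T_1 \leq T_3 \leq T_2$. Existence of $T_3 < T_2$ follows from Lemma \ref{change-child-after-flag-lem}: since a child CAS of $I$ occurs at $T_2$, every node in $F_I$ (including $x$) must have been flagged by a flag CAS of $I$ before $T_2$. To see $T_1 \leq T_3$, I would observe that $xInfo$ is an Unflag object (by the check at line \ref{newFlag-old-info} made before $I$ was created, together with Lemma \ref{info-old-lem}). The successful flag CAS of $I$ at $T_3$ changes $x.info$ from $I.oldInfo[i] = xInfo$ to $I$; after $T_3$, Corollary \ref{unflag-after-flag-col} says the next change can only be an unflag or backtrack CAS of $I$, which by Lemma \ref{info-lem} produces a \emph{new} Unflag object distinct from $xInfo$. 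Hence $x.info$ can never be $xInfo$ again after $T_3$, so the hypothesis $x.info = xInfo$ at $T_1$ forces $T_1 \leq T_3$.

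Next I would cover $[T_1, T_3)$. At $T_1$ the value is $xInfo$, and at time $T_3^-$ (immediately before the successful flag CAS of $I$) the value is again $xInfo$. If $x.info$ changed at any point strictly between $T_1$ and $T_3$, Lemma \ref{same-info-lem} (no ABA on $info$) would prevent $x.info$ from ever being set back to $xInfo$, contradicting the value at $T_3^-$. Therefore $x.info = xInfo$ on the entire half-open interval $[T_1, T_3)$.

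Finally I would cover $[T_3, T_2]$. At $T_3$, $x.info$ becomes $I$. By Corollary \ref{unflag-after-flag-col}, the only CAS that can subsequently change $x.info$ is an unflag or backtrack CAS of $I$. Lemma \ref{no-unflag-before-first-change-child-lem} guarantees that no such CAS occurs before $T_2$. Hence $x.info$ remains equal to $I$ throughout $[T_3, T_2]$, completing the case split and giving $x.info \in \{xInfo, I\}$ on all of $[T_1, T_2]$.

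The main delicate point is the ordering argument $T_1 \leq T_3$: it requires noticing that $xInfo$ is specifically an Unflag object and then using the fact that any unflag/backtrack CAS of $I$ installs a freshly allocated Unflag object, so once $x$ has been flagged by $I$ it cannot later carry the identical value $xInfo$ again. The rest is a straightforward application of Lemma \ref{same-info-lem} and Corollary \ref{unflag-after-flag-col} to rule out intermediate changes in each subinterval.
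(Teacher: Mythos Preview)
Your proof is correct and follows essentially the same approach as the paper: split at the time $T_3$ of the successful flag CAS of $I$ on $x$, use Lemma~\ref{change-child-after-flag-lem} to place $T_3$ before $T_2$, use Lemma~\ref{same-info-lem} (no ABA on $info$) for the interval $[T_1,T_3)$, and use Lemma~\ref{no-unflag-before-first-change-child-lem} together with Corollary~\ref{unflag-after-flag-col} for $[T_3,T_2]$. Your justification of $T_1 \le T_3$ is more explicit than the paper's, though you could shorten it by appealing directly to Lemma~\ref{same-info-lem} (once $x.info$ changes away from $xInfo$ it never returns) rather than tracing only the next change via Corollary~\ref{unflag-after-flag-col}.
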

\begin{proof}
By Lemma \ref{change-child-after-flag-lem}, $x.info$ is changed from $xInfo$ to $I$ between $T_1$ and $T_2$.
By Lemma \ref{same-info-lem}, $x.info = xInfo$ at all times between $T_1$ and the time when $x.info$ is set to $I$.
By Lemma \ref{same-info-lem} and \ref{no-unflag-before-first-change-child-lem}, $x.Info = I$ at all times between the time when $x.info$ is set to $I$ and $T_2$.
\end{proof}

Let $I$ be a Flag object that is created by a replace operation.
Then, the leaf node $I.rmvLeaf$ might be flagged at line \ref{help-flag-leaf} during help($I$). 
By Lemma \ref{set-done-after-flag-lem} and the pseudo-code, we have the following observation.

\begin{observation} \label{flag-leaf-move-obs}
Let $I$ be a Flag object.
Then, $I.rmvLeaf$ is a leaf node if and only if $I$ is created at line \ref{mov-set-info1} or \ref{mov-set-info2}.
For such Flag object $I$, setting $I.rmvLeaf.info$ to $I$ is preceded by flagging all nodes in $F_I$ using flag CASs of $I$ and setting $I.flagDone$ to true.
Furthermore, a child CAS of $I$ is preceded by setting $I.rmvLeaf.info$ to $I$.
After $I.rmvLeaf$ is flagged, it never becomes unflagged.
\end{observation}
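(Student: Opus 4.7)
The plan is to verify each of the four assertions in the observation by direct inspection of the pseudo-code, combined with the lemmas already established.

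For the first claim, I would enumerate every place in the code that invokes \textbf{newFlag} and examine the $rmvLeaf$ argument supplied. In lines \ref{ins-set-info1}, \ref{ins-set-info2}, \ref{del-set-info}, \ref{mov-set-info-sc1}, \ref{mov-set-info-sc23}, and \ref{mov-set-info-sc4} the argument is \textbf{null}, so the resulting Flag object has $rmvLeaf = \mathrm{null}$ (and no other field is ever written to, by Observation~\ref{basic-obs}). The only invocations passing a non-null value are lines \ref{mov-set-info1} and \ref{mov-set-info2}, where $rmvLeaf$ is set to $node_d$. By the test at line \ref{mov-return-false1}, which uses \textbf{keyInTrie}, $node_d$ is guaranteed to be a Leaf, establishing the ``if and only if''.

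For the second claim, I would trace the control flow of \textbf{help}($I$). The write $I.rmvLeaf.info \leftarrow I$ occurs only at line \ref{help-flag-leaf}, which is inside the branch guarded by $doChildCAS$ at line \ref{help-check-doChildCAS} and sits after line \ref{help-set-done}. Hence any call executing line \ref{help-flag-leaf} has already set $I.flagDone$ to true, and by Lemma~\ref{set-done-after-flag-lem} that setting is itself preceded by flag CAS steps of $I$ flagging every node in $F_I$. For the third claim, observe that any child CAS of $I$ is executed at line \ref{help-change-child} by some invocation $H$ of \textbf{help}($I$); within $H$, line \ref{help-flag-leaf} is executed before the loop starting at line \ref{help-start-change-child}, so by the time $H$ reaches line \ref{help-change-child}, it has already performed $I.rmvLeaf.info \leftarrow I$.

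For the fourth claim, the main point is to argue that the \emph{only} line in the entire pseudo-code that writes to a Leaf's $info$ field (after its initialization at construction time to a fresh Unflag object) is line \ref{help-flag-leaf}. The other writes to $info$ fields occur at lines \ref{help-flag-node}, \ref{help-unflag}, and \ref{help-unflag-bt}, and each of these operates on $I.flag[\cdot]$ or $I.unflag[\cdot]$, whose entries are Internal nodes by Lemma~\ref{info-old-lem} and the pseudo-code. Since line \ref{help-flag-leaf} always writes a Flag object (namely the enclosing $I$), once any write at line \ref{help-flag-leaf} has occurred on $I.rmvLeaf.info$, every subsequent value of $I.rmvLeaf.info$ is also a Flag object, so $I.rmvLeaf$ remains flagged forever.

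The main obstacle will be the fourth part: one must be careful to rule out every path that could possibly modify a Leaf's $info$ field, and in particular to verify that the direct (non-CAS) write at line \ref{help-flag-leaf} cannot be ``overwritten'' by an Unflag. The argument hinges on the fact that no line in the code ever assigns an Unflag object to a Leaf's $info$ field; all Unflag writes (at lines \ref{help-unflag} and \ref{help-unflag-bt}) target members of $I.flag$ or $I.unflag$, which are always Internal nodes by construction in \textbf{newFlag}.
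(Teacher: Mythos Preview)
Your proposal is correct and follows essentially the same approach as the paper, which simply introduces this observation with the phrase ``By Lemma~\ref{set-done-after-flag-lem} and the pseudo-code'' and gives no further argument. Your write-up is just a careful unpacking of that sentence: enumerating the \textbf{newFlag} call sites, tracing the control flow of \textbf{help}, and noting (via Lemma~\ref{info-old-lem} and the type declarations of $flag$ and $unflag$) that the only write to a Leaf's $info$ field is line~\ref{help-flag-leaf}.
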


\subsection{Behaviour of CAS Steps on $child$ Fields}
In this section, we show how CAS steps change the $child$ field of nodes.

Just before creating a Flag object $I$, an update operation calls search once (for insert and delete) or twice (for replace).
Then, for each $i$,  $I.pNode[i]$ and $I.oldChild[i]$ are set to either $gp$ and $p$ or $p$ and $node$ 
where $\langle gp$ , $p$, $node$, -, -, -$\rangle$ is the result of one of these searches.
So, by Lemma \ref{search-lem}, we have the following observation.

\begin{observation} \label{info-p-pos-old-val-obs}
Let $I$ be a Flag object.
The values of $I.pNode[i]$ and $I.oldChild[i]$ are returned by a  call to the search operation that precedes the creation of $I$.
At some point during that search operation, $I.oldChild[i]$ was a child of $I.pNode[i]$. 
\end{observation}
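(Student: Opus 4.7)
The plan is to do a case analysis on the line of pseudo-code where $I$ is created, since these are the only places Flag objects are instantiated. The creation sites are lines \ref{ins-set-info1}, \ref{ins-set-info2}, \ref{del-set-info}, \ref{mov-set-info1}, \ref{mov-set-info2}, \ref{mov-set-info-sc1}, \ref{mov-set-info-sc23} and \ref{mov-set-info-sc4}. In each case I would read off from the code exactly which entry of $pNode$ and $oldChild$ is being constructed, identify which preceding call to \textbf{search} produced those values, and then invoke the appropriate post-condition of Lemma \ref{search-lem} to conclude that at some time during that search the child relationship held.

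First I would handle the ``single-CAS'' cases (insert, delete, and the special replace cases at lines \ref{mov-set-info-sc1}, \ref{mov-set-info-sc23}, \ref{mov-set-info-sc4}). For insert and the \ref{mov-set-info-sc1} case, $pNode[0]$ is the $p$ returned by a preceding search and $oldChild[0]$ is the corresponding $node$, so post-condition (2) of Lemma \ref{search-lem} immediately gives the required child relationship. For delete and the \ref{mov-set-info-sc23} case, $pNode[0]$ is the $gp$ returned by a preceding search and $oldChild[0]$ is the corresponding $p$, so post-condition (1) of Lemma \ref{search-lem} gives what is needed. For the \ref{mov-set-info-sc4} case, $pNode[0] = p_i$ and $oldChild[0] = node_i$ are both from the second search (for $v_i$), so post-condition (2) applies again.

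Next I would treat the general replace cases at lines \ref{mov-set-info1} and \ref{mov-set-info2}, where $I$ has two $pNode/oldChild$ entries. For $i=0$ we have $pNode[0] = p_i$ and $oldChild[0] = node_i$, coming from the second search for $v_i$, so post-condition (2) of Lemma \ref{search-lem} applies. For $i=1$ we have $pNode[1] = gp_d$ and $oldChild[1] = p_d$, coming from the first search for $v_d$, so post-condition (1) applies.

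I do not expect a serious obstacle: the observation is essentially bookkeeping that matches each $(pNode[i], oldChild[i])$ pair to the search from which it came. The only mildly subtle point is making sure, in the two-step replace cases, that we attribute each of the two pairs to the correct one of the two preceding searches (the $p_i$/$node_i$ pair to the search for $v_i$, and the $gp_d$/$p_d$ pair to the search for $v_d$); once that is done, each pair matches one of the two post-conditions of Lemma \ref{search-lem} verbatim.
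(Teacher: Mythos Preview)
Your proposal is correct and follows essentially the same approach as the paper: the paper's justification (given in the text immediately preceding the observation) simply notes that for each $i$ the pair $(I.pNode[i], I.oldChild[i])$ is always either $(gp,p)$ or $(p,node)$ from one of the preceding searches, and then invokes Lemma~\ref{search-lem}. Your case analysis is just a more explicit spelling-out of this same argument; the only tiny point you might make explicit is that in the line~\ref{mov-set-info-sc1} case one uses $node_i = node_d$ (from the test at line~\ref{mov-sc3}) so that $(p_d, node_i) = (p_d, node_d)$ genuinely comes from a single search.
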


Next, we show that each successful child CAS changes the $child$ field of an internal node from some old value to some new value that is different from the old value.

\begin{lemma} \label{old-new-different-lem}
Let $I$ be a Flag object.
Then for all $i$, $I.oldChild[i] \ne I.newChild[i]$.
\end{lemma}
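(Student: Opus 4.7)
The plan is to proceed by case analysis on how $I$ was created, since the Flag constructor is invoked in only a handful of places in the pseudo-code (lines \ref{ins-set-info1}, \ref{ins-set-info2}, \ref{del-set-info}, \ref{mov-set-info1}, \ref{mov-set-info2}, \ref{mov-set-info-sc1}, \ref{mov-set-info-sc23}, \ref{mov-set-info-sc4}). For each such creation site, I will identify what $I.oldChild[i]$ and $I.newChild[i]$ are set to and show they are unequal. The cases fall naturally into two patterns.

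In the first pattern, $I.newChild[i]$ is a freshly allocated object: either the output of \textbf{createNode} at line \ref{create-internal} (inserts and most replace cases), or a brand-new Leaf (the special case at line \ref{mov-set-info-sc1}). In those cases $I.oldChild[i]$ is a node that was returned by a preceding call to \textbf{search}, and therefore existed before the allocation of $I.newChild[i]$; so the two object references cannot be equal. I will phrase this as ``$I.newChild[i]$ was created after $I.oldChild[i]$, hence they are distinct references.''

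In the second pattern, $I.newChild[i]$ is not newly allocated but rather a node read as the child of some other internal node: this happens for delete at line \ref{del-set-info} (where $newChild[0] = nodeSibling$ is read from $p.child$) and for the second entry of the general replace at lines \ref{mov-set-info1}--\ref{mov-set-info2} (where $newChild[1] = nodeSibling_d$ is read from $p_d.child$). Here $I.oldChild[i]$ is the parent node itself ($p$ or $p_d$). I will argue that at the time $nodeSibling$ was read, it was a child of $p$, so by Invariant \ref{prefix-inv} the string $p.label$ is a proper prefix of $nodeSibling.label$, in particular $|p.label| < |nodeSibling.label|$; combined with Observation \ref{basic-obs}, which says $label$ fields never change, this forces $p \ne nodeSibling$ as objects.

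The main obstacle is not any single case but rather being exhaustive: the replace special cases at lines \ref{mov-set-info-sc23} and \ref{mov-set-info-sc4} have several sub-situations (depending on which of the conditions at lines \ref{mov-condition-sc2}--\ref{mov-condition-sc3} or \ref{mov-sc5} holds) and require care to confirm that in every branch $I.newChild[0]$ is a freshly created node returned by \textbf{createNode} at line \ref{mov-create-node1} or \ref{mov-create-node3}, which cannot coincide with $I.oldChild[0]$ (the pre-existing $p_d$ or $node_i$ from the searches). Once all branches are cataloged, each is dispatched by the same ``fresh object'' or ``proper prefix'' argument, and the lemma follows.
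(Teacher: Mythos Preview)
Your proposal is correct and mirrors the paper's own proof: the paper likewise splits into the ``freshly allocated $newChild$'' case (where $oldChild$ was already in existence during the preceding search, hence cannot equal a node created afterward) and the ``$newChild$ read from $p.child$'' case for delete and the second entry of the general replace (where Invariant \ref{prefix-inv} gives a strict label-length gap between $p$ and its child). Your enumeration of the replace special cases is slightly more explicit than the paper's, but the logic is identical.
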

\begin{proof}
If $i = 0$ and $I$ is created on line \ref{ins-set-info1}, \ref{ins-set-info2}, \ref{mov-set-info1}, \ref{mov-set-info2}, \ref{mov-set-info-sc1}, \ref{mov-set-info-sc23} or \ref{mov-set-info-sc4}
then $I.newChild[i]$ is a newly created node, $I.oldChild[i]$ is set to $node$ or $p$
where $\langle -$ , $p$, $node$, -, -, - $\rangle$ was returned by a search operation that preceded the creation of $I$.
By Lemma \ref{search-lem}, $I.oldChild[i]$ was a child of some internal node during the search operation.
$I.newChild[i]$ is created at line \ref{create-internal} or \ref{mov-set-info-sc1} after the search operation returns.
So, $I.oldChild[i] \ne I.newChild[i]$.

If $i=0$ and $I$ is created on line \ref{del-set-info} or if $i=1$ and $I$ is created at line \ref{mov-set-info1} or \ref{mov-set-info2}, then $I.oldChild$ is set to $p$
and the value of $I.newChild$ is read from $p.child$ at line \ref{del-read-sibling} or \ref{mov-read-sibling}.
By Invariant \ref{prefix-inv}, $I.oldChild[i] \ne I.newChild[i]$.
\end{proof}

The decision of which child to update is made at line \ref{help-read-index} based on the $(|I.pNode[i].label|+1)$th bit of $I.newChild[i].label$.
Now, we show the child CAS changes the same element of the $child$ field that is read during the search operation.
\begin{lemma} \label{same-index-lem}
Let $I$ be a Flag object.
Then, for each $i$, $(|I.pNode[i].label|+1)$th bit of $I.oldChild[i].label$ and $I.newChild[i].label$ are the same.
\end{lemma}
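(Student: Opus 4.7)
The plan is to reduce the claim to the prefix property already established as Invariant \ref{prefix-inv}. First, by Observation \ref{info-p-pos-old-val-obs}, at some time $T_0$ during the search preceding the creation of $I$, $I.oldChild[i]$ was a child of $I.pNode[i]$; let $k$ be the index in $I.pNode[i].child$ where $I.oldChild[i]$ sat at $T_0$. Then Invariant \ref{prefix-inv}, applied at $T_0$, tells us that $(I.pNode[i].label)\cdot k$ is a prefix of $I.oldChild[i].label$, so the $(|I.pNode[i].label|+1)$th bit of $I.oldChild[i].label$ is exactly $k$. Thus the lemma will follow once I show that $(I.pNode[i].label)\cdot k$ is also a prefix of $I.newChild[i].label$.

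To do that, I would perform a case analysis on where $I$ was created, exactly mirroring the six cases in the proof of Invariant \ref{prefix-inv}. In the four cases where $I.newChild[i]$ is a freshly built internal node (lines \ref{ins-call-create-node}, \ref{mov-call-create-node}, \ref{mov-create-node1}, \ref{mov-create-node3}), its $label$ is the longest common prefix of the $label$s of its two children. Lemma \ref{search-lem} guarantees that $(I.pNode[i].label)\cdot k$ is a prefix of the key $val$ being inserted, and Invariant \ref{prefix-inv} at $T_0$ shows it is also a prefix of $I.oldChild[i].label$ (and of any sibling read from $I.pNode[i].child$); hence it is a common prefix of the two children of $I.newChild[i]$, and therefore a prefix of $I.newChild[i].label$. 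In the deletion-only case (line \ref{del-set-info}) and the inner CAS of the two-step replace cases, $I.newChild[i]$ is read directly from $I.pNode[i].child$ at $T_0$ as the sibling of $I.oldChild[i]$; Invariant \ref{prefix-inv} at $T_0$ then gives $(I.pNode[i].label)\cdot k$ as a prefix of $I.newChild[i].label$ immediately (where the same $k$ is inherited because the sibling of $I.oldChild[i]$ sits at index $1-k$, but the $(|I.pNode[i].label|+1)$th bit we care about is of the old value). Finally, in the special case at line \ref{mov-set-info-sc1}, $I.newChild[i]$ is a new leaf with $label = v_i$ and Lemma \ref{search-lem} directly furnishes the required prefix.

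The main obstacle will be the bookkeeping in the two-step replace (Cases 3, 5, 6 of Invariant \ref{prefix-inv}), where the pairing of $I.pNode[i]$, $I.oldChild[i]$, $I.newChild[i]$ differs between $i=0$ and $i=1$, and where $I.pNode[i]$ may coincide with nodes returned by the other search. One has to be careful to pick the right $k$ for each $i$ and check that the bit of $I.oldChild[i].label$ at position $|I.pNode[i].label|+1$ really does match the bit we compute for $I.newChild[i].label$. Once that matching is verified in each case, the lemma follows directly; no new inductive argument is required beyond what was already done for Invariant \ref{prefix-inv}.
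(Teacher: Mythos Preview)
Your overall strategy---case analysis on the line that created $I$, using Lemma~\ref{search-lem} and Invariant~\ref{prefix-inv} to show that $(I.pNode[i].label)\cdot k$ is a prefix of both $I.oldChild[i].label$ and $I.newChild[i].label$---is exactly the paper's approach. The insert-like cases and the leaf-replacement case at line~\ref{mov-set-info-sc1} are handled correctly in your sketch.

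There is, however, a concrete error in your treatment of the deletion case (line~\ref{del-set-info}) and the second child CAS of the general replace (lines~\ref{mov-set-info1}, \ref{mov-set-info2}, index $i=1$). You write that ``$I.newChild[i]$ is read directly from $I.pNode[i].child$ at $T_0$ as the sibling of $I.oldChild[i]$.'' That is not what the code does. In the delete case, $I.pNode[0]=gp$, $I.oldChild[0]=p$, and $I.newChild[0]=nodeSibling$, where $nodeSibling$ is read from $p.child$ (line~\ref{del-read-sibling}), \emph{not} from $gp.child$. So Invariant~\ref{prefix-inv} applied at the level of $gp$ does not directly give you that $(gp.label)\cdot k$ is a prefix of $nodeSibling.label$; you only get that $(gp.label)\cdot k$ is a prefix of $p.label$. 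You need one more application of Invariant~\ref{prefix-inv} (at the level of $p$) to conclude that $p.label$ is a prefix of $nodeSibling.label$, and then transitivity. The paper's proof spells out exactly this two-step chain in its Case~2 (and similarly uses transitivity chains in Cases~5 and~6 for the special replace cases, where the relevant nodes are read from even deeper in the tree). Your parenthetical about ``the sibling of $I.oldChild[i]$ sits at index $1-k$'' suggests you sensed something was off; the fix is simply to track which node's $child$ array each value is actually read from and chain the prefix relations accordingly.
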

\begin{proof}
Let $k$ be $|I.pNode[i].label|+1$.
To prove the lemma, we consider different cases according to what line created $I$.

Case 1: $I$ is created at line \ref{ins-set-info1} or \ref{ins-set-info2}.
Let $\langle$ -, $p$, $node$, -, -, -$\rangle$ be the result returned by the call to search($val$) on line \ref{ins-call-search} that precedes the creation of $I$.
Let $newNode$ be the new internal node that is created at line \ref{ins-call-create-node}.
Then, $I.pNode[0] = p$, $I.oldChild[0] = node$ and $I.newChild[0] = newNode$.
By Lemma \ref{search-lem}, $p.child[j] = node$ for some $j$ at some time during search($val$) and $(p.label) \cdot j$ is a prefix of $val$.
By Invariant \ref{prefix-inv}, $(p.label) \cdot j$ is also a prefix of $node.label$.
So, the $k$th bits of $node.label$ and $val$ are $j$.
Initially, the children of $newNode$ are a new leaf node whose $label$ is $val$ and a new copy of $node$.
Since $(p.label) \cdot j$ is a prefix of $node.label$ and $val$, the $k$th bit of $newNode.label$ is $j$.

Case 2: $I$ is created at line \ref{del-set-info}.
Let $\langle gp$, $p$, $node$, -, -, -$\rangle$ be the result returned by the call to search($val$) on line \ref{del-call-search} that precedes the creation of $I$.
Let $nodeSibling$ be the element of $p.child$ that is read at line \ref{del-read-sibling}.
Then, $I.pNode[0] = gp$, $I.oldChild[0] = p$ and $I.newChild[0] = nodeSibling$.
By Lemma \ref{search-lem}, $gp.child[j] = p$ for some $j$ at some time during search($val$). 
By Invariant \ref{prefix-inv}, $(gp.label) \cdot j$ is a prefix of $p.label$ and $p.label$ is a prefix of $nodeSibling.label$.
So, $k$th bits of $p.label$ and $nodeSibling.label$ are $j$.

Case 3: $I$ is created at line \ref{mov-set-info1} or \ref{mov-set-info2}.
Let $\langle gp_d$, $p_d$, $node_d$, -, -, -$\rangle$ be the result returned by the call to search($val_d$) on line \ref{mov-call-search1} that precedes the creation of $I$ and
$\langle$ -, $p_i$, $node_i$, -, -, -$\rangle$ be the result returned by the call to search($val_i$) on line \ref{mov-call-search2} that precedes the creation of $I$.
Then, $newNode_i$ is the new internal node that is created at line \ref{mov-call-create-node}.
Then, $I.pNode[0] = p_i$, $I.oldChild[0] = node_i$ and $I.newChild[0] = newNode_i$.
By the same argument as in Case 1, the lemma is true for $k$th bit of $node_i.label$ and $newNode_i.label$.
Let $nodeSibling_d$ be an element of $p_d.child$ that is read at line \ref{del-read-sibling}.
Then, $I.pNode[0] = gp_d$, $I.oldChild[0] = p_d$ and $I.newChild[0] = nodeSibling_d$.
By the same argument as in Case 2, the lemma is true for $k$th bit of $p_d.label$ and $nodeSibling_d.label$.

Case 4: $I$ is created at line \ref{mov-set-info-sc1}.
Let $\langle$ -, $p_i$, $node_i$, -, -, -$\rangle$ be the result returned by the call to search($val_i$) on line \ref{mov-call-search2} that precedes the creation of $I$.
Then, $I.pNode[0] = p_i$, $I.oldChild[0] = node_i$ and $I.newChild[0]$ is the new leaf node whose $label$ is $val_i$.
By Lemma \ref{search-lem}, $p_i.child[j] = node_i$ for some $j$ at some time during search($val_i$) and $(p_i.label) \cdot j$ is a prefix of $val_i$.
By Invariant \ref{prefix-inv}, $(p_i.label) \cdot j$ is a prefix of $node_i.label$.

Case 5: $I$ is created at line \ref{mov-set-info-sc23}. 
Let $\langle gp_d$, $p_d$, $node_d$, -, -, -$\rangle$ be the result returned by the call to search($val_d$) on line \ref{mov-call-search1} that precedes the creation of $I$ and
$\langle$ -, $p_i$, $node_i$, -, -, -$\rangle$ be the result returned by the call to search($val_i$) on line \ref{mov-call-search2} that precedes the creation of $I$.
Then, $newNode_i$ is the new internal node that is created at line \ref{mov-create-node1}.
Then, $I.pNode[0] = gp_d$, $I.oldChild[0] = p_d$ and $I.newChild[0] = newNode_i$.
Let $nodeSibling_d$ be the child of $p_d$ that is read at line \ref{mov-read-sibling}.
By Lemma \ref{search-lem}, $gp_d.child[j] = p_d$ for some $j$ at some time during search($val_d$).
By Invariant \ref{prefix-inv}, $(gp_d.label) \cdot j$ is a prefix of $p_d.label$ and $p_d.label$ is a prefix of $nodeSibling_d.label$.
So, $(gp_d.label) \cdot j$ is a prefix of $nodeSibling_d.label$ and the $k$th bit of $p_d.label$ is $j$.

If the condition at line \ref{mov-condition-sc2} is true, by Lemma \ref{search-lem}, $(gp_d.label) \cdot j$ is a prefix of $val_i$ (since $p_i = gp_d$ and $node_i = p_d$).
If the condition at line \ref{mov-condition-sc3} is true, by Lemma \ref{search-lem}, $p_i.label$ is a prefix of $val_i$, so  $(gp_d.label) \cdot j$ is a prefix of $val_i$ (since $p_i = p_d$).
Initially, the children of $newNode_i$ are the new leaf node whose $label$ is $val_i$ and $nodeSibling_d$.
Since $(gp_d.label) \cdot j$ is a prefix of $nodeSibling_d.label$ and $val_i$, the $k$th bit of $newNode_i.label$ is $j$.

Case 6: $I$ is created at line \ref{mov-set-info-sc4}.
Let $\langle gp_d$, $p_d$, $node_d$, -, -, -$\rangle$ be the result returned by the call to search($val_d$) on line \ref{mov-call-search1} that precedes the creation of $I$ and
$\langle$ -, $p_i$, $node_i$, -, -, -$\rangle$ be the result returned by the call to search($val_i$) on line \ref{mov-call-search2} that precedes the creation of $I$.
Then, $newChild_i$ is the new internal node that is created at line \ref{mov-create-node2} and $newNode_i$ is the new internal node that is created at line \ref{mov-create-node3}.
Then, $node_i = gp_d$, $I.pNode[0] = p_i$, $I.oldChild[0] = node_i$ and $I.newChild[0] = newNode_i$.
Let $nodeSibling_d$ be the child of $p_d$ that is read at line \ref{mov-read-sibling} and $pSibling_d$ be a child of $gp_d$ that is read at line \ref{mov-read-pSibling}.
By Lemma \ref{search-lem}, $p_i.child[j] = node_i = gp_d$ for some $j$ at some time during search($val_i$). 
By Invariant \ref{prefix-inv}, $(p_i.label) \cdot j$ is a prefix of $gp_d.label$, $gp_d.label$ is a prefix of $pSibling_d.label$ and of $p_d.label$ and $p_d.label$ is a prefix of $nodeSibling_d$.
So, $(p_i.label) \cdot j$ is a prefix of $pSibling_d.label$ and $nodeSibling_d.label$, and the $k$th bit of $node_i.label$ is $j$.

Initially, the children of $newChild_i$ are $pSibling_d$ and $nodeSibling_d$.
Since $(p_i.label) \cdot j$ is a prefix of $pSibling_d.label$ and $nodeSibling_d.label$, $(p_i.label) \cdot j$ is a prefix of $newChild_i.label$.
By Lemma \ref{search-lem}, $(p_i.label) \cdot j$ is a prefix of $val_i$.
Initially, the children of $newNode_i$ are the new leaf node whose $label$ is $val_i$ and $newChild_i$.
Since $(p_i.label) \cdot j$ is a prefix of $newChild_i.label$ and $val_i$, the $k$th bit of $newNode_i.label$ is $j$.
\end{proof}

In the implementation, update operations might help one another to change the $child$ fields of nodes by calling the help routine.
So, there might be several CAS steps that try to change an element of the $child$ field of some internal node from a value $old$ to some value $new$.
The following lemmas show that, as long as there is no ABA problem on the $child$ field of an internal node, 
the first child CAS of $I$ on $I.pNode[i]$ succeeds and no subsequent ones succeed.

Later we shall use these facts to prove inductively that there is no ABA problem on $child$ fields. 

\begin{lemma} \label{other-child-cas-if-no-aba-lem}
Assume that before some time $T$, for all internal nodes $y$, no child CAS sets $y.child[k]$ to a value $old$ after a child CAS of the form CAS($y.child[k]$, $old$, -) succeeds. 

Let $I$ be a Flag object.
Then, for each $i$, only the first child CAS of $I$ on $I.pNode[i]$ can succeed at~$T$ or before~$T$. 
\end{lemma}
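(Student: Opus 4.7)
The plan is to reduce the claim to two facts about the single target cell $I.pNode[i].child[k]$: that every child CAS of $I$ on $I.pNode[i]$ operates on the \emph{same} cell with the \emph{same} expected value $I.oldChild[i]$, and that once this cell ceases to hold $I.oldChild[i]$, it never returns to that value before time $T$.

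First I would establish that all child CAS steps of $I$ on $I.pNode[i]$ attempt the identical CAS($I.pNode[i].child[k]$, $I.oldChild[i]$, $I.newChild[i]$). The index $k$ is computed at line \ref{help-read-index} as the $(|I.pNode[i].label|+1)$th bit of $I.newChild[i].label$ and, by Lemma \ref{same-index-lem}, this equals the corresponding bit of $I.oldChild[i].label$; since the fields of $I$ never change after initialization (Observation \ref{basic-obs}) except for $flagDone$, every helper executing line \ref{help-change-child} for index $i$ produces the same CAS.

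Next I would dispatch the lemma by cases on the first child CAS of $I$ on $I.pNode[i]$. If it fails, there is nothing to prove. If it succeeds, Lemma \ref{old-new-different-lem} gives $I.oldChild[i] \ne I.newChild[i]$, so immediately after the successful CAS, $I.pNode[i].child[k]$ holds a value different from $I.oldChild[i]$. Any later child CAS of $I$ on $I.pNode[i]$ will therefore succeed only if $I.pNode[i].child[k]$ is first reset back to $I.oldChild[i]$.

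The main obstacle, and the point where the hypothesis of the lemma enters, is ruling out exactly this ABA behaviour on the $child$ field. Inspecting the pseudo-code, the $child$ array of an internal node is only ever written by child CAS steps at line \ref{help-change-child} (internal nodes are created with their children set at lines \ref{initial-root}, \ref{create-copy}, \ref{create-internal}, \ref{mov-create-copy}, and afterwards the only writers are child CAS steps). Thus the only way $I.pNode[i].child[k]$ could become $I.oldChild[i]$ again before $T$ would be a child CAS that sets it back to $I.oldChild[i]$ after the successful CAS of the form CAS($I.pNode[i].child[k]$, $I.oldChild[i]$, -); this is precisely what the lemma's hypothesis forbids before $T$. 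Hence $I.pNode[i].child[k] \ne I.oldChild[i]$ at all times strictly between the first successful child CAS of $I$ on $I.pNode[i]$ and $T$, so every subsequent child CAS of $I$ on $I.pNode[i]$ occurring at or before $T$ fails, which is exactly the statement of the lemma.
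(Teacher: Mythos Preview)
Your argument for the case where the first child CAS of $I$ on $I.pNode[i]$ succeeds is correct and matches the paper's reasoning. The gap is in the other case: when the first such CAS \emph{fails}, you write ``there is nothing to prove,'' but the lemma asserts that \emph{no} non-first child CAS of $I$ on $I.pNode[i]$ can succeed, which must still be shown even when the first one fails. If the first CAS fails, a later helper could in principle find the cell equal to $I.oldChild[i]$ and succeed---you have to rule this out.

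The missing ingredient is Observation~\ref{info-p-pos-old-val-obs}: during the search preceding the creation of $I$, the cell $I.pNode[i].child[k]$ held $I.oldChild[i]$. So if the first child CAS of $I$ fails at time $T_{1c}$, the cell must have been changed away from $I.oldChild[i]$ between that earlier read and $T_{1c}$; since child fields are modified only by child CAS steps, some (possibly different) child CAS of the form CAS($I.pNode[i].child[k]$, $I.oldChild[i]$, -) already succeeded before $T_{1c}$. Now the lemma's hypothesis applies exactly as in your success case, and all later child CASs of $I$ on $I.pNode[i]$ fail. The paper's proof handles both cases at once by concluding, uniformly, that ``at $T_{1c}$ or some time before $T_{1c}$, a child CAS of the form CAS($x.child[j]$, $old$, -) succeeds,'' and then invoking the no-ABA hypothesis once.
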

\begin{proof}
Let $x = I.pNode[i]$ and $old = I.oldChild[i]$. 
Then, the child CAS of $I$ on $I.pNode[i]$ is of the form CAS($x.child[j]$, $old$, -) for some $j$.
Let $T_{1c}$ be the time when the first child CAS of $I$ on $I.pNode[i]$ occurs.
Suppose some other child CAS of $I$ on $I.pNode[i]$ occurs at $T_c$ where $T_{1c} < T_c \le T$.
We shall show the child CAS at $T_c$ fails.

By Observation \ref{info-p-pos-old-val-obs}, $x.child[j]$ was $old$ at some time before $I$ was created (before $T_{1c}$).
If $x.child[j] = old$ immediately before $T_{1c}$, then the child CAS at $T_{1c}$ changes $x.child[j]$ from $old$ to a different value (by Lemma \ref{old-new-different-lem}).
Thus, just after $T_{1c}$, $x.child[j] \ne old$.
So, at $T_{1c}$ or some time before $T_{1c}$, a child CAS of the form CAS($x.child[j]$, $old$, -) succeeds.
By the assumption, $x.child[j]$ is not changed back to $old$ between $T_{1c}$ and $T$, so the child CAS at time $T_c$ fails.
\end{proof}

Let $I$ be a Flag object. 
The following lemma shows that if the first child CAS of $I$ on $I.pNode[i]$ for some $i$ occurs, no other child CAS changes the $child$ fields of nodes in $F_I$ 
between the time when the $info$ fields of the nodes in $F_I$ are read for the last time before $I$ is created and the first child CAS of $I$ on $I.pNode[i]$.

\begin{lemma} \label{no-wrong-child-change-lem}
Assume that before some time $T$, for all internal nodes $y$, no child CAS sets $y.child[k]$ to a value $old$ after a child CAS of the form CAS($y.child[k]$, $old$, -) succeeds. 

Let $I$ be a Flag object, $x = I.flag[i]$ and $xInfo = I.oldInfo[i]$ for some $i$.
Suppose $x.info = xInfo$ at time $T_1$ and for some $j$, the first child CAS of $I$ on $I.pNode[j]$ occurs at time $T_2 > T_1$ and $T_2 < T$.
Then, no child CAS of any other Flag object $I' \ne I$ changes $x.child$ between $T_1$ and $T_2$.
\end{lemma}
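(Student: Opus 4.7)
The plan is to argue by contradiction, leveraging Lemma \ref{update-x-info-lem} to constrain the value of $x.info$ during $(T_1, T_2)$, and then showing that any successful child CAS of a different Flag object $I'$ on $x$ would force $x.info = I'$ at some moment in that interval.

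First I would suppose, for contradiction, that some Flag object $I' \neq I$ has a successful child CAS at time $T'$ with $T_1 < T' < T_2$ that changes some entry of $x.child$. Since $T' < T_2 < T$, the hypothesis of Lemma \ref{other-child-cas-if-no-aba-lem} applies to $I'$, so only the first child CAS of $I'$ on $x$ can succeed; hence the CAS at $T'$ must be the first child CAS of $I'$ on $I'.pNode[k]$ for the index $k$ with $I'.pNode[k] = x$. By Observation \ref{flag-parent-obs}, $x \in F_{I'}$.

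Next I would establish that $x.info = I'$ at some point strictly inside $(T_1, T_2)$. By Lemma \ref{change-child-after-flag-lem}, every node in $F_{I'}$ (in particular $x$) is flagged by a flag CAS of $I'$ before $T'$, say at time $T_{f'} < T'$, so $x.info = I'$ just after $T_{f'}$. By Corollary \ref{unflag-after-flag-col}, the value $I'$ can only be overwritten by an unflag or backtrack CAS of $I'$; but Lemma \ref{no-unflag-before-first-change-child-lem}, applied with the same $k$, says no such CAS occurs before the first child CAS of $I'$ on $I'.pNode[k]$, i.e., before $T'$. Hence $x.info = I'$ throughout $[T_{f'}, T')$, and in particular immediately before $T'$, which lies in $(T_1, T_2)$.

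Finally I would contradict Lemma \ref{update-x-info-lem}, which tells us $x.info \in \{xInfo, I\}$ at every moment in $(T_1, T_2)$. Since $xInfo = I.oldInfo[i]$ is an Unflag object (because newFlag returns null at line \ref{newFlag-return-null1} whenever an entry of $oldInfo$ is a Flag, so $I$ could only have been returned by newFlag if each $oldInfo[i]$ was Unflag), and $I'$ is a Flag object different from $I$, we have $I' \notin \{xInfo, I\}$, giving the desired contradiction.

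The main obstacle I anticipate is a purely bookkeeping one: making sure the three lemmas chain together with the correct time indices. Specifically, Lemma \ref{update-x-info-lem} constrains $x.info$ on the open interval $(T_1, T_2)$, while the flagging argument produces $x.info = I'$ on a half-open interval $[T_{f'}, T')$; one has to argue that these intervals actually overlap at an interior point, which follows because $T'$ is strictly greater than $T_1$ and strictly less than $T_2$, so any time immediately preceding $T'$ is still inside $(T_1, T_2)$. No subtlety about the ABA assumption beyond what is needed to invoke Lemma \ref{other-child-cas-if-no-aba-lem} should arise.
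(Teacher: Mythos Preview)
Your proposal is correct and follows essentially the same approach as the paper's proof: both use Observation~\ref{flag-parent-obs} to get $x\in F_{I'}$, invoke Lemmas~\ref{change-child-after-flag-lem} and~\ref{no-unflag-before-first-change-child-lem} to force $x.info=I'$ just before the successful child CAS of $I'$, and then derive a contradiction with Lemma~\ref{update-x-info-lem} after noting that $xInfo$ is an Unflag object. The only cosmetic difference is that the paper cites Lemma~\ref{info-lem} for the Unflag-ness of $xInfo$ whereas you argue it directly from the check in \textsf{newFlag}; both are valid.
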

\begin{proof}
By Lemma \ref{other-child-cas-if-no-aba-lem}, for each $k$, only the first child CAS of Flag object $I'$ on $I'.pNode[k]$ can succeed.
Thus, if a child CAS of $I'$ changes $x.child$, by Lemma \ref{change-child-after-flag-lem} and \ref{no-unflag-before-first-change-child-lem}, 
$x.info = I'$ just before the CAS (since $x \in F_{I'}$ by Observation \ref{flag-parent-obs}).
By Lemma \ref{update-x-info-lem}, $x.info$ is either $xInfo$ or $I$ at all times between $T_1$ and $T_2$.
By Lemma \ref{info-lem}, $xInfo$ is an Unflag object.
Since $I \ne I'$, $x.info$ is never equal to $I'$ between $T_1$ and $T_2$, so no child CAS of $I'$ changes $x.child$ between $T_1$ and $T_2$.
\end{proof}

\begin{lemma} \label{first-child-cas-if-no-aba-lem}
Assume that before some time $T$, for all internal nodes $y$, no child CAS sets $y.child[k]$ to a value $old$ after a child CAS of the form CAS($y.child[k]$, $old$, -) succeeds. 

Let $I$ be a Flag object.
Then, for each $i$, if the first child CAS of $I$ on $I.pNode[i]$ occurs before $T$, then it succeeds.
\end{lemma}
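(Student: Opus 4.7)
Fix a Flag object $I$ and an index $i$, and write $x = I.pNode[i]$, $old = I.oldChild[i]$, and $k$ for the child index computed at line \ref{help-read-index}, namely the $(|x.label|+1)$th bit of $I.newChild[i].label$. Let $T_2$ be the time of the first child CAS of $I$ on $I.pNode[i]$, which occurs before $T$ by hypothesis. The plan is to show that $x.child[k] = old$ immediately before $T_2$; the CAS at $T_2$ is then exactly $\mathrm{CAS}(x.child[k], old, I.newChild[i])$ and succeeds.

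First I would pin down a time at which the desired equality already holds. Observation \ref{info-p-pos-old-val-obs} supplies a time $T_0$ during the search preceding the creation of $I$ at which $old = x.child[j]$ for some $j$. Invariant \ref{prefix-inv} forces $(x.label)\cdot j$ to be a prefix of $old.label$, so $j$ equals the $(|x.label|+1)$th bit of $old.label$, which by Lemma \ref{same-index-lem} equals $k$. Hence $x.child[k] = old$ at some $T_0 < T_2$.

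Next I would rule out any change to $x.child[k]$ on $[T_0, T_2]$. Since $x \in F_I$ by Observation \ref{flag-parent-obs}, pick $i'$ with $I.flag[i'] = x$ and set $xInfo := I.oldInfo[i']$; the post-conditions of search (Lemma \ref{search-lem}), together with Lemma \ref{info-old-lem}, yield a time $T_1 \le T_0$ at which $x.info = xInfo$. Lemma \ref{update-x-info-lem} then gives $x.info \in \{xInfo, I\}$ throughout $[T_1, T_2]$. Any successful child CAS on $x.child$ belongs to some Flag object $I''$, and by Lemmas \ref{change-child-after-flag-lem} and \ref{no-unflag-before-first-change-child-lem} it requires $x.info = I''$ at the moment of the CAS; because $xInfo$ is an Unflag object (Lemma \ref{info-lem}), this forces $I'' = I$. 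So only child CASs of $I$ itself can threaten $x.child[k]$ during $[T_1, T_2]$, and by the definition of $T_2$ no earlier child CAS of $I$ acts on entry $i$.

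The remaining case, which I expect to be the main obstacle, is a child CAS of $I$ on some entry $j \ne i$ with $I.pNode[j] = x$. I plan to dispose of it by inspecting the eight creation sites of Flag objects: six of them (lines \ref{ins-set-info1}, \ref{ins-set-info2}, \ref{del-set-info}, \ref{mov-set-info-sc1}, \ref{mov-set-info-sc23}, \ref{mov-set-info-sc4}) have $|I.pNode| = 1$, so the case is vacuous. The only sites with $|I.pNode| = 2$ are lines \ref{mov-set-info1} and \ref{mov-set-info2}, where $I.pNode = [p_i, gp_d]$ and the corresponding old children are $node_i$ and $p_d$. Even if $p_i = gp_d = x$, the guard at line \ref{mov-check-gc-condition} ensures $node_i \ne p_d$, so these are distinct children of $x$ and, by Invariant \ref{prefix-inv}, sit at different child indices; by Lemma \ref{same-index-lem} the indices computed for entries $i$ and $j$ therefore differ, and the CAS on entry $j$ cannot touch $x.child[k]$. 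This completes the argument that $x.child[k] = old$ just before $T_2$, and the CAS at $T_2$ succeeds.
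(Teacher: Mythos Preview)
Your approach matches the paper's: establish $x.child[k] = old$ at an earlier time via Observation~\ref{info-p-pos-old-val-obs} and Lemma~\ref{same-index-lem}, then show no child CAS disturbs it before $T_2$, splitting into the cases $I'' \ne I$ (which you handle by inlining the argument of Lemma~\ref{no-wrong-child-change-lem}; the paper simply cites that lemma, and you should also invoke Lemma~\ref{other-child-cas-if-no-aba-lem} to know a successful CAS of $I''$ is a \emph{first} one) and $I'' = I$ on another entry.

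The one substantive divergence is in the self-interference case for replace. The paper simply notes that the other entry's CAS has expected value $I.oldChild[0] = node_i \ne p_d = old$, so it cannot move $x.child[j]$ \emph{away from} $old$ regardless of which index it hits. You instead argue the two entries target different child indices. This is also true, but your justification (``distinct children of $x$ \dots sit at different child indices'') tacitly assumes $node_i$ and $p_d$ are children of $x$ \emph{simultaneously}; the two searches read them at possibly different times. To close this you need the duplicate check in \texttt{newFlag} (lines~\ref{newFlag-dup-val}--\ref{newFlag-return-null2}): since $gp_d = p_i = x$ appears twice in the flag array, $I$ is created only if $gpInfo_d = pInfo_i$, whence by Lemma~\ref{same-info-lem} $x.info$ is unchanged between the two reads and therefore $x.child$ is unchanged as well, making $node_i$ and $p_d$ simultaneous (hence distinct-index) children. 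The paper's expected-value argument sidesteps this extra step.
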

\begin{proof}
Let $x = I.pNode[i]$, $old = I.oldChild[i]$. 
Then, the child CAS of $I$ on $I.pNode[i]$ is of the form CAS($x.child[j]$, $old$, -).
Let $T_{1c}$ be the time when the first child CAS of $I$ on $I.pNode[i]$ occurs (before $T$).
We prove that it succeeds.

By Observation \ref{flag-parent-obs}, $x = I.pNode[i] \in F_{I}$.
By Lemma \ref{info-old-lem}, $x.info = I.oldInfo[i]$ at time $T_0$ before $I$ is created.
By Observation \ref{info-p-pos-old-val-obs} and Lemma \ref{search-lem}, $x.child[j] = old$ at some time between $T_0$ and $T_{1c}$.
So, to prove the lemma, it suffices to show that no child CAS changes $x.child[j]$ from $old$ to another value between $T_0$ and $T_{1c}$. 

First, we argue that such a change cannot be made by a child CAS of $I$ itself.
This can only happen if $I$ is created at line \ref{mov-set-info1} or \ref{mov-set-info2} and $i=1$ and the child CAS of $I$  at $T_{1c}$ is on $I.pNode[1]$. 
For this case, we must show that no child CAS of $I$ on $I.pNode[0]$ changes $x.child[j]$ from $old$ to another value: 
Since the condition at line \ref{mov-check-gc-condition} preceding the creation of $I$ is true, $I.oldNode[0] \ne I.oldNode[1]$.
If $I.pNode[0] = I.pNode[1]$, the child CAS of $I$ on $I.pNode[0]$ does not change $x.child[j]$ from $old$ to another value (since $I.oldNode[0] \ne old$). 

By Lemma \ref{no-wrong-child-change-lem}, no child CAS of $I' \ne I$ changes $x.child[j]$ between $T_0$ and $T_{1c}$.
\end{proof}

Let $I$ be a Flag object and $x$ be an internal node in $F_I - U_I$.
By Lemma \ref{change-child-after-flag-lem} and \ref{no-unflag-before-first-change-child-lem}, $x.info = I$ just before the first child CAS of $I$.
If there is any child CAS of $I$, by Lemma \ref{change-child-after-flag-lem}, \ref{no-unflag-before-first-change-child-lem} and \ref{no-backtrack-after-first-change-child-lem},
$x.info = I$ at all times after the first child CAS of $I$.
After the first child CAS of $I$, we say $x$ is {\it marked by $I$}.
(Thus, once a node is marked by $I$, it remains marked by $I$ forever.)

Now, we prove if an internal node becomes unreachable, it is marked after that.
\begin{lemma} \label{unreachable-mark-lem}
Assume that before some time $T$, for all internal nodes $y$, no child CAS sets $y.child[k]$ to a value $old$ after a child CAS of the form CAS($y.child[k]$, $old$, -) succeeds. 

Let $I$ be a Flag object and $x$ be an internal node.
Consider a child CAS of $I$ that succeeds at time $T'$ (before $T$).
If $x$ is reachable immediately before $T'$ and $x$ becomes unreachable at $T'$, $x$ is marked by $I$ at all times after $T'$.
\end{lemma}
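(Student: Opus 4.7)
My plan is to show that $x \in F_I \setminus U_I$. Once this is established, the remarks preceding the lemma (which rely on Lemmas~\ref{change-child-after-flag-lem}, \ref{no-unflag-before-first-change-child-lem} and \ref{no-backtrack-after-first-change-child-lem}) immediately give that $x.info = I$ at all times from the first child CAS of $I$ onward; since $T'$ itself is a successful child CAS of $I$, the first such CAS occurs at or before $T'$, so $x$ is marked by $I$ at all times after $T'$, as required.

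First I would invoke the hypothesis at $T$ together with Lemma~\ref{first-child-cas-if-no-aba-lem} to conclude that the successful child CAS at $T'$ is the \emph{first} child CAS of $I$ on $I.pNode[j]$ for some index $j$, and (using Lemma~\ref{same-index-lem}) that it changes $I.pNode[j].child[k]$ from $I.oldChild[j]$ to $I.newChild[j]$ for the unique $k$ equal to the $(|I.pNode[j].label|+1)$-th bit of both $I.oldChild[j].label$ and $I.newChild[j].label$. Because this single pointer change is the only modification to the trie at $T'$, any node that is reachable just before $T'$ and unreachable just after $T'$ must have had every reachable root-to-it path traverse the edge $I.pNode[j]\to I.oldChild[j]$. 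By Lemma~\ref{reachable-parent-lem}, each non-root reachable node has a unique reachable parent, so such a node is either $I.oldChild[j]$ itself or a reachable proper descendant of $I.oldChild[j]$ that is not a descendant of $I.newChild[j]$ immediately after $T'$. In particular, this pins down where $x$ lives.

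Next I would argue by case analysis on the line at which $I$ was created (lines~\ref{ins-set-info1}, \ref{ins-set-info2}, \ref{del-set-info}, \ref{mov-set-info1}, \ref{mov-set-info2}, \ref{mov-set-info-sc1}, \ref{mov-set-info-sc23}, \ref{mov-set-info-sc4}). In every case the replacement subtree $I.newChild[j]$ is assembled by \textbf{createNode} so as to retain every reachable descendant of $I.oldChild[j]$ except a constant number that are explicitly named in the construction of $I$. Inspection of each case shows that every internal node which is a descendant of $I.oldChild[j]$ before $T'$ but not a descendant of $I.newChild[j]$ after $T'$ was placed in the $flag$ array and omitted from the $unflag$ array of $I$: for insert, the sole such node is the old $node$ (placed in $F_I\setminus U_I$ at line~\ref{ins-set-info1} when $node$ is internal); for delete, it is $p$ (from line~\ref{del-set-info}); for the general replace, it is $p_d$ and, when $node_i$ is internal, also $node_i$ (from line~\ref{mov-set-info1}); and analogously in each special case. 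Thus $x \in F_I\setminus U_I$, finishing the proof.

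The main obstacle is handling the replace special cases at lines~\ref{mov-set-info-sc23} and~\ref{mov-set-info-sc4} carefully, where $p_d$, $gp_d$, $p_i$, $node_i$ can coincide and the new subtree is built from two or three nested \textbf{createNode} calls. For those cases one has to use Invariant~\ref{prefix-inv} to enumerate the reachable internal descendants of $I.oldChild[j]$ (a short list forced by the prefix relationships among the $label$s of $gp_d$, $p_d$, $node_d$, $nodeSibling_d$, $pSibling_d$), and then check line by line against the $flag$ and $unflag$ arrays assembled by the corresponding \textbf{newFlag} call to verify that every such internal node is flagged by $I$ and not scheduled for unflagging.
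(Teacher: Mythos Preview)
Your proposal is correct and follows essentially the same route as the paper: case analysis on the line that created $I$, identifying in each case exactly which internal nodes can become unreachable at $T'$ and checking that they lie in $F_I \setminus U_I$. The one ingredient you leave implicit is Lemma~\ref{no-wrong-child-change-lem}: to justify that the only internal descendants of $I.oldChild[j]$ lost at $T'$ are the explicitly named ones (for instance, that the copy of $node$ made at line~\ref{create-copy} still has the same children as $node$ just before $T'$, or that $p$'s children just before $T'$ are still exactly $node$ and $nodeSibling$), you must argue that the $child$ fields of the nodes in $F_I$ were not changed between the time they were read and $T'$---that is exactly what Lemma~\ref{no-wrong-child-change-lem} supplies, and the paper invokes it explicitly in every case.
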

\begin{proof}
Let $I$ be a Flag object. 
Let $\langle gp$, $p$, $node$, -, -, -$\rangle$ be the result returned by the call to search($val$) on line \ref{ins-call-search}, \ref{del-call-search} or \ref{mov-call-search1} that precedes the creation of $I$.
If $I$ is created inside replace($val$, $val'$), let $\langle$ -, $p'$, $node'$, -, -, -$\rangle$ be the result returned by the call to search($val'$) on line \ref{mov-call-search2} that precedes the creation of $I$.
By Lemma \ref{other-child-cas-if-no-aba-lem}, $T'$ is the first child CAS of $I$ on $I.pNode[i]$ for some $i$.
We consider different cases according to what line created $I$.
For each case, we show that only nodes in $F_I - U_I$ could become unreachable immediately after $T'$.
By Lemma \ref{change-child-after-flag-lem}, \ref{no-unflag-before-first-change-child-lem} and \ref{no-backtrack-after-first-change-child-lem}, any node in $F_I - U_I$ is flagged by $I$ at all times after $T'$.

Case 1: $I$ is created at line \ref{ins-set-info1}.
Then, $node$ is an internal node.
Then, $newNode$ is the new internal node created at line \ref{ins-call-create-node} whose non-empty children are a new leaf node and a new copy of $node$.
In this case, $I.oldChild[0] = node$ and $I.newChild[0] = newNode$.
Since $node \in F_I$, by Lemma \ref{no-wrong-child-change-lem}, 
$node.child$ is not changed between the time when $node.info$ is read at line \ref{ins-read-node-info} and $T'$.
Thus, just before $T'$, the $child$ field of the new copy of $node$ is the same as $node.child$.
The only internal node that could become unreachable at $T'$ is $node$.
In this case, $node \in F_I -U_I$.
By the definition, $node$ is marked at all times after $T'$.

Case 2: $I$ is created at line \ref{ins-set-info2}.
Then, $node$ is a leaf node.
In this case, $I.oldChild[0] = node$ is a leaf node, so no internal node becomes unreachable at $T'$.

Case 3: $I$ is created at line \ref{del-set-info}.
Let $nodeSibling$ be the element of $p.child$ that is read at line \ref{del-read-sibling}.
By Invariant \ref{prefix-inv} and Lemma \ref{child-num-lem}, $nodeSibling$ exists and is different from $node$.
In this case, $I.oldChild[0] = p$ and $I.newChild[0] = nodeSibling$.
Since $p \in F_I$, by Lemma \ref{no-wrong-child-change-lem}, $p.child$ is not changed between the time when $p.info$ is read for the last time during search($val$) and $T'$.
Since $p$ is a parent of $node$ during search($val$) (by Lemma \ref{search-lem}) and $p$ is a parent of $nodeSibling$ at line \ref{del-read-sibling}, 
the children of $p$ just before $T'$ are $node$ and $nodeSibling$.
Since the operation does not return false at line \ref{del-return-false}, $node$ is a leaf node.
So, the only internal node that could become unreachable at $T'$ is $p$.
In this case, $p \in F_I -U_I$. 
By the definition, $node$ is marked at all times after $T'$.

Case 4: $I$ is created at line \ref{mov-set-info1} or \ref{mov-set-info2}.
Then, $newNode_i$ is the new internal node that is created at line \ref{mov-call-create-node}.
If the first child CAS of $I$ on $I.pNode[0]$ occurs at time $T'$, an element of $p'.child$ is changed from $node'$ to $newNode_i$ at $T'$.
By the same argument as in Case 1 and 2, if $node'$ is an internal node, $node'$ is the only internal node that could become unreachable at $T'$ and $node'$ is marked at all times after $T'$.

Let $nodeSibling_d$ be an element of $p.child$ that is read at line \ref{mov-read-sibling}.
If the first child CAS of $I$ on $I.pNode[1]$ occurs at time $T'$, an element of $gp.child$ is changed from $p$ to $nodeSibling_d$ at $T'$.
By the same argument as in Case 3, $p$ is the only internal node that could become unreachable at $T'$ and $p$ is marked at all times after $T'$.

Case 5: $I$ is created at line \ref{mov-set-info-sc1}.
Then,  $I.oldChild[0] = node'$.
Since the operation does not return false at line \ref{mov-return-false1}, $node'$ is a leaf node.
Since $I.oldChild[0]$ is a leaf node, no internal node becomes unreachable at $T'$.

Case 6: $I$ is created at line \ref{mov-set-info-sc23}. 
Let $nodeSibling_d$ be the element of $p.child$ that is read at line \ref{mov-read-sibling}. 
Then, $newNode_i$ is the new internal node created at line \ref{mov-create-node1} whose children are a new leaf node and $nodeSibling_d$.
By Invariant \ref{prefix-inv} and Lemma \ref{child-num-lem}, $nodeSibling_d$ exists and is not equal to $node$.
In this case, $I.oldChild[0] = p$ and $I.newChild[0] = newNode_i$.
By Lemma \ref{no-wrong-child-change-lem}, since $p \in F_I$, $p.child$ is not changed between the time when $p.info$ is read for the last time during search($val$) and $T'$.
Since $node \ne nodeSibling$, by Lemma \ref{child-num-lem}, the children of $p$ are $node$ and $nodeSibling$ just before $T'$. 
Since the operation does not return false at line \ref{mov-return-false1}, $node$ is a leaf node.
So, the only internal node that could become unreachable at $T'$ is $p$.
In this case, $p \in F_I -U_I$.
By the definition, $p$ is marked at all times after $T'$.

Case 7: $I$ is created at line \ref{mov-set-info-sc4}.
Then, $gp = node'$.
Let $nodeSibling_d$ be the element of $p.child$ that is read at line \ref{mov-read-sibling} and 
$pSibling_d$ be the element of $gp.child$ that is read at line \ref{mov-read-pSibling}. 
Then, $newChild_i$ is the new internal node created at line \ref{mov-create-node2} whose children are $nodeSibling_d$ and $pSibling_d$ and
$newNode_i$ is the new internal node created at line \ref{mov-create-node3} whose children are a new leaf node and $newChild_i$. 
In this case, $I.oldChild[0] = gp$ and $I.newChild[0] = newNode_i$.
In this case, $gp \in F_I$ and $p \in F_I$.
By Lemma \ref{no-wrong-child-change-lem}, $gp.child$ is not changed between the time when $gp.info$ is read for the last time during search($val$) and $T'$.
By Lemma \ref{no-wrong-child-change-lem}, $p.child$ is not changed between the time when $p.info$ is read for the last time during search($val$) and $T'$.
So, just before $T'$, the children of $gp$ are $p$ and $pSibling_d$ and the children of $p$ are $node$ and $nodeSibling_d$.
Since the operation does not return false at line \ref{mov-return-false1}, $node$ is a leaf node.
So, the only internal nodes that become unreachable at $T'$ are $gp = node'$ and $p$.
In this case, $gp = node'$ and $p$ are in $F_I - U_I$.
By the definition, $p$ and $gp$ are marked at all times after $T'$.
\end{proof}

The following corollary shows if an internal node is not marked, it has not become unreachable.

\begin{corollary}\label{unmarked-reachable-col}
Assume that before some time $T$, for all internal nodes $y$, no child CAS sets $y.child[k]$ to a value $old$ after a child CAS of the form CAS($y.child[k]$, $old$, -) succeeds. 

If an internal node $x$ is not marked at time $T'$ (before $T$) and $x$ was reachable at some time before $T'$, then $x$ is reachable at time $T'$.
\end{corollary}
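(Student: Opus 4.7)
The plan is to prove the contrapositive: if $x$ is an internal node that was reachable at some time $T_0 < T'$ but is not reachable at $T'$, then I will show $x$ must be marked at $T'$. The argument should just be a direct consequence of Lemma \ref{unreachable-mark-lem}, so the proof should be short.

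First, I would consider the sequence of configurations of the tree between $T_0$ and $T'$. Since $x$ is reachable at $T_0$ and unreachable at $T'$, and reachability is determined by $child$ pointers which only change via child CAS steps, there must be a first child CAS step at some time $T^{*} \in (T_0, T']$ at which $x$ transitions from reachable to unreachable. Let $I^{*}$ be the Flag object associated with this child CAS. Since $T^{*} \le T'< T$, the hypothesis of Lemma \ref{unreachable-mark-lem} applies at $T^{*}$.

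Next, I would apply Lemma \ref{unreachable-mark-lem} to $I^{*}$ and the time $T^{*}$: since $x$ was reachable immediately before $T^{*}$ and became unreachable at $T^{*}$, the lemma guarantees that $x$ is marked by $I^{*}$ at all times after $T^{*}$. Since the definition of marked (as given just before Lemma \ref{unreachable-mark-lem}) is that once a node is marked by some Flag object it stays marked by that object forever, $x$ is in particular marked at time $T'$, contradicting the assumption that $x$ is not marked at $T'$.

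The argument is essentially mechanical once Lemma \ref{unreachable-mark-lem} is in hand; the only mild obstacle is making sure that the transition from reachable to unreachable happens \emph{at} a child CAS step (rather than some other kind of step). This follows because the reachability of $x$ depends only on the $child$ fields of internal nodes along paths from the $root$, and those fields are changed exclusively by child CAS steps (line \ref{help-change-child}), so any change in reachability occurs at such a step.
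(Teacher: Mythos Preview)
Your proof is correct and is exactly the natural way to derive the corollary from Lemma~\ref{unreachable-mark-lem}; the paper itself gives no explicit proof and simply states it as an immediate corollary, so your contrapositive argument is precisely what is implicitly intended.
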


Now, we show that, for each $i$, $I.oldChild[i]$ becomes unreachable just after the successful child CAS of $I$ on $I.pNode[i]$.
\begin{lemma} \label{unreachable-oldChild-lem}
Assume that before some time $T$, for all internal nodes $y$, no child CAS sets $y.child[k]$ to a value $old$ after a child CAS of the form CAS($y.child[k]$, $old$, -) succeeds. 

Let $I$ be a Flag object. 
If the child CAS of $I$ on $I.pNode[i]$ succeeds at time $T'$ (before $T$) for some $i$, $I.oldChild[i]$ becomes unreachable just after $T'$.
\end{lemma}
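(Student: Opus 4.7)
Let $x := I.pNode[i]$, $old := I.oldChild[i]$, $new := I.newChild[i]$, and $k$ the index given by Lemma \ref{same-index-lem}; the successful CAS at $T'$ changes $x.child[k]$ from $old$ to $new$. Assume for contradiction that $old$ is reachable at $T'^+$. I first establish that $x$ is not a parent of $old$ at $T'^+$: Lemma \ref{old-new-different-lem} gives $x.child[k] = new \neq old$, and Invariant \ref{prefix-inv} rules out $x.child[1-k] = old$ (otherwise $(x.label)\cdot 0$ and $(x.label)\cdot 1$ would both have to be prefixes of $old.label$). Since only $x.child[k]$ is modified at $T'$, any other $child$ field agrees at $T'^-$ and $T'^+$.

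Fix a path $P = (v_0 = root, \ldots, v_n = old)$ using child pointers at $T'^+$ and split on whether $P$ uses the new edge from $x$ to $new$. If it does not, then $P$ is also a valid path at $T'^-$, so $y := v_{n-1} \neq x$ is a reachable parent of $old$ at $T'^-$. Since $x.child[k] = old$ just before $T'$, $x$ is also a parent of $old$ at $T'^-$. Lemma \ref{no-same-node-lem} (parent uniqueness) then forces exactly one of $\{x, y\}$ to be unreachable at $T'^-$, and since $y$ is reachable through $P$, it must be $x$. To contradict this I would show that $x$ is in fact reachable at $T'^-$: by Observation \ref{flag-parent-obs}, $x \in F_I$, so Lemma \ref{change-child-after-flag-lem} yields a successful flag CAS of $I$ on $x$ at some time $T_f < T'$; Lemmas \ref{info-lem} and \ref{no-unflag-before-first-change-child-lem} then give $x.info = I$ throughout $[T_f, T']$, so $x$ cannot be marked by any $I' \neq I$ during that interval. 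Since $x$ was reachable at some earlier moment (having been read as a child pointer from an already-visited node during the preceding search), the contrapositive of Lemma \ref{unreachable-mark-lem} gives $x$ reachable at $T'^-$, the desired contradiction.

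In the remaining case, $P$ uses the new edge, so $new$ is an ancestor of $old$ at $T'^+$, i.e., $old$ occurs as a node in the subtree rooted at $new$. I would derive a contradiction here by case analysis on the line that created $I$, paralleling the case analysis in the proof of Lemma \ref{unreachable-mark-lem}. In each case the subtree of $new$ at $T'^+$ is assembled from newly-created objects (new leaves and new copies of existing internal nodes) together with a bounded number of pre-existing subtrees reached via $child$ fields of nodes in $F_I$, all of which are frozen on the relevant interval by Lemma \ref{no-wrong-child-change-lem}. Invariant \ref{prefix-inv} forces labels to strictly increase along child pointers (yielding acyclicity), and this is used to verify in each case that the specific object $old$ is distinct from every object in the subtree of $new$: for instance, in the construction at line \ref{ins-set-info1}, the subtree of $new$ consists of new objects plus the subtrees originally rooted at $old.child[0]$ and $old.child[1]$, which cannot contain $old$ by acyclicity.

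The main obstacle will be this last sub-case: in the more elaborate replace constructions (lines \ref{mov-set-info1}, \ref{mov-set-info2}, \ref{mov-set-info-sc23}, \ref{mov-set-info-sc4}) the subtree of $new$ is stitched together from several pre-existing subtrees, and in each construction one must carefully verify that the particular node denoted $old$ in that case does not reappear among the objects inherited from the existing trie.
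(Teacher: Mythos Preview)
Your proposal is correct and follows essentially the same decomposition as the paper: both arguments reduce to (a) showing that $x = I.pNode[i]$ is the unique reachable parent of $old$ just before $T'$, and (b) showing that $old$ is not a descendant of $new$. Your path-splitting contradiction is just a repackaging of this: your Case~1 is exactly (a), and your Case~2 is exactly (b). The paper does the same case analysis on the line that created $I$, and for (a) uses the same chain of lemmas you cite (Observation~\ref{flag-parent-obs}, Lemmas~\ref{change-child-after-flag-lem} and~\ref{no-unflag-before-first-change-child-lem}, Corollary~\ref{unmarked-reachable-col}, then Lemma~\ref{reachable-parent-lem}).

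The one place the paper is noticeably simpler than your plan is your Case~2. You propose to track the subtree of $new$ at $T'^+$ as an assembly of newly-created objects plus frozen pre-existing subtrees, and then argue $old$ does not appear among them. The paper avoids this bookkeeping entirely and uses Invariant~\ref{prefix-inv} directly as a label argument: in each case one observes that $old.label$ equals, or is a proper prefix of, the label of the relevant node(s) at the top of $new$'s subtree, so $old$ cannot be a proper descendant (since proper descendants have strictly longer labels). For example, in the insert case $old = node$ has the same label as the new copy of $node$, so it cannot be a proper descendant of the copy; in the delete case $old = p$ and $new = nodeSibling$, and $p.label$ is a proper prefix of $nodeSibling.label$. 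This label argument holds at all times and so sidesteps any need to argue that the children of newly-created nodes have not been modified between creation and $T'$.
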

\begin{proof}
Let $I$ be a Flag object. 
Let $\langle gp$, $p$, $node$, -, -, -$\rangle$ be the result returned by the call to search($val$) on line \ref{ins-call-search}, \ref{del-call-search} or \ref{mov-call-search1} that precedes the creation of $I$.
If $I$ is created inside replace($val$, $val'$), let $\langle$ -, $p'$, $node'$, -, -, -$\rangle$ be the result returned by the call to search($val'$) on line \ref{mov-call-search2} that precedes the creation of $I$.
By Lemma \ref{other-child-cas-if-no-aba-lem}, $T'$ is the first child CAS of $I$ on $I.pNode[i]$. 
We consider different cases according to what line created $I$.

Case 1: $I$ is created at line \ref{ins-set-info1} or \ref{ins-set-info2}.
Let $newNode$ be the new internal node created at line \ref{ins-call-create-node} whose non-empty children are a new leaf node and a new copy of $node$.
By Invariant \ref{prefix-inv}, $node$ is not descendant of the new copy of $node$ and hence is not a descendant of $newNode$.
In this case, $I.oldChild[0] = node$ and $I.newChild[0] = newNode$.
Since $p \in F_I$, $p$ is a parent of $node$ just before $T'$ (by Lemma \ref{no-wrong-child-change-lem}).
Since $p.info = I$ just before $T'$ (by Lemma \ref{change-child-after-flag-lem} and \ref{no-unflag-before-first-change-child-lem}), 
$p$ is not marked just before $T'$ and $p$ is reachable just before $T'$ (by Corollary \ref{unmarked-reachable-col}).
By Lemma \ref{reachable-parent-lem}, $p$ is the only reachable parent of $node$ just before $T'$.
Since $node$ is not a descendant of $newNode$, $node$ becomes unreachable just after $T'$. 

Case 2: $I$ is created at line \ref{del-set-info}.
Let $nodeSibling$ be the element of $p.child$ that is read at line \ref{del-read-sibling}.
By Invariant \ref{prefix-inv} and Lemma \ref{child-num-lem}, $nodeSibling$ exists and is different from $node$.
In this case, $I.oldChild[0] = p$ and $I.newChild[0] = nodeSibling$.
Since $p$ and $gp$ are in $F_I$, $gp$ is a parent of $p$ and $p$ is a parent of $node$ just before $T'$ (by Lemma \ref{no-wrong-child-change-lem}).
Since $gp.info = I$ just before $T'$ (by Lemma \ref{change-child-after-flag-lem} and \ref{no-unflag-before-first-change-child-lem}), 
$gp$ is not marked just before $T'$ and $gp$ is reachable just before $T'$ (by Corollary \ref{unmarked-reachable-col}).
Since $gp$ is the reachable parent of $p$ just before $T'$, $p$ is reachable just before $T'$.
By Lemma \ref{reachable-parent-lem}, $p$ is the only reachable parent of $node$ just before $T'$.
Since $node$ is not a descendant of $nodeSibling$ (by Invariant \ref{prefix-inv}), $node$ becomes unreachable just after $T'$. 

Case 3: $I$ is created at line \ref{mov-set-info1} or \ref{mov-set-info2}.
Let $newNode_i$ be the new internal node that is created at line \ref{mov-call-create-node}.
If the first child CAS of $I$ on $I.pNode[0]$ occurs at time $T'$, an element of $p'.child$ is changed from $node'$ to $newNode_i$ at $T'$.
By the same argument as in Case 1, $node'$ becomes unreachable just after $T'$.

Let $nodeSibling_d$ be an element of $p.child$ that is read at line \ref{mov-read-sibling}.
If the first child CAS of $I$ on $I.pNode[1]$ occurs at time $T'$, an element of $gp.child$ is changed from $p$ to $nodeSibling_d$ at $T'$.
By the same argument as in Case 2, $node$ becomes unreachable just after $T'$.

Case 4: $I$ is created at line \ref{mov-set-info-sc1}.
Then,  $node = node'$, $I.oldChild[0] = node'$ and $I.newChild[0]$ is a new leaf node.
Since $p \in F_I$, $p$ is a parent of $node = node'$ just before $T'$ (by Lemma \ref{search-lem} and \ref{no-wrong-child-change-lem}).
Since $p.info = I$ just before $T'$ (by Lemma \ref{change-child-after-flag-lem} and \ref{no-unflag-before-first-change-child-lem}), $p$ is reachable just before $T'$ (by Corollary \ref{unmarked-reachable-col}).
Since $I.newChild[0]$ is a leaf node and $p$ is the only reachable parent of $node = node'$ just before $T'$ (by Lemma \ref{reachable-parent-lem}), $node'$ becomes unreachable just after $T'$.

Case 5: $I$ is created at line \ref{mov-set-info-sc23}. 
Let $nodeSibling_d$ be the element of $p.child$ that is read at line \ref{mov-read-sibling}. 
Let $newNode_i$ be the new internal node created at line \ref{mov-create-node1} whose children are a new leaf node and $nodeSibling_d$.
By Invariant \ref{prefix-inv} and Lemma \ref{child-num-lem}, $nodeSibling_d$ exists and is not equal to $node$.
In this case, $I.oldChild[0] = p$ and $I.newChild[0] = newNode_i$.
By Invariant \ref{prefix-inv}, $p$ is not a descendant of $nodeSibling_d$ and hence is not a descendant of $newNode_i$.
Since $gp \in F_I$, $gp$ is a parent of $p$ just before $T'$ (by Lemma \ref{no-wrong-child-change-lem}).
Since $gp.info = I$ just before $T'$ (by Lemma \ref{change-child-after-flag-lem} and \ref{no-unflag-before-first-change-child-lem}), 
$gp$ is not marked just before $T'$ and $gp$ is reachable just before $T'$ (by Corollary \ref{unmarked-reachable-col}).
By Lemma \ref{reachable-parent-lem}, $gp$ is the only reachable parent of $p$ just before $T'$.
Since $p$ is not a descendant of $newNode_i$, $p$ becomes unreachable just after $T'$.

Case 6: $I$ is created at line \ref{mov-set-info-sc4}.
Then, $gp = node'$.
Let $nodeSibling_d$ be the element of $p.child$ that is read at line \ref{mov-read-sibling} and 
$pSibling_d$ be the element of $gp.child$ that is read at line \ref{mov-read-pSibling}. 
Let $newChild_i$ be the new internal node created at line \ref{mov-create-node2} whose children are $nodeSibling_d$ and $pSibling_d$.
Let $newNode_i$ be the new internal node created at line \ref{mov-create-node3} whose children are a new leaf node and $newChild_i$. 
In this case, $I.oldChild[0] = gp$ and $I.newChild[0] = newNode_i$.
By Invariant \ref{prefix-inv}, $gp$ is not a descendant of $nodeSibling_d$, $pSibling_d$, $newChild_i$ or $newNode_i$.
Since $p' \in F_I$, $p'$ is a parent of $node' = gp$ just before $T'$ (by Lemma \ref{no-wrong-child-change-lem}).
Since $p'.info = I$ just before $T'$ (by Lemma \ref{change-child-after-flag-lem} and \ref{no-unflag-before-first-change-child-lem}), 
$p'$ is not marked just before $T'$ and $p'$ is reachable just before $T'$ (by Corollary \ref{unmarked-reachable-col}).
By Lemma \ref{reachable-parent-lem}, $p'$ is the only reachable parent of $node' = gp$ just before $T'$.
Since $gp$ is not a descendant of $newNode_i$, $gp$ becomes unreachable just after $T'$. 
\end{proof}

We say that the search operation {\it visits} the node that gets stored in $node$ when the search performs line \ref{search-set-initial-node} or \ref{search-set-node}. 
The following lemma shows that a search operation does not visit a node that was unreachable at all times during the operation. 

\begin{lemma} \label{query-reachable-lem}
Assume that before some time $T$, for all internal nodes $y$, no child CAS sets $y.child[k]$ to a value $old$ after a child CAS of the form CAS($y.child[k]$, $old$, -) succeeds. 

Assume a search operation visits a node $xChild \ne root$ by reading $xChild$ in the $child$ field of some node $pNode$ before $T$.
Then, there is a time after the search operation begins and before the search operation visits $xChild$ that $pNode$ is reachable and a parent of $xChild$.
\end{lemma}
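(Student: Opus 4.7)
The plan is to induct on the iteration number $k$ of the search loop in which $xChild$ is read from $pNode.child$ at line \ref{search-set-node}. For the base case $k=1$, we have $pNode = root$, which is reachable at all times (Observation \ref{root-obs}), and $xChild$ is a child of $root$ at the moment of the read, so the claim is immediate. For the inductive step $k > 1$, the inductive hypothesis applied to iteration $k-1$ yields a time $T_{IH}$ during the search, prior to the iteration-$(k-1)$ visit, at which $pNode$ was reachable as the child of that earlier iteration's own parent.

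Let $T_k$ be the time of the read in iteration $k$. In the easy case, $pNode$ is not marked at $T_k$: Corollary \ref{unmarked-reachable-col}, applied using the reachability of $pNode$ at $T_{IH}$, gives reachability of $pNode$ at $T_k$, and the atomic read ensures $xChild$ is in $pNode.child$ at $T_k$. The main obstacle is the remaining case, where $pNode$ is marked at $T_k$ by some Flag object $I$ with $pNode \in F_I - U_I$; because marking is permanent and typically goes hand in hand with eventual removal, it is no longer obvious that there is a time during the search at which $pNode$ is simultaneously reachable and has $xChild$ as a child.

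The crux is a subclaim that $pNode.child$ is frozen from the time $T^*$ of the first child CAS of $I$ onward. A direct inspection of every call to the newFlag routine in the pseudo-code shows that $I.pNode \subseteq U_I$, so $pNode \in F_I - U_I$ forces $pNode \notin I.pNode$; in particular, the CAS at $T^*$ leaves $pNode.child$ unchanged. Moreover, Lemmas \ref{change-child-after-flag-lem}, \ref{no-unflag-before-first-change-child-lem} and \ref{no-backtrack-after-first-change-child-lem} pin $pNode.info$ to $I$ at all times at or after $T^*$, which, together with the flag-before-child-CAS discipline enforced by Lemma \ref{change-child-after-flag-lem}, prevents any other Flag object from ever successfully flagging $pNode$ and hence from performing a successful child CAS on $pNode.child$ after $T^*$.

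With the freeze established, one concludes by splitting on the order of $T_{IH}$ and $T^*$. If $T_{IH} \ge T^*$, then $T_{IH}$ itself is the witness: $pNode$ is reachable at $T_{IH}$ by the inductive hypothesis, and $pNode.child$ contains $xChild$ at $T_{IH}$ by the freeze. If $T_{IH} < T^*$, then $pNode$ is not yet marked at $T_{IH}$, so Corollary \ref{unmarked-reachable-col} yields reachability of $pNode$ at every instant up to and including the moment just before $T^*$; at that moment $pNode.child$ still contains $xChild$ (since the CAS at $T^*$ does not touch it), and the moment lies strictly between $T_{IH}$ and $T_k$, completing the induction.
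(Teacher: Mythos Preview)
Your proof is correct, but it follows a different route from the paper's. The paper argues more directly: having established (by induction) that $pNode$ was reachable at some time $T''$ during the search, it asks whether $xChild$ was already a child of $pNode$ at $T''$. If so, $T''$ is the witness. If not, some child CAS of a Flag object $I$ installed $xChild$ into $pNode.child$ between $T''$ and the read; then $pNode = I.pNode[j]$ for some $j$, and Observation~\ref{flag-parent-obs} gives $pNode \in U_I$, so $pNode$ is not marked just before that CAS and hence (by Corollary~\ref{unmarked-reachable-col}) still reachable---the moment just after that CAS is the witness.

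Your approach instead cases on whether $pNode$ is marked at the read time $T_k$, and in the marked case proves a freeze lemma for $pNode.child$ from $T^*$ onward. This works, but it is more elaborate: the freeze argument must separately rule out child CAS steps of the marking Flag $I$ (via $I.pNode \subseteq U_I$, hence $pNode \notin I.pNode$) and child CAS steps of any other $I'$ (via the permanence of $pNode.info = I$ together with Lemmas~\ref{change-child-after-flag-lem} and~\ref{no-unflag-before-first-change-child-lem}), and then a further split on the order of $T_{IH}$ and $T^*$. The paper's single observation---that the very CAS which made $xChild$ a child of $pNode$ necessarily had $pNode \in U_I$ and hence left $pNode$ unmarked and reachable---sidesteps all of this machinery. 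Your decomposition does make the frozen state of a marked node explicit, which is conceptually nice, but the paper's argument is shorter and invokes fewer auxiliary facts.
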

\begin{proof}
We prove the lemma by induction on the number of steps that the search operation has done.
Let $T'$ be a time that the search operation visits a node $xChild$ (before $T$).
We assume the lemma is true for all nodes visited before $T'$.
Now, we show that the lemma is true for $xChild$.
Then, $xChild$ is visited on line \ref{search-set-node} at time $T$. 
First, we show that $pNode$ is visited earlier during the operation.

If $xChild$ is visited at line \ref{search-set-node}, $pNode$ is visited earlier at line \ref{search-set-initial-node} 
or at line \ref{search-set-node} during the previous loop iteration.
Thus, $pNode$ is visited at some time during the search operation before time $T'$.
So, $pNode$ was reachable at some time $T''$ after the search operation begins and before the search operation visited $pNode$.
If $xChild$ was a child of $pNode$ at $T''$, the lemma is proved.

Otherwise, an element of $pNode.child$ is set to $xChild$ at some later time between $T''$ and $T'$.
Thus, a child CAS sets an element of $pNode.child$ to $xChild$ between $T''$ and $T'$.
Let $I$ be a Flag object such that a child CAS of $I$ sets an element of $pNode.child$ to $xChild$ between $T''$ and $T'$.
By Lemma \ref{first-child-cas-if-no-aba-lem}, the first child CAS of $I$ on $pNode$ succeeds between $T''$ and $T'$.
By Observation \ref{flag-parent-obs}, $pNode \in F_I$.
By Corollary \ref{change-child-after-flag-lem} and \ref{no-unflag-before-first-change-child-lem}, $pNode.info = I$ just before the child CAS of $I$ between $T''$ and $T'$. 
By Observation \ref{flag-parent-obs}, $pNode \in U_I$.
So, $pNode$ is not marked just before the child CAS of $I$.
Since $pNode$ was reachable at $T''$, by Corollary \ref{unmarked-reachable-col}, $pNode$ is still reachable just before the child CAS of $I$.
So, $pNode$ is reachable and a parent of $xChild$ just after the child CAS of $I$ between $T''$ and $T'$. 
\end{proof}

Now, we show, after a node becomes unreachable, the node does not become reachable again.
\begin{lemma} \label{no-reachable-after-unreachable-lem}
Assume that before some time $T$, for all internal nodes $y$, no child CAS sets $y.child[k]$ to a value $old$ after a child CAS of the form CAS($y.child[k]$, $old$, -) succeeds. 

Let $x$ be a Node object.
If, after $x$ is reachable, $x$ becomes unreachable at time $T'$ (before $T$), $x$ does not become reachable again between $T'$ and $T$.
\end{lemma}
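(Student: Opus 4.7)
The plan is to argue by contradiction. Suppose $x$ becomes reachable again at some time strictly between $T'$ and $T$, and let $T''$ be the earliest such time. Since reachability only changes via successful child CAS steps, $T''$ must coincide with one such step. Let $I$ be the Flag object of the enclosing help call, and let $i$ be the index such that this is a child CAS of $I$ on $I.pNode[i]$; by Lemma~\ref{other-child-cas-if-no-aba-lem} together with the hypothesis on the absence of ABA before $T$, this must be the first child CAS of $I$ on $I.pNode[i]$. Write $new = I.newChild[i]$.

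First I would argue that $I.pNode[i]$ is reachable just before $T''$: by Lemmas~\ref{change-child-after-flag-lem} and~\ref{no-unflag-before-first-change-child-lem}, $I.pNode[i].info = I$ just before $T''$, and since $I.pNode[i] \in U_I$ (Observation~\ref{flag-parent-obs}), $I.pNode[i]$ is not marked at that instant. Combined with the fact that $I.pNode[i]$ was a parent of $I.oldChild[i]$ at some time during the preceding search (Observation~\ref{info-p-pos-old-val-obs}, Lemma~\ref{search-lem}) and hence reachable at that time, Corollary~\ref{unmarked-reachable-col} yields reachability just before $T''$. Because $T''$ is the first moment after $T'$ when $x$ is reachable, every path from $root$ to $x$ at $T''$ must traverse the freshly installed edge $I.pNode[i].child[k] = new$, so $x$ is a descendant of $new$ at $T''$. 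Since the CAS at $T''$ changes only this single pointer, $x$ is already a descendant of $new$ immediately before $T''$ as well.

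The heart of the proof is then the following claim: every descendant of $new$ just before $T''$ is either \emph{(a)} a freshly allocated node that has never been reachable, or \emph{(b)} a node that is reachable just before $T''$. Once the claim is established, the contradiction is immediate: by hypothesis $x$ was reachable at some earlier time, ruling out (a), and $x$ is unreachable just before $T''$, ruling out (b). To establish the claim I would perform a case analysis on the line that created $I$, mirroring the structure of Lemma~\ref{unreachable-oldChild-lem}. When $new$ is an existing sibling read during the preceding search (e.g.\ $nodeSibling$ in delete, or $I.newChild[1]$ in the general replace), Lemma~\ref{no-wrong-child-change-lem} ensures $new$ remains a child of a flagged node that is still reachable just before $T''$ (by the same argument used for $I.pNode[i]$), so $new$ and all of its descendants fall into (b). When $new$ is a freshly created internal node, its children are combinations of new leaves, copies of flagged nodes, and existing siblings; for each copy $nodeCopy$ of a flagged node $node$, Lemma~\ref{no-wrong-child-change-lem} gives $nodeCopy.child = node.child$ at $T''$, so descendants of $nodeCopy$ through its $child$ pointers coincide with descendants of $node$ at $T''$, all in (b) because $node$ is reachable just before $T''$; existing siblings are handled identically; and $new$ itself, any new leaves, copies, and intermediate nodes (such as $newChild_i$ in replace special case~4) are freshly allocated, placing them in (a).

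The main obstacle I expect is the bookkeeping for the replace special case~4, where $new = newNode_i$ contains an intermediate freshly created node $newChild_i$ assembled from two existing siblings $nodeSibling_d$ and $pSibling_d$; here we must invoke Lemma~\ref{no-wrong-child-change-lem} for both $gp_d$ and $p_d$, and the reachability argument for two different flagged nodes, in order to justify category (b) for the descendants that reach down through $newChild_i$ into the pre-existing tree. Carrying out the insertion, deletion, and remaining replace cases is a mechanical but tedious replay of the patterns in Lemma~\ref{unreachable-oldChild-lem}, and should not raise new difficulties.
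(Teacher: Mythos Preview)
Your proposal is correct and follows essentially the same contradiction-and-case-analysis strategy as the paper, but with one organizational difference worth noting. The paper chooses $T''$ to be the first time the lemma is violated \emph{for any node}, and then picks $x$ to be the \emph{highest} violator at $T''$; this lets it argue in its Case~2 that the node $anc = I.newChild[i]$ (when it is a proper ancestor of $x$) was never reachable before $T''$ and hence must be freshly allocated, after which it case-analyses on the line that created $anc$. You instead fix $x$ from the start and prove the stronger claim that every descendant of $new$ just before $T''$ is either fresh or currently reachable, by case-analysing directly on the line that created $I$. Your route avoids the ``highest violator'' device at the cost of having to classify the whole subtree under $new$ rather than a single node; both arguments ultimately unwind the same facts (Observation~\ref{flag-parent-obs}, Lemmas~\ref{change-child-after-flag-lem}, \ref{no-unflag-before-first-change-child-lem}, \ref{no-wrong-child-change-lem}, Corollary~\ref{unmarked-reachable-col}) in the same cases.

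One small point to make explicit when you write this up: in the insert and general-replace cases you assert that the fresh copy $nodeCopy$ satisfies $nodeCopy.child = node.child$ at $T''$. Lemma~\ref{no-wrong-child-change-lem} gives you that $node.child$ is unchanged; you should also note that $nodeCopy.child$ cannot have changed since creation because $nodeCopy$ has never been reachable and hence has never been returned by any search, so it cannot appear in any $I'.pNode$. The paper glosses over this too, so it is not a gap in your argument relative to theirs, just something to state cleanly.
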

\begin{proof}
To derive a contradiction, assume the lemma is false.
Let $T''$ (before $T$) be the first time it is violated and $x$ be the highest node in the tree at $T''$ that violates the lemma.
Let $T'$ be the time that $x$ becomes unreachable before $T''$.
So, $x$ becomes reachable when a child CAS is performed at $T''$.
We have two different cases.

Case 1: a child CAS of some Flag object $I$ sets an element of a $child$ field to $x$ at $T''$.
If $I$ and $x$ are created in the same loop iteration of an update operation, $x$ is not reachable at any time before $T''$.
So, $I$ is created at line \ref{del-set-info}, \ref{mov-set-info1} or \ref{mov-set-info2}.
Let $\langle gp_d$, $p_d$, -, -, -, -$\rangle$ be the result returned by the call to the search operation on line \ref{del-call-search} or \ref{mov-call-search1} that precedes the creation of $I$.
If $I$ is created at line \ref{del-set-info}, $I.newChild[0] = x$ and if $I$ is created at line \ref{mov-set-info1} or \ref{mov-set-info2}, $I.newChild[1] = x$.
Then, $x$ is read as a child of $p_d$ at line \ref{del-read-sibling} or \ref{mov-read-sibling} and an element of $gp_d.child$ is set to $x$ at $T''$.
By Observation \ref{flag-parent-obs}, $gp_d \in F_I$ and $gp_d \in U_I$.

By Lemma \ref{first-child-cas-if-no-aba-lem}, the first child CAS of $I$ on $gp_d$ succeeds at $T''$.
By Lemma \ref{change-child-after-flag-lem} and \ref{no-unflag-before-first-change-child-lem}, $gp_d.info = I$ just before $T''$.
Since $gp_d$ is not marked just before $T''$ and $gp_d$ was reachable during the search operation by Lemma \ref{query-reachable-lem}, $gp_d$ is reachable just before $T''$ by Corollary \ref{unmarked-reachable-col}.
Since $p_d$ was a child of $gp_d$ during the search operation (by Lemma \ref{search-lem}), $p_d$ is a child of $gp_d$ just before $T''$ (by Lemma \ref{no-wrong-child-change-lem}).
Since $gp_d$ is reachable just before $T''$, $p_d$ is reachable just before $T''$.
By Lemma \ref{no-wrong-child-change-lem}, since $p_d \in F_I$, $p.child$ is not changed between the time when $p.info$ was read for the last time during the search operation and $T''$.
Since $x$ was a child of $p_d$ at line \ref{del-read-sibling} or \ref{mov-read-sibling}, $x$ is still a child of $p_d$ just before $T''$.
Since $p_d$ is reachable just before $T''$, $x$ is reachable just before $T''$, contradicting the fact that $x$ becomes reachable at $T''$ for the first time after $T'$.

Case 2: the child CAS of some Flag object $I$ sets an element of a $child$ field to a proper ancestor $anc$ of $x$ at $T''$.
Since $x$ is not reachable just before $T''$, $anc$ is also not reachable just before $T''$.
Since $x$ is the highest node that violates the lemma at $T''$, $anc$ is not reachable at any time before $T''$.
So, $anc$ is created at line \ref{create-internal} inside createNode that is called at line 
\ref{ins-call-create-node}, \ref{mov-call-create-node}, \ref{mov-create-node1} or \ref{mov-create-node3} prior to the creation of $I$.
We consider different cases depending on what line calls createNode that created $anc$.

Case 2A: $anc$ is created at line \ref{create-internal}, which is called from line \ref{ins-call-create-node} or \ref{mov-call-create-node}.
Let $\langle-$, $p_i$, $node_i$, -, -, -$\rangle$ be the result returned by the call to the search operation on line \ref{ins-call-search} or \ref{mov-call-search2} that precedes the creation of $I$.
The children of $anc$ initially are a new copy of $node_i$ that is created at line \ref{create-copy} or \ref{mov-create-copy} and the leaf node that is created at line \ref{ins-call-create-node} or \ref{mov-call-create-node}.
Since $x$ was reachable before $T'$, $x$ is not created at line \ref{create-copy}, \ref{ins-call-create-node}, \ref{mov-create-copy} or \ref{mov-call-create-node} that precedes the creation of $I$.
So, $x$ is not a child of $anc$ just before $T''$.
Since $anc$ is an ancestor of $x$ just before $T''$, a child of $anc$ is an internal node just before $T''$.
So, $node_i$ is an internal node and $x$ is a descendant of a new copy of $node_i$ just before $T''$.
So, $I$ is created at line \ref{ins-set-info1} or \ref{mov-set-info1}.
Since $node_i \in F_I$, $node_i.child$ is equal to the $child$ field of the new copy of $node_i$ just before $T''$ (by Lemma \ref{no-wrong-child-change-lem}). 
So, $x$ is also a proper descendant of $node_i$ just before $T''$.
Let $nodeChild$ be a child of $node_i$ at line \ref{create-copy} or \ref{mov-create-copy} such that $x$ is a descendant of $nodeChild$ just before $T''$.

By Observation \ref{flag-parent-obs}, $p_i \in F_I$. 
Since $p_i.info = I$ just before $T''$ (by Lemma \ref{change-child-after-flag-lem} and  \ref{no-unflag-before-first-change-child-lem}), $p_i$ is not marked just before $T''$.
Since $p_i$ was reachable during the search operation (by Lemma \ref{query-reachable-lem}), $p_i$ is reachable just before $T''$ by Corollary  \ref{unmarked-reachable-col}.
By Lemma \ref{search-lem}, $p_i$ is a parent of $node_i$ at some time during the search operation before $T''$.
Since $p_i$ is a parent of $node_i$ just before $T''$ (by Lemma \ref{no-wrong-child-change-lem}) and $p_i$ is reachable just before $T''$, $node_i$ is reachable just before $T''$.
Since $node_i.child$ is not changed between line \ref{create-copy} or \ref{mov-create-copy} and $T''$ (by Lemma \ref{no-wrong-child-change-lem}), $nodeChild$ is a child of $node_i$ just before $T''$.
Since $node_i$ is reachable just before $T''$, $nodeChild$ is reachable just before $T''$.
Since $x$ is a descendant of $nodeChild$ just before $T''$, $x$ is reachable just before $T''$, contradicting that $x$ becomes reachable at $T''$ for the first time after $T'$.

Case 2B: $anc$ is created at line \ref{create-internal}, which is called from line \ref{mov-create-node1}.
Then, $I$ is created at line \ref{mov-set-info-sc23}.
Let $\langle gp_d$, $p_d$, $node_d$, -, -, -$\rangle$ be the result returned by the call to search($val_d$) on line \ref{mov-call-search1} that precedes the creation of $I$.
Let $nodeSibling_d$ be the node that is read from $p_d.child$ at line \ref{mov-read-sibling}.
The children of $anc$ initially are $nodeSibling_d$ and the leaf node that is created at line \ref{mov-create-node1}.
Since $x$ was reachable before $T'$, $x$ is not created at line \ref{mov-create-node1} that precedes the creation of $I$.
Since $anc$ is an ancestor of $x$ just before $T''$, $x$ is a descendant of $nodeSibling_d$ just before $T''$.

In this case, $gp_d \in F_I$ and $p_d \in F_I$.
By Observation \ref{flag-parent-obs}, $gp_d \in U_I$.
Since $gp_d.info = I$ just before $T''$ (by Lemma \ref{change-child-after-flag-lem} and  \ref{no-unflag-before-first-change-child-lem}), $gp_d$ is not marked just before $T''$.
Since $gp_d$ was reachable during the search operation (by Lemma \ref{query-reachable-lem}), $gp_d$ is reachable just before $T''$ (by Corollary  \ref{unmarked-reachable-col}).
By Lemma \ref{search-lem}, $gp_d$ is a parent of $p_d$ at some time during the search operation.
So, $gp_d$ is a parent of $p_d$ just before $T''$ (by Lemma \ref{no-wrong-child-change-lem}).
Since $gp_d$ is reachable just before $T''$, $p_d$ is reachable just before $T''$.
By Lemma \ref{no-wrong-child-change-lem}, $nodeSibling_d$ is a child of $p_d$ just before $T''$.
Since $p_d$ is reachable just before $T''$, $nodeSibling_d$ is reachable just before $T''$.
Since $x$ is a descendant of $nodeSibling_d$ just before $T''$, $x$ is reachable just before $T''$, contradicting that $x$ becomes reachable at $T''$ for the first time after $T'$.

Case 2C: $anc$ is created at line \ref{mov-create-node3}.
Then, $I$ is created at line \ref{mov-set-info-sc4}.
Let $\langle gp_d$, $p_d$, $node_d$, -, -, -$\rangle$ be the result returned by the call to search($val_d$) on line \ref{mov-call-search1} that precedes the creation of $I$ and
$\langle$ -, $p_i$, $node_i$, -, -, -$\rangle$ be the result returned by the call to search($val_i$) on line \ref{mov-call-search2} that precedes the creation of $I$.
Let $nodeSibling_d$ be the child of $p_d$ that is read at line \ref{mov-read-sibling} and $pSibling_d$ be the child of $gp_d$ that is read at line \ref{mov-read-pSibling}.
Let $newChild_i$ be the internal node that is created at line \ref{mov-create-node2}.
Initially, the children of $newChild_i$ are $nodeSibling_d$ and $pSibling_d$.
The children of $anc$ initially are $newChild_i$ and the leaf node that is created at line \ref{mov-create-node3}.
Since $x$ was reachable before $T'$, $x$ is not created at line \ref{mov-create-node2} or \ref{mov-create-node3} that precedes the creation of $I$.
Since $anc$ is an ancestor of $x$ just before $T''$ and $x$ is not created at line \ref{mov-create-node2} or \ref{mov-create-node3}, 
$x$ is a descendant of $nodeSibling_d$ or of $pSibling_d$ just before $T''$.

In this case, $node_i = gp_d$.
Since $I$ is created at line \ref{mov-set-info-sc4}, by the pseudo-code, $F_I = \{ p_i, gp_d, p_d\}$. 
Since $p_i.info = I$ just before $T''$ (by Lemma \ref{change-child-after-flag-lem} and  \ref{no-unflag-before-first-change-child-lem}), $p_i$ is not marked just before $T''$.
Since $p_i$ was reachable during the search operation (by Lemma \ref{query-reachable-lem}), $p_i$ is reachable just before $T''$ by Corollary  \ref{unmarked-reachable-col}.
By Lemma \ref{search-lem} and \ref{no-wrong-child-change-lem}, $p_i$ is a parent of $node_i = gp_d$ just before $T''$ and $gp_d$ is a parent of $p_d$ just before $T''$.
Since $p_i$ is reachable just before $T''$, $gp_d$ and $p_d$ are reachable just before $T''$.
By Lemma \ref{no-wrong-child-change-lem}, $nodeSibling_d$ is a child of $p_d$ just before $T''$ and $pSibling_d$ is a child of $gp_d$ just before $T''$.
Since $p_d$ and $gp_d$ are reachable just before $T''$, $nodeSibling_d$ and $pSibling_d$ are reachable just before $T''$.
Since $x$ is a descendant of $nodeSibling_d$ or of $pSibling_d$ just before $T''$, $x$ is reachable just before $T''$, contradicting that $x$ becomes reachable at $T''$ for the first time after $T'$.
\end{proof}

Now, we show that the ABA problem on the $child$ field of internal nodes is avoided.
\begin{lemma} \label{same-child-lem}
Let $x$ and $y$ be internal nodes.
After a child CAS of the form CAS($x.child[i]$, $old$, -) succeeds, no child CAS sets $y.child[j]$ to $old$.
\end{lemma}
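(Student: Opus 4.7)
The plan is to prove this lemma by minimal counterexample, using it as the inductive step that discharges the shared hypothesis of every earlier lemma beginning ``Assume that before some time $T$''. Suppose the statement fails and let $T^*$ be the earliest time at which a child CAS succeeds in writing some value $old$ to some $y.child[j]$ when a prior CAS of the form CAS($x.child[i]$, $old$, -) has already succeeded at some earlier time $T_1 < T^*$. By minimality, the statement of the lemma holds at every time strictly before $T^*$, so Lemmas \ref{other-child-cas-if-no-aba-lem}, \ref{no-wrong-child-change-lem}, \ref{unreachable-oldChild-lem}, \ref{no-reachable-after-unreachable-lem} and Corollary \ref{unmarked-reachable-col} may all be invoked with $T = T^*$.

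Let $I$ be the Flag object to which the CAS at $T^*$ belongs and let $k$ be the index with $I.newChild[k] = old$. Inspecting every Flag-construction line in the pseudocode shows each entry of $newChild$ is either (a) a node freshly allocated by the update that built $I$ (a $newNode$, $newNode_i$, or a new Leaf produced in lines near \ref{ins-call-create-node}, \ref{mov-call-create-node}, \ref{mov-set-info-sc1}, \ref{mov-create-node1}, or \ref{mov-create-node3}), or (b) a sibling read from the trie on line \ref{del-read-sibling} or line \ref{mov-read-sibling}. I will split on these two possibilities.

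In case (a), $old$ did not exist before $I$'s update's current iteration, and no other Flag can hold $old$ in its $newChild$ field, so the only way $x.child[i]$ could have equalled $old$ just before $T_1$ is via a previous successful child CAS of $I$ itself. Because $old$ occupies a unique index $k$ in $I.newChild$, that previous CAS must have been on $I.pNode[k]$, the same location as the CAS at $T^*$. By Lemma \ref{other-child-cas-if-no-aba-lem} applied at $T^*$ to $I.pNode[k]$, only the first child CAS of $I$ on $I.pNode[k]$ can succeed; a successful one occurred before $T_1 < T^*$, so the CAS at $T^*$ cannot succeed, a contradiction.

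In case (b), $old$ was read at some time $T_r$ from $p.child$, where $p$ and the ancestor $g = I.pNode[k]$ ($g$ is $gp$ in the delete case and $gp_d$ in the two-step replace case) both lie in $F_I$. Lemma \ref{other-child-cas-if-no-aba-lem} makes $T^*$ the first child CAS of $I$ on $g$, so Lemma \ref{no-wrong-child-change-lem} applied to $p$ shows $p.child$ is unchanged between $T_r$ and $T^*$, leaving $old$ a child of $p$ just before $T^*$. Lemmas \ref{change-child-after-flag-lem} and \ref{no-unflag-before-first-change-child-lem} give $g.info = I$ just before $T^*$; since $g \in U_I$ this means $g$ is not marked, and Corollary \ref{unmarked-reachable-col} together with the fact (via Lemma \ref{query-reachable-lem}) that $g$ was reachable during its search render $g$ reachable just before $T^*$. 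A second application of Lemma \ref{no-wrong-child-change-lem} to $g$ keeps $p$ a child of $g$ just before $T^*$, so $old$ is reachable just before $T^*$. However, Lemma \ref{unreachable-oldChild-lem} at $T^*$ forces $old$ to be unreachable immediately after $T_1$ and Lemma \ref{no-reachable-after-unreachable-lem} at $T^*$ keeps it unreachable through $T^*$, a contradiction. The main obstacle will be the bookkeeping in case (a): one must verify line by line that every freshly allocated entry of $I.newChild$ sits at a unique index and cannot appear in any other Flag's $newChild$ array, so that a prior CAS installing $old$ really must be a CAS of $I$ on $I.pNode[k]$; once this is checked, the rest reduces to already-proved reachability and stability facts.
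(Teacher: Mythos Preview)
Your proof is correct and follows the same minimal-counterexample strategy as the paper: assume a first violation at $T^*$, split on whether $I.newChild[k]$ is a freshly allocated node or a sibling read from the trie, and in the sibling case derive a reachability contradiction via Lemmas \ref{no-wrong-child-change-lem}, \ref{unreachable-oldChild-lem}, and \ref{no-reachable-after-unreachable-lem}. Case~(b) is essentially identical to the paper's argument.

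The one noteworthy difference is your handling of case~(a). The paper first proves that $old$ is reachable just before the earlier CAS at $T_1$ (by showing the node $y$ on which that CAS acts is unmarked, hence reachable, so its child $old$ is too); it then observes that a freshly created $I.newChild[k]$ cannot have become reachable before the first child CAS of $I$, which is at $T^* > T_1$, an immediate contradiction. You instead argue that the only way $x.child[i]$ could equal the fresh $old$ before $T_1$ is via an earlier successful child CAS of $I$ on $I.pNode[k]$, whence Lemma~\ref{other-child-cas-if-no-aba-lem} blocks the CAS at $T^*$. Both routes work. The paper's is a little more economical, because the reachability of $old$ at $T_1$ comes ``for free'' from analyzing the Flag $I'$ that performed the CAS at $T_1$, so the line-by-line verification you flag as the main obstacle (that each fresh $newChild$ entry is unique to $I$ and never installed as any newly created node's initial child) is simply bypassed.
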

\begin{proof}
To derive a contradiction, assume the lemma is violated for the first time at $T$.
Let $I$ be a Flag object such that the first child CAS of $I$ of the form CAS($x.child[i]$, -, $old$) succeeds at $T$ and 
let $I'$ be a Flag object such that the first child CAS of $I'$ of the form CAS($y.child[j]$, $old$, -) succeeds at some earlier time $T'$.

Since the child CAS of $I'$ changes $y.child[j]$ from $old$ to other value at $T'$, $y.child[j] = old$ just before $T'$.
By Observation \ref{flag-parent-obs}, $y \in F_I$.
Since $y.info = I'$ just before $T'$ (by Lemma \ref{change-child-after-flag-lem} and  \ref{no-unflag-before-first-change-child-lem}), $y$ is not marked just before $T'$.
By Lemma \ref{query-reachable-lem}, $y$ was reachable at some earlier time.
So, $y$ is reachable just before $T'$ (by Corollary \ref{unmarked-reachable-col}).
Since $old$ is a child $y$ just before $T'$, $old$ is reachable just before $T'$.

If $old$ is created at line \ref{ins-call-create-node}, \ref{mov-call-create-node}, \ref{mov-set-info-sc1}, \ref{mov-create-node1} or \ref{mov-create-node3} that precedes the creation of $I$, 
$old$ could not become reachable before the child CAS of $I$ at $T$.
So, $old$ and $I$ are not created in the same loop iteration of an update operation.
Thus, $I$ is created at line \ref{del-set-info}, \ref{mov-set-info1} or \ref{mov-set-info2}.
Let  $\langle gp_d$, $p_d$, $node_d$, -, -, -$\rangle$ be the result returned by the call to the search operation at line \ref{del-call-search} or \ref{mov-call-search1} that precedes the creation of $I$.
Then, $x = gp_d$ and $old$ is read as a child of $p_d$ at line \ref{del-read-sibling} or \ref{mov-read-sibling}.
Since $gp_d \in F_I$, $gp_d.info = I$ just before $T$ (by Lemma \ref{change-child-after-flag-lem} and \ref{no-unflag-before-first-change-child-lem}).
So, $gp_d$ is not marked just before $T$.
By Lemma \ref{search-lem}, $gp_d$ was a parent of $p_d$ at some time during the search operation.
By Lemma \ref{no-wrong-child-change-lem}, $gp_d$ was a parent of $p_d$ just before $T$.
Since $gp_d$ is reachable just before $T$, $p_d$ is reachable just before $T$.
By Lemma \ref{no-wrong-child-change-lem}, $old$ is a child of $p_d$ just before $T$.
Since $p_d$ is reachable just before $T$, $old$ is reachable just before $T$.
By Lemma \ref{unreachable-oldChild-lem}, $old$ becomes unreachable just after $T'$ (before $T$), contradicting Lemma \ref{no-reachable-after-unreachable-lem}.
\end{proof}

Since there is no ABA problem on the $child$ field of nodes, Corollary \ref{unmarked-reachable-col} and Lemma \ref{other-child-cas-if-no-aba-lem}, \ref{no-wrong-child-change-lem}, \ref{first-child-cas-if-no-aba-lem}, 
\ref{unreachable-mark-lem}, \ref{unreachable-oldChild-lem}, \ref{query-reachable-lem} and \ref{no-reachable-after-unreachable-lem}, 
which we proved earlier with the assumption that there is no ABA on the $child$ fields of nodes is true without the assumption.

By Lemma \ref{change-child-after-flag-lem}, \ref{no-unflag-before-first-change-child-lem}, \ref{first-child-cas-if-no-aba-lem} and \ref{same-child-lem}, we have the following corollary.
\begin{corollary}\label{flag-node-before-child-cas-col}
Let $I$ be a Flag object and $x \in F_I$.
Then, $x.info = I$ when a successful child CAS of $I$ occurs.
\end{corollary}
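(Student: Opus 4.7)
The plan is to assemble this directly from the four cited lemmas. First I would observe that Lemma \ref{same-child-lem} has now established, unconditionally, the hypothesis (``no child CAS sets $y.child[k]$ to an old value after a child CAS of the form CAS($y.child[k]$, $old$, -) succeeds'') that was assumed in the earlier lemmas about child CAS behaviour. In particular, Lemma \ref{first-child-cas-if-no-aba-lem} applied together with Lemma \ref{same-child-lem} tells us that a successful child CAS of $I$ on any $I.pNode[j]$ must in fact be the \emph{first} child CAS of $I$ on $I.pNode[j]$, so it suffices to verify that $x.info = I$ at the moment each first child CAS of $I$ on $I.pNode[j]$ fires.

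Next I would fix an arbitrary $x \in F_I$ and a time $T'$ at which the first child CAS of $I$ on some $I.pNode[j]$ executes successfully. Lemma \ref{change-child-after-flag-lem} implies that before $T'$, a flag CAS of $I$ has already set $x.info$ to $I$; let $T_f < T'$ be the time of that flag CAS. By Corollary \ref{unflag-after-flag-col}, any later change to $x.info$ must be performed by an unflag CAS of $I$ (line \ref{help-unflag}) or a backtrack CAS of $I$ (line \ref{help-unflag-bt}).

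The key step is then to rule out such a change during the interval $(T_f, T')$. This is exactly what Lemma \ref{no-unflag-before-first-change-child-lem} provides: no unflag or backtrack CAS of $I$ can occur before the first child CAS of $I$ on $I.pNode[j]$, i.e.\ before $T'$. Hence $x.info$ remains equal to $I$ throughout $[T_f, T']$, and in particular $x.info = I$ at the moment of the successful child CAS. Since $x$ was an arbitrary node of $F_I$ and the child CAS was an arbitrary successful one, the corollary follows.

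I do not expect any real obstacle here; the statement is essentially a bookkeeping consequence of the earlier work. The only subtlety worth flagging in the write-up is that ``a successful child CAS of $I$'' must be interpreted via Lemmas \ref{first-child-cas-if-no-aba-lem} and \ref{same-child-lem} as ``the first child CAS of $I$ on the corresponding $I.pNode[j]$,'' so that the window $[T_f, T']$ to which Lemma \ref{no-unflag-before-first-change-child-lem} applies is the correct one.
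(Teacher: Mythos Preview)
Your proposal is correct and follows essentially the same route as the paper, which simply cites Lemmas \ref{change-child-after-flag-lem}, \ref{no-unflag-before-first-change-child-lem}, \ref{first-child-cas-if-no-aba-lem} and \ref{same-child-lem} without further elaboration. Your write-up is in fact more explicit than the paper's one-line justification: invoking Corollary \ref{unflag-after-flag-col} to explain why only an unflag or backtrack CAS of $I$ could alter $x.info$ after $T_f$ is a helpful bridge the paper leaves implicit, and your observation that Lemma \ref{same-child-lem} discharges the no-ABA hypothesis of the earlier conditional lemmas is exactly the intended reading.
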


\subsection{Correctness of the Search Operation}
In this section, we first show the post-conditions of the search operation are satisfied.
Then, we show how to linearize search operations that terminate.

Let $I$ be a Flag object that is created at line \ref{mov-set-info1} or \ref{mov-set-info2}.
Then, the number of elements of $I.pNode$, $I.oldChild$ and $I.newChild$ is two and, by Observation \ref{flag-leaf-move-obs}, $I.rmvLeaf$ is set to a non-empty leaf node.
By Observation \ref{flag-leaf-move-obs}, $I.rmvLeaf$ is flagged at line \ref{help-flag-leaf} before the first child CAS of $I$ at line \ref{help-change-child}.
Any time after the first child CAS of $I$, we say $I.rmvLeaf$ is {\it logically removed}.     
The following lemma shows that after a leaf node is flagged by a Flag object, the leaf node is not flagged by another Flag object. 

\begin{lemma} \label{flag-leaf-lem}
Let $I$ be a Flag object that is created at line \ref{mov-set-info1} or \ref{mov-set-info2}.
If $I.rmvLeaf.info$ is set to $I$ by line \ref{help-flag-leaf} at time $T$, $I.rmvLeaf.info$ is not set to another Flag object $I' \ne I$ before $T$.
\end{lemma}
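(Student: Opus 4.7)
The plan is to derive a contradiction from the assumption that some Flag object $I' \ne I$ wrote $I'$ to $I.rmvLeaf.info = node_d.info$ at line \ref{help-flag-leaf} at some earlier time $T' < T$. By Observation \ref{flag-leaf-move-obs}, any such $I'$ must also have been created at line \ref{mov-set-info1} or \ref{mov-set-info2}, by a replace operation whose search for $v_d = node_d.label$ returned a grandparent $gp_d'$ and a parent $p_d'$ of $node_d$; further, $I'.flagDone$ was set to true before $T'$, so by Lemma \ref{set-done-after-flag-lem} the first successful flag CAS of $I'$ on $p_d'$ occurred at some $T_{fp'} < T'$. Analogously, $I$'s first flag CAS on its own $p_d$ succeeded at some $T_{fp} < T$.

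The first step I would carry out is to establish that $p_d = p_d'$. By Lemma \ref{search-lem} each search visits $node_d$ as a child of its returned parent, and by the now-standard stability argument (the parent's $info$ field remains the Unflag value read during the search until the subsequent flag CAS succeeds, so no concurrent update can alter its $child$ pointer in that window), $node_d$ remains a child of both $p_d$ and $p_d'$ throughout a long interval. Lemma \ref{query-reachable-lem} together with Corollary \ref{unmarked-reachable-col} then shows that each of $p_d$ and $p_d'$ is reachable at every time from its own search-visit of $node_d$ up to its first successful flag CAS by the corresponding update. Choosing $T_\star = \max(T_v, T_{v'})$ (the later of the two visit times of $node_d$), both $p_d$ and $p_d'$ are reachable parents of $node_d$ at $T_\star$, and Lemma \ref{reachable-parent-lem} then forces $p_d = p_d'$.

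The main step is to deduce $p_d.info = I$ and $p_d.info = I'$ simultaneously at time $T$, which gives $I = I'$ and contradicts $I' \ne I$. For the first equality: at time $T$ the helper of $I$ is at line \ref{help-flag-leaf} and proceeds immediately to the child CAS loop at line \ref{help-change-child}, so some child CAS of $I$ does occur; Lemma \ref{no-unflag-before-first-change-child-lem} then forbids any unflag or backtrack CAS of $I$ before that first child CAS, so $p_d.info$ has stayed $I$ throughout $[T_{fp}, T]$. For the second equality: $I'$ being of the form at line \ref{mov-set-info1} or \ref{mov-set-info2} gives $I'.unflag = [gp_d', p_i']$, and the condition at line \ref{mov-check-gc-condition} for the creator of $I'$ together with the distinctness of search-path nodes yields $p_d' \notin U_{I'}$. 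Arguing that $I'$ eventually performs a child CAS (the write at $T'$ is immediately followed in the same help invocation by the child CAS loop), Lemmas \ref{no-unflag-before-first-change-child-lem} and \ref{no-backtrack-after-first-change-child-lem} combined with $p_d' \notin U_{I'}$ rule out every possible subsequent change to $p_d'.info$ away from $I'$, so $p_d'.info = I'$ at every time after $T_{fp'}$, in particular at $T$.

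The main obstacle I anticipate is making the identification $p_d = p_d'$ fully rigorous: one has to juggle several timestamps (the search reads of the $info$ fields, the visit times of $node_d$, the first-flag times for each operation, and $T$ itself) and apply Corollary \ref{unmarked-reachable-col} carefully to propagate reachability over the correct intervals so that the two reachability windows overlap at $T_\star$. A secondary subtlety is that Lemmas \ref{no-unflag-before-first-change-child-lem} and \ref{no-backtrack-after-first-change-child-lem} are conditional on a child CAS of $I'$ actually being performed; this hypothesis is discharged by observing that the helper which reached line \ref{help-flag-leaf} on behalf of $I'$ is already inside the body of the if-branch and proceeds directly to the child CAS loop, and that any concurrent helper running help($I'$) would do the same.
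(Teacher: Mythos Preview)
Your approach is sound and rests on the same two pillars as the paper's proof: first identify $p_d = p_d'$ via the uniqueness of reachable parents (Lemma~\ref{reachable-parent-lem}), then derive a contradiction on the value of $p_d.info$. The packaging, however, differs. The paper splits the time axis at the moment $T'$ when the search preceding $I$ reads $node_d.info$: in the ``before $T'$'' part it exploits that the replace did not return false at line~\ref{mov-return-false1} (so $rmvd_d$ is false, hence no child CAS of $I'$ on $I'.pNode[0]$ had yet occurred, keeping the Claim applicable to $I'$ at $T'$); in the ``after $T'$'' part it invokes Lemma~\ref{update-x-info-lem} to get $p_d.info \in \{pInfo_d, I\}$ throughout and contradicts $p_d.info = I'$ at the instant of the $I'$ write. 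You instead push the contradiction to the single instant $T$ and argue that $p_d$ is simultaneously flagged by both $I$ and $I'$, using $p_d = p_d' \in (F_I \setminus U_I) \cap (F_{I'} \setminus U_{I'})$. This avoids the case split and the $rmvd_d$ ingredient, a genuine economisation.

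One point to tighten: your discharge of the hypothesis ``some child CAS of $I'$ occurs'' via ``the helper proceeds directly to the child CAS loop'' is not rigorous, because that helper may crash immediately after line~\ref{help-flag-leaf}. The cleaner fix is to argue directly that $p_d'.info$ never leaves $I'$ after its successful flag CAS: since $p_d' \notin U_{I'}$, only a backtrack CAS of $I'$ could change it, and any helper reaching the backtrack branch must have read $I'.flagDone$ as false at line~\ref{help-check-done}; but $I'.flagDone$ was already true before the leaf-flag write at your $T'$ (Observation~\ref{flag-leaf-move-obs}), and the minimality argument from the proof of Lemma~\ref{no-unflag-before-first-change-child-lem} then rules out any backtrack CAS of $I'$ altogether. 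The same remark applies verbatim to establishing $p_d.info = I$ at $T$. With this adjustment your argument goes through without appealing to any future step of the execution.
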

\begin{proof}
Let $\langle gp_d$, $p_d$, $node_d$, -, -, $rmvd_d\rangle$ be the result returned by the call to search($val_d$) on line \ref{mov-call-search1} that precedes the creation of $I$.
Then, $I.rmvLeaf = node_d$.
First, we show the following claim.

\noindent {\bf Claim:} 
Let $I''$ be a Flag object that is created at line \ref{mov-set-info1} or \ref{mov-set-info2} and
$\langle gp_d''$, $p_d''$, -, -, -, -$\rangle$ be the result returned by the call to search($val_d''$) on line \ref{mov-call-search1} that precedes the creation of $I''$.

If all nodes in $F_I''$ get flagged successfully, $p_d''$ is reachable at all times between the time when $p_d''$ was reachable during search($val_d''$) and the first child CAS of $I''$.

\noindent {\it Proof of Claim.} 
Since $I''$ is created at line \ref{mov-set-info1} or \ref{mov-set-info2}, $gp_d'' \in F_{I''}$.
By Lemma \ref{query-reachable-lem}, $gp_d''$ is reachable at some time during search($val_d''$).
Since $gp_d''.info = I''$ just before the first child CAS of $I''$ (by Corollary \ref{flag-node-before-child-cas-col}), $gp_d''$ is not marked just before the first child CAS of $I''$.
So, $gp_d''$ is reachable just before the first child CAS of $I''$ (by Corollary \ref{unmarked-reachable-col}).
By Lemma \ref{search-lem}, $gp_d''$ was a parent of $p_d''$ at some time during the search operation.
By Lemma \ref{no-wrong-child-change-lem}, $gp_d''$ is a parent of $p_d''$ just before the first child CAS of $I''$.
Since $gp_d''$ is reachable just before the fist child CAS of $I''$, $p_d''$ is reachable just before the first child CAS of $I''$. 
By Lemma \ref{no-reachable-after-unreachable-lem}, $p_d''$ is reachable at all times between the time when $p_d''$ was reachable during search($val_d''$) and the first child CAS of $I''$.
This completes the proof of the claim.

Let $T'$ be the time when search($val_d$) reads $node_d.info$ on line \ref{search-set-moved}.
First, we show $node_d.info$ is not set to a Flag object at any time before $T'$.
Then, we show $node_d.info$ is not set to a Flag object $I' \ne I$ at any time between $T'$ and $T$.

If $node_d.info$ is an Unflag object at $T'$, by Observation \ref{flag-leaf-move-obs}, $node_d.info$ is not set to a Flag object at any time before $T'$. 
We show that $node_d.info$ is an Unflag object at time $T'$.
We prove it by contradiction.
Assume $node_d.info$ is some Flag $I'$ at $T'$.
Let $\langle$ -, $p_d'$, $node_d$, -, -, -$\rangle$ be the result returned by the call to search($val_d'$) on line \ref{mov-call-search1} that precedes the creation of $I'$.
Since the replace operation that creates $I$ does not return false at the execution of line \ref{mov-return-false1} that precedes the creation of $I$, $rmvd_d$ is false.
Since $rmvd_d = false$ and $node_d.info = I'$ at line \ref{search-set-moved}, 
$I'.pNode[0]$ was a parent of $I'.oldChild[0]$ at line \ref{logRemoved-check-oldChild} during search($val_d$).
By Lemma \ref{first-child-cas-if-no-aba-lem} and \ref{same-child-lem}, no child CAS of $I'$ on $I'.pNode[0]$ occurs before $T'$. 
By Observation \ref{flag-leaf-move-obs}, all nodes in $F_{I'}$ are flagged by $I'$ before $T'$ by Lemma \ref{no-unflag-before-first-change-child-lem}.

By the claim, $p_d'$ is reachable just before $T'$.
Since $p_d.info$ is set to $I$ after search($val_d$) and before the first child CAS of $I$ by Corollary{flag-node-before-child-cas-col}.
So, $p_d.info$ is read for the last time during search($val_d$) before $T'$ and $p_d.info$ is changed from that value to $I$ after $T'$.
By Lemma \ref{same-info-lem} and \ref{info-lem}, $p_d.info$ is an Unflag object at $T'$.
Since $p_d'.info = I'$ at $T'$ and $p_d.info$ is an Unflag object at $T'$, $p_d \ne p_d'$.
Then, by Lemma \ref{search-lem} and \ref{no-wrong-child-change-lem}, $p_d'$ was a parent of $node_d$ and $p_d$ was a parent of $node_d$ just before $T'$ and, by the claim, 
$p_d'$ and $p_d$ are both reachable just before $T'$, contradicting Lemma \ref{reachable-parent-lem} (since $p_d \ne p_d'$).
Thus, $node_d.info$ is an Unflag object at $T'$ and, by Observation \ref{flag-leaf-move-obs}, $node_d.info$ is not set to a Flag object at any time before~$T'$. 

Now, we show that $node_d.info$ is not set to some Flag object $I' \ne I$ between $T'$ and $T$.
Let $\langle$ -, $p_d'$, $node_d$, -, -, -$\rangle$ be the result returned by the call to search($val_d'$) on line \ref{mov-call-search1} that precedes the creation of $I'$.
By Lemma \ref{search-lem} and \ref{no-wrong-child-change-lem}, $p_d$ was a parent of $node_d$ at all times between $T'$ and $T$.
By the claim, $p_d$ is reachable at all times between $T'$ and $T$.
If $node_d.info$ is set to $I'$ between $T'$ and $T$, by Lemma \ref{search-lem} and \ref{no-wrong-child-change-lem}, $p_d'$ was a parent of $node_d$ just before $node_d.info$ is set to $I'$ and, 
by the claim, $p_d'$ is reachable just before $node_d.info$ is set to $I'$.
So, by Lemma \ref{reachable-parent-lem}, $p_d = p_d'$.
In this case, $p_d = p_d' \in F_{I'}$.
Thus, by Observation \ref{flag-leaf-move-obs} and Lemma \ref{no-unflag-before-first-change-child-lem}, 
$p_d.info = I'$ just before $node_d.info$ is set to $I'$ between $T'$ and $T$.
By Lemma \ref{update-x-info-lem}, $p_d.info \ne I'$ at all times between $T'$ and $T$.
Thus, $node_d.info$ is not set to some Flag object $I'$ between $T'$ and $T$.
\end{proof}

Next, we have the following lemma that shows when a leaf node is logically removed, other operations see that it is removed using the $info$ field of the leaf node.

\begin{lemma} \label{logically-remove-lem}
Let $I$ be a Flag object that is created at line \ref{mov-set-info1} or \ref{mov-set-info2}.
$I.rmvLeaf$ is logically removed if and only if $I.rmvLeaf.info = I$ and $I.pNode[0]$ is not a parent of $I.oldChild[0]$. 
\end{lemma}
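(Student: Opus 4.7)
The plan is to prove the biconditional at an arbitrary time $T$, with the two directions relying on two different invariants: (forward) once the first child CAS of $I$ happens it is stable under Lemma \ref{same-child-lem}, and (backward) before it happens, the flag $I.rmvLeaf.info = I$ locks $I.pNode[0].child$ against interference by any other Flag object. Let $k_0$ denote the $(|I.pNode[0].label|+1)$th bit of $I.oldChild[0].label$; by Lemma \ref{same-index-lem} this equals the corresponding bit of $I.newChild[0].label$ and is the index written by child CASs of $I$ on $I.pNode[0]$. I will first record a preliminary observation that the chronologically first child CAS of $I$ (across all helpers) is necessarily a CAS on $I.pNode[0]$: each invocation of \textbf{help}($I$) executes the loop at line \ref{help-start-change-child} from $i=0$ upward, so any CAS at $i=1$ is preceded within the same invocation by a CAS at $i=0$; combined with Lemma \ref{first-child-cas-if-no-aba-lem} this first CAS succeeds and flips $I.pNode[0].child[k_0]$ from $I.oldChild[0]$ to $I.newChild[0]$.

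For the forward direction, suppose $I.rmvLeaf$ is logically removed at $T$; then the first child CAS of $I$ occurred at some $T_c \le T$ and by the preliminary observation it succeeds on $I.pNode[0]$. Observation \ref{flag-leaf-move-obs} tells us $I.rmvLeaf.info$ was set to $I$ strictly before $T_c$ and can never become unflagged afterwards, giving $I.rmvLeaf.info = I$ at $T$. Just after $T_c$, $I.pNode[0].child[k_0] = I.newChild[0] \ne I.oldChild[0]$ by Lemma \ref{old-new-different-lem}, and $I.pNode[0].child[1-k_0]$ cannot equal $I.oldChild[0]$ because by Invariant \ref{prefix-inv} its label carries bit $1-k_0$ at position $|I.pNode[0].label|+1$, whereas $I.oldChild[0].label$ carries bit $k_0$. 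Lemma \ref{same-child-lem} then prevents either slot from being reset to $I.oldChild[0]$ at any later time, so $I.pNode[0]$ is not a parent of $I.oldChild[0]$ at $T$.

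For the backward direction, suppose $I.rmvLeaf.info = I$ and $I.pNode[0]$ is not a parent of $I.oldChild[0]$ at $T$. Let $T'$ be the time when $I.rmvLeaf.info$ was first set to $I$; by Observation \ref{flag-leaf-move-obs} and Lemma \ref{flag-leaf-lem}, $T' \le T$, all nodes of $F_I$ (including $I.pNode[0]$) have been flagged by $I$ before $T'$, and the first child CAS of $I$ occurs strictly after $T'$. By Lemma \ref{no-unflag-before-first-change-child-lem} no unflag or backtrack CAS of $I$ intervenes, so $I.pNode[0].info = I$ continuously from the moment $I.pNode[0]$ was flagged by $I$ until the first child CAS of $I$; hence by Corollary \ref{flag-node-before-child-cas-col}, no child CAS of any other Flag object $I''$ can touch $I.pNode[0].child$ during this interval. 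Combined with Lemma \ref{no-wrong-child-change-lem} and the fact from Observation \ref{info-p-pos-old-val-obs} that $I.oldChild[0]$ was a child of $I.pNode[0]$ during the search that preceded $I$'s creation, we conclude that $I.pNode[0]$ remains a parent of $I.oldChild[0]$ until the first child CAS of $I$ on $I.pNode[0]$ occurs. Since $I.pNode[0]$ is not a parent of $I.oldChild[0]$ at $T$, that first child CAS of $I$ on $I.pNode[0]$ must have occurred by $T$, and by the preliminary observation this is the first child CAS of $I$ overall; so $I.rmvLeaf$ is logically removed at $T$.

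The main obstacle is the backward direction, and specifically the bookkeeping around two superficially similar events: $I.pNode[0]$ being flagged by $I$, and $I.rmvLeaf.info$ being set to $I$. I need to chain Observation \ref{flag-leaf-move-obs}, Lemma \ref{no-unflag-before-first-change-child-lem} and Corollary \ref{flag-node-before-child-cas-col} to show that the window during which $I.pNode[0].child$ could be disturbed by a different update is actually empty once $I.rmvLeaf.info = I$ is visible, so that the observed change of $I.pNode[0].child$ can only be attributed to a child CAS of $I$ itself.
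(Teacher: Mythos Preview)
Your approach matches the paper's two-direction structure, and your backward direction is in fact more carefully unpacked than the paper's (which compresses everything to one truncated appeal to Lemma~\ref{no-wrong-child-change-lem}). Two points are worth flagging.

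In the forward direction you have a genuine gap. From ``$I.rmvLeaf.info$ was set to $I$ before $T_c$'' and ``can never become unflagged'' you conclude $I.rmvLeaf.info = I$ at $T$, but Observation~\ref{flag-leaf-move-obs} only guarantees that the leaf's $info$ field remains \emph{some} Flag object, not that it remains equal to $I$. A different replace operation creating a Flag $I'$ with $I'.rmvLeaf = I.rmvLeaf$ could in principle overwrite it at line~\ref{help-flag-leaf}. The paper closes this with Lemma~\ref{flag-leaf-lem}: if such an $I'$ later wrote $I'.rmvLeaf.info \leftarrow I'$, that lemma (applied to $I'$) says no Flag object was written there beforehand, contradicting the earlier write of $I$. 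You cite Lemma~\ref{flag-leaf-lem} in your backward direction, so you clearly know it; it is needed here as well.

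In the backward direction, note that Lemmas~\ref{no-unflag-before-first-change-child-lem} and~\ref{no-wrong-child-change-lem} are both stated under the hypothesis ``there is a child CAS of $I$,'' which is precisely what you are trying to establish, so invoking them as written is mildly circular (the paper's proof shares this defect). The cleanest fix is to phrase the whole thing as a contradiction: assume no child CAS of $I$ has occurred by $T$; since $I.rmvLeaf.info = I$ forces $I.flagDone$ to be true and all of $F_I$ to be successfully flagged (Observation~\ref{flag-leaf-move-obs} and Lemma~\ref{set-done-after-flag-lem}), the proof of Lemma~\ref{no-unflag-before-first-change-child-lem} goes through verbatim with $T$ in place of the first-child-CAS time, yielding $I.pNode[0].info \in \{pInfo_i, I\}$ throughout $[T_{\text{search}}, T]$ and hence no change to $I.pNode[0].child$ by any Flag object, contradicting ``not a parent at $T$.''
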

\begin{proof}
First, we show that $I.rmvLeaf.info = I$ and $I.pNode[0]$ is not a parent of $I.oldChild[0]$ if $I.rmvLeaf$ is logically removed.
Let $T$ be the time when the first child CAS of $I$ is executed.
By definition, $I.rmvLeaf$ is logically removed at all times after $T$.
By Observation \ref{flag-leaf-move-obs}, $I.rmvLeaf.info$ is set to $I$ before $T$.
By Lemma \ref{flag-leaf-lem}, $I.rmvLeaf.info = I$ at all times after that.
By Lemma \ref{first-child-cas-if-no-aba-lem} and \ref{same-child-lem}, the first child CAS of $I$ successfully changes $I.pNode[0].child[i]$ from $I.oldChild[0]$ to $I.newChild[0]$ at $T$.
By Lemma \ref{old-new-different-lem},  $I.oldChild[0] \ne I.newChild[0]$.
By Lemma \ref{same-child-lem}, $I.pNode[0]$ is not a parent of $I.oldChild[0]$ after $T$.
Thus, at all times after time $T$, $I.rmvLeaf.info = I$ and $I.pNode[0]$ is not a parent of $I.oldChild[0]$.

Now, we show that $I.rmvLeaf$ is logically removed if $I.rmvLeaf.info = I$ and $I.pNode[0]$ is not a parent of $I.oldChild[0]$. 
Suppose $I.rmvLeaf.info = I$ and $I.pNode[0]$ is not a parent of $I.oldChild[0]$ at some time $T$.
Then $I$ was created before $T$, and during the search that precedes the creation of $I$, $I.oldChild[0]$ was a child of $I.pNode[0]$
Thus, some child CAS changed $I.pNode[0].child[i]$ from $I.oldChild[0]$ to a different value before $T$.
We argue that this must have been a child CAS of $I$, which means that $I.rmvLeaf$ is logically removed at time $T$.
Let $\langle$ -, $p_i$, $node_i$, -, -, -$\rangle$ be the result returned by the call to the search operation on line \ref{mov-call-search2} that precedes the creation of $I$.
Then, $I.pNode[0] = p_i$, $I.oldChild[0] = node_i$ and $p_i \in F_I$ by Observation \ref{flag-parent-obs}.
By Lemma \ref{no-wrong-child-change-lem}, no child CAS of another Flag object $I' \ne I$ changes $p_i.child$ between 
the time when $p_i.info$ is read for the last time during the search operation that precedes the creation of $I$.
\end{proof}

\begin{definition}
A node is {\it logically in the trie} at time $T$ if the node is reachable at $T$ and the node is not logically removed at $T$. 
\end{definition}
Here, we show when the search operation terminates, all post-conditions of the search operation are satisfied.

\begin{lemma} \label{post-con-search-lem}
Each call to the search operation that terminates satisfies its post-conditions.
\end{lemma}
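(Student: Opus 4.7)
The plan is to observe that post-conditions (1)--(6) are already established by Lemma \ref{search-lem}, leaving only post-conditions (7) and (8), which concern the returned flag $rmvd$. First I would dispose of the case that $node$ is an internal node: then $rmvd$ is never reassigned inside search (line \ref{search-set-moved} executes only for leaves), so post-condition (7) is vacuous, and post-condition (8) follows from Lemma \ref{query-reachable-lem} together with the fact that logical removal is only defined for leaves (those flagged via line \ref{help-flag-leaf}), so the internal $node$, reachable at some instant during the search, is logically in the trie at that instant.

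The substantive case is when $node$ is a leaf. Let $T'$ be the instant when line \ref{search-set-moved} reads $node.info$. I would split on what is read. If it is an Unflag object, then $rmvd$ is false; Lemma \ref{query-reachable-lem} yields a time $T^{*} < T'$ at which $node$ is reachable, and the observation that a leaf's info field is only ever written at line \ref{help-flag-leaf} combined with Observation \ref{flag-leaf-move-obs} and Lemma \ref{flag-leaf-lem} shows that such a write is permanent; thus an Unflag value at $T'$ certifies that $node$ has never been flagged, hence never logically removed, yielding post-condition (8). If the read returns some Flag $I$, then by the same uniqueness of writes we have $I.rmvLeaf = node$ and $I$ was created at line \ref{mov-set-info1} or \ref{mov-set-info2}. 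In the sub-case logicallyRemoved($I$) $=$ true, Lemma \ref{logically-remove-lem} gives directly that $node$ is logically removed at $T'$, which lies during the search, confirming (7). In the sub-case logicallyRemoved($I$) $=$ false, the array $I.pNode[0].child$ still contains $I.oldChild[0]$ at $T'$; by the no-ABA property (Lemma \ref{same-child-lem}) the first child CAS of $I$ has not yet executed, so Lemma \ref{logically-remove-lem} gives that $node$ is not logically removed at $T'$, and combining with reachability at some $T^{*} \le T'$ from Lemma \ref{query-reachable-lem} yields (8).

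The main obstacle is bridging the two distinct times involved in post-condition (8): reachability comes from Lemma \ref{query-reachable-lem} at some time $T^{*}$, while the certificate of non-removal is obtained only at the later moment $T'$ when $rmvd$ is computed. These are reconciled by the monotonicity of logical removal: once a leaf's info becomes a Flag it stays a Flag (Lemma \ref{flag-leaf-lem}), and once $I.pNode[0]$ loses $I.oldChild[0]$ as a child it never regains it (Lemma \ref{same-child-lem}); so non-removal at $T'$ propagates backward to every earlier time, allowing reachability and non-removal to be witnessed together at $T^{*}$.
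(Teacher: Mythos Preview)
Your proposal is correct and follows essentially the same approach as the paper: invoke Lemma \ref{search-lem} for conditions (1)--(6), then for (7) and (8) case-split on whether $node$ is a leaf and on what is read from $node.info$, using Lemma \ref{query-reachable-lem} for reachability, Observation \ref{flag-leaf-move-obs} and Lemma \ref{flag-leaf-lem} for permanence of leaf flags, and Lemma \ref{logically-remove-lem} to convert the outcome of \textit{logicallyRemoved} into the logical-removal status. One small imprecision: you repeatedly place the child-array check at time $T'$ (the read of $node.info$ at line \ref{search-set-moved}), whereas that check actually occurs at the distinct, slightly later step on line \ref{logRemoved-check-oldChild}; the paper is careful to conclude ``logically removed at line \ref{logRemoved-check-oldChild}'' and then uses permanence to transport the status to the earlier reachability time $T$, which is exactly the monotonicity you spell out in your final paragraph, so the argument goes through unchanged once the two instants are distinguished.
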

\begin{proof}
Assume search($val$) returns $\langle gp$, $p$, $node$, $gpInfo$, $pInfo$, $rmvd \rangle$.
Lemma \ref{search-lem} shows the first four post-conditions of the search operation are satisfied. 
We prove the last two here.

If $rmvd$ is true, we prove that $node$ is logically removed at line \ref{logRemoved-check-oldChild}.
Let $I$ be the value of $node.info$ at line \ref{search-set-moved}.
If $rmvd$ is set to true at line \ref{search-set-moved}, $node$ is a leaf node whose $info$ field is a Flag object at line \ref{search-set-moved} and 
$node.info.pNode[0]$ is not a parent of $node.info.oldChild[0]$ at line \ref{logRemoved-check-oldChild}.
By Lemma \ref{flag-leaf-lem}, $node.info = I$ at all times after line \ref{search-set-moved}.
Thus, $node$ is logically removed at line \ref{logRemoved-check-oldChild} by Lemma \ref{logically-remove-lem}.

If $rmvd$ is false, we prove that $node$ is logically in the trie at some time during the search operation.
By Lemma \ref{query-reachable-lem}, $node$ is reachable at some time $T$ between the beginning of the search and the time the search visits $node$.
If $node$ is an internal node, then it is logically in the trie at $T$.
Otherwise, $node$ is a leaf node.
Since $rmvd$ is false, line \ref{logRemoved-return-false} or \ref{logRemoved-check-oldChild} returns false.
If line \ref{logRemoved-return-false} returns false, then when $node.info$ is read on line \ref{search-set-moved}, it is not flagged.
This occurs after $node$ is visited at line \ref{search-set-node} and therefore after $T$.
By Observation \ref{flag-leaf-move-obs}, $node$ is not flagged at $T$, so $node$ is logically in the trie at $T$ by Lemma \ref{logically-remove-lem}.
If line \ref{logRemoved-check-oldChild} returns false, then $node = I.rmvLeaf$ (since the test at line \ref{logRemoved-return-false} failed)
and $I.pNode[0]$ is a parent of $I.oldChild[0]$ at line \ref{logRemoved-check-oldChild}.
By Lemma \ref{logically-remove-lem}, $node$ is not logically removed at line \ref{logRemoved-check-oldChild}.
Therefore, $node$ is not logically removed at $T$.
\end{proof}


\begin{lemma} \label{search-val-not-in-trie-lem}
Assume search($val$) returns $\langle$ -, -, $node$, -, -, $rmvd \rangle$.
If $node$ is not a leaf node whose $label$ is $val$, or $rmvd$ is true, there is a time during the search operation when no leaf node whose $label$ is $val$ is logically in the trie.
\end{lemma}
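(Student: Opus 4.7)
The plan is to do a three-way case split on the outcome returned by the search:
(i) $node$ is an internal node (so by post-condition 6 of Lemma \ref{post-con-search-lem}, $node.label$ is not a prefix of $val$);
(ii) $node$ is a leaf with $node.label \ne val$; and
(iii) $node$ is a leaf with $node.label = val$ but $rmvd$ is true.
In cases (i) and (ii), I will exhibit a single time $T_1$ during the search at which no reachable leaf has label $val$; in case (iii) I will use the fact that $node$ is logically removed at some time $T'$ during the search.

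For cases (i) and (ii), I first invoke Lemma \ref{query-reachable-lem} with $xChild = node$ to obtain a time $T_1$ during the search at which $p$ is reachable and $p.child[i] = node$ for some $i$. Combined with the fact, at the last execution of line \ref{search-set-node}, that $p.child[b] = node$ for $b$ equal to the $(|p.label|+1)$th bit of $val$, Invariant \ref{prefix-inv} forces $i = b$ (otherwise both $(p.label)\cdot 0$ and $(p.label)\cdot 1$ would be prefixes of $node.label$). So at $T_1$, $(p.label)\cdot i$ is a prefix of $val$.

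The key structural step is to prove that at $T_1$, any reachable leaf $y$ with $y.label = val$ is a descendant of $node$. For this I would consider the unique reachable path $root=v_0,v_1,\dots,v_k=y$ (unique by Lemma \ref{reachable-parent-lem}) and the unique reachable path $root=w_0,\dots,w_m=p$. By Invariant \ref{prefix-inv}, the labels along both paths are strictly increasing prefixes of $val$. I would show inductively that $v_j = w_j$ for $j \le \min(k,m)$: given $v_j = w_j$, the children $v_{j+1}$ and $w_{j+1}$ of this common node are both determined by the $(|v_j.label|+1)$th bit of $val$, and by Invariant \ref{prefix-inv} a node has at most one child along any given bit, so $v_{j+1} = w_{j+1}$. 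Since $p.label$ is a proper prefix of $y.label$, $p$ cannot be a descendant of $y$, hence $m \le k$ and $p$ lies on the root-to-$y$ path. The child of $p$ on this path is $p.child[i]=node$, so $y$ is a descendant of $node$. In case (i), Invariant \ref{prefix-inv} then forces $node.label$ to be a prefix of $y.label=val$, contradicting post-condition 6; in case (ii), $y$ being a descendant of a leaf forces $y=node$, contradicting $node.label \ne val$. Either way, no reachable (and hence no logically-in-the-trie) leaf with label $val$ exists at $T_1$.

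For case (iii), post-condition 7 of Lemma \ref{post-con-search-lem} gives a time $T'$ during the search at which $node$ is logically removed. Any reachable leaf $y$ with $y.label = val$ at $T'$ would satisfy $y = node$ by Lemma \ref{no-same-node-lem}, but $node$ is not logically in the trie at $T'$ by definition, so no leaf with label $val$ is logically in the trie at $T'$. The main obstacle I anticipate is the structural step in case (i)/(ii): one has to be careful that labels in a Patricia trie can jump by more than one bit per edge, so the path-matching has to be carried out by induction on the path index rather than by bit position, and it relies on using Lemma \ref{reachable-parent-lem} (uniqueness of reachable parents) together with Invariant \ref{prefix-inv} to rule out the possibility that $p$ lies ``off to the side'' of the reachable path to $y$.
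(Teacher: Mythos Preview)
Your handling of cases (i) and (ii) is correct and follows the paper's approach; you spell out the path-matching argument in more detail than the paper does (the paper simply cites Invariant~\ref{prefix-inv} and leaves the uniqueness-of-paths reasoning implicit), but the substance is the same.

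However, your case (iii) has a genuine gap. You pick a time $T'$ at which $node$ is logically removed and then invoke Lemma~\ref{no-same-node-lem} to conclude that any reachable leaf $y$ with $y.label = val$ must equal $node$. But Lemma~\ref{no-same-node-lem} requires \emph{both} nodes to be reachable, and you have not established that $node$ is reachable at $T'$. In fact it need not be: the two-step replace that logically removed $node$ may already have performed its second child CAS (physically removing $node$), and a subsequent insert($val$) may have added a fresh reachable leaf with label $val$, all before the search executes line~\ref{logRemoved-check-oldChild}. At that late moment $node$ is logically removed and unreachable, while a \emph{different} leaf with label $val$ is reachable and logically in the trie, so your argument fails at that $T'$.

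The paper closes this gap with a two-sub-case argument. It lets $T$ be the \emph{first} time $node$ becomes logically removed (the first child CAS of the relevant Flag object $I$, created at line~\ref{mov-set-info1} or~\ref{mov-set-info2}), and $T'$ be the time from Lemma~\ref{query-reachable-lem} when $p$ is a reachable parent of $node$. If $T < T'$, then at $T'$ the node is both reachable (as a child of reachable $p$) and logically removed, and uniqueness applies. If $T' < T$, the paper uses the structure of $I$ together with Lemma~\ref{no-wrong-child-change-lem} and Corollary~\ref{flag-node-before-child-cas-col} to show that the first child CAS of $I$ modifies only $p_i.child$ (with $p_i \ne p_d$, $node_i \ne p_d$), leaving $gp_d$ and $p_d$ as reachable ancestors of $node$; hence just after $T$, $node$ is still reachable and now logically removed. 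In either sub-case one obtains a time during the search at which $node$ is simultaneously reachable and logically removed, which is precisely what your argument was missing.
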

\begin{proof}
First, consider the case where $node$ is not a leaf node whose $label$ is $val$.
Let $p$ be the last internal node visited by search($val$) such that $p.label$ is a prefix of $val$.
Let $i$ be the index such that $(p.label).i$ is a prefix of $val$.
Let $x$ be the child of $p$ that the search operation visits.
By definition of $p$, $x$ is not a leaf node whose $label$ is $val$.
By Lemma \ref{query-reachable-lem}, there is a time during search($val$) that $p.child[i] = x$ and $p$ and $x$ are both reachable.
By Invariant \ref{prefix-inv}, at that time, no leaf node whose $label$ is $val$ is reachable.

Now, consider the case where $node$ is a leaf node whose $label$ is $val$ and $rmvd$ is true.
By Lemma \ref{post-con-search-lem}, $node$ is logically removed at some time during search($val$).
Let $T$ be the first time that $node$ is logically removed.
Then, the search operation executes line \ref{logRemoved-check-oldChild} after $T$.
By Lemma \ref{logically-remove-lem}, $node.info$ is a Flag object $I$ at $T$.
Then, $I$ is created at line \ref{mov-set-info1} or \ref{mov-set-info2}.
Let $p$ be the second last node visited by search($val$).
By Lemma \ref{query-reachable-lem}, there is a time $T'$ during search($val$) that $p.child[i] = node$ and $p$ and $node$ are both reachable.
At $T'$, no other leaf node whose $label$ is $val$ is reachable by Invariant \ref{prefix-inv}. 
We consider two cases.

Case 1: $T$ is before $T'$.
Since $node$ is logically removed at all times after $T$, $node$ is logically removed at $T'$.
So, no leaf node whose $label$ is $val$ is logically in the trie at~$T'$.

Case 2: $T$ is between $T'$ and the time when the search operation executes line \ref{logRemoved-check-oldChild}.
By the definition, the first child CAS of $I$ is performed at $T$.
By Lemma \ref{flag-leaf-lem}, $node.info = I$ at all times after $T$. 
Let $\langle gp_d$, $p_d$, $node_d$, -, -, -$\rangle$ be the result returned by the call to the search operation on line \ref{mov-call-search1} that precedes the creation of $I$ and
$\langle$-, $p_i$, $node_i$, -, -, -$\rangle$ be the result returned by the call to the search operation on line \ref{mov-call-search2} that precedes the creation of $I$.
By the pseudo-code and Observation \ref{flag-leaf-move-obs}, $I.rmvLeaf = node_d = node$.

Since $I$ is created at line \ref{mov-set-info1} or \ref{mov-set-info2}, $p_d \in F_I - U_I$ and $gp_d \in F_I$. 
By Corollary \ref{flag-node-before-child-cas-col}, $p_d.info = I$ and $gp_d.info = I$ just before $T$.
By Lemma \ref{search-lem} and \ref{no-wrong-child-change-lem}, $gp_d.child[i_{gp}] = p_d$ and $p_d.child[i_p] = node$ just before $T$ for some $i_{gp}$ and $i_p$.
Since $gp_d \in F_I$ and $gp_d$ was reachable at some time during the search operation by Lemma \ref{query-reachable-lem}, 
$gp_d$, $p_d$ and $node$ are reachable just before $T$ by Corollary \ref{unmarked-reachable-col}.
By the pseudo-code, $p_i.child$ is changed at $T$.
Now, we show that $gp_d.child[i_{gp}]$ and $p_d.child[i_p]$ are not changed at $T$.
In this case, since $p_d \ne node_i$, $gp_d.child[i_{gp}]$ is not changed from $p_d$ to other value at $T$.
Also, since $p_d \ne p_i$, $p_d.child[i_p]$ is not changed at $T$.
So, just after $T$, $gp_d.child[i_{gp}] = p_d$ and $p_d.child[i_p] = node$ and $gp_d$ is reachable since $gp_d \notin F_I - U_I$.
By Invariant \ref{prefix-inv}, no other leaf node whose $label$ is $val$ is reachable just after $T$.
Since $node$ is logically removed at $T$, no leaf node whose $label$ is $val$ is logically in the trie just after $T$.
\end{proof}

Now, we define a set that represents all non-empty leaf nodes that are logically in the trie at the same time.
\begin{definition}
We define the set $activeValues$ at time $T$ to be the set of all values contained in leaf nodes that are logically in the trie at time $T$.
\end{definition}

Since the children of $root$ are initially two leaf nodes whose $label$s are $00...0$ and $11...1$, $activeValues$ is initially $\{00...0$, $11...1\}$.
By the definition of logically in the trie, $activeValues$ can be changed only by successful child CAS steps.

A find($val$) operation that returns true executes a search($val$) that returns $\langle$ -, -, $node$, -, -, $rmvd \rangle$ where $node$ is a leaf node whose $label$ is $val$ and $rmvd$ is false.
By the last post-condition of search, there is a time during the search when $node$ is logically in the trie, so $val \in activeValues$ at that time.
This is the linearization point of search($val$) and find($val$) that returns true.

A find($val$) operation that returns false executes a search($val$) that returns $\langle$ -, -, $node$, -, -, -$\rangle$ where $node$ is not a leaf node whose $label$ is $val$, or $rmvd$ is true.
By Lemma \ref{search-val-not-in-trie-lem}, there is a time during the search when no leaf node whose $label$ is $val$ is logically in the trie, so $val \notin activeValues$ at that time.
This is the linearization point of search($val$) and find($val$) that returns false.

\subsection{Correctness of Update Operations} \label{correctness-update-sec}
In this section, we show that the update operations behave correctly.
First, we show that if a child CAS of an update is performed, the operation returns true.

\begin{lemma} \label{update-return-true-lem}
If an update operation creates a Flag object $I$ and a child CAS of $I$ is performed, then the update operation returns true (unless it crashes).
\end{lemma}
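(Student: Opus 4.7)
The plan is to observe that the update returns true if and only if its own call $H_0$ to help($I$) returns true, and that $H_0$ returns true if and only if it finds $I.flagDone$ = true at line \ref{help-check-done}. If $H_0$'s flag-CAS loop ends with $doChildCAS$ = true, then $H_0$ itself sets $I.flagDone$ to true on line \ref{help-set-done} before the subsequent check, so this sub-case is immediate. The real work is in ruling out the remaining possibility.

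Assume for contradiction that $H_0$ reaches line \ref{help-check-done} at some time $T_0$ with $doChildCAS$ = false and $I.flagDone$ = false. By Observations \ref{done-obs} and \ref{change-child-after-set-done-obs}, no child CAS of $I$ has occurred by $T_0$, so the hypothesized child CAS must occur at some later time $T_c$, preceded by $I.flagDone$ being set to true at some time $T_d \in (T_0, T_c]$ and, by Lemma \ref{set-done-after-flag-lem}, by every node of $F_I$ being flagged by $I$ at some earlier time. Since $doChildCAS$ became false in $H_0$, there is an index $i$ such that, immediately after $H_0$ executed CAS($I.flag[i].info$, $I.oldInfo[i]$, $I$) at some time $T \le T_0$, the field $I.flag[i].info$ was not equal to $I$.

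I would then split into two sub-cases according to whether $I.flag[i].info$ had ever equalled $I$ by time $T$. If it had, then the subsequent change away from $I$ (required because the field is not $I$ at $T$) must, by Corollary \ref{unflag-after-flag-col}, be a successful unflag or backtrack CAS of $I$ occurring before $T_c$, contradicting Lemma \ref{no-unflag-before-first-change-child-lem}. Otherwise $I.flag[i].info$ is not $I$ at time $T$ and has never been; since $H_0$'s CAS did not succeed, the field also differs from $I.oldInfo[i]$ at $T$, so by Lemma \ref{same-info-lem} it can never again equal $I.oldInfo[i]$. Hence no flag CAS of $I$ on $I.flag[i]$ can ever succeed, and $I.flag[i]$ is never flagged by $I$, contradicting the requirement that every node of $F_I$ be flagged by $I$ before $T_d$.

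The main subtlety I anticipate is managing these two sub-cases cleanly and checking that Lemma \ref{no-unflag-before-first-change-child-lem} actually applies: its premise is precisely the hypothesis of the present lemma, so the fit is natural, but the temporal ordering among $T$, $T_0$, $T_d$, and $T_c$ must be tracked carefully. Once that is done, everything else reduces to the monotonicity and non-ABA facts about the $info$ field established earlier.
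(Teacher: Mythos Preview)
Your argument is correct and tracks the paper's proof closely: both split on whether $H_0$ exits the flag loop with $doChildCAS$ true, and in the false case both derive that some $I.flag[i].info$ must have been set to $I$ and then removed before the first child CAS of $I$, contradicting Lemma~\ref{no-unflag-before-first-change-child-lem}. The one organizational difference is that the paper collapses your two sub-cases by invoking Corollary~\ref{flag-node-before-child-cas-col} and Lemma~\ref{first-info-cas-lem} directly to conclude that the first flag CAS of $I$ on that node \emph{did} succeed before $T$ (ruling out your sub-case~(b) immediately), and then proceeds as in your sub-case~(a); your explicit sub-case~(b), handled via Lemma~\ref{same-info-lem}, is a valid alternative that simply unpacks what that corollary already encodes.
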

\begin{proof}
Let $op$ be the update operation that created $I$.
Assume that, a child CAS of $I$ on $I.pNode[i]$ is performed and $op$ does not crash.
We show that $op$ returns true. 
After creating $I$, $op$ calls help($I$) at line \ref{ins-call-help}, \ref{del-call-help} or \ref{mov-call-help}. 
There are two different cases inside help($I$) depending on the value of the $doChildCAS$ variable at line \ref{help-check-doChildCAS}.

Case 1: $doChildCAS$ is true at line \ref{help-check-doChildCAS}.
So, $op$ sets $I.flagDone$ to true at line \ref{help-set-done}.
By Observation \ref{done-obs}, $I.flagDone$ is not set to false after that.
Since $I.flagDone$ is true at line \ref{help-check-done}, the call to help($I$) by $op$ returns true at line \ref{help-return-true}. 
So, $op$ returns true just after help($I$) returns.

Case 2: $op$ sets $doChildCAS$ to false on line \ref{help-set-doChildCAS} at time $T$.
So, $node.info \ne I$ at $T$ for some $node \in F_I$.
Before $T$, $op$ tries to set $node.info$ to $I$ using a flag CAS of $I$.
So, the first flag CAS of $I$ on $node$ is executed before $T$.
By Corollary \ref{flag-node-before-child-cas-col} and Lemma \ref{first-info-cas-lem}, this first flag CAS of $I$ on $node$ succeeds before $T$.
Since $node.info \ne I$ at $T$, $node.info$ is set to $I$ and, then changed from $I$ to some other value before $T$.
By Lemma \ref{no-unflag-before-first-change-child-lem}, a child CAS of $I$ is performed before $T$. 
By Observation \ref{change-child-after-set-done-obs}, $I.flagDone$ is set to true before $T$.
By Observation \ref{done-obs}, $I.flagDone$ is not set to false after that.
Since $I.flagDone$ is true at line \ref{help-check-done}, the call to help($I$) by $op$ returns true at line \ref{help-return-true}. 
So, $op$ returns true just after help($I$) returns.
\end{proof}

By Lemma \ref{update-return-true-lem}, we have the following corollary.

\begin{corollary} \label{update-no-child-cas-return-false-col}
Let $I$ be a Flag object that is created by an update operation.
If the update operation returns false, there is no child CAS of $I$.
\end{corollary}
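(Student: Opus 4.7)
The plan is to obtain this corollary directly as the contrapositive of Lemma~\ref{update-return-true-lem}. That lemma states that whenever a child CAS of $I$ is performed, the update operation that created $I$ returns \texttt{true} (unless it crashes). So I would argue as follows: suppose for contradiction that the update operation returns \texttt{false} yet some child CAS of $I$ is nevertheless performed. Since the update returns a value (namely \texttt{false}), it certainly does not crash, so Lemma~\ref{update-return-true-lem} applies and forces the update to return \texttt{true}, contradicting the assumption that it returns \texttt{false}.

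There is essentially no obstacle here: the statement is a one-line logical consequence of the preceding lemma. The only subtlety worth mentioning in the write-up is that ``returning \texttt{false}'' implicitly rules out the crash case excluded by Lemma~\ref{update-return-true-lem}, so no separate argument is needed to handle crashes. Accordingly, I would present the proof in a single sentence, citing Lemma~\ref{update-return-true-lem} and noting that an operation that returns \texttt{false} has neither crashed nor returned \texttt{true}, whence by that lemma no child CAS of $I$ can have occurred.
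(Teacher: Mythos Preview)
Your proposal is correct and matches the paper's own argument exactly: the paper simply states that the corollary follows from Lemma~\ref{update-return-true-lem}, which is precisely the contrapositive reasoning you describe. Your additional remark that returning \texttt{false} rules out the crash case is a welcome clarification but changes nothing in substance.
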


During each loop iteration of an update operation, the update operation might create a new Flag object $I$.
The following lemma shows if the update operation begins the next iteration of the loop, no child CAS of $I$ succeeds.  

\begin{lemma} \label{update-no-child-cas-lem}
Let $I$ be a Flag object that is created during a loop iteration of an update operation.
If the update operation begins the next iteration of the loop, no child CAS of $I$ is ever performed.
\end{lemma}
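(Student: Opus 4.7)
The plan is to show that $I.flagDone$ is never set to true, which by Observation~\ref{change-child-after-set-done-obs} rules out any child CAS of $I$ ever being performed. To accomplish this, I will exhibit some node in $F_I$ that is never flagged with $I$; by Lemma~\ref{set-done-after-flag-lem}, this prevents $I.flagDone$ from ever becoming true.

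Since the update begins the next iteration of the loop, its own call to help($I$) must have returned false at line~\ref{help-return-false}. Tracing through the structure of help, this means the update's $doChildCAS$ variable was false at line~\ref{help-check-doChildCAS}---otherwise the update would itself have set $I.flagDone$ to true at line~\ref{help-set-done} and then returned true at line~\ref{help-return-true}---and also that $I.flagDone$ was false when the update read it at line~\ref{help-check-done}. From the former, at some time $T_0$ during the update's flag loop, the update read $I.flag[i].info \ne I$ at line~\ref{help-set-doChildCAS}, for some index $i$. The node $I.flag[i]$ will serve as my candidate.

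The crux is to show that no successful flag CAS of $I$ on $I.flag[i]$ ever occurs. I would split into two cases according to whether $I.flag[i].info$ ever equaled $I$ at or before $T_0$. In the first case, Lemma~\ref{same-info-lem} directly yields that $I.flag[i].info$ is never $I$ again after that moment, so no flag CAS of $I$ on $I.flag[i]$ can succeed at any time after $T_0$. In the second case, the update's own flag CAS at index $i$ must have failed (otherwise $I.flag[i].info$ would have been $I$ just after the CAS, contradicting the case assumption), so at the time of that CAS, $I.flag[i].info \ne I.oldInfo[i]$. Combining this with Lemma~\ref{info-old-lem}, which gives that $I.flag[i].info$ equaled $I.oldInfo[i]$ at some earlier moment (before $I$ was created), and applying Lemma~\ref{same-info-lem}, I conclude that $I.flag[i].info$ never equals $I.oldInfo[i]$ at any subsequent time; hence no later flag CAS of $I$ on $I.flag[i]$ can succeed either.

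Combining the two cases, $I.flag[i]$ is never flagged with $I$, so by Lemma~\ref{set-done-after-flag-lem} $I.flagDone$ is never set to true, and by Observation~\ref{change-child-after-set-done-obs} no child CAS of $I$ is ever performed. The main obstacle, I expect, is the second case: it is not enough to dispose of the update's own CAS; I must use ABA-freedom of the $info$ field (Lemma~\ref{same-info-lem}) together with Lemma~\ref{info-old-lem} to rule out any later helper's flag CAS of $I$ on $I.flag[i]$ as well, and to check that the case split is exhaustive.
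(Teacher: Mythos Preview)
Your case 1 argument does not establish what you need. You claim that combining the two cases yields ``$I.flag[i]$ is never flagged with $I$,'' but in case 1 the hypothesis is precisely that $I.flag[i].info$ \emph{was} equal to $I$ at some point at or before $T_0$, which means a flag CAS of $I$ on $I.flag[i]$ \emph{did} succeed. Showing that no further flag CAS of $I$ on that node succeeds after $T_0$ is not enough: Lemma~\ref{set-done-after-flag-lem} only requires each node in $F_I$ to have been flagged at \emph{some} time before $I.flagDone$ is set, and in case 1 your candidate node already was. Concretely, a helper could flag $I.flag[0],\dots,I.flag[i]$, fail further along, and backtrack-unflag $I.flag[i]$, all before the update reads it at $T_0$; your index $i$ is then the update's first failure, but the node that was truly never flagged lies strictly beyond $i$.

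Case 1 can be repaired, though not via your witness-node strategy. Since $I.flag[i].info$ was $I$ and then is not $I$ at $T_0$, Corollary~\ref{unflag-after-flag-col} gives an unflag or backtrack CAS of $I$ before $T_0$. An unflag CAS would mean the helper executing it had read $I.flagDone$ as true at line~\ref{help-check-done}, contradicting (via Observation~\ref{done-obs}) your own observation that the update later reads $I.flagDone$ as false. A backtrack CAS of $I$, by Lemma~\ref{no-backtrack-after-first-change-child-lem}, directly rules out any child CAS of $I$. Either branch finishes case 1, but by arguing about child CASes directly rather than about flagging.

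The paper's proof is a one-line application of the contrapositive of Lemma~\ref{update-return-true-lem}: if any child CAS of $I$ were performed, the update would return true; since it instead began a new iteration, its call to help($I$) returned false, so no child CAS of $I$ occurs. Your repaired argument essentially reproves the relevant case of Lemma~\ref{update-return-true-lem}; the paper's factoring is cleaner.
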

\begin{proof}
Assume that an update operation $op$ creates a Flag object $I$ during a loop iteration and begins the next iteration of the loop.
After creating $I$, $op$ calls help($I$) at line \ref{ins-call-help}, \ref{del-call-help} or \ref{mov-call-help}.
Since $op$ does not return true at the end of the loop iteration after help($I$) returns, help($I$) returns false.
By Lemma \ref{update-return-true-lem}, there is no child CAS of $I$.
\end{proof}

Let $I$ be a Flag object that is created at line \ref{mov-set-info1} or \ref{mov-set-info2}.
Since the condition at line \ref{mov-check-gc-condition} that precedes the creation of $I$ is satisfied, $I.oldChild[0] \ne I.oldChild[1]$.
So, by Lemma \ref{same-child-lem}, we have the following Corollary.
\begin{corollary} \label{different-child-cas-col}
Let $I$ be a Flag object in which $I.pNode$ has two elements.
Then, the child CAS of $I$ on $I.pNode[0]$ and $I.pNode[1]$ cannot succeed on the same location.
\end{corollary}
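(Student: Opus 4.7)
My plan is to argue by contradiction, assuming that the two child CASs of $I$ on $I.pNode[0]$ and $I.pNode[1]$ both succeed on a common location $\ell = x.child[k]$, and then deriving a contradiction from Lemma \ref{same-child-lem}. The first thing I would establish is what this shared-location hypothesis forces. Since $I.pNode$ has two entries, $I$ must have been created at line \ref{mov-set-info1} or \ref{mov-set-info2}, so $I.pNode = [p_i, gp_d]$, $I.oldChild = [node_i, p_d]$, and $I.newChild = [newNode_i, nodeSibling_d]$, where $p_i, node_i, gp_d, p_d$ come from the two preceding searches. For both CASs to act on location $(x,k)$ we therefore need $p_i = gp_d = x$ and the two index computations at line \ref{help-read-index} to agree. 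Moreover, the guard at line \ref{mov-check-gc-condition} ensures $node_i \notin \{node_d, p_d, gp_d\}$, so $I.oldChild[0] = node_i \ne p_d = I.oldChild[1]$.

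Next I would pin down the temporal picture. By Lemma \ref{first-child-cas-if-no-aba-lem} the first child CAS of $I$ on each of $I.pNode[0]$ and $I.pNode[1]$ succeeds, at times $T_0$ and $T_1$; because each execution of help performs its child CASs in index order, $T_0 < T_1$. Invoking Lemma \ref{search-lem} on the two preceding searches, together with Lemma \ref{same-index-lem} and Invariant \ref{prefix-inv} to match the bit positions, I would exhibit a time $T_a$ during the first search at which $x.child[k] = p_d$ and a later time $T_b$ during the second search at which $x.child[k] = node_i$. Since $p_d \ne node_i$, there must be a successful child CAS $C$ on $x.child[k]$ at some time $T_c \in (T_a, T_b)$ that changes $x.child[k]$ away from $p_d$.

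Finally, I apply Lemma \ref{same-child-lem} to $C$: no subsequent child CAS sets any child pointer to $p_d$. Because the $child$ fields change only via CAS, this means $x.child[k] \ne p_d$ at every moment after $T_c$. Yet the successful CAS of $I$ on $I.pNode[1]$ at $T_1 > T_b > T_c$ requires $x.child[k] = I.oldChild[1] = p_d$, giving the desired contradiction. The main obstacle I expect is the index-matching bookkeeping: showing that the same integer $k$ appears as the $(|x.label|+1)$st bit of $newNode_i.label$, $nodeSibling_d.label$, $node_i.label$, and $p_d.label$, so that all four CAS/search observations really do refer to the same slot of $x.child$. This unwinds cleanly from Invariant \ref{prefix-inv} and Lemma \ref{same-index-lem}, but it is the fiddly part of the argument.
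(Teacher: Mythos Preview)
Your argument is correct and follows the same approach as the paper. The paper's justification is just the one-sentence preamble to the corollary: the guard at line~\ref{mov-check-gc-condition} gives $I.oldChild[0]\ne I.oldChild[1]$, and then Lemma~\ref{same-child-lem} is invoked. Your write-up simply unpacks how Lemma~\ref{same-child-lem} is actually applied---exhibiting the earlier time at which $x.child[k]=p_d$ (from the first search), noting it must have been CASed away before $T_0$ since $x.child[k]=node_i$ there, and concluding that $x.child[k]$ can never return to $p_d$---which is exactly the implicit reasoning behind the paper's terse citation.
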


\begin{lemma} \label{mov-sc1-lem}
Let $I$ be a Flag object that is created at line \ref{mov-set-info-sc1}.
Let $\langle$-, $p_d$, $node_d$, -, -, -$\rangle$ be the result returned by the call to search($val_d$) on line \ref{mov-call-search1} and 
$\langle$-, $p_i$, $node_i$, -, -, -$\rangle$ be the result returned by the call to search($val_i$) on line \ref{mov-call-search2} that precedes the creation of $I$. 
Then, if there is a successful child CAS of $I$ at time $T$, $p_d = p_i$.
\end{lemma}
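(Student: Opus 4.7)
The plan is to derive a contradiction from $p_i \ne p_d$ using that a reachable leaf has a unique reachable parent (Lemma~\ref{reachable-parent-lem}) together with the permanence of unreachability (Lemma~\ref{no-reachable-after-unreachable-lem}). The two auxiliary facts I need to establish are that $p_d$ is a reachable parent of $node_i$ just before $T$, and that $p_d.child[k] = node_i$ throughout a time interval extending from some point during search($val_d$) all the way up to $T$.

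First I would verify reachability of $p_d$ just before $T$. From line~\ref{mov-set-info-sc1}, the Flag object $I$ has $F_I = U_I = \{p_d\}$, $I.pNode[0] = p_d$, and $I.oldChild[0] = node_i$. Since a child CAS of $I$ on $p_d$ succeeds at $T$, we have $p_d.child[k] = node_i$ just before $T$, and Corollary~\ref{flag-node-before-child-cas-col} gives $p_d.info = I$ at that moment. Because $p_d \in U_I$, $p_d$ is not marked; combined with $p_d$ being reachable at some earlier point during search($val_d$) via Lemma~\ref{query-reachable-lem}, Corollary~\ref{unmarked-reachable-col} yields that $p_d$ is still reachable just before $T$. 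For the constancy of $p_d.child[k]$, Lemma~\ref{search-lem} supplies a time $T_d$ in search($val_d$) at which $p_d.info = pInfo_d$ and a later time in the same search at which $p_d.child[k] = node_d = node_i$. Since $p_d \in F_I$ with $I.oldInfo[0] = pInfo_d$, Lemma~\ref{no-wrong-child-change-lem} rules out any child CAS of another Flag object touching $p_d.child$ on $[T_d, T)$, and $I$'s sole child CAS on $p_d$ is the one at $T$ itself, so $p_d.child[k] = node_i$ at every instant in $[T_d, T)$.

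Next I would bring in search($val_i$), which takes place after search($val_d$) but before $T$. Lemma~\ref{query-reachable-lem} furnishes a time $T_i \in (T_d, T)$ at which $p_i$ is reachable and a parent of $node_i$. If $p_i \ne p_d$, then by Lemma~\ref{reachable-parent-lem} the reachable leaf $node_i$ has a unique reachable parent at $T_i$, namely $p_i$, forcing $p_d$ to be unreachable at $T_i$. But $p_d$ was reachable earlier during search($val_d$), so Lemma~\ref{no-reachable-after-unreachable-lem} would keep $p_d$ unreachable at every later time, contradicting its reachability just before $T$.

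The main obstacle, which is really just bookkeeping, is tying together the various timestamps cleanly: ensuring that the time $T_d$ supplied by Lemma~\ref{search-lem} really does precede the later search($val_d$) step giving $p_d.child[k] = node_i$, and that $T_d < T_i < T$ so that the interval on which $p_d.child[k]$ is constant actually covers $T_i$. Once this is in place, the uniqueness of a reachable leaf's parent closes the argument in one stroke.
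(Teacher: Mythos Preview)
Your proposal is correct and follows essentially the same route as the paper's proof: both arguments hinge on showing that $p_d$ is a reachable parent of $node_i = node_d$ at the moment $T_i$ when Lemma~\ref{query-reachable-lem} guarantees $p_i$ is a reachable parent of $node_i$, and then invoking Lemma~\ref{reachable-parent-lem}. The only organizational difference is that the paper establishes $p_d$'s reachability throughout the whole interval $[T_d,T)$ directly via Lemma~\ref{update-x-info-lem} and Corollary~\ref{unmarked-reachable-col}, whereas you establish reachability only at the endpoint (just before~$T$) and then work backwards using Lemma~\ref{no-reachable-after-unreachable-lem}; both are valid and equally short.
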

\begin{proof}
In this case, $node_i = node_d$.
By Lemma \ref{other-child-cas-if-no-aba-lem}, $T$ is the first child CAS of $I$.
By Lemma \ref{search-lem} and \ref{no-wrong-child-change-lem}, $p_d$ was a parent of $node_d$ at all times between the time when search($val_d$) reads $p_d.info$ and $T$.
Since $p_d \in F_I$, by Corollary \ref{flag-node-before-child-cas-col}, $p_d.info = I$ just before $T$.
By Lemma \ref{query-reachable-lem}, $p_d$ was reachable at some time during search($val_d$).
By Lemma \ref{update-x-info-lem}, $p_d$ is not marked between the time when search($val_d$) reads $p_d.info$ and $T$.
By Corollary \ref{unmarked-reachable-col}, $p_d$ is reachable at all times between the time when search($val_d$) returns and $T$.
By Lemma \ref{query-reachable-lem}, there is a time during search($val_i$) that $p_i$ is a parent of $node_i = node_d$ and $p_i$ is reachable.
So, at that time, $p_d$ and $p_i$ are both parent of $node$ and both reachable.
By Lemma \ref{reachable-parent-lem}, $p_i = p_d$.
\end{proof}

\begin{lemma}\label{update-p-reachable-lem} \label{update-node-reachable-lem}
Consider a Flag object $I$.
Let $\langle gp$, $p$, $node$, $gpInfo$, $pInfo$, -$\rangle$ be the result returned by the call to search($val$) on line \ref{ins-call-search}, \ref{del-call-search}, \ref{mov-call-search1}  
that precedes the creation of $I$. 
If $I$ is created by a replace operation, let $\langle$-, $p_i$, $node_i$, -, $pInfo_i$, -$\rangle$ be the result returned by the call to search($val_i$) on line \ref{mov-call-search2} that precedes the creation of $I$.
Then, if there is a successful child CAS of $I$ on $I.pNode[0]$ at time $T$, 
\begin{itemize}
\item $p$ and $node$ are reachable just before $T$, and
\item $p_i$ and $node_i$ are reachable just before $T$ (if $I$ is created by a replace operation).
\end{itemize}
\end{lemma}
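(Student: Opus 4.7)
The plan is to prove the two claims in parallel, since the argument for $(p, node)$ from the first search mirrors the argument for $(p_i, node_i)$ from the second. In both cases I would first establish reachability of the parent, then propagate reachability down to the child.

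First I would argue reachability of the parent. Inspection of lines \ref{ins-set-info1}--\ref{mov-set-info-sc4} shows that $p$ is always in $F_I$; and $p_i$ is in $F_I$ in every replace variant except line \ref{mov-set-info-sc1}, where Lemma \ref{mov-sc1-lem} supplies $p_d = p_i$ whenever a successful child CAS of $I$ exists, so $p_i = p_d \in F_I$. For any such parent $x \in F_I$, Corollary \ref{flag-node-before-child-cas-col} gives $x.info = I$ just before $T$, so $x$ is not marked at that instant: a node is marked by $I$ only after a child CAS of $I$ has occurred, and $T$ is the first child CAS of $I$ on $I.pNode[0]$; marking by any earlier $I' \ne I$ would persist in $x.info$ and contradict $x.info = I$. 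By Lemma \ref{query-reachable-lem}, $x$ was reachable at some moment during the search that visited it, so Corollary \ref{unmarked-reachable-col} promotes this to reachability just before $T$.

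Next I would propagate reachability from the parent to its returned child. By Lemma \ref{search-lem}, $x.child[i]$ equals the child in question (either $node$ or $node_i$) at some time during the corresponding search. I need to show $x.child[i]$ has not changed between then and $T$. Since $x \in F_I$, Lemma \ref{no-wrong-child-change-lem} (taking $T_2 = T$) rules out any change by a child CAS of a different Flag object $I' \ne I$. For a change by $I$ itself, I would case-split on the creation line of $I$ and check that either $x = I.pNode[0]$ (in which case no successful child CAS of $I$ on $x$ has yet occurred, by Lemma \ref{other-child-cas-if-no-aba-lem} and the choice of $T$) or $x$ is distinct from every entry of $I.pNode$: in the delete case $p = p_d \ne gp_d = I.pNode[0]$; in the general replace case the guard at line \ref{mov-check-gc-condition} ensures $p_d \notin \{p_i, gp_d\} = I.pNode$; and the remaining replace subcases are similarly direct, using Lemma \ref{mov-sc1-lem} for line \ref{mov-set-info-sc1}. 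Hence $x.child[i]$ still points to the same child just before $T$, and combined with reachability of $x$ just before $T$ this yields reachability of that child.

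The main obstacle is the case analysis, since there are eight different ways $I$ may have been constructed and for each I must verify two things: that the relevant parent actually lies in $F_I$ (or can be identified with a node that does), and that none of the $I$-entries in $I.pNode$ overlaps the parent whose $child$ field I need to hold fixed. The special case at line \ref{mov-set-info-sc1} is the most delicate because $p_i$ is not explicitly flagged; the already-established Lemma \ref{mov-sc1-lem} is what makes that case fall into the same framework. All other cases reduce mechanically to the two-step argument above using the earlier non-interference results.
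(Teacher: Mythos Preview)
Your proposal is correct and follows essentially the same approach as the paper: show the relevant parent lies in $F_I$ (using Lemma \ref{mov-sc1-lem} for the line-\ref{mov-set-info-sc1} special case), deduce it is unmarked and hence still reachable via Corollary \ref{flag-node-before-child-cas-col} and Corollary \ref{unmarked-reachable-col}, then propagate to the child using Lemma \ref{search-lem} and Lemma \ref{no-wrong-child-change-lem}.

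The one difference worth noting is that your per-creation-line case split on whether $I$'s own child CAS could have altered $x.child$ is unnecessary. The paper dispenses with it in one line by observing that $T$ is the first child CAS of $I$ overall, not merely the first on $I.pNode[0]$: the loop at line \ref{help-start-change-child} processes $I.pNode$ in index order, so every child CAS of $I$ on $I.pNode[1]$ is preceded in the same helper by one on $I.pNode[0]$, and hence the very first child CAS of $I$ is the one at $T$. With that in hand, Lemma \ref{no-wrong-child-change-lem} (for $I'\ne I$) plus ``no child CAS of $I$ before $T$'' together imply $x.child$ is unchanged, with no need to inspect each of the eight creation sites. Your route works, but the paper's shortcut is cleaner and avoids the part you flagged as the main obstacle.
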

\begin{proof}
To prove the lemma, first we show $p$ and $node$ are reachable just before $T$.
Since $I$ is created by an update operation, $p \in F_I$ and $T$ is the first child CAS of $I$ by Lemma \ref{first-child-cas-if-no-aba-lem}. 
By Corollary \ref{flag-node-before-child-cas-col}, $p.info = I$ just before $T$.
By Lemma \ref{query-reachable-lem}, $p$ was reachable at some time during search($val$).
Since $p$ is not marked before $T$, by Corollary \ref{unmarked-reachable-col}, $p$ is reachable just before $T$.
By Lemma \ref{search-lem}, $p$ is a parent of $node$ at some time during search($val$).
By the pseudo-code, $p \in F_I$.
By Lemma \ref{no-wrong-child-change-lem}, $p$ is a parent of $node$ just before $T$.
Since $p$ is reachable just before $T$, $node$ is also reachable just before $T$.

For the rest of the proof, assume $I$ is created by a replace operation.
Now, we show $p_i$ and $node_i$ are reachable just before $T$.
If $I$ is created at line \ref{mov-set-info-sc1}, by Lemma \ref{mov-sc1-lem}, $p = p_i$ and $p_i \in F_I$.
If $I$ is created at any other line inside a replace operation, by the pseudo-code, $p_i \in F_I$.
By Corollary \ref{flag-node-before-child-cas-col}, $p_i.info = I$ just before $T$.
By Lemma \ref{query-reachable-lem}, $p_i$ was reachable at some time during search($val_i$).
Since $p_i$ is not marked before $T$, by Corollary \ref{unmarked-reachable-col}, $p_i$ is reachable just before $T$.
By Lemma \ref{search-lem}, $p_i$ is a parent of $node_i$ at some time during search($val_i$).
By Lemma \ref{no-wrong-child-change-lem}, $p_i$ is a parent of $node_i$ just before $T$ since $p_i \in F_I$.
Since $p_i$ is reachable just before $T$, $node_i$ is also reachable just before $T$.
\end{proof}

\begin{lemma} \label{update-node-unreachable-lem}
Let $I$ be a Flag object that is created at line \ref{mov-set-info1} or \ref{mov-set-info2} and 
$\langle gp_d$, $p_d$, $node_d$, -, -,  -$\rangle$ be the result returned by the call to search($val_d$) on line \ref{mov-call-search1} that precedes the creation of $I$.
Then, if there are a successful child CAS of $I$ on $I.pNode[0]$ at $T$ and a successful child CAS of $I$ on $I.pNode[1]$ at $T'$, 
$p_d$ and $node_d$ are reachable at all times between $T$ and $T'$ and $p_d$ and $node_d$ become unreachable immediately after $T'$.
\end{lemma}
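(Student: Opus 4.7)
The plan is to reduce everything to invariants on $gp_d.info$, $p_d.info$, and the relevant $child$ slots, all of which stay pinned to $I$, $I$, $p_d$, and $node_d$ respectively throughout $[T,T']$. Write $I.pNode = [p_i, gp_d]$, $I.oldChild = [node_i, p_d]$, $I.newChild = [newNode_i, nodeSibling_d]$, with $gp_d, p_d \in F_I$, $gp_d \in U_I$, and $p_d \notin U_I$. Note first that $T < T'$: any helper that performs the $I.pNode[1]$-CAS at $T'$ first iterates $i=0$ in the same for-loop, and whether that attempt succeeded or failed, Lemmas \ref{other-child-cas-if-no-aba-lem} and \ref{first-child-cas-if-no-aba-lem} place the unique successful $I$-CAS on $I.pNode[0]$ strictly before $T'$. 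By Lemma \ref{update-p-reachable-lem} applied at $T$, $p_d$ and $node_d$ are reachable just before $T$; for $gp_d$ I will invoke Lemma \ref{query-reachable-lem} (reachability during search($val_d$)), Corollary \ref{flag-node-before-child-cas-col} at $T$ (which gives $gp_d.info = I$ just before $T$), and the observation that no other Flag $I''$ can have $gp_d \in F_{I''} - U_{I''}$ (else $gp_d.info$ would be pinned to $I''$ forever, but it is $I$ now and $gp_d \in U_I$), so Corollary \ref{unmarked-reachable-col} delivers $gp_d$ reachable just before $T$.

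For reachability throughout $[T,T']$, Corollary \ref{flag-node-before-child-cas-col} at both $T$ and $T'$ gives $gp_d.info = p_d.info = I$ at those two instants, and Lemma \ref{same-info-lem} (no ABA on $info$) forces these equalities throughout $[T,T']$. Hence no Flag $I'' \ne I$ can acquire $gp_d$ or $p_d$ by a successful flag CAS in $[T,T']$, so Corollary \ref{flag-node-before-child-cas-col} blocks every $I''$-child-CAS from touching $gp_d.child$ or $p_d.child$ in that interval. The only $I$-child-CASs are at $T$ and $T'$, and the one at $T$ acts on $p_i$; even in the coincidence $p_i = gp_d$ that survives newFlag's deduplication, the modified slot is $1 - i_{gp}$ (the slot containing $node_i$), because $node_i \ne p_d$ by line \ref{mov-check-gc-condition}. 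Hence $gp_d.child[i_{gp}] = p_d$ and $p_d.child[i_p] = node_d$ throughout $[T,T']$ up to the instant before $T'$. To rule out $gp_d$ itself losing reachability in $(T, T']$, I will argue by contradiction: if $gp_d$ became unreachable at some $T''$, Lemma \ref{unreachable-mark-lem} would supply a Flag $I''$ with $gp_d \in F_{I''} - U_{I''}$ whose child CAS at $T''$ disconnects $gp_d$, and Corollary \ref{flag-node-before-child-cas-col} would force $gp_d.info = I''$ just before $T''$; combined with $gp_d.info = I$ on $(T, T']$ this forces $I'' = I$, contradicting $gp_d \in U_I$.

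The final claim is then immediate. Lemma \ref{unreachable-oldChild-lem} applied to the successful CAS of $I$ on $I.pNode[1]$ at $T'$ makes $p_d = I.oldChild[1]$ unreachable just after $T'$. For $node_d$, Lemma \ref{reachable-parent-lem} applied just before $T'$ identifies $p_d$ as its unique reachable parent, and by Invariant \ref{prefix-inv} the replacement child $nodeSibling_d$ installed at $gp_d.child[i_{gp}]$ does not have $node_d$ as a descendant (the labels disagree at bit $|p_d.label|+1$), so $node_d$ loses its only reachable parent at $T'$ without gaining a new one. The main obstacle in this plan is the subcase $p_i = gp_d$, which survives newFlag's deduplication and makes $I$'s first CAS actually touch $gp_d$; handling it cleanly requires invoking $node_i \ne p_d$ from line \ref{mov-check-gc-condition} to separate the modified slot from $i_{gp}$, so that the pinning of $gp_d.child[i_{gp}]$ to $p_d$ really does hold throughout $[T,T']$.
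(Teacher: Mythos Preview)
Your proof is correct, but it takes a noticeably different route from the paper's.

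The paper's argument is short: it establishes that $p_d$ and $node_d$ are reachable just before $T$ (via Lemma~\ref{update-p-reachable-lem}) and again just before $T'$ (by showing $gp_d.info = I$, $gp_d \in U_I$, hence $gp_d$ unmarked, hence $gp_d$ reachable by Corollary~\ref{unmarked-reachable-col}, and then chasing the child pointers $gp_d \to p_d \to node_d$ using Lemma~\ref{no-wrong-child-change-lem} and Corollary~\ref{different-child-cas-col}). Reachability \emph{throughout} $[T,T']$ then falls out immediately from Lemma~\ref{no-reachable-after-unreachable-lem} (once unreachable, never reachable again). The unreachability after $T'$ is handled exactly as you do it.

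Your approach instead pins $gp_d.info$ and $p_d.info$ to $I$ on the entire interval (via Corollary~\ref{flag-node-before-child-cas-col} at both endpoints plus no-ABA on $info$), and from that argues directly that no competing child CAS can touch the relevant slots, so the chain $gp_d \to p_d \to node_d$ stays intact pointwise. This is more explicit and self-contained, and in particular your handling of the $p_i = gp_d$ coincidence (separating the slot via $node_i \ne p_d$) spells out what the paper compresses into a bare citation of Corollary~\ref{different-child-cas-col}. Your contradiction argument for $gp_d$ staying reachable is correct but heavier than needed: since you already have $gp_d.info = I$ throughout and $gp_d \in U_I$, $gp_d$ is unmarked throughout, and Corollary~\ref{unmarked-reachable-col} gives reachability at every instant directly. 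The paper's route is shorter because it offloads the ``throughout'' part entirely to Lemma~\ref{no-reachable-after-unreachable-lem}; yours trades that dependency for a more hands-on invariant argument.
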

\begin{proof}
First, we show that $p_d$ and $node_d$ are reachable at all times between $T$ and $T'$.
By Lemma \ref{update-p-reachable-lem} and \ref{update-node-reachable-lem}, $node_d$ and $p_d$ are reachable just before $T$.
In this case, $gp_d$ and $p_d$ are in $F_I$.
By Lemma \ref{search-lem}, $gp_d$ is a parent of $p_d$ at some time during search($val_d$) and $p_d$ is a parent of $node_d$ at some time during search($val_d$).
By Lemma \ref{no-wrong-child-change-lem} and Corollary \ref{different-child-cas-col}, $gp_d$ is a parent of $p_d$ and $p_d$ is a parent of $node_d$ just before $T'$ (since $p_i \ne p_d$).
By Corollary \ref{flag-node-before-child-cas-col} and Lemma \ref{no-unflag-before-first-change-child-lem}, $gp_d.info = I$ just before $T'$ and $gp_d$ is not marked just before $T'$.
By Corollary \ref{unmarked-reachable-col}, $gp_d$ is reachable just before $T'$.
So, $p_d$ and $node_d$ are reachable just before $T'$.
Thus, by Lemma \ref{no-reachable-after-unreachable-lem}, $p_d$ and $node_d$ are reachable at all times between $T$ and $T'$.

Now, we show that $p_d$ and $node_d$ become unreachable immediately after $T'$.
By Lemma \ref{unreachable-oldChild-lem}, $p_d$ becomes unreachable at $T'$.
By Lemma \ref{reachable-parent-lem}, $p_d$ is the only reachable parent of $node_d$ just before $T'$.
Thus, $node_d$ also becomes unreachable at $T'$.
\end{proof}

\begin{lemma} \label{update-node-in-trie-lem}
Consider a Flag object $I$. 
Let $\langle$ -, $p$, $node$, -, $pInfo$,  $rmvd \rangle$ be the result returned by the call to the search operation on line 
\ref{ins-call-search}, \ref{del-call-search} or \ref{mov-call-search1} that precedes the creation of $I$.
If the first successful child CAS of $I$ is at time $T$, then $node$ is logically in the trie just before $T$.
\end{lemma}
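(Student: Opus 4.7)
The plan is to decompose ``logically in the trie'' into its two ingredients: reachability and not-being-logically-removed. Reachability is immediate from Lemma~\ref{update-node-reachable-lem}, so the real work is to rule out logical removal just before $T$.

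If $node$ is an internal node there is nothing to prove, since by Observation~\ref{flag-leaf-move-obs} only leaves can serve as the $rmvLeaf$ field of any Flag, and hence only leaves can be logically removed. I will therefore assume $node$ is a leaf and proceed by contradiction. Suppose some Flag $I''\ne I$ has $I''.rmvLeaf = node$ and that its first child CAS took place at some time $T_{c''}<T$. Then $I''$ must have been created on line~\ref{mov-set-info1} or~\ref{mov-set-info2}; let $p_d''$ denote the $p_d$ returned by $I''$'s first search. Since $p_d''\in F_{I''}\setminus U_{I''}$ and Lemma~\ref{change-child-after-flag-lem} requires all of $F_{I''}$ to be flagged before the first child CAS of $I''$, $p_d''.info$ was set to $I''$ strictly before $T_{c''}$ and, by Corollary~\ref{unflag-after-flag-col}, remains $I''$ thereafter. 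A standard use of Lemma~\ref{same-info-lem} combined with Lemma~\ref{no-wrong-child-change-lem} then shows that $p_d''.child$ is unchanged from the moment $I''$'s search read $p_d''.info$ onward, so $p_d''$ is continuously a parent of $node$ throughout its reachable life.

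The crucial step will be to show $p=p_d''$. Lemma~\ref{update-node-reachable-lem} and Lemma~\ref{no-wrong-child-change-lem} together make $p$ a reachable parent of $node$ just before $T$. I plan to argue that $p_d''$ is also reachable at that instant: otherwise the second child CAS of $I''$ would have fired before $T$, and because $p_d''$ is permanently owned by $I''$ no concurrent delete or replace can re-parent $node$ onto a different reachable node, so $node$ itself would be unreachable at $T$, contradicting Lemma~\ref{update-node-reachable-lem}. With both $p$ and $p_d''$ reachable parents of $node$ at the same moment, the uniqueness clause of Lemma~\ref{reachable-parent-lem} forces $p=p_d''$. This parent-uniqueness step---ruling out concurrent re-parenting of $node$ during the window in which $I''$ is active---is the main obstacle, and it leans on the fact that once $I''$ has flagged $p_d''$ no other operation can touch $p_d''.child$.

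The contradiction then follows quickly. At the moment our update's search read $pInfo=p.info$, either $p.info$ already equalled $I''$---in which case $pInfo$ is a Flag, newFlag calls help and returns null on line~\ref{newFlag-return-null1}, and $I$ is never created---or $pInfo$ was an Unflag and $p.info$ was later set to $I''$ (before $T_{c''}$) and stays $I''$ forever. In the second case Lemma~\ref{same-info-lem} ensures $p.info\neq pInfo$ at the time of the first flag CAS of $I$ on $p$, so that CAS fails, and by Lemma~\ref{first-info-cas-lem} no later flag CAS of $I$ on $p$ can succeed either; Corollary~\ref{flag-node-before-child-cas-col} then precludes any child CAS of $I$, contradicting the existence of $T$.
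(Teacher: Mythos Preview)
Your overall architecture matches the paper's: establish reachability via Lemma~\ref{update-node-reachable-lem}, assume some replace Flag $I''$ logically removed $node$ before $T$, identify $p$ with the $p_d''$ of $I''$ via the uniqueness part of Lemma~\ref{reachable-parent-lem}, and then extract a contradiction from $p.info$. The reachability argument for $p_d''$ at time $T$ (ruling out that the second child CAS of $I''$ already fired) is handled in the paper by a direct appeal to Lemma~\ref{update-node-unreachable-lem}; your version is a little more verbose but gets there.

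There is, however, a genuine gap in your final paragraph. You claim that in the second case ``Lemma~\ref{same-info-lem} ensures $p.info\neq pInfo$ at the time of the first flag CAS of $I$ on $p$, so that CAS fails.'' But Lemma~\ref{same-info-lem} only tells you $p.info$ never \emph{returns} to $pInfo$ once it leaves; it does not tell you that the first flag CAS of $I$ on $p$ happens \emph{after} $p.info$ has changed to $I''$. Nothing you have written rules out the ordering in which $I$'s flag CAS on $p$ succeeds first (setting $p.info=I$), and only afterwards $I''$ attempts to flag $p$. In that ordering your argument that ``the CAS fails'' is simply false. The missing piece is that if $I$'s flag CAS did succeed first, then by Lemma~\ref{no-unflag-before-first-change-child-lem} $p.info$ could not change away from $I$ before $T$, so $I''$ could never have flagged $p$ before $T_{c''}<T$, contradicting Lemma~\ref{change-child-after-flag-lem} applied to $I''$.

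The paper sidesteps this case split entirely by invoking Lemma~\ref{update-x-info-lem}, which packages exactly this reasoning: $p.info\in\{pInfo,I\}$ at every moment between the read of $pInfo$ and $T$, whereas you have shown $p.info=I''$ at some moment strictly inside that interval. That single line replaces your two-case analysis and closes the gap. Either add the missing sub-case explicitly, or (cleaner) replace your last paragraph with the Lemma~\ref{update-x-info-lem} argument.
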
 
\begin{proof}
By Lemma \ref{update-node-reachable-lem}, $node$ is reachable just before $T$.
We show that $node$ is not logically removed at any time before $T$. 

To derive a contradiction, assume $node$ becomes logically removed at $T'$ before $T$. 
Let $I' \ne I$ be the Flag object such that $node$ is logically removed by the first child CAS of $I'$ at $T'$.
Then, $I'$ is created at line \ref{mov-set-info1} or \ref{mov-set-info2}.
Let $\langle$ -, $p_i$, $node_i$, -, -, -, -, - $\rangle$ be the result returned by the call to the search operation on line \ref{mov-call-search1} that precedes the creation of $I'$.
Since $I'.rmvLeaf$ is set to $node_i$, $node = node_i$.
Since $node$ and $p_i$ become unreachable immediately after the successful child CAS of $I'$ on $I'.pNode[1]$ (by Lemma \ref{update-node-unreachable-lem}), 
$node$ and $p_i$ do not become reachable after that (by Lemma \ref{no-reachable-after-unreachable-lem}).
But, $node$ and $p_i$ are reachable just before $T$ (by Lemma \ref{update-p-reachable-lem} and \ref{update-node-reachable-lem}),
so the successful child CAS of $I'$ on $I'.pNode[1]$ does not occur before $T$.
By Lemma \ref{no-wrong-child-change-lem} and Corollary \ref{different-child-cas-col}, $p_i$ is a parent of $node$ just before $T$ 
(since $T$ is between the first child CAS of $I'$ on $I'.pNode[0]$ and the first child CAS of $I'$ on $I'.pNode[1]$).
By Lemma \ref{search-lem} and \ref{no-wrong-child-change-lem}, $p$ is a parent of $node$ just before $T$.
By Lemma \ref{update-p-reachable-lem}, $p$ is reachable just before $T$.
By Lemma \ref{reachable-parent-lem}, $p_i = p$ (since both $p$ and $p_i$ are reachable parents of $node$ just before $T$).
Since $I'$ is created at line \ref{mov-set-info1} or \ref{mov-set-info2}, by the pseudo-code, $p = p_i \in F_{I'}$.
By Corollary \ref{flag-node-before-child-cas-col}, $p.info =I'$ just before $T$ (that is between the first child CAS of $I'$ on $I'.pNode[0]$ and the first child CAS of $I'$ on $I'.pNode[1]$), 
contradicting the fact that $p.info = I \ne I'$ just before $T$ by Lemma \ref{update-x-info-lem}.
\end{proof}

\subsubsection{Correctness of the Insert Operation}

In this section, we show how the set $activeValues$ is changed by insert operations and, then we show how to linearize insert operations.

\begin{lemma} \label{unsuccessful-ins-lem}
If insert($val$) returns false, there is a time during the insert($val$) when $val \in activeValues$.
\end{lemma}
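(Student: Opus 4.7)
The plan is to trace the only path by which \textbf{insert}($val$) can return \textbf{false} and then invoke the post-conditions of \textbf{search} that we already established in Lemma~\ref{post-con-search-lem}. Examining the pseudo-code of \textbf{insert}, the sole \textbf{return false} is at line~\ref{ins-return-false}, which is reached only when \textbf{keyInTrie}($node$, $val$, $rmvd$) evaluates to true, where $\langle -, p, node, -, pInfo, rmvd\rangle$ is the tuple returned by the preceding call to \textbf{search}($val$) at line~\ref{ins-call-search}. (The only other way to exit the while loop is \textbf{return true} at line~\ref{ins-call-help}.)

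By the definition of \textbf{keyInTrie} at line~\ref{keyInTrie-return}, the guard being true means simultaneously that $node$ is a Leaf, $node.label = val$, and $rmvd = \mathrm{false}$. Applying Lemma~\ref{post-con-search-lem} to this search, the fact that $rmvd$ is false gives us a time $T^\star$ during the execution of \textbf{search}($val$) at which $node$ is logically in the trie, that is, $node$ is reachable at $T^\star$ and not logically removed at $T^\star$. Since $node$ is a leaf whose $label$ equals $val$ (and by Observation~\ref{basic-obs}, $node.label$ never changes), we conclude that $val \in \mathit{activeValues}$ at $T^\star$, by the very definition of $\mathit{activeValues}$ as the set of labels of leaves that are logically in the trie.

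Finally, $T^\star$ lies within the call to \textbf{search}($val$) which itself is invoked inside the loop iteration of \textbf{insert}($val$) that returns false, so $T^\star$ is a time during the execution of \textbf{insert}($val$), completing the proof. There is no real obstacle here: everything reduces to reading off the single return-false line and quoting the sixth post-condition of \textbf{search} established in Lemma~\ref{post-con-search-lem}; the only minor care needed is to note that the search in question is the one executed in the final (returning) iteration of the while loop, which is clearly part of the insert's execution.
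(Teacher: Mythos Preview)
Your proof is correct and follows essentially the same approach as the paper: identify that the only \textbf{return false} is at line~\ref{ins-return-false}, unpack \textbf{keyInTrie} to conclude $node$ is a leaf with label $val$ and $rmvd=\mathrm{false}$, and invoke Lemma~\ref{post-con-search-lem} to obtain a time during the search when $node$ is logically in the trie, hence $val\in\mathit{activeValues}$. Your version is slightly more explicit (noting the search lies within the insert's execution and citing Observation~\ref{basic-obs}), but the argument is the same.
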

\begin{proof}
Assume insert($val$) returns false.
Let $\langle$ -, -, $node$, -, -, $rmvd \rangle$ be the result returned by the last call to search($val$) just before insert($val$) returns.
Since insert($val$) returns false at line \ref{ins-return-false}, keyInTrie($node, val, rmvd$) returns true at line \ref{ins-return-false}.
Then, $node$ is a leaf node containing $val$ and $rmvd$ is false.
By Lemma \ref{post-con-search-lem}, $node$ is logically in the trie at some time during search($val$).
So, $val \in activeValues$ at that time.
\end{proof}

Let $I$ be a Flag object created by insert($val$).
By Corollary \ref{update-no-child-cas-return-false-col}, there is no successful child CAS of $I$ if insert($val$) returns false.
So, $activeValues$ is not changed by a child CAS of $I$.

Let $I$ be a Flag object that is created during a loop iteration of insert($val$).
If a child CAS of $I$ succeeds, by Lemma \ref{update-return-true-lem}, insert($val$) returns true at the end of the loop iteration unless it crashes.
If a child CAS of $I$ succeeds, we say insert($val$) is {\it successful}.
By Lemma \ref{other-child-cas-if-no-aba-lem} and \ref{first-child-cas-if-no-aba-lem}, only the first child CAS of $I$ succeeds.
In the following lemma, we show how the successful insert operations change $activeValues$. 

\begin{lemma} \label{successful-ins-lem}
Let $I$ be a Flag object that is created by insert($val$).
Assume a child CAS of $I$ succeeds at time $T$.
Let $s$ be the set $activeValues$ just before $T$ and let $s'$ be the set $activeValues$ just after $T$.
Then, $val \notin s$ and $s' = s \cup \{val\}$
\end{lemma}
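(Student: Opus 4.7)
The plan is to prove the two conclusions separately, leveraging the fact (established by Lemma \ref{update-p-reachable-lem}, \ref{update-node-in-trie-lem}, and \ref{update-node-reachable-lem}) that just before $T$ both $p$ and $node$ are reachable, $node$ is logically in the trie, and $p.child[i] = node$ where $i$ is the $(|p.label|+1)$th bit of $val$. Here $\langle -, p, node, -, -, rmvd\rangle$ denotes the result returned by the search($val$) that preceded the creation of $I$.

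For $val \notin s$: because a child CAS of $I$ succeeds, insert($val$) did not return false at line \ref{ins-return-false}, so keyInTrie($node,val,rmvd$) evaluated to false; hence either $node$ is not a leaf with label $val$, or $rmvd =$ true. The second alternative is impossible, since by Lemma \ref{post-con-search-lem} it would force $node$ to be logically removed at some point during the search, and this status is stable (by Lemma \ref{logically-remove-lem} together with Observation \ref{flag-leaf-move-obs} and Lemma \ref{same-child-lem}), contradicting that $node$ is still logically in the trie just before $T$. Therefore $node$ is not a leaf with label $val$; moreover, if $node$ is internal then $node.label$ is not a prefix of $val$ by Lemma \ref{search-lem}. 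Combining Invariant \ref{prefix-inv} with Lemma \ref{no-same-node-lem} (uniqueness of reachable nodes with a given label), any reachable leaf with label $val$ would have to be reached from $root$ via the unique path of reachable ancestors whose labels are prefixes of $val$; that path necessarily includes $p$ and then its child $p.child[i] = node$. But $node$ can be neither such a leaf nor an ancestor of one. Hence no leaf with label $val$ is reachable just before $T$, so $val \notin s$.

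For $s' = s \cup \{val\}$: the CAS replaces $p.child[i]$ with $newNode$, whose children are a freshly created leaf $newLeaf$ labelled $val$ and a freshly created copy of $node$ whose children are the same Node objects as $node$'s (unchanged since line \ref{ins-read-node-info} by Lemma \ref{no-wrong-child-change-lem}, since $p \in F_I$). The leaves reachable just after $T$ are therefore: those reachable just before $T$ outside the subtree at $node$, plus the same Node objects that were reachable as leaves through $node$ (now reached through the copy), plus $newLeaf$. So the multiset of reachable leaf labels grows by exactly $\{val\}$. It then remains to check that no existing leaf's logical-removed status flips at $T$: the newly created leaves start with fresh Unflag info, and the info fields of all other leaves are untouched by the CAS.

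The main obstacle I anticipate is ruling out a flip of an existing leaf's logical-removed status caused indirectly by the change of $p.child$. By Lemma \ref{logically-remove-lem}, such a flip could happen only if some leaf $\ell$ has $\ell.info = I''$ for a replace Flag $I''$ with $I''.pNode[0] = p$ and $I''.oldChild[0] = node$. By Observation \ref{flag-leaf-move-obs}, $\ell$ being flagged by $I''$ requires that every node in $F_{I''}$ (in particular $p$) had already been flagged by $I''$. If any child CAS of $I''$ had occurred before $T$, Lemma \ref{same-child-lem} would permanently remove $node$ from $p.child[i]$, contradicting that the CAS of $I$ succeeds there at $T$; so no child CAS of $I''$ has occurred by $T$, and by Lemma \ref{no-unflag-before-first-change-child-lem} $p.info$ has not been unflagged from $I''$. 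Hence $p.info = I''$ just before $T$, contradicting Corollary \ref{flag-node-before-child-cas-col}, which gives $p.info = I \ne I''$ just before $T$. This completes the argument that $s' = s \cup \{val\}$.
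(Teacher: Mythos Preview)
Your overall approach matches the paper's, and your argument for $val \notin s$ is correct (in fact, invoking Lemma~\ref{update-node-in-trie-lem} to rule out the $rmvd=\text{true}$ case is cleaner than the paper's case split). There is, however, a concrete error in the $s' = s \cup \{val\}$ part: to conclude that the fresh copy of $node$ has the same children as $node$ just before $T$, you need $node \in F_I$, not $p \in F_I$. Lemma~\ref{no-wrong-child-change-lem} bounds changes to $x.child$ for $x \in F_I$; applying it with $x = p$ only tells you $p.child$ is stable, which you already use elsewhere. The correct argument is that when $node$ is internal, $I$ was created at line~\ref{ins-set-info1}, so $node \in F_I$, and then Lemma~\ref{no-wrong-child-change-lem} yields that $node.child$ is unchanged between line~\ref{ins-read-node-info} and $T$. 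When $node$ is a leaf, the copy is a leaf too and there is nothing to check. You should make this case distinction explicit, as the paper does.

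Your final paragraph about some other leaf's logically-removed status flipping is unnecessary. By definition, a leaf $\ell$ is logically removed exactly when the first child CAS of some replace Flag $I''$ with $I''.rmvLeaf = \ell$ has occurred; this is a one-way transition keyed to child CAS steps of $I''$. The successful CAS at $T$ is a child CAS of the insert's Flag $I$, which has $I.rmvLeaf = \text{null}$, so no leaf becomes (or ceases to be) logically removed at $T$.
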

\begin{proof}

Let $\langle$ -, $p$, $node$, -, -, $rmvd \rangle$ be the result returned by the call to search($val$) that precedes the creation of $I$.
Then, $p \in F_I$.
By Lemma \ref{search-lem}, $p.label$ is a prefix of $val$ and if $node$ is an internal node, $node.label$ is not a prefix of $val$.
Since insert($val$) does not return false at line \ref{ins-return-false} before the creation of $I$ by Corollary \ref{update-no-child-cas-return-false-col}, 
either $node$ is not a leaf node whose $label$ is $val$ or $rmvd$ is true. 
If $node$ is not a leaf node whose $label$ is $val$, since $p$ is parent of $node$ just before $T$ (by Lemma \ref{search-lem} and \ref{no-wrong-child-change-lem}) and
$p$ is reachable just before $T$ (by Lemma \ref{update-p-reachable-lem}), $p$ is not an ancestor of a leaf node whose $label$ is $val$ just before $T$.
If $node$ is a leaf node whose $label$ is $val$ and $rmvd$ is true, by Lemma \ref{post-con-search-lem}, $node$ is logically removed at some time during search($val$), so $node$ is logically removed just before $T$.
Since $p$ is a parent of $node$ just before $T$ and $p$ is reachable just before $T$, no leaf node whose $label$ is $val$ is logically in the trie just before $T$ by Invariant \ref{prefix-inv}.
Thus, $val \notin s$.

The successful child CAS of $I$ changes an element of $I.pNode[0].child$ from $I.oldChild[0]$ to $I.newChild[0]$.
When $I$ is initialized, $I.pNode[0]$ and $I.oldChild[0]$ are set to $p$ and $node$ respectively. 
The successful child CAS of $I$ changes an element of $p.child$ from $node$ to $I.newChild[0]$ at time $T$.
When $I$ is initialized, $I.newChild[0]$ is set to the new internal node created at line \ref{create-internal}.
The non-empty elements of the $child$ field of the new internal node are a new copy of $node$ and a new leaf node whose $label$ is $val$.

If $node$ is a leaf node, since $p$ is reachable just before $T$ (by Lemma \ref{update-p-reachable-lem}), just after $T$, $node$ is unreachable and the new leaf node and the new copy of $node$ are reachable.
So, $s' = s \cup \{val\}$.

If $node$ is an internal node, $node \in F_I$ by the pseudo-code and, by Lemma \ref{no-wrong-child-change-lem}, 
no child CAS changes $node.child$ between the time when $node.info$ is read for the last time on line \ref{ins-read-node-info} and $T$.
Since the new copy of $node$ is made at line \ref{create-copy} between the time when $node.info$ is read for the last time on line \ref{ins-read-node-info} and $T$, 
$node.child$ just before $T$ is the same as the children of the new copy of $node$ just after $T$.
By Lemma \ref{update-p-reachable-lem}, $p$ is reachable just before $T$.
Just after $T$, $node$ is unreachable and the new leaf node and the new copy of $node$ are reachable.
Also, all children of $node$ just before $T$ are reachable just after $T$.
So, $activeValues = s \cup \{val\}$, just after $T$.
\end{proof}

An insert($val$) returns false executes a search($val$) that returns $\langle$ -, -, $node$, -, -, $rmvd \rangle$ where $node$ is a leaf node whose $label$ is $val$ and $rmvd$ is false.
By the last post-condition of search, there is a time during the search when $node$ is logically in the trie, so $val \in activeValues$ at that time.
This is the linearization point of search($val$) and insert($val$) that returns false.
By Corollary \ref{update-no-child-cas-return-false-col}, if insert($val$) returns false, there is no child CAS of $I$ where $I$ is created by insert($val$).

If the first child CAS of $I$ that is created by insert($val$) occurs, insert($val$) is linearized at that child CAS of $I$.
By Lemma \ref{first-child-cas-if-no-aba-lem}, the first child CAS of $I$ succeeds and by Lemma \ref{other-child-cas-if-no-aba-lem}, no other child CAS of $I$ succeeds.
By Lemma \ref{successful-ins-lem}, $val \notin activeValues$ just before the first child CAS of $I$ and $val$ is added to $activeValues$ just after the first child CAS of $I$.
By Lemma \ref{update-return-true-lem}, if the first child CAS of $I$ occurs, insert($val$) returns true (unless it crashes). 

If insert($val$) does not return false and the first child CAS of $I$ that is created by insert($val$) does not occur, no linearization point is assigned to insert($val$).

\subsubsection{Correctness of the Delete Operation}
In this section, we show how the set $activeValues$ is changed by delete operations and, then we show how to linearize delete operations.

\begin{lemma} \label{unsuccessful-del-lem}
If delete($val$) returns false, there is a time during the delete($val$) when $val \notin activeValues$.
\end{lemma}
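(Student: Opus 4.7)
The plan is to mirror the proof of Lemma \ref{unsuccessful-ins-lem}, but invoke Lemma \ref{search-val-not-in-trie-lem} instead of the post-conditions of search used in the insert case. The statement is essentially the contrapositive of what Lemma \ref{search-val-not-in-trie-lem} already gives us, so almost no work remains beyond unpacking the condition under which delete returns false.

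First I would identify the exact return point. The only line at which delete($val$) returns false is line \ref{del-return-false}, and it returns false there precisely when keyInTrie($node$, $val$, $rmvd$) is false, where $\langle gp, p, node, gpInfo, pInfo, rmvd\rangle$ is the value returned by the call to search($val$) on line \ref{del-call-search} during the final loop iteration. By the definition of keyInTrie (line \ref{keyInTrie-return}), this means either $node$ is not a leaf node with $node.label = val$, or $rmvd$ is true.

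Next I would apply Lemma \ref{search-val-not-in-trie-lem} to this search($val$): in either of those two sub-cases, there is a time $T$ during the search operation at which no leaf node with $label$ equal to $val$ is logically in the trie. By the definition of $activeValues$, this is exactly the statement that $val \notin activeValues$ at time $T$. Since $T$ lies within the search($val$), which itself is called during delete($val$), $T$ lies during the delete operation, which completes the proof.

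The hard part is really already done inside Lemma \ref{search-val-not-in-trie-lem}; here the only thing to be careful about is that we are talking about the \emph{final} search($val$) performed by delete (the one whose return value triggered the false return), and that the time produced by Lemma \ref{search-val-not-in-trie-lem} is within that search and hence within the delete. No new reasoning about CAS steps, flagging, or reachability is required.
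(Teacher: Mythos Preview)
Your proposal is correct and matches the paper's proof essentially line for line: identify that the false return at line \ref{del-return-false} forces keyInTrie to be false, unpack that into the two cases ($node$ is not a leaf with label $val$, or $rmvd$ is true), and then invoke Lemma \ref{search-val-not-in-trie-lem} to obtain a time during the search (hence during the delete) at which $val \notin activeValues$. Your additional remark about taking the \emph{final} search is a harmless clarification, since delete only returns false from within a single loop iteration's search result.
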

\begin{proof}
Assume delete($val$) returns false at line \ref{del-return-false}.
Let $\langle$ -, -, $node$, -, - $rmvd \rangle$ be the result returned by the last call to search($val$) before delete($val$) returns false.
Since delete($val$) returns false at line \ref{del-return-false}, keyInTrie($nod, val, rmvd$) returns false at line \ref{del-return-false}.
Then, $node$ is not a leaf node whose $label$ is $val$ or $rmvd$ is true.
By Lemma \ref{search-val-not-in-trie-lem}, there is a time during search($val$) when there is no leaf node that has $label$ $val$ is logically in the trie.
So, $val \notin activeValues$ at that time.
\end{proof}

Let $I$ be a Flag object created by delete($val$).
By Corollary \ref{update-no-child-cas-return-false-col}, there is no successful child CAS of $I$.
So, $activeValues$ is not changed by a child CAS of $I$.

Let $I$ be a Flag object that is created by delete($val$) during a loop iteration.
If a child CAS of $I$ on $I.pNode[0]$ succeeds, by Lemma \ref{update-return-true-lem}, delete($val$) returns true at the end of the loop iteration unless it crashes.
If a child CAS of $I$ on $I.pNode[0]$ succeeds, we say delete($val$) is {\it successful}.
By Lemma \ref{other-child-cas-if-no-aba-lem} and \ref{first-child-cas-if-no-aba-lem}, only the first child CAS of $I$ succeeds. 
In the following lemma, we show how the successful delete operations change $activeValues$. 

\begin{lemma} \label{successful-del-lem}
Let $I$ be a Flag object that is created by delete($val$).
Assume a child CAS of $I$ succeeds at time $T$.
Let $s$ be the set $activeValues$ just before $T$ and let $s'$ be the set $activeValues$ just after $T$.
Then, $val \in s$ and $s' = s - \{val\}$
\end{lemma}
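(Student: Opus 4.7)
The plan is to handle the two claims in Lemma \ref{successful-del-lem} separately, reusing the infrastructure built up for update operations.

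For $val \in s$: since delete($val$) called help($I$) and at least one child CAS of $I$ succeeded, by Corollary \ref{update-no-child-cas-return-false-col} the delete did not return false at line \ref{del-return-false} during the loop iteration that created $I$. Hence the search($val$) that preceded the creation of $I$ returned $\langle gp, p, node, -, -, rmvd \rangle$ with $node$ a leaf whose $label$ equals $val$ and with $rmvd = \textit{false}$. By Lemma \ref{update-node-in-trie-lem}, $node$ is logically in the trie just before $T$, so $val \in s$.

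For $s' = s - \{val\}$: first I would identify what the successful CAS does. By Lemma \ref{other-child-cas-if-no-aba-lem}, $T$ is the first child CAS of $I$, and it changes $gp.child[i]$ from $p$ to $nodeSibling$, where $nodeSibling$ is the value read at line \ref{del-read-sibling}. By Lemma \ref{update-p-reachable-lem}, $gp$ and $p$ are reachable just before $T$, and by Lemma \ref{no-wrong-child-change-lem} $p$'s two children just before $T$ are $node$ and $nodeSibling$ (the latter being the sibling read on line \ref{del-read-sibling}, existing and distinct from $node$ by Lemma \ref{child-num-lem} and Invariant \ref{prefix-inv}). By Lemma \ref{unreachable-oldChild-lem}, $p$ becomes unreachable immediately after $T$. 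By Lemma \ref{reachable-parent-lem}, $p$ was the unique reachable parent of $node$ just before $T$, and since by Invariant \ref{prefix-inv} the labels of $node$ and $nodeSibling$ disagree in their $(|p.label|+1)$-st bit so $node$ is not a descendant of $nodeSibling$, the node $node$ becomes unreachable immediately after $T$. Hence $val \notin s'$ (using Lemma \ref{no-same-node-lem} to rule out any other reachable leaf with $label = val$, since no node becomes newly reachable at $T$).

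To finish, I need to show that no leaf other than $node$ changes its logically-in-the-trie status at $T$. The CAS only rewrites one $child$ pointer of $gp$, so the only nodes whose reachability can change are those that lie strictly below $p$ but not below $nodeSibling$ (these become unreachable) and those newly exposed (none, since $nodeSibling$ was already reachable as a child of $p$, and all of its descendants remain reachable via the new pointer from $gp$). By Invariant \ref{prefix-inv}, the descendants of $p$ other than those of $nodeSibling$ are precisely the descendants of $node$; since $node$ is a leaf, the only leaf in that set is $node$ itself. Leaves' $info$ fields are untouched by this child CAS, so logical removal status is preserved for every leaf that remains reachable. Therefore the only change to $activeValues$ is the removal of $val$, yielding $s' = s - \{val\}$.

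The main obstacle is the bookkeeping in the last paragraph: rigorously arguing that no leaf outside the subtree rooted at $node$ is affected, and that $nodeSibling$'s subtree transfers intact from being below $p$ to being directly below $gp$. This rests entirely on Invariant \ref{prefix-inv}, Lemma \ref{no-wrong-child-change-lem} applied to $gp$ and $p$ (both in $F_I$), and Lemma \ref{reachable-parent-lem}; no new ideas beyond those already used for Lemma \ref{successful-ins-lem} are needed.
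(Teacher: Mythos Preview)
Your proposal is correct and follows essentially the same approach as the paper's proof: use Lemma~\ref{update-node-in-trie-lem} for $val \in s$, then argue via the flagging lemmas that the child CAS swings $gp.child$ from $p$ to $nodeSibling$, making exactly $p$ and the leaf $node$ unreachable while preserving $nodeSibling$'s subtree. One small citation slip: Lemma~\ref{update-p-reachable-lem} gives reachability of $p$ and $node$, not of $gp$; the paper establishes $gp$ reachable separately from $gp \in U_I$ (hence unmarked) together with Corollary~\ref{unmarked-reachable-col}, and you need this to conclude $nodeSibling$ remains reachable after $T$.
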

\begin{proof}
Let $\langle gp$, $p$, $node$, -, -, $rmvd \rangle$ be the result returned by the call to search($val$) that precedes the creation of $I$.
Since delete($val$) does not return false at line \ref{del-return-false} by Corollary \ref{update-no-child-cas-return-false-col}, $node$ is a leaf node whose $label$ is $val$ and $rmvd$ is false.
By Lemma \ref{update-node-in-trie-lem}, $node$ is logically in the trie just before $T$, so $val \in s$.

The successful child CAS of $I$ changes an element of $I.pNode[0].child$ from $I.oldChild[0]$ to $I.newChild[0]$.
Now, we show that $s' = s - \{val\}$.
The $nodeSibling$ variable is set to an element of $p.child$ at line \ref{del-read-sibling}.
When $I$ is initialized, $I.pNode[0]$, $I.oldChild[0]$ and $I.newChild[0]$ are set to $gp$, $p$ and $nodeSibling$ respectively at line \ref{del-set-info}.
Let $T'$ be the time when $p.info$ is read for the last time during the search($val$) that precedes the creation of $I$.
By Lemma \ref{no-wrong-child-change-lem}, since $p \in F_I$, $p.child$ is not changed between $T'$ and $T$.
Since $p$ is a parent of $node$ after $T'$ during search($val$) (by Lemma \ref{search-lem}), $node$ and $nodeSibling$ are children of $p$ at all times between $T'$ and $T$.
The child CAS of $I$ changes an element of $gp.child$ from $p$ to $nodeSibling$ at $T$.
By Lemma \ref{update-p-reachable-lem}, $p$ is reachable just before $T$.
So, just before $T$, $node$ and $nodeSibling$ are also reachable.
By Lemma \ref{no-same-node-lem}, no other leaf node whose $label$ is $val$ is reachable just before $T$.

In this case, $gp \notin F_I - U_I$.
By Lemma \ref{update-x-info-lem} and  Corollary \ref{unmarked-reachable-col}, $gp$ is reachable just before $T$.
So, just after $T$, $p$ and $node$ are not reachable, but $gp$ and $nodeSibling$ are reachable.
If $nodeSibling$ is an internal node, all elements of its $child$ field that are reachable just before $T$ are reachable just after $T$.
Thus, $s' = s - \{val\}$.
\end{proof}

A delete($val$) returns false executes a search($val$) that returns $\langle$ -, -, $node$, -, -, $rmvd \rangle$ where $node$ is not a leaf node whose $label$ is $val$, or $rmvd$ is true.
By Lemma \ref{search-val-not-in-trie-lem}, there is a time during the search when no leaf node whose $label$ is $val$ is logically in the trie, so $val \notin activeValues$ at that time.
This is the linearization point of search($val$) and delete($val$) that returns false.
By Corollary \ref{update-no-child-cas-return-false-col}, if delete($val$) returns false, there is no child CAS of $I$ where $I$ is created by delete($val$).

If the first child CAS of $I$ that is created by delete($val$) occurs, delete($val$) is linearized at that child CAS of $I$.
By Lemma \ref{first-child-cas-if-no-aba-lem}, the first child CAS of $I$ succeeds and by Lemma \ref{other-child-cas-if-no-aba-lem}, no other child CAS of $I$ succeeds.
By Lemma \ref{successful-del-lem}, $val \in activeValues$ just before the first child CAS of $I$ and $val$ is removed from $activeValues$ just after the first child CAS of $I$.
By Lemma \ref{update-return-true-lem}, if the first child CAS of $I$ occurs, delete($val$) returns true (unless it crashes). 

If delete($val$) does not return false and the first child CAS of $I$ that is created by delete($val$) does not occur, no linearization point is assigned to delete($val$).

\subsubsection{Correctness of the replace operation}
In this section, we show how the set $activeValues$ is changed by replace operations and, then we show how to linearize replace operations.

\begin{lemma} \label{unsuccessful-mov-lem}
If replace($val_d , val_i$) returns false, there is a time during replace($val_d, val_i$) when $val_d \notin activeValues$ or $val_i \in activeValues$.
\end{lemma}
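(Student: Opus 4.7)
The proof plan is a direct case analysis on the line where the replace operation returns false. Inspecting the pseudo-code of replace, the only lines that return false are line \ref{mov-return-false1} and line \ref{mov-return-false2}; the only other exit from the while loop is through line \ref{mov-call-help}, which by Lemma \ref{update-return-true-lem} returns true whenever it causes the operation to exit. So I would first observe that a false return of replace($val_d$, $val_i$) must occur at one of these two lines, and then handle the two cases separately, exactly mirroring Lemma \ref{unsuccessful-ins-lem} and Lemma \ref{unsuccessful-del-lem}.

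In the first case, replace returns false at line \ref{mov-return-false1}. Let $\langle -, -, node_d, -, -, rmvd_d\rangle$ be the tuple returned by the preceding search($val_d$) on line \ref{mov-call-search1}. Since keyInTrie($node_d, val_d, rmvd_d$) returned false, either $node_d$ is not a leaf whose $label$ is $val_d$, or $rmvd_d$ is true. Lemma \ref{search-val-not-in-trie-lem} then provides a time during search($val_d$) at which no leaf with $label$ $val_d$ is logically in the trie. That time lies within the execution of replace($val_d, val_i$), so at that instant $val_d \notin activeValues$, which establishes the claim.

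In the second case, replace returns false at line \ref{mov-return-false2}. Let $\langle -, -, node_i, -, -, rmvd_i\rangle$ be the tuple returned by search($val_i$) on line \ref{mov-call-search2}. Since keyInTrie($node_i, val_i, rmvd_i$) returned true, $node_i$ is a leaf with $label$ $val_i$ and $rmvd_i$ is false. The last post-condition of search, established in Lemma \ref{post-con-search-lem}, guarantees a time during search($val_i$) at which $node_i$ is logically in the trie. At that time $val_i \in activeValues$, and again the time falls within replace($val_d, val_i$), completing the proof.

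The whole argument is a straightforward reduction to the already-proved properties of the search routine, so there is no real obstacle; the main task is simply to enumerate the false-returning lines of replace and invoke Lemma \ref{search-val-not-in-trie-lem} and Lemma \ref{post-con-search-lem} on the relevant search, being careful to note that both searches are performed inside the replace call so the witnessing time is indeed during replace($val_d, val_i$).
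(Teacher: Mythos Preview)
Your proposal is correct and follows essentially the same approach as the paper's proof: a case split on whether replace returns false at line \ref{mov-return-false1} or line \ref{mov-return-false2}, invoking Lemma \ref{search-val-not-in-trie-lem} in the first case and Lemma \ref{post-con-search-lem} in the second. One minor remark: your appeal to Lemma \ref{update-return-true-lem} to justify that line \ref{mov-call-help} can only return true is unnecessary, since the pseudo-code at that line is simply ``if \ldots\ then return true''; but this does not affect correctness.
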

\begin{proof}
Assume replace($val_d , val_i$) returns false.
Then, we have two cases according to what line returns false.

Case 1: replace($val_d , val_i$) returns false at line \ref{mov-return-false1}.
Let $\langle$ -, -, $node_d$, -, -, $rmvd_d \rangle$ be the result returned by the last call to search($val_d$) on line \ref{mov-call-search1}.
So, keyInTrie($node_d, val_d, moved_d$) returns false at line \ref{mov-return-false1}.
Then, $node_d$ is not a leaf node containing $val_d$ or $rmvd_d$ is true.
By Lemma \ref{search-val-not-in-trie-lem}, there is a time during search($val_d$) when there is no leaf node that has $label$ $val_d$ and is logically in the trie.
So,  $val_d \notin activeValues$ at that time.

Case 2: replace($val_d , val_i$) returns false at line \ref{mov-return-false2}.
Let $\langle$ -, -, $node_i$, -, - $rmvd_i \rangle$ be the result returned by the last call to search($val_i$) on line \ref{mov-call-search2} before replace($val_d , val_i$) returns false.
So, keyInTrie($node_i, val_i, rmvd_i$) returns true.
Then, $node_i$ is a leaf node containing $val_i$ and $rmvd_i$ is false.
By Lemma \ref{post-con-search-lem}, $node_i$ is logically in the trie at some time during search($val_i$).
So, $val_i \in activeValues$ at that time.
\end{proof}

Let $I$ be a Flag object created by replace($val_d, val_i$).
By Corollary \ref{update-no-child-cas-return-false-col}, there is no successful child CAS of $I$.
So, $activeValues$ is not changed by a child CAS of~$I$.

Let $I$ be a Flag object that is created by replace($val_d, val_i$) during a loop iteration.
If, for all $i$, a child CAS of $I$ on $I.pNode[i]$ succeeds, by Lemma \ref{update-return-true-lem}, replace($val_d, val_i$) returns true at the end of the loop iteration unless it crashes.
If a child CAS of $I$ on $I.pNode[0]$ succeeds, we say replace($val_d, val_i$) is {\it successful}.
By Lemma \ref{other-child-cas-if-no-aba-lem} and \ref{first-child-cas-if-no-aba-lem}, only the first child CAS of $I$ on $I.pNode[0]$ succeeds. 

There are different cases of replace($val_d, val_i$) depending on what line creates $I$.
The general case of replace($val_d, val_i$) is when $I$ is created at line \ref{mov-set-info1} or \ref{mov-set-info2}.
The special cases of replace($val_d, val_i$) are when $I$ is created at line \ref{mov-set-info-sc1}, \ref{mov-set-info-sc23} or \ref{mov-set-info-sc4}.
First, we show how successful replace operations that perform the general case change $activeValues$.

\begin{lemma} \label{successful-general-mov-lem}
Let $I$ be a Flag object that is created by replace($val_d, val_i$) at line \ref{mov-set-info1} or \ref{mov-set-info2}.
Assume a child CAS of $I$ on $I.pNode[0]$ succeeds at time $T_0$ and a child CAS of $I$ on $I.pNode[1]$ succeeds at time $T_1$.
Let $s_0$ be the set $activeValues$ just before $T_0$, let $s_0'$ be the set $activeValues$ just after $T_0$, 
let $s_1$ be the set $activeValues$ just before $T_1$ and let $s_1'$ be the set $activeValues$ just after $T_1$.
Then, $val_d \in s_0$, $val_i \notin s_0$, $s_0' = s_0 \cup \{val_i\} - \{val_d\}$ and $s_1' = s_1$.
\end{lemma}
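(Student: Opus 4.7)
The plan is to follow the same template as Lemmas \ref{successful-ins-lem} and \ref{successful-del-lem}, separating the analysis of $T_0$ (where the insertion of $val_i$ and the logical removal of $val_d$ take effect simultaneously) from the analysis of $T_1$ (where only a physical, non-logical change occurs). Let $\langle gp_d, p_d, node_d, -, -, rmvd_d\rangle$ and $\langle -, p_i, node_i, -, -, rmvd_i\rangle$ be the tuples returned by the two preceding searches. The pseudo-code then pins down $I.pNode[0]=p_i$, $I.oldChild[0]=node_i$, $I.newChild[0]=newNode_i$ (the node built at line \ref{mov-call-create-node}), $I.pNode[1]=gp_d$, $I.oldChild[1]=p_d$, $I.newChild[1]=nodeSibling_d$ and $I.rmvLeaf=node_d$, and Lemma \ref{change-child-after-flag-lem} ensures $T_0 < T_1$.

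For $val_d \in s_0$, I would apply Lemma \ref{update-node-in-trie-lem} to $node_d$: since the replace did not return false at line \ref{mov-return-false1}, $node_d$ is a leaf with label $val_d$, and $T_0$ is the first successful child CAS of $I$, so $node_d$ is logically in the trie just before $T_0$. For $val_i \notin s_0$, I would mimic the argument from Lemma \ref{successful-ins-lem}: since the replace did not return false at line \ref{mov-return-false2}, either $node_i$ is not a leaf containing $val_i$, or $rmvd_i$ is true. By Lemmas \ref{update-p-reachable-lem}, \ref{search-lem} and \ref{no-wrong-child-change-lem}, $p_i$ is reachable and a parent of $node_i$ just before $T_0$, so Invariant \ref{prefix-inv} forces any reachable leaf containing $val_i$ to be $node_i$ itself; in the first subcase this cannot hold, and in the second $node_i$ was logically removed by Lemma \ref{post-con-search-lem} and, being permanent, remains so.

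For $s_0' = s_0 \cup \{val_i\} - \{val_d\}$, the plan is to describe exactly what changes at $T_0$. The only child pointer updated is $p_i.child[k]$, which goes from $node_i$ to $newNode_i$; the subtree of $newNode_i$ consists of a new leaf with label $val_i$ together with a shallow copy $copy_i$ of $node_i$. In the internal subcase ($I$ created at line \ref{mov-set-info1}), $node_i \in F_I$, so Lemma \ref{no-wrong-child-change-lem} freezes $node_i.child$ between its reading at line \ref{mov-read-node-info} and $T_0$, making $copy_i$'s children coincide with $node_i$'s at $T_0$; in the leaf subcase, $copy_i$ is a fresh leaf with the same label as $node_i$. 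Thus the set of leaf labels reachable through $p_i.child[k]$ gains exactly $\{val_i\}$ and loses nothing. On the logical side, by Observation \ref{flag-leaf-move-obs} $node_d$ is flagged by $I$ before the first child CAS of $I$, and by definition $node_d$ becomes logically removed exactly at $T_0$; no other leaf changes logical status. Combining both effects yields the claim.

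Finally, for $s_1' = s_1$, I would apply Lemma \ref{update-node-unreachable-lem} to conclude that $p_d$ and $node_d$ become unreachable immediately after $T_1$. Every other leaf previously reachable through $p_d$ is now reachable through $gp_d$ via $nodeSibling_d$ (using Lemma \ref{no-wrong-child-change-lem} to pin down $p_d$'s children as $\{node_d, nodeSibling_d\}$ just before $T_1$), so no value disappears from $activeValues$ except $val_d$, which was already absent from $s_1$ since $node_d$ was logically removed at $T_0 < T_1$. The main obstacle I anticipate lies in the third paragraph: one must carefully handle the awkward case where $node_d$ lies in the subtree of $node_i$, so that after $T_0$ the shallow copy $copy_i$ still routes to $node_d$. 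The key observation is that $node_d$'s logical-removed flag excludes $val_d$ from $s_0'$ regardless of which path reaches it, and that the shallow-copy property, combined with Lemma \ref{no-wrong-child-change-lem}, prevents any spurious label from appearing or vanishing under the substitution of $copy_i$ for $node_i$.
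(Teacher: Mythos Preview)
Your proposal is correct and follows essentially the same approach as the paper: you invoke Lemma \ref{update-node-in-trie-lem} for $val_d \in s_0$, mirror the argument of Lemma \ref{successful-ins-lem} for $val_i \notin s_0$ and for the effect of the first child CAS, and use Lemma \ref{update-node-unreachable-lem} for the second child CAS. One small correction: Lemma \ref{change-child-after-flag-lem} does not establish $T_0 < T_1$; that ordering follows instead from the loop at line \ref{help-start-change-child} (any call to help($I$) performs its child CAS on $I.pNode[0]$ before its child CAS on $I.pNode[1]$) together with Lemmas \ref{first-child-cas-if-no-aba-lem} and \ref{other-child-cas-if-no-aba-lem}.
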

\begin{proof}
Let $\langle$ -, $p_i$, $node_i$, -, -, $rmvd_i \rangle$ be the result returned by the call to search($val_i$) on line \ref{mov-call-search2} that precedes the creation of $I$.
When $I$ is initialized, $I.pNode[0]$, $I.oldChild[0]$ and $I.newChild[0]$ are set to $p_i$, $node_i$ and the new node that is created at line \ref{mov-call-create-node}.
The child CAS of $I$ on $I.pNode[0]$ at $T_0$ behaves in the same way as a successful child CAS of an insert($val_i$) operation.
By a similar argument to Lemma \ref{successful-ins-lem}, only $node_i$ becomes unreachable just after $T_0$, $val_i \in s_0'$ and $val_i \notin s_0$.

Now, we show that $val_d \in s_0$. 
Let $\langle gp_d$, $p_d$, $node_d$, -, -, $rmvd_d \rangle$ be the result returned by the call to search($val_d$) on line \ref{mov-call-search1} that precedes the creation of $I$.
Since replace($val_d, val_i$) does not return false at line \ref{mov-return-false1}, $node_d$ is a leaf node whose $label$ is $val_d$ and $rmvd_d$ is false.
By Lemma \ref{update-node-in-trie-lem}, $node_d$ is logically in the trie just before $T_0$, so $val_d \in s_0$.
Next, we show that, $val_d \notin s_0'$.

By the pseudo-code, $I.rmvLeaf$ is $node_d$.
By the definition of logically removed, $node_d$ is logically removed just after $T_0$ and $node_d$ is not logically in the trie at any time after $T_0$.
Since $node_d$ is reachable just before $T_0$ (by Lemma \ref{update-node-reachable-lem}), no other leaf node whose $label$ is $val_d$ is reachable just before $T_0$ by Lemma \ref{no-same-node-lem}.
Since the condition at line \ref{mov-check-gc-condition} is true, $node_d \ne node_i$.
So, $node_d$ is reachable just after $T_0$ (since $I.oldChild[0] \ne node_d$).
So, no other leaf node whose $label$ is $val_d$ is reachable just after $T_0$. 
So, $s_0' = s_0 - \{val_d\} \cup \{val_i\}$ (since $val_i \notin s_0$ and $val_i \in s_0'$). 

Now, we show $s_1 = s_1'$.
By the definition of logically removed, $node$ is logically removed at all times after $T_0$. 
By Lemma \ref{update-node-reachable-lem}, $node_d$ is reachable just before $T_1$.
By Lemma \ref{no-same-node-lem}, no other leaf node whose $label$ is $val_d$ is reachable just before $T_1$.
So, $val_d \notin s_1$. 

Let $nodeSibling$ be the node that is read as an element of $p_d.child$ at line \ref{mov-read-sibling}.
Since $I$ is created at line \ref{mov-set-info1} or \ref{mov-set-info2}, $I.pNode[1]$, $I.oldChild[1]$ and $I.newChild[1]$ are set to $gp_d$, $p_d$ and $nodeSibling$, respectively. 
The child CAS of $I$ on $I.pNode[1]$ is the same as a successful child CAS of $I'$ where $I'$ is created by delete($val_d$) at line \ref{del-set-info}.
However, by the definition of logically removed, $node_d$ is logically removed just after $T_0$.
Since the condition at line \ref{mov-check-gc-condition} is true, $node_d \ne node_i$.
So, an element of $p_d.child$ is not changed from $node_d$ to another value at $T_0$.  
By a similar argument to Lemma \ref{successful-del-lem}, only $node_d$ and $p_d$ become unreachable just after $T_1$ and no other leaf node whose $label$ is $val_d$ becomes reachable just after $T_1$.
Since $node_d$ is a leaf node whose $label$ is $val_d$, $node_d$ is the only leaf node that becomes unreachable just after $T_1$ and $val_d \notin s_1$, $s_1 = s_1'$.
\end{proof}

Next, we show how successful replace operations that perform special cases of the replace operation change $activeValues$.

\begin{lemma} \label{successful-special-mov-lem}
Let $I$ be a Flag object that is created by replace($val_d, val_i$) at line \ref{mov-set-info-sc1}, \ref{mov-set-info-sc23} or \ref{mov-set-info-sc4}.
Assume a child CAS of $I$ succeeds at time $T$.
Let $s$ be the set $activeValues$ just before $T$ and let $s'$ be the set $activeValues$ just after $T$.
Then, $val_d \in s$, $val_i \notin s$ and $s' = s \cup \{val_i\} - \{val_d\}$.
\end{lemma}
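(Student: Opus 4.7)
My plan is to mirror the proof of Lemma \ref{successful-general-mov-lem}, but handle the three special cases separately, each of which performs its change with a single child CAS rather than two. The proof divides into two generic observations (establishing $val_d\in s$ and $val_i\notin s$) followed by a case analysis that computes $s'$ by identifying exactly which leaves become unreachable and which become reachable at $T$.

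First, for the two generic facts, I would argue as follows. Let $\langle gp_d,p_d,node_d,-,-,rmvd_d\rangle$ be the result of the search($val_d$) on line \ref{mov-call-search1} preceding $I$ and $\langle -,p_i,node_i,-,-,rmvd_i\rangle$ the result of search($val_i$) on line \ref{mov-call-search2}. Since replace did not return false at line \ref{mov-return-false1}, $node_d$ is a leaf with $label=val_d$ and $rmvd_d=\mathit{false}$, so Lemma \ref{update-node-in-trie-lem} gives that $node_d$ is logically in the trie just before $T$, hence $val_d\in s$. Since replace did not return false at line \ref{mov-return-false2}, either $node_i$ is not a leaf containing $val_i$ or $rmvd_i$ is true. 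Lemma \ref{update-p-reachable-lem} shows $p_i$ is reachable just before $T$, and Lemmas \ref{search-lem} and \ref{no-wrong-child-change-lem} (together with $p_i\in F_I$) give that $p_i.child$ still points to $node_i$ along the bit of $val_i$ just before $T$; then, exactly as in Lemma \ref{successful-ins-lem}, Invariant \ref{prefix-inv} plus Lemma \ref{post-con-search-lem} forces no leaf with label $val_i$ to be logically in the trie just before $T$, so $val_i\notin s$.

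For $s'=s\cup\{val_i\}-\{val_d\}$, I would split into the three cases according to which line created $I$, and in each case apply Lemma \ref{no-wrong-child-change-lem} to conclude that the $child$ fields of the flagged nodes have their expected contents just before $T$. In the line \ref{mov-set-info-sc1} case, $node_i=node_d$ is the leaf containing $val_d$ and the CAS replaces it by a fresh leaf with label $val_i$, so exactly $val_d$ disappears and exactly $val_i$ appears. In the line \ref{mov-set-info-sc23} case, $p_d\in F_I$ guarantees $p_d.child$ is unchanged from the read on line \ref{mov-read-sibling} to $T$, so the children of $p_d$ just before $T$ are $node_d$ and $nodeSibling_d$; the CAS replaces $p_d$ by a new internal node whose children are a fresh leaf with $val_i$ and $nodeSibling_d$, so only the single $val_d$-leaf disappears and only the new $val_i$-leaf appears, while the $nodeSibling_d$ subtree is preserved. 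In the line \ref{mov-set-info-sc4} case, both $gp_d, p_d\in F_I$, so Lemma \ref{no-wrong-child-change-lem} pins the two relevant $child$ arrays just before $T$; the subtree at $gp_d=node_i$ then has leaves equal to $\{node_d\}$ together with the leaves under $nodeSibling_d$ and $pSibling_d$, and the CAS replaces $gp_d$ by $newNode_i$ whose leaves are a fresh leaf with $val_i$ together with the leaves under $nodeSibling_d$ and $pSibling_d$. Combined with Lemma \ref{reachable-parent-lem} (the replaced node has a unique reachable parent) and Lemma \ref{no-same-node-lem} (no other leaf with label $val_d$ or $val_i$ exists elsewhere), this yields $s'=s\cup\{val_i\}-\{val_d\}$.

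The main obstacle is the bookkeeping in the line \ref{mov-set-info-sc4} case: the replacement subtree is built in two steps on lines \ref{mov-create-node2} and \ref{mov-create-node3} using pointers $nodeSibling_d$ and $pSibling_d$ read at distinct times from different nodes, and I must show that the sets of leaves reachable through $nodeSibling_d$ and $pSibling_d$ are exactly preserved across $T$. This requires carefully applying Lemma \ref{no-wrong-child-change-lem} at both $p_d$ and $gp_d$, and using Corollary \ref{unmarked-reachable-col} to confirm that $nodeSibling_d$ and $pSibling_d$ remain the genuine reachable children before $T$ so that no leaves are silently lost or duplicated.
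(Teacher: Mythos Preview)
Your proposal is correct and follows essentially the same approach as the paper: establish $val_d\in s$ via Lemma \ref{update-node-in-trie-lem}, establish $val_i\notin s$ via Lemma \ref{update-p-reachable-lem} together with Lemmas \ref{search-lem}, \ref{no-wrong-child-change-lem} and Invariant \ref{prefix-inv}, then do a three-way case analysis using Lemma \ref{no-wrong-child-change-lem} to pin down the children of the flagged nodes just before $T$. One small caution: your parenthetical ``$p_i\in F_I$'' is not literally true when $I$ is created at line \ref{mov-set-info-sc1} (there $F_I=\{p_d\}$), but the paper handles this via Lemma \ref{mov-sc1-lem}, which is already folded into Lemma \ref{update-p-reachable-lem} that you cite.
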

\begin{proof}
Let $\langle gp_d$, $p_d$, $node_d$, -, -, $rmvd_d \rangle$ be the result returned by the call to search($val_d$) on line \ref{mov-call-search1} that precedes the creation of $I$. 
Since replace($val_d, val_i$) did not return false at line \ref{mov-return-false1}, $node_d$ is a leaf node whose $label$ is $val_d$ and $rmvd_d$ is false.
By Lemma \ref{update-node-in-trie-lem}, $node_d$ is logically in the trie just before $T$.
So, $val_d \in s$.

Now, we show no leaf node containing $val_i$ is reachable just before $T$.
Let $\langle$ -, $p_i$, $node_i$, -, -, $rmvd_i \rangle$ be the result returned by the call to search($val_i$) on line \ref{mov-call-search2} that precedes the creation of $I$.
By Lemma \ref{search-lem}, $p_i.label$ is a prefix of $val_i$ and if $node_i$ is an internal node, $node_i.label$ is not a prefix of $val_i$.
So, by Invariant \ref{prefix-inv}, $p_i$ is the lowest internal node whose $label$ is a prefix of $val_i$ that search($val_i$) visits.
By Lemma \ref{search-lem}, $p_i$ is a parent of $node_i$ at some time during search($val_i$).
By Lemma \ref{no-wrong-child-change-lem}, $p_i$ is a parent of $node_i$ just before $T$.
By Lemma \ref{update-p-reachable-lem}, $p_i$ is reachable just before $T$. 

Since replace($val_d, val_i$) did not return false at line \ref{mov-return-false2}, $node_i$ is not a leaf node containing $val_i$ or $rmvd_i$ is true.
If $rmvd_i$ is true, by Lemma \ref{post-con-search-lem}, $node_i$ is logically removed at some time during search($val_i$).
If $node_i$ is a leaf node whose $label$ is $val_i$, since $rmvd_i$ is true, $node_i$ is logically removed just before $T$.
Since $p_i$ is a reachable parent of $node_i$ just before $T$, 
$p_i$ is not an ancestor of any other leaf node containing $val_i$ just before $T$.
If $node_i$ is a leaf node and $node_i.label \ne val_i$, since $p_i$ is a reachable parent of $node_i$ just before $T$ and $p_i.label$ is a prefix of $val_i$ (by Invariant \ref{prefix-inv}), 
$p_i$ is not an ancestor of any leaf node containing $val_i$ just before $T$.
If $node_i$ is an internal node, since $p_i$ is a reachable parent of $node_i$ just before $T$ and $node_i.label$ is not a prefix of $val_i$ (by Lemma \ref{search-lem}), 
$p_i$ is not an ancestor of any leaf node containing $val_i$ just before $T$.
Thus, no leaf node containing $val_i$ is logically in the trie just before $T$ and $val_i \notin s$.

The successful child CAS of $I$ changes an element of $I.pNode[0].child$ from $I.oldChild[0]$ to $I.newChild[0]$.
There are different special cases.
We show that $s' = s - \{val_d\} \cup \{val_i\}$ for each case.

Case 1: $I$ is created at line \ref{mov-set-info-sc1}.
Then, $node_d = node_i$.
When $I$ is initialized, $I.pNode[0]$ $I.oldChild[0]$ and $I.newChild[0]$ are set to $p_d$, $node_i$ and a new leaf node whose $label$ is $val_i$.
Since the operation does not return at line \ref{mov-return-false1}, $node_d = node_i$ is a leaf node and $node_d.label = val_d$.
By Lemma \ref{update-node-reachable-lem}, $node_i$ is reachable just before $T$.
By Lemma \ref{no-same-node-lem}, no other leaf node containing $val_d$ is reachable just before $T$ (since $node_i = node_d$).
So, $p_d$ and the new leaf node whose $label$ is $val_i$ are reachable just after $T$ and $node_i = node_d$ becomes unreachable at $T$.
Since $node_i = node_d$ is a leaf node whose $label$ is $val_d$, $s' = s - \{val_d\} \cup \{val_i\}$.

Case 2: $I$ is created at line \ref{mov-set-info-sc23}. 
Let  $nodeSibling$ be an element of $p_d.child$ that is read at line \ref{mov-read-sibling}.
Then, $newNode$ be the node that is created at line \ref{mov-create-node1}.
Then, $newNode.child$ are initially a new leaf node whose $label$ is $val_i$ and $nodeSibling$.
When $I$ is created, $I.pNode[0]$, $I.oldChild[0]$ and $I.newChild[0]$ are set to $gp_d$, $p_d$ and $newNode$. 
Since the operation does not return at line \ref{mov-return-false1}, $node_d$ is a leaf node whose $label$ is $val_d$.
Then, $nodeSibling$ is set to an element of $p_d.child$ that is not $node_d$ at line \ref{mov-read-sibling} (by Invariant \ref{prefix-inv}).
So, $node_d$ and $nodeSibling$ are children of $p_d = node_i$ just before $T$ (by Lemma \ref{no-wrong-child-change-lem}).
By Lemma \ref{update-p-reachable-lem}, $node_d$ is reachable just before $T$.
Since $node_d$ is a leaf node containing $val_d$, by Lemma \ref{no-same-node-lem}, no other leaf node containing $val_d$ is reachable just before $T$.
By Lemma \ref{reachable-parent-lem}, $gp_d$ is the only reachable parent of $p_d$ just before $T$ and $p_d$ is the only reachable parent of $node_d$ just before $T$.
At $T$, an element of $gp_d.child$ is changed from $p_d$ to $newNode$.
So, just after $T$, $p_d$ and $node_d$ become unreachable and $newNode$ and a new Leaf node whose $label$ is $val_i$ become reachable.
Since $nodeSibling$ is a child of $newNode$ just after $T$, $nodeSibling$ is reachable just after $T$.
Thus, $s' = s - \{val_d\} \cup \{val_i\}$. 

Case 3: $I$ is created at line \ref{mov-set-info-sc4}.
Then, $gp_d = node_i$.
Let  $nodeSibling$ be an element of $p_d.child$ that is read at line \ref{mov-read-sibling}, $pSibling$ be an element of $gp_d$ that is read at line \ref{mov-read-pSibling}.
Then, $newChild$ is the new node that is created at line \ref{mov-create-node2} and $newNode$ is the Internal node that is created at line \ref{mov-create-node3}.
Initially, the children of $newChild$ are $nodeSibling$ and $pSibling$.
Initially, the children of $newNode$ are a new Leaf node whose $label$ is $val_i$ and $newChild$.
When $I$ is created, $I.pNode[0]$, $I.oldChild[0]$ and $I.newChild[0]$ are set to $p_i$, $node_i$ and $newNode$.
Since $p_i \in F_I$, by Lemma \ref{update-x-info-lem}, $p_i$ is not marked at all times between the last time $p_i.info$ is read during search($val_i$) and $T$.
By Corollary \ref{unmarked-reachable-col}, $p_i$ is reachable at all times  between the last time $p_i.info$ is read during search($val_i$) and $T$. 
By Lemma \ref{search-lem} and \ref{no-wrong-child-change-lem}, $p_i$ is a parent of $node_i$ just before $T$.
So, $node_i$ is reachable just before $T$.
Since $gp_d \in F_I$, the children of $gp_d = node_i$ are $p_d$ and $pSibling$ just before $T$ (by Lemma \ref{no-wrong-child-change-lem}).
Since $p_d \in F_I$, the children of $p_d$ are $node_d$ and $nodeSibling$ just before $T$ (by Lemma \ref{no-wrong-child-change-lem}).
By Lemma \ref{update-p-reachable-lem}, $p_i$, $node_i = gp_d$, $p_d$, $pSibling$, $node_d$ and $nodeSibling$ are reachable just before $T$.
Since $node_d$ is reachable just before $T$, no other leaf node containing $val_d$ is reachable just before $T$ (by Lemma \ref{no-same-node-lem}).
By Lemma  \ref{reachable-parent-lem}, $p_i$ is the only reachable parent of $gp_d = node_i$ just before $T$, $gp_d$ is the only reachable parent of $p_d$ and $pSibling$ just before $T$ and 
$p_d$ is the only reachable parent of $node_d$ and $nodeSibling$ just before $T$.
At $T$, an element of $p_i.child$ is changed from $gp_d = node_i$ to $newNode$.
Just after $T$, $node_i = gp_d$, $p_d$ and $node_d$ become unreachable and $newNode$, $newChild$, a new Leaf whose $label$ is $val_i$ are reachable.
Since $nodeSibling$ and $pSibling$ are children of $newNode$ just after $T$, $nodeSibling$ and $pSibling$ are reachable just after $T$.
Since $node_d$ is the only reachable leaf node containing $val_d$ just before $T$, there is no reachable leaf node containing $val_d$ just after $T$.
Thus, $s' = s - \{val_d\} \cup \{val_i\}$. 
\end{proof}

A replace($val_d$, $val_i$) returns false at line \ref{mov-return-false1} executes 
a search($val_d$) that returns $\langle$ -, -, $node_d$, -, -, $rmvd_d \rangle$ where $node_d$ is not a leaf node whose $label$ is $val_d$, or $rmvd_d$ is true.
By Lemma \ref{search-val-not-in-trie-lem}, there is a time during the search($val_d$) when no leaf node whose $label$ is $val_d$ is logically in the trie, so $val_d \notin activeValues$ at that time.
This is the linearization point of search($val_d$) and delete($val_d$) that returns false at line \ref{mov-return-false1}.
A replace($val_d$, $val_i$) returns false at line \ref{mov-return-false2} executes 
a search($val_i$) that returns $\langle$ -, -, $node_i$, -, -, $rmvd_i \rangle$ where $node_i$ is a leaf node whose $label$ is $val_i$ and $rmvd_i$ is false.
By the last post-condition of search, there is a time during the search($val_i$) when $node_i$ is logically in the trie, so $val_i \in activeValues$ at that time.
This is the linearization point of search($val_d$) and delete($val_d$) that returns false at line \ref{mov-return-false2}.
By Corollary \ref{update-no-child-cas-return-false-col}, if replace($val_d$, $val_i$) returns false, there is no child CAS of $I$ where $I$ is created by replace($val_d$, $val_i$).

Let $I$ be a Flag object that is created by replace($val_d$, $val_i$).
If the first child CAS of $I$ on $I.pNode[0]$ occurs, replace($val_d$, $val_i$) is linearized at that child CAS of $I$.
By Lemma \ref{first-child-cas-if-no-aba-lem}, the first child CAS of $I$ on $I.pNode[0]$ succeeds and by Lemma \ref{other-child-cas-if-no-aba-lem}, no other child CAS of $I$ on $I.pNode[0]$ succeeds.
By Lemma \ref{successful-general-mov-lem} and \ref{successful-special-mov-lem}, $val_d \in activeValues$ and $val_i \notin activeValues$ just before the first child CAS of $I$  on $I.pNode[0]$ and 
$val_d$ is removed from $activeValues$ and $val_i$ is added to $activeValues$ just after the first child CAS of $I$ on $I.pNode[0]$.
By Lemma \ref{update-return-true-lem}, if the first child CAS of $I$ on $I.pNode[0]$ occurs, replace($val_d$, $val_i$) returns true (unless it crashes). 
If $I$ is created at line \ref{mov-set-info1} or \ref{mov-set-info2} and the first child CAS of $I$ on $I.pNode[1]$ occurs, 
$activeValues$ is not changed at the first child CAS of $I$ on $I.pNode[1]$ (by Lemma \ref{successful-general-mov-lem}).

In Section \ref{correctness-update-sec}, we have shown that $activeValues$ is changed correctly just after the linearization points of update operations.
By the definition of logically in the trie, $activeValues$ can be changed only by successful child CAS steps.
By Lemma \ref{other-child-cas-if-no-aba-lem} and \ref{first-child-cas-if-no-aba-lem}, only the first child CAS of $I$ on each element of $I.pNode$ succeeds.
We show that $activeValues$ is changed at the first child CAS of $I$ on $I.pNode[0]$ by Lemma \ref{successful-ins-lem}, \ref{successful-del-lem}, \ref{successful-general-mov-lem} and \ref{successful-special-mov-lem}.
If $I$ is created at line \ref{mov-set-info1} or \ref{mov-set-info2}, we show that the first child CAS of $I$ on $I.pNode[1]$ does not change $activeValues$.
So, $activeValues$ is only changed by the child CAS of $I$ on $I.pNode[0]$.
Since $activeValues$ is changed correctly, the find operation returns correct results at its linearization point, according to Lemma \ref{post-con-search-lem} and \ref{search-val-not-in-trie-lem}.

\subsection{Progress} \label{correctness-progress-lem}
By the lemmas in previous sections, each operation returns the same result as it would if the operations were done in the order of their linearization points.
So, operations are linearized correctly.
Now, we show the implementation is non-blocking.

First, we show that the search operation is wait-free.
 
\begin{lemma} \label{search-wf-lem}
The search operation is wait-free.    
\end{lemma}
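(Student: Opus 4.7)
The plan is to show that the search loop executes at most $\ell+1$ iterations, where $\ell$ is the bit length of keys; since each iteration performs only a constant number of shared memory accesses, this yields an $O(\ell)$ step bound independent of the behavior of other processes.

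First, I would argue that the label length of $node$ strictly increases from one iteration of the loop to the next. At the start of each iteration, $node$ is an internal node whose $label$ is a prefix of $v$ (otherwise the loop would have exited at line \ref{search-check-condition}). The iteration sets $p \leftarrow node$ at line \ref{search-set-p} and then $node \leftarrow p.child[k]$ at line \ref{search-set-node} for some $k$. By Invariant \ref{prefix-inv}, at the moment this read occurs, the value $y$ returned satisfies that $(p.label)\cdot k$ is a prefix of $y.label$, so $|y.label| \geq |p.label|+1$. Hence, after each iteration $|node.label|$ is strictly greater than at the start of the iteration.

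Next, I would combine this with the bound on label lengths. The $label$ field of any reachable node is a prefix of some key in $U$, and any node ever read from a $child$ pointer satisfies Invariant \ref{prefix-inv} with its parent, so by induction its label length is at most $\ell$. Therefore $|node.label|$ can increase at most $\ell$ times before the loop must exit (either because $node$ becomes a Leaf, or because $node.label$ ceases to be a prefix of $v$). This bounds the number of loop iterations by $\ell+1$. The remaining work in search, including the post-loop call to \textbf{logicallyRemoved}, consists of a constant number of shared memory reads and comparisons. Therefore search terminates within $O(\ell)$ of its own steps regardless of the scheduling or failures of other processes, which is precisely the wait-free property.

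The main subtlety to be careful about is that child pointers may be modified concurrently by update operations, so the value we read in $p.child[k]$ may not correspond to the ``current'' child at any single global time. The essential observation is that Invariant \ref{prefix-inv} holds at all times, so whatever pointer we happen to read, the node it points to has a label strictly extending $p.label$. Thus the strict label-length growth argument, and hence the $O(\ell)$ step bound, is immune to concurrent modifications of the trie.
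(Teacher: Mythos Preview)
Your proposal is correct and follows essentially the same approach as the paper's proof: use Invariant~\ref{prefix-inv} to show that $|node.label|$ strictly increases in each loop iteration, and then bound the number of iterations by the maximum label length~$\ell$. Your version is more detailed---in particular you explicitly address why concurrent modifications do not interfere with the argument---but the core reasoning is identical.
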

\begin{proof}
Let $\ell$ be the length of the keys in $U$.
By Invariant \ref{prefix-inv}, length of $node.label$ increases by at least one in each iteration of loop.
Since $label$s of nodes have length at most $\ell$, there are at most $\ell$ iterations.
\end{proof}

Initially, the children of $root$ are two leaf nodes whose $label$s are $00...0$ and $11...1$.
The following lemma shows that two leaf nodes whose $label$s are $00...0$ and $11...1$ are logically in the trie at any time.

\begin{lemma} \label{dummy-lem}
Two leaf nodes whose $label$s are $00...0$ and $11...1$ are logically in the trie at all times.
\end{lemma}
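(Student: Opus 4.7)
The plan is to prove the lemma by induction on the sequence of state-changing steps. The induction hypothesis will be: at every time $T$, there exist reachable leaves $\ell_0$ and $\ell_1$ with $\ell_0.label = 0^\ell$ and $\ell_1.label = 1^\ell$, and neither of these leaves is logically removed at $T$. The base case follows directly from line \ref{initial-root}: $root$ is initialized with two children that are fresh leaves labelled $0^\ell$ and $1^\ell$, and no Flag object yet exists, so neither leaf is logically removed.

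For the inductive step, the hypothesis could only fail at one of two kinds of step: (a) the $info$ field of a dummy leaf is set to a Flag, making it logically removed; or (b) a successful child CAS causes every reachable leaf carrying some dummy label to become unreachable. For (a), Observation \ref{flag-leaf-move-obs} and Lemma \ref{flag-leaf-lem} imply that a leaf is flagged only when it equals $I.rmvLeaf$ for some Flag $I$ created inside a replace($v_d, v_i$) at line \ref{mov-set-info1} or \ref{mov-set-info2}, in which case $I.rmvLeaf = node_d$. By Lemma \ref{update-node-in-trie-lem} (and the fact that replace proceeds past line \ref{mov-return-false1} only when keyInTrie holds) $node_d$ is a leaf with $node_d.label = v_d$. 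Since $v_d \in U$ and $U$ excludes $0^\ell$ and $1^\ell$ by assumption, $node_d$ is not a dummy.

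For (b), by Lemma \ref{unreachable-oldChild-lem} the only node that directly becomes unreachable at a successful child CAS of $I$ is $I.oldChild[i]$, and any other newly unreachable node must be a descendant of it. I would then go through each update operation and check by inspection of the pseudo-code that the dummy labels are preserved: insert($v$) with $v\ne 0^\ell,1^\ell$ replaces $node$ by an internal node that has, among its children, a new copy of $node$ built at line \ref{create-copy}, which carries the identical label, so if $node$ happened to be a dummy a leaf with the same label immediately becomes reachable; delete($v$) with $v\ne 0^\ell,1^\ell$ removes the leaf $node$ with label $v$ and its parent $p$, replacing $p$ by $nodeSibling$, whose entire subtree (and thus every dummy leaf within it) remains reachable; the general case of replace is the composition of these two arguments on the two child CASs.

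The special cases of replace are the main obstacle, since several new internal nodes may be constructed and spliced in simultaneously. For each of them I would trace the children of the newly constructed node(s): in case \ref{mov-sc3}, $node_i = node_d$ is a leaf with label $v_d$, which is not a dummy label, so nothing is lost; in cases \ref{mov-sc4} and \ref{mov-sc5} the new internal node built at line \ref{mov-create-node1} (respectively lines \ref{mov-create-node2} and \ref{mov-create-node3}) has among its children the surviving siblings $nodeSibling_d$ and, where relevant, $pSibling_d$, so every leaf that was a descendant of those siblings remains reachable after the CAS, while the only leaf actually removed is $node_d$ (label $v_d$), again not a dummy. Putting the cases together, no step of the execution can violate the hypothesis, so the invariant holds at all times. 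The bookkeeping for the replace special cases is the delicate part, but because every newly created leaf carries the operation's input value (never a dummy label) and every existing internal node is either retained or replaced by a copy with the same label, the labels $0^\ell$ and $1^\ell$ are preserved throughout.
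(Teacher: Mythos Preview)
Your proposal is correct, and the underlying idea---an induction showing that no step removes a dummy-labelled leaf from the set of leaves that are logically in the trie---is the same as the paper's. The difference is one of packaging rather than substance.

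The paper's proof is much shorter because the case analysis you carry out in part (b) has already been done, once and for all, in Lemmas~\ref{successful-ins-lem}, \ref{successful-del-lem}, \ref{successful-general-mov-lem} and~\ref{successful-special-mov-lem}. Those lemmas establish that a successful child CAS of an insert($v$), delete($v$), or replace($v_d,v_i$) changes $activeValues$ exactly by adding or removing the operation's argument(s), and nothing else. Since by assumption the arguments are never $0^\ell$ or $1^\ell$, the dummy labels never leave $activeValues$; this is the entire proof. Your treatment of part~(a) (logical removal via $I.rmvLeaf$) is likewise subsumed: the $activeValues$ abstraction already accounts for logical removal, so no separate argument about leaf flagging is needed. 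What you gain from your direct approach is self-containment; what the paper gains is brevity and a cleaner separation of concerns, since the heavy case analysis of the replace special cases is done once (in Lemma~\ref{successful-special-mov-lem}) rather than revisited here.
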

\begin{proof}
Initially, $root.child$ is set to two leaf nodes whose $label$s are $00...0$ and $11...1$.
So, initially, $00...0$ and $11...1$ are in the $activeValues$ set.
By the definition, only the successful child CAS steps change the $activeValues$ set.
By the precondition of update operations, if insert($val$), delete($val$) is called, $val$ is not equal to $00...0$ or $11...1$.
By Lemma \ref{update-no-child-cas-lem}, \ref{successful-ins-lem} and \ref{successful-del-lem}, 
no child CAS of $I$ that $I$ is created by an insert operation or a delete operation removes $00...0$ or $11...1$ from $activeValues$.
By the precondition of replace operation, if replace($val$, $val'$) is called, $val$ and $val'$ are not equal to $00...0$ or $11...1$.
By Lemma \ref{update-no-child-cas-lem}, \ref{successful-general-mov-lem} and \ref{successful-special-mov-lem}, no child CAS of $I$ that $I$ is created by a replace operation removes $00...0$ or $11...1$ from $activeValues$.
So, $00...0$ and $11...1$ are in the $activeValues$ set at any time.
Thus, two leaf nodes whose $label$s are $00...0$ and $11...1$ are logically in the trie at all times.
\end{proof}

The next lemma shows that all nodes in $F_I - U_I$ are unreachable after a child CAS step of $I$ on each element of $I.pNode$ succeeds.
\begin{lemma} \label{unreachable-after-child-cas-lem}
Let $I$ be a Flag object. 
If the child CAS of $I$ on each element of $I.pNode$ is performed by time $T$, all nodes in $F_I - U_I$ are unreachable at all times after $T$.
\end{lemma}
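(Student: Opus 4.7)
The plan is to proceed by case analysis on which line of the update routines created $I$, since the composition of the arrays $I.flag$, $I.unflag$, $I.pNode$ and $I.oldChild$ is determined by that line. In each case I read off the elements of $F_I - U_I$ from the pseudo-code and show that every such node becomes unreachable at some point up to time $T$; once unreachable, Lemma \ref{no-reachable-after-unreachable-lem} guarantees it stays unreachable at all times after $T$.

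In most cases, every node $x \in F_I - U_I$ happens to appear as $I.oldChild[j]$ for some $j$, so the fact that the child CAS of $I$ on $I.pNode[j]$ has been performed by $T$, together with Lemma \ref{first-child-cas-if-no-aba-lem} (it succeeded) and Lemma \ref{unreachable-oldChild-lem} (so $I.oldChild[j]$ becomes unreachable just after that CAS), gives the conclusion directly. This disposes of the insert at line \ref{ins-set-info1} (where $F_I - U_I = \{node\} = \{I.oldChild[0]\}$), the delete at line \ref{del-set-info} (where $F_I - U_I = \{p\} = \{I.oldChild[0]\}$), the general replace at lines \ref{mov-set-info1} and \ref{mov-set-info2} (where $F_I - U_I \subseteq \{node_i, p_d\} \subseteq \{I.oldChild[0], I.oldChild[1]\}$, invoking the lemma once per element since both child CAS steps have occurred by $T$), and the replace special cases at lines \ref{mov-set-info-sc1} and \ref{mov-set-info-sc23}. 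The insert at line \ref{ins-set-info2} is trivial since there $F_I - U_I = \emptyset$.

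The one case that genuinely requires more than a direct appeal to Lemma \ref{unreachable-oldChild-lem} is the replace special case at line \ref{mov-set-info-sc4}, where $F_I - U_I = \{gp_d, p_d\}$ but only $gp_d = I.oldChild[0]$ is the target of a child CAS of $I$; the node $p_d$ appears in $F_I$ purely to block interference. For this case I would first apply Lemma \ref{unreachable-oldChild-lem} to conclude that $gp_d$ becomes unreachable just after the child CAS of $I$ on $p_i$. To transfer this to $p_d$, I would show that $gp_d$ is the unique reachable parent of $p_d$ just before that CAS: $p_d \in F_I$, so by Lemma \ref{no-wrong-child-change-lem} $gp_d$ is still a parent of $p_d$ then; by Lemma \ref{update-p-reachable-lem} and Corollary \ref{unmarked-reachable-col}, $gp_d$ is reachable just before the CAS; and by Lemma \ref{reachable-parent-lem}, it is the unique reachable parent of $p_d$. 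Hence $p_d$ is reachable just before the CAS and loses its only reachable parent at the CAS, so it becomes unreachable immediately afterward.

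The main obstacle is simply the bookkeeping in the sc4 case, where the chain $p_i \to gp_d \to p_d$ must be reconstructed at the moment of the single child CAS using the facts that flagged nodes have their $child$ fields frozen relative to other operations (Lemma \ref{no-wrong-child-change-lem}) and that flagged-but-still-unmarked nodes are still reachable (Corollary \ref{unmarked-reachable-col}). Once these observations are marshalled, the argument in every case reduces to combining Lemmas \ref{unreachable-oldChild-lem} and \ref{no-reachable-after-unreachable-lem}.
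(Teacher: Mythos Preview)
Your proposal is correct and mirrors the paper's proof almost exactly: both observe that $F_I - U_I \subseteq I.oldChild$ for every line except \ref{mov-set-info-sc4} (so Lemmas \ref{unreachable-oldChild-lem} and \ref{no-reachable-after-unreachable-lem} finish those cases), and both handle line \ref{mov-set-info-sc4} by showing $gp_d$ is the unique reachable parent of $p_d$ just before the child CAS. One small slip: to conclude via Lemma \ref{no-wrong-child-change-lem} that $gp_d$ is still a parent of $p_d$, you need $gp_d \in F_I$ (it is the node whose $child$ field must be frozen), not $p_d \in F_I$---but since both are in $F_I$ here, the argument goes through.
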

\begin{proof}
Since the first child CAS of $I$ on each element of $I.pNode$ is performed by time $T$, a child CAS of $I$ on each element of $I.pNode$ succeeds by time $T$ (by Lemma \ref{first-child-cas-if-no-aba-lem}).
If $I$ is created at line \ref{ins-set-info1}, \ref{ins-set-info2}, \ref{del-set-info}, \ref{mov-set-info1}, \ref{mov-set-info2}, \ref{mov-set-info-sc1} or \ref{mov-set-info-sc23}, 
every node in $F_I - U_I$ is also in $I.oldChild$.
So, if $I$ is not created at line \ref{mov-set-info-sc4}, by Lemma \ref{unreachable-oldChild-lem} and \ref{no-reachable-after-unreachable-lem}, all nodes in $F_I - U_I$ are unreachable at all times after $T$.

Now, we show if $I$ is created at line \ref{mov-set-info-sc4}, all nodes in $U_I - F_I$ are unreachable at all times after $T$.
Let $\langle gp_d$, $p_d$, $node_d$, -, -, -$\rangle$ be the result returned by the call to search($val_d$) on line \ref{mov-call-search1} that precedes the creation of $I$ and 
$\langle$ -, $p_i$, $node_i$, -, -, -$\rangle$ be the result returned by the call to search($val_i$) on line \ref{mov-call-search2} that precedes the creation of $I$.
Since $I$ is created at line \ref{mov-set-info-sc4}, $F_I - U_I = \{ gp_d, p_d \}$.
Since $I.oldChild[0] = gp_d$, $gp_d$ is unreachable at all times after $T$ (by Lemma \ref{unreachable-oldChild-lem} and \ref{no-reachable-after-unreachable-lem}).
We show that $p_d$ becomes unreachable just after the first child CAS of $I$ (which is before $T$).

Since $p_i \in F_I $, $p_i.info = I$ just before the first child CAS of $I$ (by Corollary \ref{flag-node-before-child-cas-col}) and $p_i$ is not marked just before the first child CAS of $I$.
So, $p_i$ is reachable just before the first child CAS of $I$ (by Corollary \ref{unmarked-reachable-col}).
By Lemma \ref{query-reachable-lem}, $p_i$ is a parent of $node_i = gp_d$ at some time during search($val_i$) and $gp_d$ is a parent of $p_d$ at some time during search($val_d$).
So, $p_i$ is a parent of $node_i = gp_d$ and $gp_d$ is a parent of $p_d$ just before the first child CAS of $I$  (by Lemma \ref{no-wrong-child-change-lem}). 
Thus, $gp_d$ is the only reachable parent of $p_d$ just before the first child CAS of $I$ (by Lemma \ref{reachable-parent-lem}).
Since $gp_d$ becomes unreachable just after the first child CAS of $I$, $p_d$ also becomes unreachable just after the first child CAS of $I$.
By Lemma \ref{no-reachable-after-unreachable-lem}, $gp_d$ is unreachable at all times after $T$.
\end{proof}

\begin{lemma} \label{help-terminate-lem}
Let $I$ be a Flag object. 
After some call to help($I$) terminates, the $info$ field of no reachable internal node is $I$.
\end{lemma}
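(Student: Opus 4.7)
The plan is a case analysis on whether the given terminating call $C$ to help($I$) exits via line \ref{help-return-true} (returns true, having executed the unflag loop) or line \ref{help-return-false} (returns false, having executed the backtrack loop). In either case my goal is to show that at time $T_C$ and later, every $y \in F_I$ --- the only place a node could possibly have $y.info = I$, since only flag CASs of $I$ target nodes in $F_I$ --- either satisfies $y.info \ne I$ or is unreachable.

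If $C$ returns false, then $C$ executed a backtrack CAS of $I$, so by Lemma \ref{no-backtrack-after-first-change-child-lem} no child CAS of $I$ is ever executed, and by Lemma \ref{no-unflag-before-first-change-child-lem} no unflag CAS either. $C$'s backtrack loop attempts CAS$(y.info, I, \text{new Unflag})$ on every $y \in F_I$, so each attempt leaves $y.info \ne I$ (whether the CAS succeeded or failed). Since once $y.info$ is changed from $I.oldInfo[j]$ it can never return by Lemma \ref{same-info-lem}, no future flag CAS of $I$ on $y$ can succeed; an inductive use of Lemma \ref{order-flag-lem} propagates this property forward through all of $F_I$, giving the conclusion at $T_C$ and forever after.

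If $C$ returns true, then $flagDone$ is true at $C$'s check and $C$ completed the unflag loop on $U_I$. The crux is to establish that by $T_C$ the first child CAS of $I$ on each $I.pNode[j]$ has been performed; once this holds, Lemma \ref{unreachable-after-child-cas-lem} gives that every $y \in F_I - U_I$ is unreachable, while $C$'s unflag attempts together with Lemmas \ref{same-info-lem} and \ref{info-lem} ensure every $y \in U_I$ has $y.info \ne I$ and stays so. If $C$'s own $doChildCAS$ was true, $C$ itself performed these child CASs. Otherwise $C$ observed $y.info \ne I$ for some $y \in F_I$ in its flag loop, and I trace this change back through the chain of helpers: either a backtrack CAS caused it (reducing to the previous case) or an unflag CAS did. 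In the latter case, chasing back to the first unflag-doer $H_u$, $H_u$ must have had $doChildCAS = $ true (otherwise its own observation would force an even earlier clean-up CAS, which is impossible), so $H_u$ performed its child CASs before its unflag by the intra-call ordering of lines \ref{help-start-change-child}--\ref{help-change-child} and line \ref{help-unflag}. Lemma \ref{no-unflag-before-first-change-child-lem} then places the first child CAS on each $I.pNode[j]$ before any unflag of $I$, hence before $T_C$, and Lemma \ref{first-child-cas-if-no-aba-lem} guarantees it succeeded.

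The hardest part will be the chain-back argument in the ``$doChildCAS = $ false'' subcase: one must carefully combine Lemma \ref{no-backtrack-after-first-change-child-lem} (to forbid the coexistence of backtrack with a committed child CAS) with the flag-loop ordering of Lemma \ref{order-flag-lem}, and verify that every alternative cause of the observed $y.info \ne I$ bottoms out in either a backtrack (handled by the first case) or in a fully-performed child CAS --- ruling out in particular the pathological schedule in which a helper sets $flagDone$ and then crashes before doing any child CAS while another concurrently executes a backtrack.
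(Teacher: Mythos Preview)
Your approach is essentially the same as the paper's: a case split on the return value of the terminating call (equivalently, on $I.flagDone$ at line~\ref{help-check-done}), handling $F_I - U_I$ via Lemma~\ref{unreachable-after-child-cas-lem} and $U_I$ via the unflag/backtrack loop together with the no-ABA argument on $info$ fields. The paper also proves, as a preliminary, that no flag CAS of $I$ succeeds after the terminating call's line~\ref{help-check-doChildCAS}; you fold this into each case (the ``inductive use of Lemma~\ref{order-flag-lem}''), which is fine.

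Two points need fixing. First, in the returns-false case you invoke Lemma~\ref{no-unflag-before-first-change-child-lem} to conclude ``no unflag CAS either,'' but that lemma is conditioned on the existence of a child CAS of $I$ --- precisely what you have just excluded via Lemma~\ref{no-backtrack-after-first-change-child-lem}. Fortunately the claim is unnecessary: unflag CASs only change an $info$ field \emph{from} $I$, so they cannot re-introduce $I$; what matters is that no later flag CAS of $I$ can succeed, and that is what your Lemma~\ref{same-info-lem}/\ref{order-flag-lem} argument actually establishes.

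Second, and more substantively, in the returns-true case your treatment of the backtrack branch (``reducing to the previous case'') is not a valid reduction, and your final paragraph acknowledges but does not discharge the pathological schedule. The clean argument --- which is what the paper gives --- is to let $h'$ be the call that performs the \emph{first} unflag or backtrack CAS of $I$ and show that $h'$ necessarily has $doChildCAS=\text{true}$. The reason: $h'$'s flag loop runs before any unflag/backtrack CAS of $I$; meanwhile, since $flagDone$ eventually becomes true, every node in $F_I$ is successfully flagged at some point (Lemma~\ref{set-done-after-flag-lem}), and by Lemmas~\ref{first-info-cas-lem} and \ref{same-info-lem} the first flag CAS of $I$ on each node must have succeeded. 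Hence at each of $h'$'s checks the node's $info$ is $I$, so $h'$ proceeds through the child-CAS loop before doing any unflag; in particular $h'$ never takes the backtrack branch. This single observation both supplies the child CASs you need for Lemma~\ref{unreachable-after-child-cas-lem} and rules out the schedule you were worried about (a backtrack coexisting with $flagDone=\text{true}$).
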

\begin{proof}
Let $h$ be a call to help($I$) that terminates.  
Now, we show that no flag CAS of $I$ succeeds after $h$ executes line \ref{help-check-doChildCAS}.
First, $h$ tries to flag at least one node in $F_I$ on line \ref{help-flag-node} (since $doChildCAS$ is initially true).
Let $j \ge 1$ be the number of flag CAS steps that $h$ performs.
Since $h$ attempts to flag $I.flag[1]$ to $I.flag[j]$ before executing line \ref{help-check-doChildCAS},
and only the first flag CAS of $I$ on each node can succeed (by Lemma \ref{first-info-cas-lem}), none of these nodes become flagged after $h$ executes line \ref{help-check-doChildCAS}.


If $x$ is not the last node in $I.flag$, $x.info \ne I$ on line \ref{help-set-doChildCAS} (since $x$ is the last node in $I.flag$ that $h$ tries to flag).
Since the first flag CAS of $I$ on $x$ is performed before $x.info \ne I$ on line \ref{help-set-doChildCAS}, $x.info \ne I$ at all times after that (by Lemma \ref{first-info-cas-lem}).
So, after $h$ executes line \ref{help-check-doChildCAS}, no other call to help($I$) attempt to flag any node that is in $I.flag$ after $x$.
Thus, after $h$ executes line \ref{help-check-doChildCAS}, no flag CAS of $I$ succeeds.
Now, we consider two cases according to the value of $I.flagDone$ at line \ref{help-check-done}.

Case 1: $I.flagDone$ is false when $h$ executes line \ref{help-check-done}.
Then, $h$ executes a backtrack CAS of $I$ on each element of $F_I$ at line \ref{help-unflag-bt}.
Let $x$ be a node in $F_I$.
If $x.info = I$ when $h$ executes line \ref{help-unflag-bt}, $x.info$ is changed to an Unflag object when $h$ executes line \ref{help-unflag-bt} and 
$x.info$ is never set back to $I$ after that.
If $x.info \ne I$ when $h$ executes line \ref{help-unflag-bt}, $x.info$ is not set to $I$ at any time after that (since no flag CAS of $I$ succeeds after that). 
So, the $info$ field of no node in $F_I$ is $I$ after $h$ terminates.

Case 2: $I.flagDone$ is true when $h$ executes line \ref{help-check-done}.
First, we show that a child CAS step of $I$ on each element of $I.pNode$ is performed before $h$ executes line \ref{help-check-done}. 
If $doChildCAS$ is true, then $h$ performs these child CAS steps itself.
Otherwise , $h$ set $doChildCAS$ to false at line \ref{help-set-doChildCAS} when the $info$ field of node $x$ is not $I$ on line \ref{help-set-doChildCAS}.
Since $I.flagDone$ is true when $h$ executes line \ref{help-check-done}, 
the $info$ field of $x$ was set to $I$ before $h$ executes line \ref{help-check-done} (by Lemma \ref{set-done-after-flag-lem}).
Since $x.info \ne I$ when $h$ executes line \ref{help-set-doChildCAS} after $h$ tries to set $x.info$ to $I$, 
$x.info$ was set to $I$ and changed from $I$ to some other value before $h$ executes line \ref{help-set-doChildCAS} (by Lemma \ref{first-info-cas-lem}).
So, an unflag or backtrack CAS step of $I$ on $x$ succeeded before $h$ reads $x.info$ at line \ref{help-set-doChildCAS}.

Let $h'$ be the call to help($I$) that executes the first unflag or backtrack CAS of $I$.
Since no unflag or backtrack CAS of $I$ succeeds before $h'$ executes line \ref{help-check-done} and the $info$ fields of all nodes in $F_I$ are set to $I$ before $h$ executes line \ref{help-check-done},
after $h'$ tries to flag nodes in $F_I$ at line \ref{help-flag-node}, $h'$ sets $doChildCAS$ to true at each execution of line \ref{help-set-doChildCAS}.
Since $doChildCAS$ is true when $h'$ executes line \ref{help-check-doChildCAS}, $h'$ executes a child CAS of $I$ on each element of $I.pNode$ at line \ref{help-change-child} 
before it performs any unflag or backtrack CAS step of $I$. 
So, a child CAS of $I$ on each element of $I.pNode$ at line \ref{help-change-child} is executed before $h$ reads $x.info$ at line \ref{help-set-doChildCAS}.
By Lemma \ref{unreachable-after-child-cas-lem}, all nodes in $F_I - U_I$ are unreachable after $h$ terminates. 

Since $I.flagDone$ is true when $h$ executes line \ref{help-check-done}, $h$ performs unflag CAS of $I$ on all nodes in $U_I$ at line \ref{help-unflag}.
Let $x$ be a node in $U_I$.
If $x.info = I$ when $h$ executes line \ref{help-unflag}, $x.info$ is changed from $I$ to an Unflag object at line \ref{help-unflag} and 
$x.info$ is never set back to $I$ after that (by Lemma \ref{same-info-lem}).
If $x.info \ne I$ when $h$ executes line \ref{help-unflag}, $x.info$ is not set to $I$ at any time after that (since no flag CAS of $I$ succeeds after that). 
So, the $info$ field of no node in $U_I$ is $I$ after $h$ terminates.
Thus, the $info$ field of no reachable node is $I$ after $h$ terminates.
\end{proof}

The following lemma show that when the update operation reads the node $nodeSibling$ at line \ref{del-read-sibling} or \ref{mov-read-sibling} or the node $pSibling$ at line \ref{mov-read-pSibling}, 
the node was reachable during the update operation at some earlier time.
The proof of the lemma is similar to the proof of Lemma \ref{query-reachable-lem} and we do not present the argument here again.

\begin{lemma} \label{reachable-sibling-lem}
If the update operation reads $nodeSibling$ at line \ref{del-read-sibling} or \ref{mov-read-sibling} during a loop iteration, 
there is a time after the update operation begins the loop iteration and before the update operation reads $nodeSibling$ that $nodeSibling$ is reachable.
\end{lemma}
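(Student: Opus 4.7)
The plan is to mimic the structure of Lemma~\ref{query-reachable-lem}, but starting from the node $p$ (or $gp_d$ in the $pSibling$ case) returned by the search that precedes the current iteration, rather than reasoning inductively over all nodes visited by the search itself. The key point is that at the moment line~\ref{del-read-sibling}, \ref{mov-read-sibling} or \ref{mov-read-pSibling} executes, $nodeSibling$ (respectively $pSibling$) is read out of the $child$ field of a node---call it $pNode$---that was returned by a search earlier in the same loop iteration. So the goal reduces to exhibiting a time between the start of the iteration and the sibling read at which $pNode$ is reachable and $nodeSibling$ is in its $child$ array.

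First, I would apply Lemma~\ref{query-reachable-lem} directly to the preceding search, which guarantees a time $T''$ after the iteration begins and before the search visits $pNode$ such that $pNode$ is reachable at $T''$. If at $T''$ the value currently read by line~\ref{del-read-sibling} (or \ref{mov-read-sibling}, \ref{mov-read-pSibling}) already sat in the relevant slot of $pNode.child$, then $nodeSibling$ (respectively $pSibling$) is a child of a reachable node at $T''$, hence reachable at $T''$, and we are done. Otherwise, between $T''$ and the read at $T'$, some successful child CAS must have written $nodeSibling$ into that slot of $pNode.child$.

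The main step is then to argue that $pNode$ is still reachable immediately before that writing child CAS. Let $I$ be the Flag object whose child CAS performs the write; by Observation~\ref{flag-parent-obs}, $pNode \in F_I$ and $pNode \in U_I$. By Corollary~\ref{flag-node-before-child-cas-col} we have $pNode.info = I$ just before the CAS, so $pNode$ is not marked at that moment, and by Corollary~\ref{unmarked-reachable-col} together with the fact that $pNode$ was reachable at $T''$, $pNode$ remains reachable just before the CAS. Consequently, immediately after that CAS $pNode$ is reachable and $nodeSibling$ is a child of $pNode$, so $nodeSibling$ is reachable at that time, which lies between the start of the iteration and $T'$. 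The $pSibling$ case is handled identically with $pNode = gp_d$, using that the search returned $gp_d$ and that $gp_d \in F_I$ in every line that calls for $pSibling$.

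The only subtle part---and the place where one has to be careful---is the choice of the ``previous reachability time'' $T''$. For $nodeSibling$ in delete and in the general replace case, $T''$ is supplied by applying Lemma~\ref{query-reachable-lem} to the search that returned $p$ (i.e., search($val$) or search($val_d$)); for $pSibling$ inside the line~\ref{mov-sc5} branch, $T''$ likewise comes from the search on line~\ref{mov-call-search1} applied to $gp_d$. In each case the node whose child is being read was itself returned by a search that completed earlier in the same loop iteration, so Lemma~\ref{query-reachable-lem} gives the required earlier reachability time, and nothing beyond the ingredients used in that proof (flagging, Corollary~\ref{unmarked-reachable-col}, Lemma~\ref{no-wrong-child-change-lem}) is needed.
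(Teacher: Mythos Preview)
Your proposal is correct and follows exactly the approach the paper indicates: the paper gives no detailed proof here, stating only that the argument is ``similar to the proof of Lemma~\ref{query-reachable-lem},'' and your plan is precisely that adaptation---use Lemma~\ref{query-reachable-lem} to get an earlier time when the parent node $p$ (respectively $gp_d$) is reachable, then argue as in that lemma that either the sibling was already in the child slot then, or the child CAS that installed it left the parent reachable (via Observation~\ref{flag-parent-obs} and Corollary~\ref{unmarked-reachable-col}). One minor note: the lemma statement covers only $nodeSibling$ at lines~\ref{del-read-sibling} and~\ref{mov-read-sibling}, so your treatment of $pSibling$ is extra, though harmless.
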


Finally, we show that the implementation is non-blocking.
\begin{lemma}
The implementation is non-blocking.    
\end{lemma}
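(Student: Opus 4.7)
I would proceed by contradiction: suppose there is a time $T$ after which no operation terminates or fails. Since search is wait-free (Lemma \ref{search-wf-lem}), every non-terminating process must be stuck in the retry loop of some update, and at each iteration it either invokes help on a Flag object it just created or returns null from newFlag after helping complete another update's Flag object. By Lemma \ref{update-return-true-lem}, no help($I$) invoked by any such update after $T$ can return true, and by Corollary \ref{update-no-child-cas-return-false-col} no child CAS of any such $I$ ever succeeds. Consequently every returning invocation of help($I$) reaches line \ref{help-return-false} and must have set $doChildCAS$ to false at line \ref{help-set-doChildCAS}; in particular, some flag CAS of $I$ on $I.flag[k]$ did not leave the node flagged by $I$. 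By Lemma \ref{order-flag-lem}, each invocation proceeds through $I.flag$ in sorted-label order, so across the entire \emph{group} of help($I$) calls there is a well-defined minimum index $k$ at which no flag CAS of $I$ ever succeeds; I will call $I.flag[k]$ the \emph{blamed} node of the group of $I$.

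Since node labels come from the finite $\ell$-bit universe and reachable internal nodes have distinct labels (Lemma \ref{no-same-node-lem}), the set of possible blamed labels is finite, yet infinitely many failing groups exist after $T$. I would therefore choose a Flag object $I$ whose blamed node $x$ has the maximum label among blamed nodes of all failing groups still live when help($I$) attempts the failed flag CAS on $x$. At that moment $x.info \ne I$, so by Lemmas \ref{info-lem}, \ref{same-info-lem}, and \ref{first-info-cas-lem}, combined with the pre-check at line \ref{newFlag-old-info} inside newFlag (justified by Lemma \ref{info-old-lem}), some other Flag object $I' \ne I$ holds $x.info$. I would next rule out the case that $I'$ comes from an already-completed help group: if $I'$ were completed, then $x \in F_{I'} - U_{I'}$ (else $x$ would already be unflagged), and by Lemma \ref{unreachable-after-child-cas-lem} $x$ would be permanently unreachable; but $x \in F_I$, and by Lemmas \ref{query-reachable-lem}, \ref{reachable-sibling-lem}, and \ref{no-reachable-after-unreachable-lem}, any search beginning after $x$ was last reachable cannot visit $x$, so for $I$ created late enough this case is excluded. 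Hence $I'$ is itself a failing group, and by the maximality of $x.label$, the blamed node $y$ of $I'$ satisfies $y.label < x.label$.

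Because $x$ and $y$ both lie in the sorted array $I'.flag$, $y$ appears strictly earlier than $x$. By Lemma \ref{order-flag-lem}, no call in the group of $I'$ can successfully flag $x$ before some call in that group successfully flags $y$; but the definition of ``blamed'' says no such successful flag of $y$ ever happens. This contradicts the fact that $x.info = I'$ at the moment considered, establishing the theorem. The main obstacle will be the middle step --- excluding the case that $x.info$ is held by an already-completed Flag object --- which requires carefully time-sequencing the reachability invariants so that sufficiently late searches cannot encounter permanently unreachable nodes; a secondary technicality is handling Flag objects created along with null-returning newFlag calls, for which one must observe that null returns from newFlag always stem from helping some already-existing Flag object and therefore do not affect the group-by-group analysis above.
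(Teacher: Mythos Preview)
Your core argument---define the node each failing help group \emph{blames}, pick the one with maximum label, and derive a contradiction with the sorted flagging order via Lemma~\ref{order-flag-lem}---is exactly the paper's approach. Two places need tightening.

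First, you assert that each loop iteration ``either invokes help on a Flag object it just created or returns null from newFlag after helping complete another update's Flag object,'' and later that ``null returns from newFlag always stem from helping some already-existing Flag object.'' Neither is true as stated: newFlag can return null at line~\ref{newFlag-return-null2} (duplicate nodes with mismatched $oldInfo$) without calling help, and createNode can return null without calling help when the $info$ argument is an Unflag object. The paper devotes a substantial \textbf{Claim} to showing that, once the trie is frozen (no child CAS after some $T'$), every running update does eventually call help; this requires a case analysis over insert, delete, and each replace branch, using Lemma~\ref{dummy-lem} to rule out $gp=\text{null}$ and the post-conditions of search to rule out createNode returning null without help being called. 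You need this step, and it is not a mere technicality.

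Second, your treatment of the ``$I'$ already completed'' case is in the right spirit but looser than the paper's. The paper does not reason about whether $x.info$ currently equals some $I'$; instead it first uses Lemma~\ref{help-terminate-lem} to fix a cutoff time $T''$ after which no reachable node carries in its $info$ field any Flag object not created by the still-running updates, and observes that loop iterations started after $T''$ can only read such Flag objects. This cleanly guarantees that the interfering $I_j$ is itself a failing group, avoiding your reachability detour. Note also that at the moment your flag CAS on $x$ fails, $x.info$ need not be a Flag object at all (it could be a later Unflag object); the correct statement is that $x.info$ was \emph{changed from} $I.oldInfo[i]$ to some Flag $I_j$ at an earlier time, and it is that $I_j$ whose group you must show is failing.
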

\begin{proof}
We prove the lemma by contradiction.
Assume the implementation is not non-blocking.
So, there is an execution of the implementation such that, after time $T$, each pending operation $op$ take infinitely many steps and no operation terminates.

Assume update operations $op_1, op_2, ..., op_n$ each take infinitely many steps after time $T$.
By Lemma \ref{search-wf-lem}, these operations are update operations.

Each call to help terminates, by Lemma \ref{help-terminate-lem}, the $info$ field of no reachable node is $I$ after a call to help($I$) terminates.
So, if $I$ is created before $T$ and help($I$) is called after that, after the call to help($I$) terminates, no other operation sees $I$ in the $info$ field of any reachable node.
Now, we show that if the operation $op$ calls help($I$) during a loop iteration and $I$ is created by another operation, 
then $I$ was read from the $info$ field of a node that was reachable during the loop iteration. 

If $op$ calls help($I$) and $I$ is created by another operation, then help($I$) is called on line \ref{newFlag-call-help} or \ref{createNode-call-help1}.
If help($I$) is called at line \ref{newFlag-call-help}, $I$ is read from the $info$ field some node that the search operation returns earlier.
If help($I$) is called at line \ref{createNode-call-help1} inside the createNode routine, which is called at line \ref{ins-call-create-node} or \ref{mov-call-create-node}, 
$I$ is read from the $info$ field of some node that the search operation returns earlier.
 In either case, the node that the search operation returns was reachable at some time during the search operation by Lemma \ref{query-reachable-lem}.
If help($I$) is called at line \ref{createNode-call-help1} inside the createNode routine, which is called at line \ref{mov-create-node1}, 
$I$ is read from the $info$ field some node that is read at line \ref{mov-read-sibling}.
By Lemma \ref{reachable-sibling-lem}, the node was reachable at some time during the the loop iteration that help($I$) is called.

So, if some $op_i$ starts its loop iteration after the first call to help($I$) returns, $op_i$ cannot call help($I$) during that loop iteration. 
Since the number of Flag objects that are created before $T$ is finite, 
there is a time $T_x$ after $T$ such that no operation calls help($I$) if $I$ is not created by $op_1, op_2, ..., op_n$ after $T''$. 
Let $T''$ ($> T_x$) be a time by which each running operation has started a new loop iteration after $T_x$.
Thus, after $T''$, help($I$) is only called for Info object $I$ that were created by $op_1, op_2, ..., op_n$.

\noindent {\bf Claim:}
Some operation calls the help routine after $T''$.

\noindent {\it Proof of Claim.}
To derive a contradiction, assume no help routine is called after $T''$.
So, there is a time $T'$ ($> T''$), no operation executes any line of the help routine.
Since the $info$ and $child$ fields of nodes are only changed at line \ref{help-flag-node}, \ref{help-flag-leaf}, \ref{help-change-child}, \ref{help-unflag} or \ref{help-unflag-bt} inside the help routine, 
the $info$ and $child$ field of no node is changed after $T'$.
Let $op$ be an update operation that is running after $T'$.

First, we show if $op$ calls newFlag after $T'$, newFlag does not return null.
Suppose $op$ calls newFlag at line \ref{ins-set-info1}, \ref{ins-set-info2}, \ref{del-set-info}, \ref{mov-set-info1}, \ref{mov-set-info2}, \ref{mov-set-info-sc1}, \ref{mov-set-info-sc23} or \ref{mov-set-info-sc4}.
Since no $info$ field is changed after $T'$, no newFlag returns null at line \ref{newFlag-return-null2}.
If newFlag returns null at line \ref{newFlag-return-null1}, the help routine is called at line \ref{newFlag-call-help} just before that.
So, no newFlag returns null after $T'$. 

We consider different cases according to the type of the update operation that $op$ is.
For each case, we show that $op$ calls the help routine.

Case 1: $op$ is insert($val$).
We show if $op$ does not call the help routine at line \ref{ins-call-help} during a loop iteration after $T'$, $op$ calls the help routine at line \ref{createNode-call-help1} during that loop iteration.
Let $\langle$ -, $p$, $node$, -, -, $rmvd \rangle$ be the result returned by the call to search($val$) at line \ref{ins-call-search} during the iteration.
Since $op$ does not call the help routine at line \ref{ins-call-help} and newFlag does not return false at line \ref{ins-set-info1} or \ref{ins-set-info2}, 
createNode returns null when it is called at line \ref{ins-call-create-node}.
Suppose createNode($node_1$, $node_2$, -) is called at line \ref{ins-call-create-node}.
Then, $node_1$ is a new copy of $node$ and $node_2$ is a leaf node whose $label$ is $val$.
So, $node.label$ is a prefix of $val$ or $val$ is a prefix of $node.label$.
If $node$ is an internal node, $node.label$ is not a prefix of $val$ by Lemma \ref{search-lem}.
So, $node$ is a leaf node and $node.label = val$.
Since $op$ does not return false at line \ref{ins-return-false}, $rmvd$ is true.
Since $rmvd$ is set to true at line \ref{search-set-moved}, $node.info$ that is read at line \ref{search-set-moved} is not an Unflag object. 
Let $I'$ be a Flag object that $node.info = I'$ at line \ref{search-set-moved}.
By Lemma \ref{flag-leaf-lem}, $node.info = I'$ at line \ref{ins-read-node-info}.
Since $I'$ is a Flag object, $op$ calls the help routine at line \ref{createNode-call-help1}.

Case 2: $op$ is delete($val$).
We show that $op$ calls the help routine at line \ref{del-call-help} during each loop iteration after $T'$.
Let $\langle gp$, $p$, $node$, -, -, -$\rangle$ be the result returned by the call to search($val$) at line \ref{del-call-search} during the iteration.
First, we show $gp$ is not null.
If $gp$ is null, the search operation exits the loop after the first loop iteration 
(since $gp$ is set to $p$ at line \ref{search-set-gp} during each loop iteration and after the first execution of line \ref{search-set-p}, $p$ is an internal node by Lemma \ref{p-gp-internal-lem}).
Then, $p = root$.
Since $op$ does not return false at line \ref{del-return-false}, $node$ is a leaf node whose $label$ is $val$.
If $node$ is set to $root.child[0]$ at line \ref{search-set-node}, by Lemma \ref{dummy-lem} and Invariant \ref{prefix-inv}, $node.label = 00...0$.
If $node$ is set to $root.child[1]$ at line \ref{search-set-node}, by Lemma \ref{dummy-lem} and Invariant \ref{prefix-inv}, $node.label = 11...1$.
Since $val \ne 00...0$ and $val \ne 11...1$ by the precondition of delete($val$), $gp$ is not null.
Since newFlag does not return null at line \ref{del-set-info}, $op$ calls the help routine at line \ref{del-call-help}.

Case 3: $op$ is replace($val_d$, $val_i$).
Let $\langle gp_d$, $p_d$, $node_d$, -, -, -$\rangle$ be the result returned by the call to search($val_d$) at line \ref{mov-call-search1} during a loop iteration of $op$ that started after $T'$ and 
$\langle$ -, $p_i$, $node_i$, -, -, $rmvd_i \rangle$ be the result returned by the call to search($val_i$) at line \ref{mov-call-search2} during the same loop iteration.
The same argument as in Case 2 establishes that $gp_d$ is not null.
Now, we consider different cases of replace($val_d$, $val_i$).
For each case, we show that $op$ calls the help routine.

Case 3A: $node_i \notin \{ node_d, p_d, gp_d\}$ and $p_i \ne p_d$.
We show if $op$ does not call the help routine at line \ref{mov-call-help}, $op$ calls the help routine at line \ref{createNode-call-help1}.
Since $op$ does not call the help routine at line \ref{mov-call-help} and newFlag does not return null at line \ref{mov-set-info1} or \ref{mov-set-info2}, 
neither line \ref{mov-set-info1} nor \ref{mov-set-info2} is executed.
So, createNode returns null when it is called at line \ref{mov-call-create-node}.
Suppose createNode($node_1$, $node_2$, -) is called at line \ref{mov-call-create-node}.
Then, $node_1$ is a new copy of $node_i$ and $node_2$ is a leaf node whose $label$ is $val_i$.
So, $node_i.label$ is a prefix of $val_i$ or $val_i$ is a prefix of $node_i.label$.
If $node_i$ is an internal node, $node_i.label$ is not a prefix of $val_i$ by Lemma \ref{search-lem}.
So, $node_i$ is a leaf node and $node_i.label = val_i$.
Since $op$ does not return false at line \ref{mov-return-false2}, $rmvd_i$ is true.
Since $rmvd_i$ is set to true at line \ref{search-set-moved}, $node_i.info$ that is read at line \ref{search-set-moved} is not an Unflag object. 
Let $I'$ be a Flag object that $node_i.info = I'$ at line \ref{search-set-moved}.
By Lemma \ref{flag-leaf-lem}, $node_i.info = I'$ at line \ref{ins-read-node-info}.
Since $I'$ is a Flag object, $op$ calls the help routine at line \ref{createNode-call-help1}.

Case 3B: $node_i = node_d$.
Since  newFlag does not return null after $T'$, $op$ sets $movInfo$ to a Flag object at line \ref{mov-set-info-sc1} and calls the help routine at line \ref{mov-call-help}.

Case 3C: $node_i \ne node_d$ and either $node_i = p_d$ or $p_i = p_d$.
We show if $node_i = p_d$, $p_i = gp_d$.
By Lemma \ref{search-lem}, $gp_d$ was a parent of $p_d = node_i$ at some time during search($val_d$) and $p_i$ was a parent of $node_i = gp_d$ at some time during search($val_i$).
Since no $child$ field is changed after $T'$, $gp_d$ and $p_i$ are parents of $p_d = node_i$ at all times after $T'$.
By Lemma \ref{query-reachable-lem}, $gp_d$ was reachable at some time during search($val_d$) and $p_i$ was reachable at some time during search($val_i$).
Since no $child$ field is changed after $T'$, $gp_d$ and $p_i$ are reachable at all times after $T'$.
So, by Lemma \ref{reachable-parent-lem}, $p_i = gp_d$.

Since either $node_i = p_d$ and $p_i = gp_d$ or $p_i = p_d$ (and $gp_d$ is not null), createNode is called at line \ref{mov-create-node1}.
If createNode does not return null, $op$ calls the help routine at line \ref{mov-call-help} since newFlag does not return null at line \ref{mov-set-info-sc23}.
Now, we show if createNode returns null, the help routine is called at line \ref{createNode-call-help1}.
Let $nodeSibling_d$ be the child of $p_d$ that is read at line \ref{mov-read-sibling}.
Suppose createNode($node_1$, $node_2$, -) is called at line \ref{mov-create-node1}.
Then, $node_1$ is $nodeSibling_d$ and $node_2$ is a leaf node whose $label$ is $val_i$.
So, $nodeSibling_d.label$ is a prefix of $val_i$ or $val_i$ is a prefix of $nodeSibling_d.label$.
By Lemma \ref{search-lem}, $p_d.child[i] = node_d$ for some $i$ at some time during search($val_d$).

If $node_i = p_d$,  since $node_i$ is an internal node (by Lemma \ref{p-gp-internal-lem}), $node_i.label$ is not a prefix of $val_i$ (by Lemma \ref{search-lem}).
Since $p_d.label$ is a prefix of $nodeSibling_d$ (by Invariant \ref{prefix-inv}), $nodeSibling_d.label$ is not a prefix of $val_i$.
Thus, createNode does not return null.

If $p_i = p_d$, since $node_i \ne node_d$ and no $child$ field is changed after $T'$, $node_i$ and $node_d$ are children of $p_i = p_d$ at all times after $T'$.
Since $nodeSibling_d$ is also a child of $p_d$ at all times after $T'$ and $nodeSibling_d.label \ne node_d.label$ (by Invariant \ref{prefix-inv}), $nodeSibling_d = node_i$. 
If $nodeSibling_d$ is an internal node, since $nodeSibling_d = node_i$, $nodeSibling_d.label$ is not a prefix of $val_i$ (by Lemma \ref{search-lem}).
Then, createNode does not return null.
If $nodeSibling_d$ is a leaf node and $node_i.label \ne val_i$, createNode does not return null.
If $nodeSibling_d$ is a leaf node and $node_i.label = val_i$, since $op$ does not return false at line \ref{mov-return-false2}, $rmvd_i$ is true.
Since $rmvd_i$ is set to true at line \ref{search-set-moved}, $node_i.info$ is not an Unflag object when it is read at line \ref{search-set-moved} . 
Let $I'$ be a Flag object that $node_i.info = I'$ at line \ref{search-set-moved}.
By Lemma \ref{flag-leaf-lem}, since $node_i=nodeSibling_d$, $nodeSibling_d.info = I'$ at line \ref{mov-create-node1}.
Since $I'$ is a Flag object, $op$ calls the help routine at line \ref{createNode-call-help1}.

Case 3D: Otherwise.
First, we show that the condition at line \ref{mov-sc5} is true and $node_i = gp_d$.
Since the condition at line \label{mov-check-gc-condition} is not true and , $node_i$ is $node_d$, $p_d$ or $gp_d$.
Since the conditions at line \ref{mov-sc3} and \ref{mov-sc4} are not true, then $node_i \ne node_d$ and $node_i \ne p_d$.
So, $node_i = gp_d$ and the condition at line \ref{mov-sc5} is true.
Let $nodeSibling_d$ be the child of $p_d$ that is read at line \ref{mov-read-sibling} and $pSibling_d$ be the child of $gp_d$ that is read at line \ref{mov-read-pSibling}.
First, we show that createNode($nodeSibling_d$, $pSibling_d$, -) that is called at line \ref{mov-create-node2} does not return null.
By Lemma \ref{search-lem}, $gp_d.child[i] = p_d$ at some time during search($val_d$) for some $i$.
By Invariant \ref{prefix-inv}, $(gp_d.label) \cdot i$ is a prefix of $p_d.label$.
Since $nodeSibling_d$ was a child of $p_d$ at line \ref{mov-read-sibling}, $(gp_d.label) \cdot i$ is a prefix of $nodeSibling_d.label$ (by Invariant \ref{prefix-inv}).
By Lemma \ref{search-lem}, $p_d.label$ is a prefix of $val_d$.
So,$(gp_d.label) \cdot i$ is a prefix of $val_d$.
Since $pSibling_d.label$ is read as a child of $gp_d$ at line \ref{mov-read-pSibling}, $(gp_d.label) \cdot i$ is not a prefix of $pSibling_d.label$ (by Invariant \ref{prefix-inv}).
So, createNode($nodeSibling_d$, $pSibling_d$, -) that is called at line \ref{mov-create-node2} does not return null.

Since createNode that is called at line \ref{mov-create-node2} does not return null, createNode is also called at line \ref{mov-create-node3}.
Since $gp_d = node_i$, $node_i$ is an internal node by Lemma \ref{p-gp-internal-lem} and $gp_d.label$ is not a prefix of $val_i$ (by Lemma \ref{search-lem}).
Since $gp_d.child[i] = p_d$ at some time during search($val_d$) and $p_d$ is a parent of $nodeSibling_d$ at line \ref{mov-read-sibling}, $gp_d.label$ is a prefix of $nodeSibling_d$ (by Invariant \ref{prefix-inv}).
Since $gp_d$ is a parent of $pSibling_d$ at line \ref{mov-read-pSibling}, $gp_d.label$ is a prefix of $pSibling_d$ (by Invariant \ref{prefix-inv}).
Let $newChild_i$ be the internal node that is created at line \ref{mov-create-node2}.
Since the children of $newChild_i$ are initially $nodeSibling_d$ and $pSibling_d$, $gp_d.label$ is a prefix of $newChild_i.label$ (by Invariant \ref{prefix-inv}).
Since $gp_d.label$ is not a prefix of $val_i$, $newChild_i.label$ is not a prefix of $val_i$.
So, createNode does not return null when it is called at line \ref{mov-create-node3}. 
Since the createNode routines that are called at line \ref{mov-create-node2} and \ref{mov-create-node3} do not return null, $op$ calls the newFlag routine at line \ref{mov-set-info-sc4}.
Since newFlag does not return null after $T'$, $op$ calls the help routine at line \ref{mov-call-help}.
This complete the proof of the claim.

Thus, after $T'$, some running operation calls the help routine.
Let $I$ be a Flag object that is created by some $op_i$ where $1 \le i \le n$.
After $op_i$ creates $I$, it calls help($I$) at line \ref{ins-call-help}, \ref{del-call-help} or \ref{mov-call-help}. 
Any other operation that visits any node whose $info$ field is $I$ might also call help($I$) at line \ref{newFlag-call-help} or \ref{createNode-call-help1}.
So, there is a group of calls to help($I$) after $T'$.

If a child CAS step of $I$ is executed, the operation that created $I$ returns true (by Lemma \ref{update-return-true-lem}).
Since $I$ is created by one of $op_1, op_2, ..., op_n$ and $op_1, op_2, ..., op_n$ do not return after $T'$, no call to help($I$) executes a child CAS of $I$ after~$T'$.

So, no call to help($I$) sets $I.flagDone$ to true at line \ref{help-set-done}.
Thus, all calls to help($I$) set $doChildCAS$ to false at line \ref{help-set-doChildCAS}, so the $info$ field of some node in $F_I$ was not $I$ at line \ref{help-set-doChildCAS}.
Let $node$ be the first such node that causes any call to help($I$) to set $doChildCAS$ to false. 
We say that a group of calls to help($I$) blames $node$.

Since no child CAS is performed after $T'$, the set of nodes logically in the trie does not change.
Let $node_m$ be the node in the trie whose $label$ is the greatest among the $label$s of nodes that are blamed in this way, according to the total ordering that is used to sort on line \ref{newFlag-sort-nodes}.
Let $I_m$ be a Flag object such that the group of calls to help($I_m$) blames $node_m$.

First, we show that $node_m.info$ is not set to $I_m$ by any flag CAS of $I_m$.
Consider the first time that a call to help($I_m$) sets $doChildCAS$ to false because $node_m.info \ne I_m$ on line \ref{help-set-doChildCAS} at time $T_m$.
Since $node_m$ is the first node that a call to help($I_m$) did not set $node_m.info$ to $I_m$, $node_m.info$ is not set to $I_m$ before $T_m$.
Since the first Flag CAS of $I_m$ on $node_m$ occurs before $T_m$, no flag CAS of $I_m$ changes $node_m.info$ to $I_m$ after $T_m$ (by Lemma \ref{first-info-cas-lem}).

If the old value of $node_m.info$ that is passed to the newFlag routine is not an Unflag object, the newFlag returns null at line \ref{newFlag-return-null1}.
Since $I_m$ is created inside the newFlag routine, the old value of $node_m.info$ that is passed to the newFlag routine that created $I_m$ was an Unflag object. 
Since no call to help($I_m$) flags $node_m$ successfully, $node_m.info$ is changed from the Unflag object before any flag CAS of $I_m$ on $node_m$.
By Lemma \ref{info-lem}, $node_m.info$ is set to a Flag object $I_j \ne I_m$ before any flag CAS of $I_m$ on $node_m$.
Let $g_j$ be the group of calls to help($I_j$).
One such call changes $node_m.info$ to $I_j$.
Let $node_j$ be the node that $g_j$ blames.
Then, $node_j.label < node_m.label$ by definition of $node_m$.
(Since the newFlag routine keeps only one copy of duplicates in $I_j.flagNode$ at line \ref{newFlag-keep-one-copy}, $node_j.label \ne node_m.label$.)
However, $node_m$ appears earlier in $I_j.flag$ array than $node_j$, 
contradicting the fact that nodes in $I_j.flag$ are flagged in order of their $label$s  and $node_j.info$ must set to $I_j$ before $node_m.info$ is set to $I_j$ (by Lemma~\ref{order-flag-lem}).

Therefore, no operation takes infinitely many steps after $T$ and the implementation is non-blocking.
\end{proof}

\end{document}